\documentclass[twoside,11pt]{article}

\usepackage{amsfonts}
\usepackage{amsmath}
\usepackage{amssymb} 
\usepackage{amsthm}
\usepackage{authblk} 
\usepackage{bbm}
\usepackage{bm}
\usepackage{cancel}
\usepackage{color}
\usepackage{dsfont}
\usepackage{extarrows}
\usepackage{float}
\usepackage{mathdots}
\usepackage{graphicx}
\usepackage{grffile}
\usepackage{mathrsfs}
\usepackage{mdframed}
\usepackage[round]{natbib}
\usepackage{subfig} 
\usepackage{url}
\usepackage[top=1in, bottom=1in, left=1in, right=1in]{geometry}

\renewcommand{\d}{\mathop{}\!\mathrm{d}}


\newcommand{\mnorm}[1]{{\left\vert\kern-0.25ex\left\vert\kern-0.25ex\left\vert #1 
    \right\vert\kern-0.25ex\right\vert\kern-0.25ex\right\vert}}

\newcommand\newday[1]{\leavevmode\xleaders\hbox{*}\hfill\kern0pt\\ \centerline{\Large \textbf{#1}}}

\DeclareMathOperator*{\Uu}{\cup}
\DeclareMathOperator{\var}{var}

\usepackage[shortlabels]{enumitem}

\newtheorem{theorem}{Theorem}

\newtheorem{lemma}[theorem]{Lemma}
\newenvironment{lem}[1]
  {\lemma}
  {\endlemma}

  \newtheorem{corollary}[theorem]{Corollary}

    \newtheorem{proposition}[theorem]{Proposition}

\newtheorem{definition}{Definition}

\newtheorem{example}{Example}

\allowdisplaybreaks




\makeatletter
\newcommand{\pushright}[1]{\ifmeasuring@#1\else\omit\hfill$\displaystyle#1$\fi\ignorespaces}
\newcommand{\pushleft}[1]{\ifmeasuring@#1\else\omit$\displaystyle#1$\hfill\fi\ignorespaces}
\makeatother

\title{Generalized Score Matching for General Domains}
\author[1]{Shiqing Yu}
\author[2]{Mathias Drton}
\author[3]{Ali Shojaie}
\affil[1]{Department of Statistics, University of Washington, Seattle, Washington, 98195, U.S.A.}
\affil[2]{Department of Mathematics, Technical University of Munich, 85748 Garching bei M\"{u}nchen, Germany}
\affil[3]{Department of Biostatistics, University of Washington, Seattle, Washington, 98195, U.S.A.}

\begin{document}

\maketitle

\begin{abstract}
{Estimation of density functions supported on general domains arises when the data is naturally restricted to a proper subset of the real space.
  This problem is complicated by typically intractable normalizing constants.
  Score matching provides a powerful tool for estimating densities with such intractable normalizing constants, but as originally proposed is limited to densities on $\mathbb{R}^m$ and $\mathbb{R}_+^m$. In this paper, we offer a natural generalization
  of score matching that accommodates 
  densities supported on a very general class of domains. We apply the framework to truncated graphical and pairwise interaction models, 
  and provide theoretical guarantees for the resulting estimators. We also generalize a recently proposed method from bounded to unbounded domains, and empirically demonstrate the advantages of our method.
\vspace{.15cm}\\
KEY WORDS: Density estimation, graphical model, normalizing constant, sparsity, truncated distributions}
\end{abstract}

\section{Introduction}
Probability density functions, especially in multivariate
graphical models, are often defined only up to a normalizing
constant. In higher dimensions, computation of the normalizing
constant is typically an intractable problem that becomes worse when
the 
distributions are defined only on a proper subset of
the real space $\mathbb{R}^m$. For example, even truncated multivariate Gaussian densities have intractable normalizing constants except for special situations, e.g.,~with diagonal covariance matrices. This inability to calculate normalizing constants makes density estimation for general domains very challenging. 


 \emph{Score matching}  \citet{hyv05} is a 
computational efficient
solution to density estimation that bypasses the calculation of
normalizing constants and
has enabled, in particular, large-scale applications of non-Gaussian
graphical models
\cite{MR3338335,lin16,NIPS2016_6530,NIPS2015_6006,yum19}.  Its original
formulation targets distributions supported on
$\mathbb{R}^m$.  It was extended to treat the non-negative orthant
$\mathbb{R}_+^m$ in \citep{hyv07}, with more recent generalizations in
\citet{yu18,yus19}.  An extension to products of intervals like
$[0,1]^m$ was given in \citep{jan15,jan18,tan19}, and more general
bounded domains were considered in \citep{liu19}. Despite this
progress, the existing
approaches have important limitations: The method in
\citet{liu19} only allows for bounded support, and
earlier methods for $\mathbb{R}_+^m$ and $[0,1]^m$ offer ad-hoc
solutions that cannot be directly extended to more general
domains. This paper addresses these limitations by developing a
unifying framework that 
encompasses the existing methods and
applies to
unbounded domains.  The framework enables
new applications for more complicated domains yet retains the
computational efficiency of the original score matching.
The remainder of
this introduction provides a more detailed review of the score
matching estimator and the contributions of this paper.

\subsection{Score Matching and its Generalizations}

The original \emph{score matching} estimator introduced in
\citet{hyv05} is based on the idea of minimizing the Fisher distance
given by the expected $\ell_2$ distance between the gradients of the
true log density $\log p_0$ on $\mathbb{R}^m$ and a proposed log
density $\log p$, that is, 
\begin{equation}
  \label{eq:score:hyv:original}
  \int_{\mathbb{R}^m}p_0(\boldsymbol{x})\|\nabla_{\boldsymbol{x}}\log
  p(\boldsymbol{x})-\nabla_{\boldsymbol{x}}\log
  p_0(\boldsymbol{x})\|_2^2\d\boldsymbol{x}. 
\end{equation}
Integration by parts leads to an associated empirical loss, in which
an additive constant term depending only on $p_0$ is
ignored.  This loss avoids calculations of the normalizing constant
through dealing with the derivatives of the log-densities only.
Minimizing the empirical loss to derive an estimator is particularly
convenient if $p$ belongs to an exponential family because the loss is
then a quadratic function of the family's canonical parameters.  The
latter property holds, in particular, for the
Gaussian case, where  the methods
proposed by \cite{MR3306432,MR3180660} constitute a special case of
score matching.  

In \citep{hyv07}, the approach was generalized to densities on $\mathbb{R}_+^m=[0,\infty)^m$ by minimizing
instead 
\begin{equation}
    \label{eq:score:hyv:nonneg}
	\int_{\mathbb{R}_+^m}p_0(\boldsymbol{x})\|\nabla_{\boldsymbol{x}}\log p(\boldsymbol{x})\odot\boldsymbol{x}-\nabla_{\boldsymbol{x}}\log p_0(\boldsymbol{x})\odot\boldsymbol{x}\|_2^2\d\boldsymbol{x}.
\end{equation}
The element-wise multiplication (``$\odot$'') with $\boldsymbol{x}$
dampens discontinuities at the boundary of $\mathbb{R}_+^m$ and
facilitates integration by parts for deriving an
empirical loss that does not depend on the true $p_0$.

In recent work, we proposed a generalized score
matching approach for densities on $\mathbb{R}_+^m$ by using
the square-root of slowly growing and preferably bounded functions
$\boldsymbol{h}(\boldsymbol{x})$ in place of $\boldsymbol{x}$ in the
element-wise multiplication \citep{yu18,yus19}.  This modification improves performance (theoretically and
empirically) as it avoids higher moments in the empirical loss.  Another
recent work extended score matching to supports given by
bounded open subsets $\mathfrak{D}\subset\mathbb{R}^m$ with
piecewise smooth boundaries \citep{liu19}.  The idea there
is to minimize
\begin{equation}
  \label{eq:score:liu}
\sup_{g\in\mathcal{G}}\int_{\mathfrak{D}}p_0(\boldsymbol{x})g(\boldsymbol{x})\|\nabla_{\boldsymbol{x}}\log p(\boldsymbol{x})-\nabla_{\boldsymbol{x}}\log p_0(\boldsymbol{x})\|_2^2\d\boldsymbol{x},
\end{equation}
where
$\mathcal{G}\equiv\{g|g(\boldsymbol{x})=0\,\forall\boldsymbol{x}\in\partial
\mathfrak{D}\text{ and }g\text{ is 1-Lipschitz continuous}\}$, and
$\partial\mathfrak{D}$ is the boundary of $\mathfrak{D}$. The supremum
in the loss is achieved at
$g_0(\boldsymbol{x})\equiv\min_{\boldsymbol{x}'\in\partial\mathfrak{D}}\|\boldsymbol{x}-\boldsymbol{x}'\|$,
the distance of $\boldsymbol{x}$ to $\partial\mathfrak{D}$.

\subsection{A Unifying Framework for General Domains}

In this paper, we further extend generalized score matching
with the aim of avoiding the limitations of existing work and
allowing for general and possibly unbounded domains $\mathfrak{D}$
with positive Lebesgue measure.
We require merely that all sections of
$\mathfrak{D}\subseteq\mathbb{R}^m$, i.e., the sets of values of any
component $x_j$ fixing all other components $\boldsymbol{x}_{-j}$, are
countable disjoint unions of intervals in $\mathbb{R}$.  This level of
generality ought to cover all practical cases.  To handle
such domains, we compose the function $\boldsymbol{h}$ in the
generalized score matching loss of \citet{yu18,yus19} with a
component-wise distance function
$\boldsymbol{\varphi}=(\varphi_1,\ldots,\varphi_m):\mathfrak{D}\to\mathbb{R}_+^m$.
To define $\varphi_j(\boldsymbol{x})$, we consider the interval in the
section given by $\boldsymbol{x}_{-j}$ that contains $x_j$ and 
compute the distance between $x_j$ and the boundary of this interval.
The function
$\varphi_j(\boldsymbol{x})$ is then defined as
the minimum of the distance and a user-selected constant $C_j$.
The loss resulting from this extension, with the composition $\boldsymbol{h}\circ\boldsymbol{\varphi}$ in place of
$\boldsymbol{h}$,  can again be
approximated by an empirical loss that is quadratic in the canonical
parameters of exponential families.

As an application of the proposed framework, we study a class of
pairwise interaction models for an $m$-dimensional random vector
$\boldsymbol{X}=(X_i)_{i=1}^m$ that was considered in \citet{yu18,yus19} and in
special cases in earlier literature.  These \emph{$a$-$b$ models} postulate a
probability density function proportional to 
\begin{equation}\label{eq_interaction_density}
\exp\left\{-\frac{1}{2a}\boldsymbol{x}^a\mathbf{K}\boldsymbol{x}^a+\frac{1}{b}\boldsymbol{\eta}^{\top}\boldsymbol{x}^b\right\},
\qquad \boldsymbol{x}\in\mathfrak{D}.
\end{equation}
Where past work assumes $\mathfrak{D}=\mathbb{R}^m$ or
$\mathfrak{D}=\mathbb{R}_+^m$, we here allow a general domain
$\mathfrak{D}\subset\mathbb{R}^m$.  In~\eqref{eq_interaction_density},
$a\geq 0$ and $b\geq 0$ are known constants, and
$\mathbf{K}\in\mathbb{R}^{m\times m}$ and
$\boldsymbol{\eta}\in\mathbb{R}^m$ are unknown parameters to be
estimated. For $a=0$ we define
${\boldsymbol{x}^a}^{\top}\mathbf{K}\boldsymbol{x}^a/a\equiv(\log
\boldsymbol{x})^{\top}\mathbf{K}(\log \boldsymbol{x})$ and for $b=0$
we define $\boldsymbol{\eta}^{\top}\boldsymbol{x}^b/b\equiv
\boldsymbol{\eta}^{\top}(\log\boldsymbol{x})$. The case where $a=0$
was not considered in \citet{yu18,yus19}. This model class provides a
simple yet rich framework for pairwise interaction models.  In
particular, if
$\mathfrak{D}=\mathfrak{D}_1\times\cdots\times\mathfrak{D}_m$ is a
product set, then $X_i$ and $X_j$ are conditionally independent given
all others if and only if $\kappa_{ij}=\kappa_{ji}=0$ in the
interaction matrix $\mathbf{K}$; i.e., the $a$-$b$ models become
graphical models \citep{maa18}. When $a=b=1$, model
(\ref{eq_interaction_density}) is  a (truncated) Gaussian
graphical model, with $\boldsymbol{\Sigma}\equiv\mathbf{K}^{-1}$ the
covariance matrix and $\boldsymbol{\Sigma}^{-1}\boldsymbol{\eta}$ the
mean parameter. The case where $a=b=1/2$ with
$\mathfrak{D}=\mathbb{R}_+^m$ is the exponential square
root graphical model from \citet{ino16}. 

For estimation of a sparse interaction matrix $\mathbf{K}$ in 
high-dimensional $a$-$b$ models, we take up an $\ell_1$ regularization
approach considered in \citet{lin16} and improved in
\citet{yu18,yus19}.
In \citet{yu18,yus19}, we showed that this approach permits recovery
of the support of $\mathbf{K}$ under
sample complexity $n=\Omega(\log m)$ for Gaussians  truncated to
$\mathfrak{D}=\mathbb{R}_+^m$.  Here, we prove that the same sample
complexity is achieved for Gaussians truncated to any domain $\mathfrak{D}$ that is a
finite disjoint union of convex sets with $n=\Omega(\log m)$
samples. In addition, we derive similar results for general
$a$-$b$ models on bounded subsets of $\mathbb{R}_+^m$ with positive
measure for $a>0$, or if $\log\mathfrak{D}$ is bounded for $a=0$.
On unbounded domains for $a>0$ or for unbounded $\log\mathfrak{D}$ and
$a=0$, we require $n$ to be $\Omega(\log m)$ times a factor that may weakly depend on $m$.

\subsection{Organization of the Paper}

The rest of the paper is structured as follows. We provide the
necessary background on score matching in Section~\ref{Preliminaries}. In Section~\ref{Generalized Score Matching for General Domains}, we introduce and detail our new methodology, 
along with the regularized generalized 
estimator for exponential families. In Section~\ref{$a$-$b$ Models on Domains with Positive Measure}, we define the $a$-$b$ interaction models and focus on application of our method to these models on domains with positive Lebesgue measure. Theoretical results and numerical experiments are given in Sections~\ref{Theory} and \ref{Numerical Experiments}, respectively. We apply our method to a DNA methylation dataset in Section~\ref{DNA Methylation Data}. Longer proofs are included in the Appendix. An implementation that incorporates various types of domain $\mathfrak{D}$ is available in the \texttt{genscore} R package.

\subsection{Notation}
We use lower-case letters for constant scalars, vectors and functions
and upper-case letters for random scalars and vectors (except some
special cases). We reserve regular font for scalars (e.g.~$a$, $X$)
and boldface for vectors (e.g.~$\boldsymbol{a}$, $\boldsymbol{X}$),
and $\mathbf{1}_m=(1,\dots,1)\in\mathbb{R}^m$.  For two vectors
$\boldsymbol{u},\boldsymbol{v}\in\mathbb{R}^m$, we write
$\boldsymbol{u}\succ\boldsymbol{v}$ if $u_j>v_j$ for $j=1,\dots,m$.
Matrices are in upright bold, with constant matrices in upper-case ($\mathbf{K}$, $\mathbf{M}$) and random data matrices in lower-case ($\mathbf{x}$, $\mathbf{y}$). Superscripts index rows and subscripts index columns in a data matrix $\mathbf{x}$, so, $\boldsymbol{X}^{(i)}$ is the $i$-th row, and $X_j^{(i)}$ is its $j$-th feature.

For vectors $\boldsymbol{u},\boldsymbol{v}\in\mathbb{R}^m$, $\boldsymbol{u}\odot\boldsymbol{v}\equiv(u_1v_1,\ldots,u_mv_m)$ denotes the Hadamard product  (element-wise multiplication), and the $\ell_a$-norm for $a\geq 1$ is denoted $\|\boldsymbol{u}\|_a=(\sum_{j=1}^m|u_j|^a)^{1/a}$, with $\|\boldsymbol{u}\|_{\infty}=\max_{j=1,\ldots,m}|u_j|$. For $a\in\mathbb{R}$, let $\boldsymbol{v}^a\equiv (v_1^a,\ldots,v_m^a)$. Similarly, for function $\boldsymbol{f}:\mathbb{R}^m\to\mathbb{R}^m$, $\boldsymbol{x}\mapsto (f_1(\boldsymbol{x}),\ldots, f_m(\boldsymbol{x}))$, we write $\boldsymbol{f}^a(\boldsymbol{x})\equiv(f_1^a(\boldsymbol{x}),\ldots,f_m^a(\boldsymbol{x}))$. Similarly, we also write $\boldsymbol{f}'(\boldsymbol{x})\equiv(\partial f_1(\boldsymbol{x})/\partial x_1,\ldots,\partial f_m(\boldsymbol{x})/\partial x_m)$.

For a matrix $\mathbf{K}=[\kappa_{ij}]_{i,j}\in\mathbb{R}^{n\times m}$, its vectorization is obtained by stacking its columns into an $\mathbb{R}^{nm}$ vector. Its Frobenius norm is $\mnorm{\mathbf{K}}_{F}=\|\mathrm{vec}(\mathbf{K})\|_2$, its max norm is $\|\mathbf{K}\|_{\infty}\equiv\|\mathrm{vec}(\mathbf{K})\|_{\infty}\equiv\max_{i,j}|\kappa_{ij}|$, and its $\ell_a$--$\ell_b$ operator norm is $\mnorm{\mathbf{K}}_{a,b}\equiv\max_{\boldsymbol{x}\neq\boldsymbol{0}}\|\mathbf{K}\boldsymbol{x}\|_b/\|\boldsymbol{x}\|_a$, with $\mnorm{\mathbf{K}}_a\equiv\mnorm{\mathbf{K}}_{a,a}$.

For a vector $\boldsymbol{x}\in\mathbb{R}^m$ and an index
$j\in\{1,\dots,m\}$, we write $\boldsymbol{x}_{-j}$ for the subvector
that has the $j$th component removed.  For a function $f$ of a vector
$\boldsymbol{x}$, we may also write $f(x_j;\boldsymbol{x}_{-j})$ to
stress the dependency on $x_j$, especially when $\boldsymbol{x}_{-j}$
is fixed and only $x_j$ is varied, and write $\partial_j
f(\boldsymbol{x})=\left.\partial_j f(y;\boldsymbol{x}_{-j})/\partial
  y\right|_{y\equiv x_j}$. For two compatible functions $f$ and $g$,
$f\circ g$ denotes their function composition. Unless otherwise noted, the considered probability density functions
are densities with respect to the Lebesgue measure on $\mathbb{R}^m$.

\section{Preliminaries}\label{Preliminaries}

Suppose $\boldsymbol{X}\in\mathbb{R}^m$ is a random vector with distribution function $P_0$ supported on domain $\mathfrak{D}\subseteq\mathbb{R}^m$ and a twice continuously differentiable probability density function $p_0$ with respect to the Lebesgue measure restricted to $\mathfrak{D}$. Let $\mathcal{P}(\mathfrak{D})$ be a family of distributions of interest with twice continuously differentiable densities on $\mathfrak{D}$. 
The goal is to estimate $p_0$ by picking the distribution $P$ from $\mathcal{P}(\mathfrak{D})$ with density $p$ minimizing an empirical loss that measures the distance between $p$ and $p_0$.

\subsection{Original Score Matching on $\mathbb{R}^m$} 
The original \emph{score matching} loss proposed by \citet{hyv05} for $\mathfrak{D}\equiv\mathbb{R}^m$ is given by
\begin{equation*}
J_{\mathbb{R}^m}(P)\equiv\frac{1}{2}\int_{\mathbb{R}^m}p_0(\boldsymbol{x})\|\nabla\log p(\boldsymbol{x})-\nabla\log p_0(\boldsymbol{x})\|_2^2\d\boldsymbol{x},
\end{equation*}
in which the gradients can be thought of as gradients with respect to a hypothetical location parameter and evaluated at the origin \citep{hyv05}. The log densities enable estimation without calculating the normalizing constants of $p$ and $p_0$. Under mild conditions, using integration by parts, the loss can be rewritten as
\[J_{\mathbb{R}^m}(P)\equiv\int_{\mathbb{R}^m}p_0(\boldsymbol{x})\sum_{j=1}^m\left[\partial_{jj}\log p(\boldsymbol{x})+\frac{1}{2}\left(\partial_j\log p(\boldsymbol{x})\right)^2\right]\d\boldsymbol{x}\]
plus a constant independent of $p$. One can thus use a sample average to approximate the loss without knowing the true density $p_0$.

\subsection{Score Matching on $\mathbb{R}_+^m$} 
\label{Score Matching on R+m}

Consider $\mathfrak{D}\equiv\mathbb{R}_+^m$. Let $\boldsymbol{h}:\mathbb{R}_+^m\to\mathbb{R}_+^m,\,\boldsymbol{x}\mapsto(h_1(x_1),\ldots,h_m(x_m))^{\top}$, where $h_1,\ldots,h_m:\mathbb{R}_+\to\mathbb{R}_+$ are almost surely positive functions that are absolutely continuous in every bounded sub-interval of $\mathbb{R}_+$. The \emph{generalized $\boldsymbol{h}$-score matching loss} proposed by \citet{yu18,yus19} is
\begin{equation}\label{eq_loss_yu18}
J_{\boldsymbol{h},\mathbb{R}_+^m}(P)\equiv\frac{1}{2}\int_{\mathbb{R}_+^m}p_0(\boldsymbol{x})\left\|\nabla\log p(\boldsymbol{x})\odot  \boldsymbol{h}^{1/2}(\boldsymbol{x})-\nabla\log p_0(\boldsymbol{x})\odot \boldsymbol{h}^{1/2}(\boldsymbol{x})\right\|_2^2\d\boldsymbol{x}.
\end{equation}
The score matching loss for $\mathbb{R}_+^m$ originally proposed by \citet{hyv07} is a special case of (\ref{eq_loss_yu18}) with $\boldsymbol{h}(\boldsymbol{x})=\boldsymbol{x}^2$. In \citet{yu18,yus19} we proved that by choosing slowly growing and preferably bounded $h_1,\ldots,h_m$, the estimation efficiency can be significantly improved. Under assumptions that for all $P\in\mathcal{P}(\mathbb{R}_+^m)$ with density $p$,
\begin{enumerate}[label=(A0.\arabic*)]
\item $p_0(x_j;\boldsymbol{x}_{-j})h_j(x_j)\partial_j\log p(x_j;\boldsymbol{x}_{-j})\left|_{x_j\searrow 0^+}^{x_j\nearrow+\infty}\right.=0,\quad\forall \boldsymbol{x}_{-j}\in\mathbb{R}_+^{m-1}$ $\forall j$;
\item $\mathbb{E}_{p_0}\left\|\nabla\log p(\boldsymbol{X})\odot \boldsymbol{h}^{1/2}(\boldsymbol{X})\right\|_2^2<+\infty$,\quad $\mathbb{E}_{p_0}\left\|\left(\nabla\log p(\boldsymbol{X})\odot\boldsymbol{h}(\boldsymbol{X})\right)'\right\|_1<+\infty$,
\end{enumerate}
where $f(\boldsymbol{x})\left|^{x_j\nearrow+\infty}_{x_j\searrow 0^+}\right.\equiv\lim_{x_j\nearrow +\infty}f(\boldsymbol{x})-\lim_{x_j\searrow 0^+}f(\boldsymbol{x})$, the loss (\ref{eq_loss_yu18}) can be rewritten as
\begin{multline*}
J_{\boldsymbol{h},\mathbb{R}_+^m}(P)\equiv\int_{\mathbb{R}_+^m}p_0(\boldsymbol{x})\sum_{j=1}^{m}\left[h_j'(x_j)\partial_j(\log p(\boldsymbol{x}))+h_j(x_j)\partial_{jj}(\log p(\boldsymbol{x}))+\frac{1}{2}h_j(x_j)[\partial_j(\log p(\boldsymbol{x}))]^2\right]\d\boldsymbol{x}
\end{multline*}
plus a constant independent of $p$. One can thus estimate $p_0$ by minimizing the empirical loss 
$J_{\boldsymbol{h},\mathbb{R}_+^m}(P)$.

\subsection{Score Matching on Bounded Open Subsets of
  $\mathbb{R}^m$}
\label{sec_intro_liu19}

The method proposed in \citet{liu19} 
estimates a density $p_0$ on a bounded open subset $\mathfrak{D}\subset\mathbb{R}^m$ with a piecewise smooth boundary $\partial\mathfrak{D}$ by minimizing the following ``maximally weighted score matching'' loss
\begin{equation}\label{eq_loss_g0}
J_{g_0,\mathfrak{D}}(P)\equiv\sup_{g\in\mathcal{G}}\frac{1}{2}\int_{\mathbb{R}_+^m}g(\boldsymbol{x})p_0(\boldsymbol{x})\left\|\nabla\log p(\boldsymbol{x})-\nabla\log p_0(\boldsymbol{x})\right\|_2^2\d\boldsymbol{x}.
\end{equation}
with $\mathcal{G}\equiv\{g|g(\boldsymbol{x})=0,\forall\boldsymbol{x}\in\partial\mathfrak{D}\text{ and }g\text{ is }L\text{-Lipschitz continuous}\}$ for some constant $L>0$. The authors show that the maximum is obtained with $g_0(\boldsymbol{x})\equiv L\cdot\inf_{\boldsymbol{x}'\in\partial\mathfrak{D}}\|\boldsymbol{x}-\boldsymbol{x}'\|_2$, i.e.~the $\ell_2$ distance of $\boldsymbol{x}$ to the boundary of $\mathfrak{D}$; using integration by parts similar to the previous methods, (\ref{eq_loss_g0}) can be estimated using the empirical loss which can be calculated with a closed form.

\section{Generalized Score Matching for General Domains}\label{Generalized Score Matching for General Domains}

\subsection{Assumption on the Domain}

For $\boldsymbol{x}\in\mathbb{R}^m$ and any index $j=1,\ldots,m$,
write
$\mathfrak{C}_{j,\mathfrak{D}}\left(\boldsymbol{x}_{-j}\right)\equiv\{y\in\mathbb{R}:(y;\boldsymbol{x}_{-j})\in\mathfrak{D}\}$
for the section of $\mathfrak{D}$ obtained by fixing the coordinates
in $\boldsymbol{x}_{-j}$.  This $j$th section is the projection of the
intersection between $\mathfrak{D}$ and the line
$\{(y;\boldsymbol{x}_{-j}):y\in\mathbb{R}\}$.  A non-empty $j$th
section is obtained from the vectors $\boldsymbol{x}_{-j}$ in the set
$\mathfrak{S}_{-j,\mathfrak{D}}\equiv\left\{\boldsymbol{x}_{-j}:\mathfrak{C}_{j,\mathfrak{D}}\left(\boldsymbol{x}_{-j}\right)\neq\varnothing\right\}\subset\mathbb{R}^{m-1}$.
For notational simplicity, we 
drop their dependency 
on $\mathfrak{D}$.

\begin{definition}\label{def_V}
We say that a domain $\mathfrak{D}\subseteq\mathbb{R}^m$ is a
\emph{component-wise countable union of intervals} if it is measurable, and for any index $j=1,\ldots,m$ and any
  $\boldsymbol{x}_{-j}\in\mathfrak{S}_{-j,\mathfrak{D}}$, the section
  $\mathfrak{C}_{j,\mathfrak{D}}\left(\boldsymbol{x}_{-j}\right)$ is a
  countable union of disjoint intervals, meaning that
  \begin{equation}\label{eq_C_union}
    \mathfrak{C}_{j,\mathfrak{D}}\left(\boldsymbol{x}_{-j}\right)\equiv\Uu_{k=1}^{K_j(\boldsymbol{x}_{-j})} I_k(\boldsymbol{x}_{-j}),
  \end{equation}
  where $K_j(\boldsymbol{x}_{-j})\in\mathbb{N}\cup\{\infty\}$, and
  each set $I_k(\boldsymbol{x}_{-j})$ is an interval (closed, open, or half-open) with endpoints
  $-\infty\leq a_{k,j}(\boldsymbol{x}_{-j})\le
  b_{k,j}(\boldsymbol{x}_{-j})\leq +\infty$, with the
    $I_k(\boldsymbol{x}_{-j})$'s being the connected components of
    $\mathfrak{C}_{j,\mathfrak{D}}\left(\boldsymbol{x}_{-j}\right)$.
    The last point rules out constructions like $I_1=(0,1]$ and
    $I_2=(1,2]$ but allows $I_1=(0,1)$ and $I_2=(1,2]$.
  We define the \emph{component-wise} boundary set of such a component-wise countable union
  of intervals as
\begin{equation}\label{def_V_boundary}
\underline{\partial}\mathfrak{D}\equiv\left\{\boldsymbol{x}\in\mathbb{R}^m:\exists j=1,\ldots,m,\,\boldsymbol{x}_{-j}\in\mathfrak{S}_{-j,\mathfrak{D}},\,x_j\in\Uu_{k=1}^{K_j(\boldsymbol{x}_{-j})}\{a_{k,j}(\boldsymbol{x}_{-j}),b_{k,j}(\boldsymbol{x}_{-j})\}\backslash\{\pm\infty\}\right\}.
\end{equation}
\end{definition}


\subsection{Generalized Score Matching Loss for General Domains}\label{Methodology}
We first define a \emph{truncated component-wise distance}, which is
based on distances within connected components of sections.
\begin{definition}
  Let $\boldsymbol{C}=(C_1,\dots,C_m)$ be comprised of positive
  constants, so $\boldsymbol{C}\succ\boldsymbol{0}$.  Let
  $\mathfrak{D}\subseteq\mathbb{R}^m$ be a non-empty component-wise
  countable union of intervals whose sections are presented as in~\eqref{eq_C_union}.  For any
  vector $\boldsymbol{x}\in\mathfrak{D}$, define the \emph{truncated
    component-wise distance of $\boldsymbol{x}$ to the boundary of
    $\mathfrak{D}$} as
\begin{align}
\boldsymbol{\varphi}_{\boldsymbol{C},\mathfrak{D}}(\boldsymbol{x})&\equiv\left(\varphi_{C_1,\mathfrak{D},1}(\boldsymbol{x}),\ldots,\varphi_{C_m,\mathfrak{D},m}(\boldsymbol{x})\right)\in\mathbb{R}_+^m,\label{def_dist}\\
\varphi_{C_j,\mathfrak{D},j}(\boldsymbol{x})&\equiv \begin{cases}
C_j, & a_{k,j}=-\infty,\,b_{k,j}=+\infty,\\
\min(C_j,b_{k,j}-x_j),&a_{k,j}=-\infty,\,x_j\leq b_{k,j}<+\infty,\\
\min(C_j,x_j-a_{k,j},b_{k,j}-x_j),&-\infty<a_{k,j}\leq x_j\leq b_{k,j}<+\infty,\\
\min(C_j,x_j-a_{k,j}),&-\infty<a_{k,j}\leq x_j,\,b_{k,j}=+\infty,
\end{cases}
\end{align}
where $k$ is the index for which $x_j\in I_k(\boldsymbol{x}_{-j})$ and
$a_{k,j}\leq b_{k,j}$ are the endpoints of $I_k(\boldsymbol{x}_{-j})$.
\end{definition}

Our idea for defining a score matching loss suitable for general
domains is now to use the generalized score matching framework from
(\ref{eq_loss_yu18}) but apply the function $\boldsymbol{h}$ to $\boldsymbol{\varphi}_{\boldsymbol{C},\mathfrak{D}}(\boldsymbol{x})$ instead of to $\boldsymbol{x}$.


\begin{definition}
  \label{def:loss}
Suppose the true distribution $P_0$ has a twice continuously
differentiable density $p_0$ supported on
$\mathfrak{D}\subseteq\mathbb{R}^m$, a non-empty component-wise
countable union of intervals. 
Given positive constants $\boldsymbol{C}\succ\boldsymbol{0}$, and
$\boldsymbol{h}:\mathbb{R}_+^m\to\mathbb{R}_+^m$,
$\boldsymbol{y}\mapsto(h_1(y_1),\ldots,h_m(y_m))$ with
$h_1,\ldots,h_m:\mathbb{R}_+\to\mathbb{R}_+$, the \emph{generalized
  ($\boldsymbol{h},\boldsymbol{C},\mathfrak{D}$)-score matching loss} for $P\in\mathcal{P}(\mathfrak{D})$ with density $p$ is defined as
\begin{multline}\label{def_loss}
  J_{\boldsymbol{h},\boldsymbol{C},\mathfrak{D}}(P)\equiv\\
  \frac{1}{2}\int_{\mathfrak{D}}p_0(\boldsymbol{x})\left\|\nabla\log p(\boldsymbol{x})\odot \left(\boldsymbol{h}\circ\boldsymbol{\varphi}_{\boldsymbol{C},\mathfrak{D}}\right)^{1/2}(\boldsymbol{x})-\nabla\log p_0(\boldsymbol{x})\odot \left(\boldsymbol{h}\circ\boldsymbol{\varphi}_{\boldsymbol{C},\mathfrak{D}}\right)^{1/2}(\boldsymbol{x})\right\|_2^2\d\boldsymbol{x}.
\end{multline}
\end{definition}


In (\ref{def_loss}), we apply the loss from (\ref{eq_loss_yu18}) with
the choice
$(\boldsymbol{h}\circ\boldsymbol{\varphi}_{\boldsymbol{C},\mathfrak{D}}$)
in place of $\boldsymbol{h}$.  The function
$\boldsymbol{\varphi}_{\boldsymbol{C},\mathfrak{D}}$ transforms a
point $\boldsymbol{x}\in\mathfrak{D}$ into the component-wise distance
vector in $\mathbb{R}_+^m$. The loss from Definition~\ref{def:loss} is thus a natural extension of our work in
\citet{yu18,yus19}, with the appeal that
$\boldsymbol{\varphi}_{\boldsymbol{C},\mathfrak{D}}$ usually has a
closed-form solution and can be computed efficiently.  For
$\mathfrak{D}=\mathbb{R}_+^m$ and
$\boldsymbol{C}=(+\infty,\dots,+\infty)$ it holds that
$\boldsymbol{\varphi}_{\boldsymbol{C},\mathbb{R}_+^m}(\boldsymbol{x})=\boldsymbol{x}$,
and the generalized score matching loss from (\ref{eq_loss_yu18})
becomes a special case of~\eqref{def_loss}. In
\citet{yu18,yus19}, we suggested taking the components of
$\boldsymbol{h}$ as bounded functions, which may now also be
incorporated via finite truncation points $\boldsymbol{C}$ for
$\boldsymbol{\varphi}$.  If
$\boldsymbol{h}(\boldsymbol{x})=\boldsymbol{x}^2$,
$\boldsymbol{C}=(+\infty,\dots,+\infty)$ and $\mathfrak{D}\equiv\mathbb{R}_+^m$,
then
$(\boldsymbol{h}\circ\boldsymbol{\varphi}_{\boldsymbol{C},\mathfrak{D}})^{1/2}(\boldsymbol{x})\equiv
\boldsymbol{x}$ gives the estimator in \citet{hyv07,lin16};
see~\eqref{eq:score:hyv:nonneg}. When choosing
$\boldsymbol{h}(\boldsymbol{x})=\mathbf{1}_m$, i.e., constant one, we
have
$(\boldsymbol{h}\circ\boldsymbol{\varphi}_{\boldsymbol{C},\mathfrak{D}})^{1/2}(\boldsymbol{x})\equiv
\mathbf{1}_m$ and recover the original score-matching for
$\mathbb{R}^m$ from \citet{hyv05}; see~\eqref{eq:score:hyv:original}.

For a bounded domain $\mathfrak{D}$, our approach is different from
directly using a distance in $\mathbb{R}^m$ as proposed in
\citet{liu19}; recall~\eqref{eq:score:liu} with the optimizer
$g_0(\boldsymbol{x})$ being the $\ell_2$ distance of $\boldsymbol{x}$ to
the usual boundary of $\mathfrak{D}$.  Instead, we decompose the
distance for each component and apply an extra transformation via the
function $\boldsymbol{h}$.

Figure \ref{plot_phi_l2} illustrates the case of the 2-d unit disk given by $x_1^2+x_2^2<
1$.  
While the function from
\citet{liu19} is
$g_0(\boldsymbol{x})=1-\sqrt{x_1^2+x_2^2}$, our method uses
$\varphi_1(\boldsymbol{x})=\sqrt{1-x_2^2}-|x_1|$ and
$\varphi_2(\boldsymbol{x})=\sqrt{1-x_1^2}-|x_2|$, assuming that
$\boldsymbol{C}=(C_1,C_2)$ has $C_1,C_2\ge 1$.   In Figure
\ref{plot_phi_l2_nn} we consider the 2-d unit disk
restricted to $\mathbb{R}_+^2$, 
where $\varphi_1(\boldsymbol{x})=\min\Big\{x_1,\sqrt{1-x_2^2}-x_1\Big\}$, $\varphi_2(\boldsymbol{x})=\min\Big\{x_2,\sqrt{1-x_1^2}-x_2\Big\}$, $g_0(\boldsymbol{x})=\min\Big\{x_1,x_2,1-\sqrt{x_1^2+x_2^2}\Big\}$.

\begin{figure}[t]
\centering
\subfloat[$g_0$]
{\includegraphics[width=0.24\textwidth]{{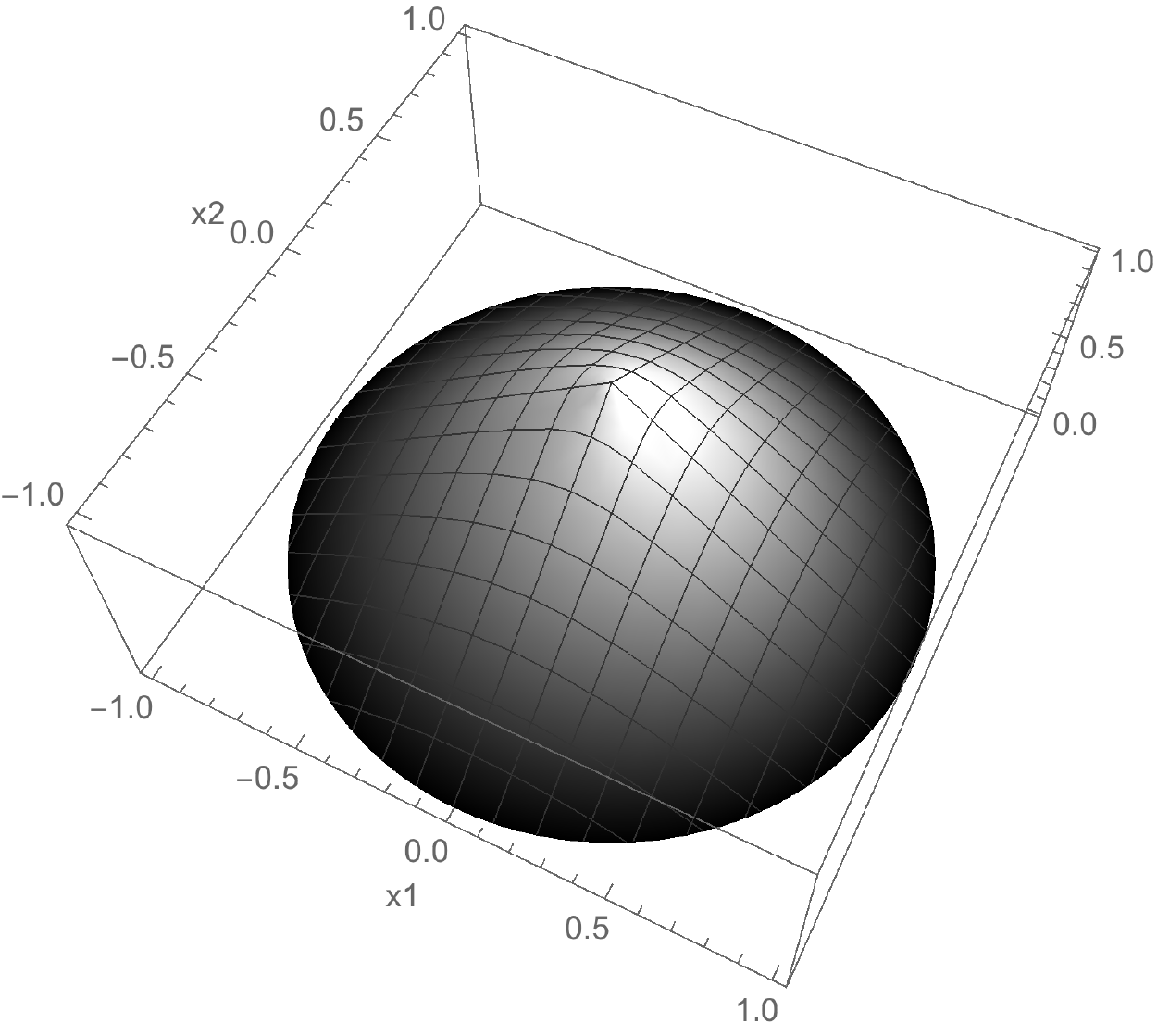}}\hspace{-0.02in}}
\subfloat[$\varphi_{\mathbf{1}_2,\mathfrak{D},1} \vee \varphi_{\mathbf{1}_2,\mathfrak{D},2}$]
{\includegraphics[width=0.24\textwidth]{{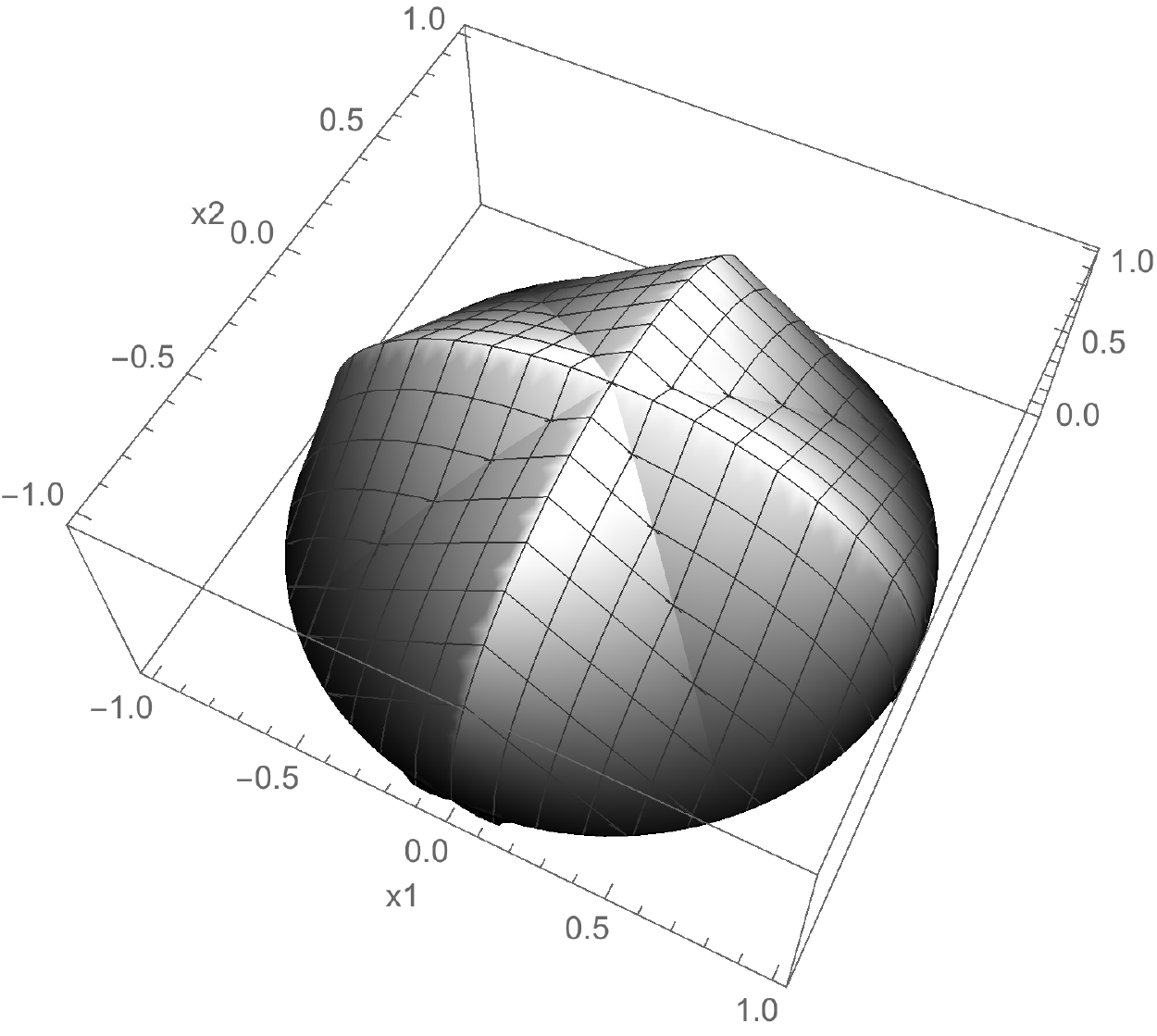}}\hspace{-0.02in}}
\subfloat[$\varphi_{\mathbf{1}_2,\mathfrak{D},1}$]
{\includegraphics[width=0.24\textwidth]{{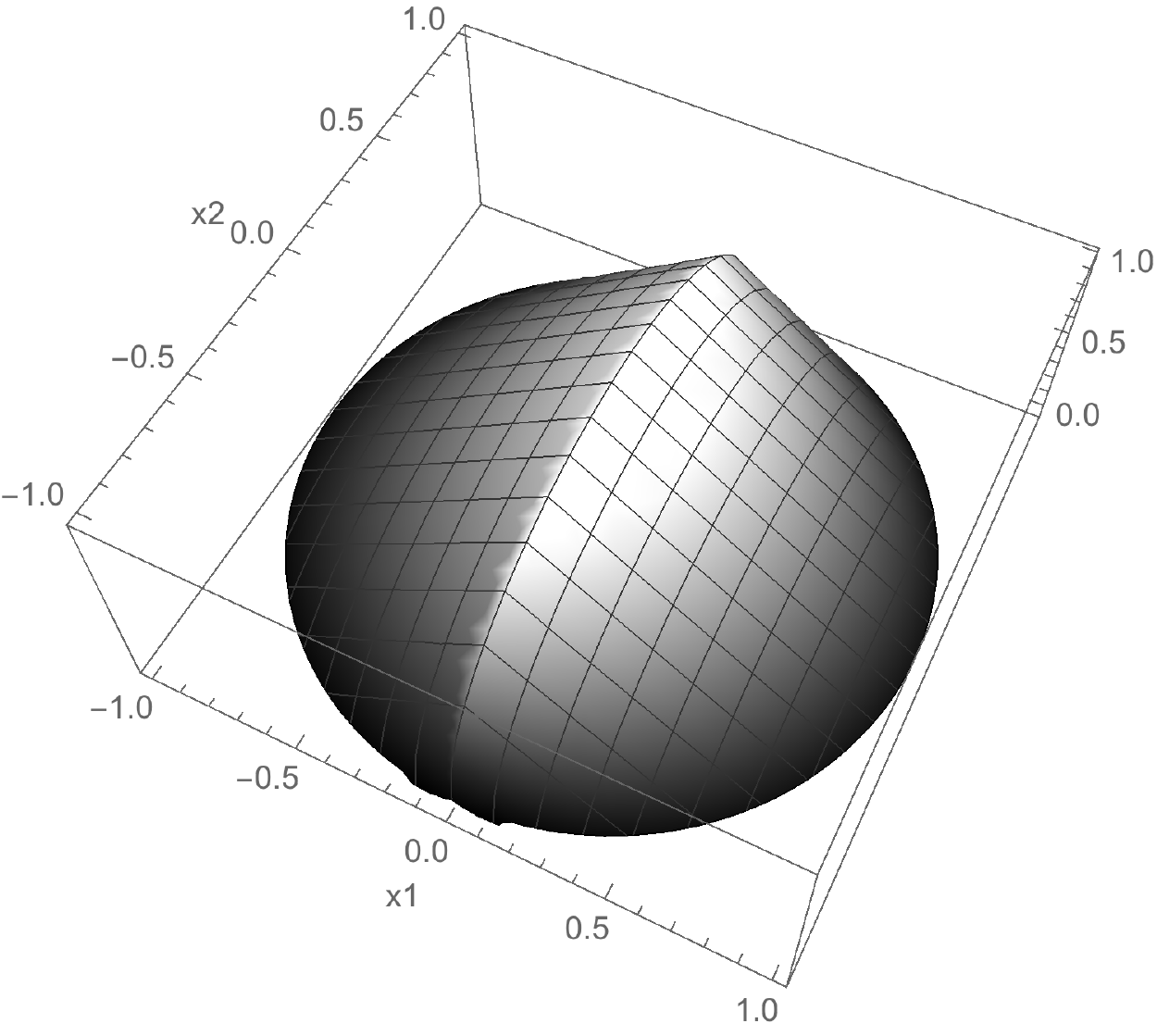}}\hspace{-0.02in}}
\subfloat[$\varphi_{\mathbf{1}_2,\mathfrak{D},2}$]
{\includegraphics[width=0.24\textwidth]{{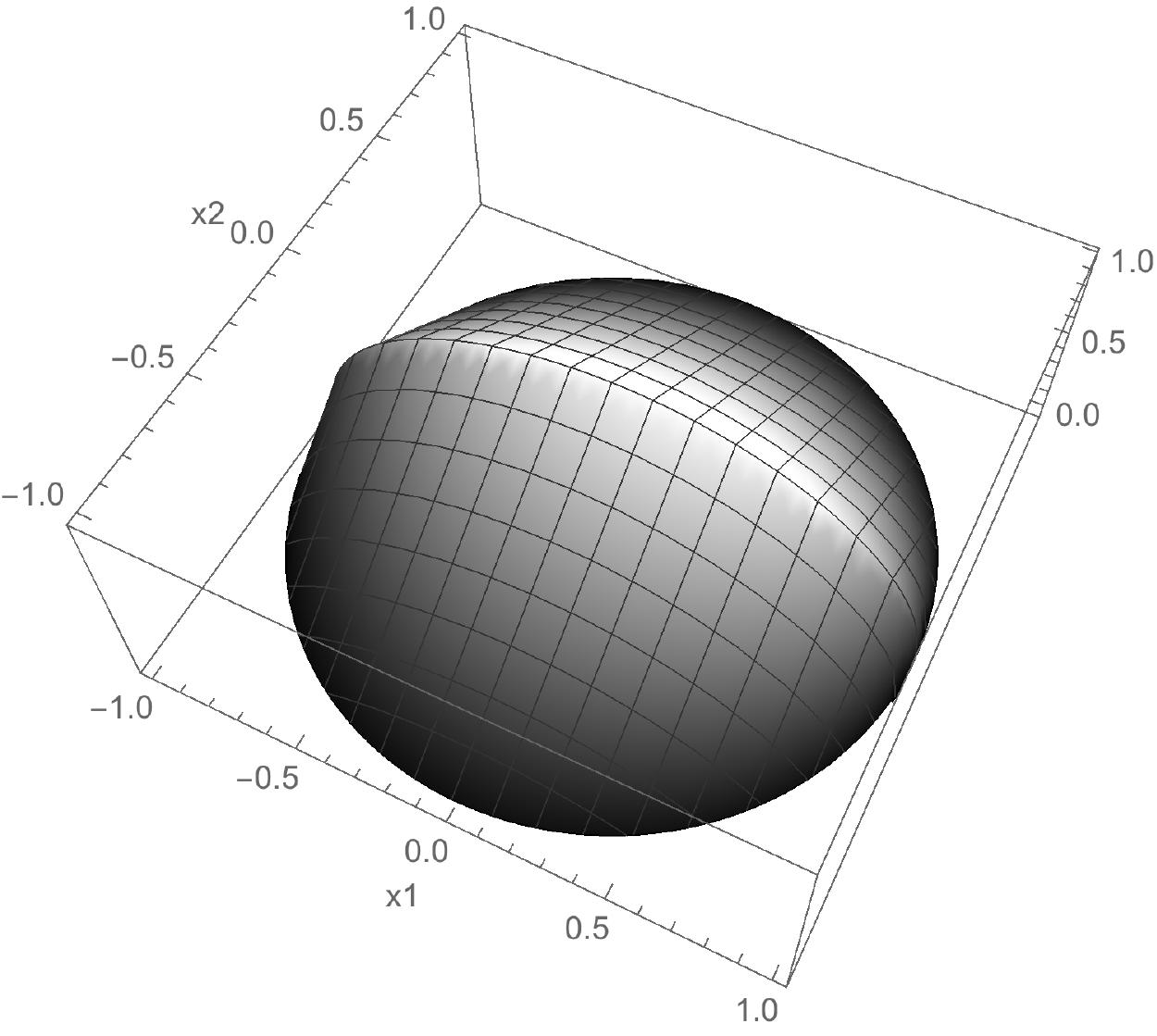}}\hspace{-0.02in}}
\caption{Comparison of $g_0$, $\varphi_{\mathbf{1}_2,\mathfrak{D},1}$ and $\varphi_{\mathbf{1}_2,\mathfrak{D},2}$ on $\mathfrak{D}\equiv\{\boldsymbol{x}\in\mathbb{R}^2:\|\boldsymbol{x}\|_2<1\}$.}\label{plot_phi_l2}
\end{figure}

\begin{figure}[t]
\centering
\subfloat[$g_0$]
{\includegraphics[width=0.24\textwidth]{{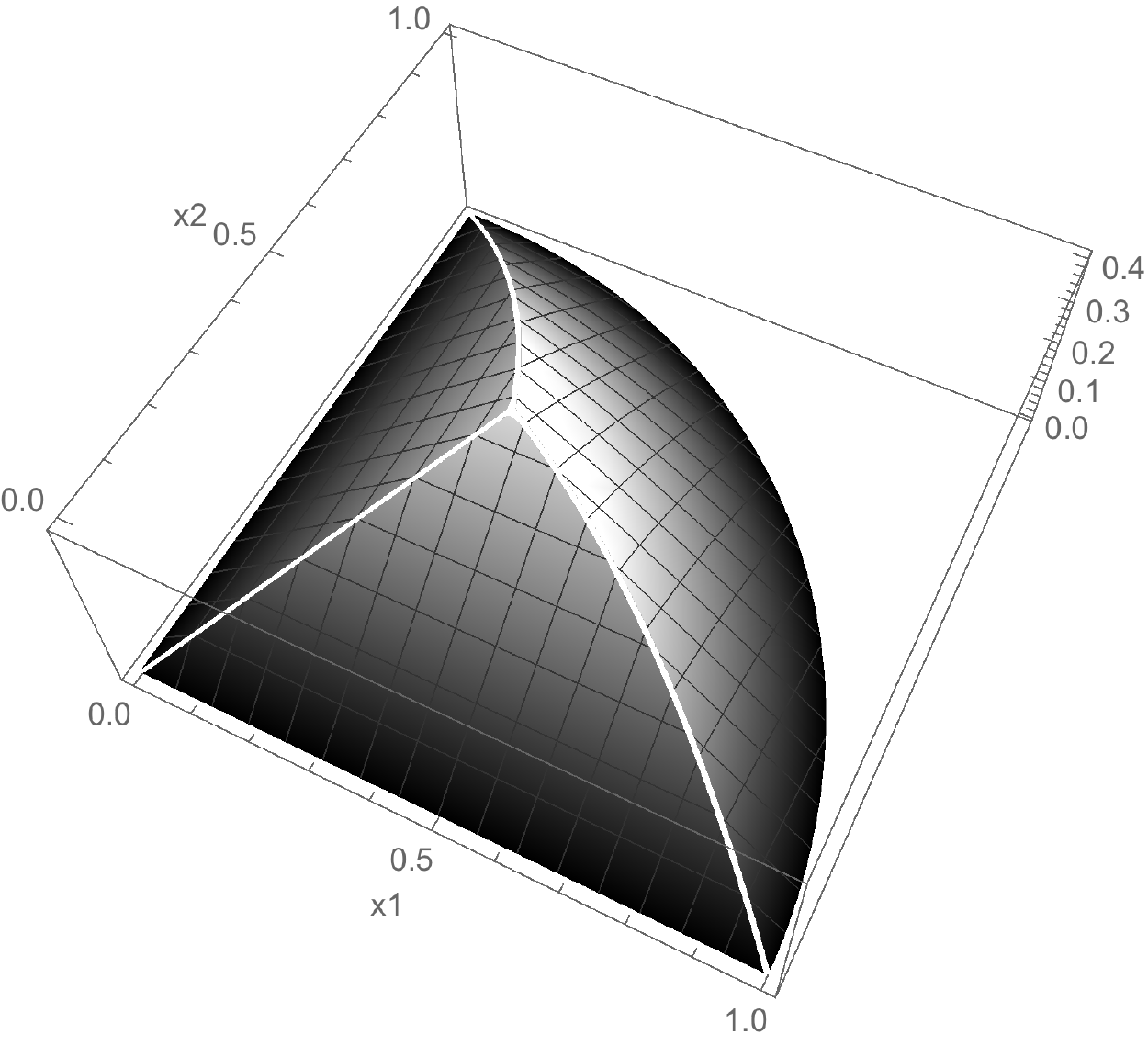}}\hspace{-0.02in}}
\subfloat[$\varphi_{\mathbf{1}_2,\mathfrak{D},1} \vee \varphi_{\mathbf{1}_2,\mathfrak{D},2}$]
{\includegraphics[width=0.24\textwidth]{{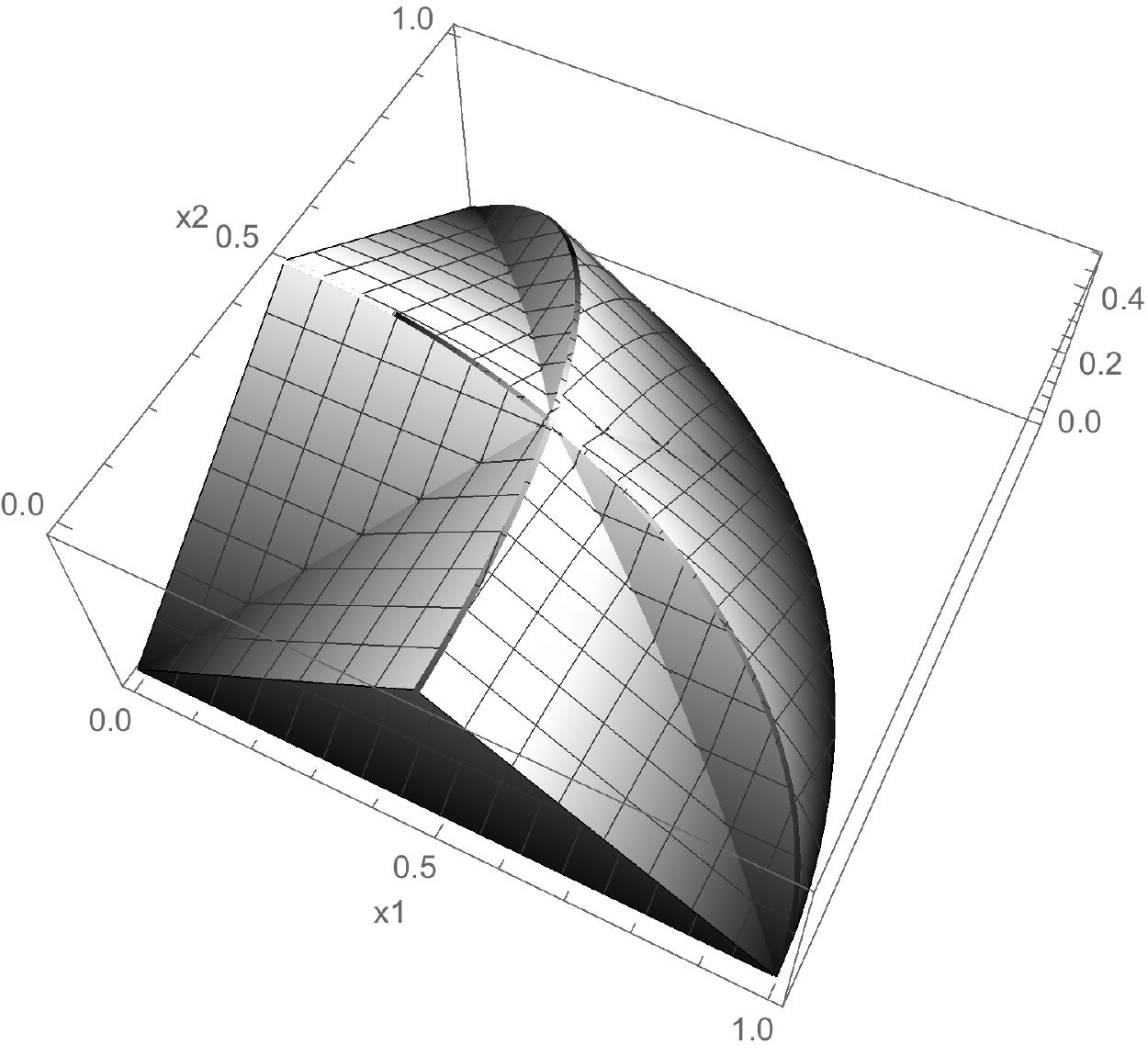}}\hspace{-0.02in}}
\subfloat[$\varphi_{\mathbf{1}_2,\mathfrak{D},1}$]
{\includegraphics[width=0.24\textwidth]{{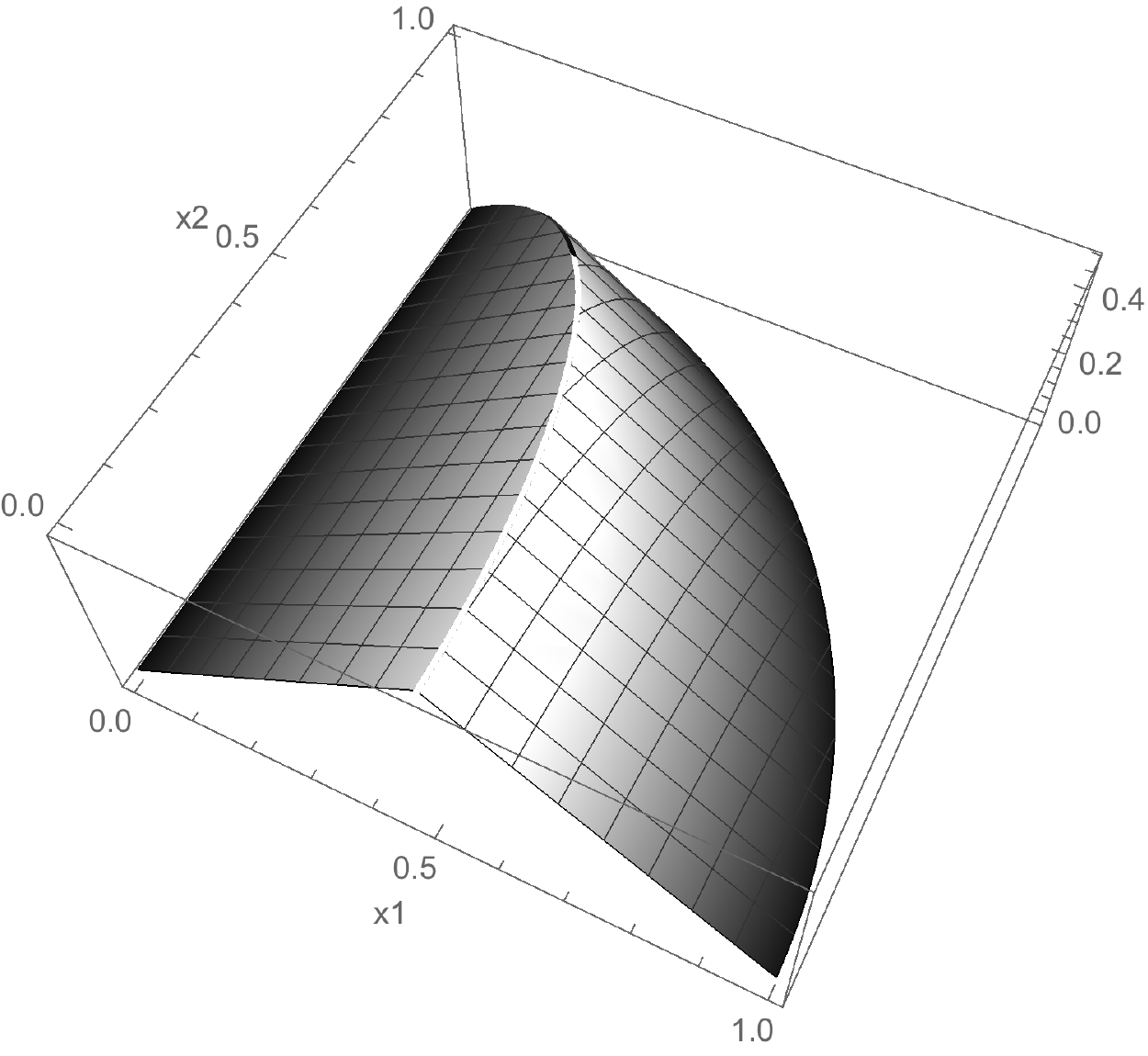}}\hspace{-0.02in}}
\subfloat[$\varphi_{\mathbf{1}_2,\mathfrak{D},2}$]
{\includegraphics[width=0.24\textwidth]{{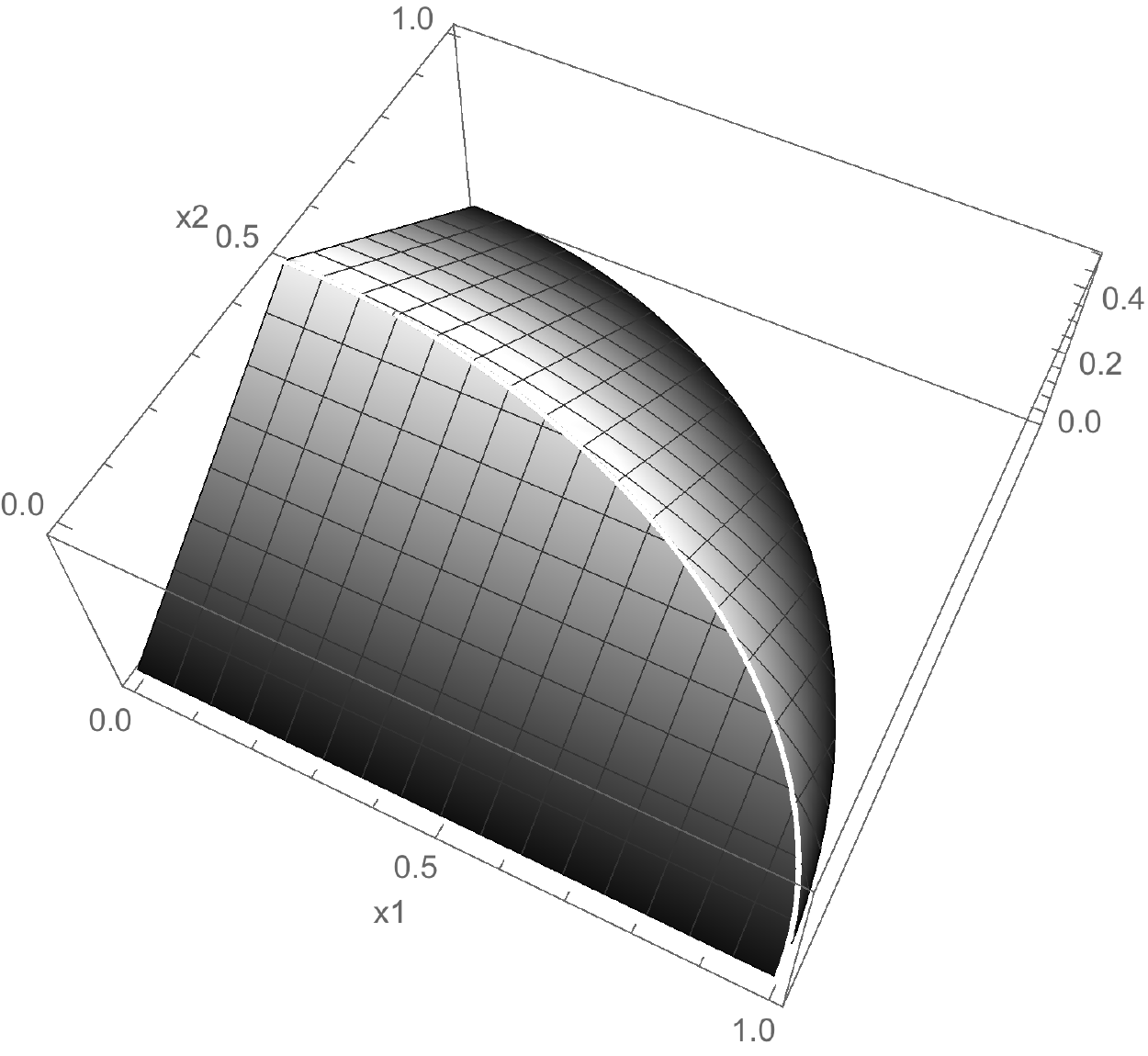}}\hspace{-0.02in}}
\caption{Comparison of $g_0$, $\varphi_{\mathbf{1}_2,\mathfrak{D},1}$ and $\varphi_{\mathbf{1}_2,\mathfrak{D},2}$ on $\mathfrak{D}\equiv\{\boldsymbol{x}\in\mathbb{R}_+^2:\|\boldsymbol{x}\|_2<1\}$.}\label{plot_phi_l2_nn}
\end{figure}

\subsection{Examples of component-wise distances} 
We give the form of the component-wise distance $\boldsymbol{\varphi}_{\boldsymbol{C},\mathfrak{D}}$ for
different examples of domains.

\begin{example}[$\mathbb{R}^m$ and $\mathbb{R}_+^m$]
Two frequently encountered domains are the real space and its
nonnegative orthant.  These are the original settings considered in
\citet{hyv05,hyv07,lin16,yu18,yus19}. We have $\boldsymbol{\varphi}_{\boldsymbol{C},\mathbb{R}^m}(\boldsymbol{x})=\boldsymbol{C}$ and $\boldsymbol{\varphi}_{\boldsymbol{C},\mathbb{R}_+^m}(\boldsymbol{x})=(\min(C_1,x_1),\ldots,\min(C_m,x_m))$.
\end{example}

\begin{example}[Unit hypercube]
  Now consider the unit hypercube $\mathfrak{D}=[-1/2,1/2]^m$
  as an example of a compact set and encountered  in applications like
  \citep{jan15,jan18,tan19}. Every non-empty section
  $\mathfrak{C}_{j,\mathfrak{D}}\left(\boldsymbol{x}_{-j}\right)$
  equals $[-1/2,1/2]$, 
  and so $\varphi_{C_j,\mathfrak{D},j}(\boldsymbol{x})=\min\{C_j,1/2-|x_j|\}$. Since the hypercube is bounded  by nature, it is natural to drop the truncation by $C_j$ and simply use $\boldsymbol{\varphi}_{\mathfrak{D}}(\boldsymbol{x})=\mathbf{1}_m/2-|\boldsymbol{x}|$.
\end{example}

\begin{example}[$\mathcal{L}^q$ ball]\label{Lq Balls}
As a compact domain that is difficult for previously
proposed approaches, consider the $\mathcal{L}^q$ ball with radius
$r>0$ and $q\geq 1$, so
$\mathfrak{D}\equiv\{\boldsymbol{x}\in\mathbb{R}^m:\|\boldsymbol{x}\|_q\leq
r\}$. Given a point $\boldsymbol{x}\in\mathfrak{D}$ and an index $j\in\{1,\ldots,m\}$, the section
$\mathfrak{C}_{j,\mathfrak{D}}\left(\boldsymbol{x}_{-j}\right)$
is the interval $[-(r^q-\mathbf{1}^{\top}|\boldsymbol{x}_{-j}|^q)^{1/q},(r^q-\mathbf{1}^{\top}|\boldsymbol{x}_{-j}|^q)^{1/q}]$, and so $\varphi_{C_j,\mathfrak{D},j}(\boldsymbol{x})=\min\left\{C_j,(r^q-\mathbf{1}^{\top}|\boldsymbol{x}_{-j}|^q)^{1/q}-|x_j|\right\}$.
\end{example}

\begin{example}[$\mathcal{L}^q$ ball restricted to $\mathbb{R}_+^m$]\label{Lq Balls on R+}
Further restricted, consider $\mathfrak{D}\equiv\{\boldsymbol{x}\in\mathbb{R}_+^m:\|\boldsymbol{x}\|_q\leq
r\}$, the nonnegative part of the $\mathcal{L}^q$ ball with radius $r>0$ and
$q\geq 1$. Given a point $\boldsymbol{x}\in\mathfrak{D}$ and an index $j$,
the section
$\mathfrak{C}_{j,\mathfrak{D}}\left(\boldsymbol{x}_{-j}\right)$
is $[0,(r^q-\mathbf{1}^{\top}|\boldsymbol{x}_{-j}|^q)^{1/q}]$, so $\varphi_{C_j,\mathfrak{D},j}(\boldsymbol{x})=\min\left\{C_j,x_j,(r^q-\mathbf{1}^{\top}|\boldsymbol{x}_{-j}|^q)^{1/q}-x_j\right\}$.
\end{example}

\begin{example}[Complement of $\mathcal{L}^q$ ball]
Now consider
$\mathfrak{D}\equiv\{\boldsymbol{x}\in\mathbb{R}^m:\|\boldsymbol{x}\|_q>r\}$,
the complement of the $\mathcal{L}^q$ ball with radius $r>0$ and $q\geq  1$. Given $\boldsymbol{x}\in\mathfrak{D}$ and $j$, we now have 
\[
  \mathfrak{C}_{j,\mathfrak{D}}\left(\boldsymbol{x}_{-j}\right)=
  \begin{cases}
   \mathbb{R} &\text{ if
   }\mathbf{1}_{m-1}^{\top}|\boldsymbol{x}_{-j}|^q> r^q,\\
   \left(-\infty,-(r^q-\mathbf{1}_{m-1}^{\top}|\boldsymbol{x}_{-j}|^q)^{1/q}\right)\cup\left((r^q-\mathbf{1}_{m-1}^{\top}|\boldsymbol{x}_{-j}|^q)^{1/q},+\infty\right)&\text{
     otherwise};
  \end{cases}
\]
\[
  \varphi_{C_j,\mathfrak{D},j}(\boldsymbol{x})=\begin{cases}C_j & \text{if }\mathbf{1}_{m-1}^{\top}|\boldsymbol{x}_{-j}|^q>r^q,\\ 
\min\left\{C_j,|x_j|-(r^q-\mathbf{1}_{m-1}^{\top}|\boldsymbol{x}_{-j}|^q)^{1/q}\right\}&
\text{otherwise}.\end{cases}
\]
\end{example}

\begin{example}[Complement of $\mathcal{L}^q$ ball restricted to
  $\mathbb{R}_+^m$] 
Next consider
$\mathfrak{D}\equiv\{\boldsymbol{x}\in\mathbb{R}_+^m:\|\boldsymbol{x}\|_q>r\}$,
the complement of the nonnegative part of the $\mathcal{L}^q$ ball
with radius $r>0$ and $q\geq 1$.  Given
$\boldsymbol{x}\in\mathfrak{D}$ and 
$j$, 
\[
  \mathfrak{C}_{j,\mathfrak{D}}\left(\boldsymbol{x}_{-j}\right)=
  \begin{cases}
   \mathbb{R}_+ &\text{ if
   }\mathbf{1}_{m-1}^{\top}|\boldsymbol{x}_{-j}|^q> r^q,\\
   \left(-\infty,-(r^q-\mathbf{1}_{m-1}^{\top}|\boldsymbol{x}_{-j}|^q)^{1/q}\right)\cup\left((r^q-\mathbf{1}_{m-1}^{\top}|\boldsymbol{x}_{-j}|^q)^{1/q},+\infty\right)&\text{
     otherwise};
  \end{cases}
\]
\[
  \varphi_{C_j,\mathfrak{D},j}(\boldsymbol{x})=\begin{cases}\min\{C_j,x_j\} & \text{if }\mathbf{1}_{m-1}^{\top}|\boldsymbol{x}_{-j}|^q>r^q,\\ 
\min\left\{C_j,x_j-(r^q-\mathbf{1}_{m-1}^{\top}|\boldsymbol{x}_{-j}|^q)^{1/q}\right\}&
\text{otherwise}.\end{cases}
\]
\end{example}

\begin{example}[Complicated domains defined by inequality constraints]
More generally, a domain $\mathfrak{D}$ may be determined by a series of intersections/unions of regions determined by inequality constraints, e.g., $\mathfrak{D}=\{\boldsymbol{x}\in\mathbb{R}^m:(f_1(\boldsymbol{x})\leq c_1\wedge f_2(\boldsymbol{x})\leq c_2)\vee  f_3(\boldsymbol{x})\geq c_3\}$. In this case, to calculate $\boldsymbol{\varphi}_{\boldsymbol{C},\mathfrak{D}}$ we may plug in $\boldsymbol{x}_{-j}$ as given and solve numerically $f_i(x_j;\boldsymbol{x}_{-j})=c_i$ for $i=1,2,3$, and obtain the boundary points for $x_j$ using simple algorithms for interval unions/intersections. This is implemented in the package \texttt{genscore} for some types of polynomials $f_i$ and arbitrary intersections/unions.
\end{example}

\subsection{The Empirical Generalized Score Matching Loss}\label{Lemmas and Empirical Loss}

From this section on, we simplify notation by dropping the dependence of $\boldsymbol{\varphi}$ on $\boldsymbol{C}$ and $\mathfrak{D}$.

\begin{lemma}\label{lem_identifiability}
Suppose $\boldsymbol{C}\succ\boldsymbol{0}$, $p_0(\boldsymbol{x})>0$ for almost every $\boldsymbol{x}\in\mathfrak{D}$ and $h_1(y),\ldots,h_m(y)>0$ for all $y>0$. Then $J_{\boldsymbol{h},\boldsymbol{C},\mathfrak{D}}(P)=0$ if and only if $p_0=p$ for a.e.~$\boldsymbol{x}\in\mathfrak{D}$.
\end{lemma}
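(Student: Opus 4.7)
My plan is to handle the two directions of the biconditional separately, with most of the effort going into the converse. For the easy $(\Leftarrow)$ direction, if $p = p_0$ a.e.\ on $\mathfrak{D}$, then $\nabla \log p = \nabla \log p_0$ on their common positive support, so the non-negative integrand in \eqref{def_loss} vanishes a.e.\ and $J_{\boldsymbol{h},\boldsymbol{C},\mathfrak{D}}(P) = 0$.

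For the $(\Rightarrow)$ direction, I would argue that $J_{\boldsymbol{h},\boldsymbol{C},\mathfrak{D}}(P)=0$, combined with non-negativity of the integrand and $p_0>0$ a.e., forces
\[
\sum_{j=1}^m (h_j\circ\varphi_{C_j,\mathfrak{D},j})(\boldsymbol{x})\bigl(\partial_j\log p(\boldsymbol{x}) - \partial_j\log p_0(\boldsymbol{x})\bigr)^2 = 0
\]
for a.e.\ $\boldsymbol{x}\in\mathfrak{D}$. The central step is to verify that each weight $(h_j\circ\varphi_{C_j,\mathfrak{D},j})(\boldsymbol{x})$ is strictly positive a.e. Since $h_j>0$ on $(0,\infty)$, this reduces to showing $\varphi_{C_j,\mathfrak{D},j}(\boldsymbol{x})>0$ a.e. Inspecting the four cases in its definition, the only way for $\varphi_{C_j,\mathfrak{D},j}$ to vanish is for $x_j$ to coincide with one of the finite endpoints $a_{k,j}(\boldsymbol{x}_{-j})$ or $b_{k,j}(\boldsymbol{x}_{-j})$ of the interval $I_k(\boldsymbol{x}_{-j}) \ni x_j$. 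By Definition~\ref{def_V}, for each fixed $\boldsymbol{x}_{-j}\in\mathfrak{S}_{-j,\mathfrak{D}}$ the set of such endpoints is at most countable, hence null in $\mathbb{R}$; a Fubini argument then yields that $\{\boldsymbol{x}\in\mathfrak{D}:\varphi_{C_j,\mathfrak{D},j}(\boldsymbol{x})=0\}$ is null in $\mathbb{R}^m$.

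Combining these facts yields $\partial_j\log p = \partial_j\log p_0$ a.e.\ on $\mathfrak{D}$ for every $j$. Using the $C^2$ regularity assumed on $p$ and $p_0$, this equality extends by continuity to every point of the open set where both densities are strictly positive. Hence $\log(p/p_0)$ has vanishing gradient and is therefore locally constant there; on a connected component where $p/p_0\equiv c$, the shared normalization $\int_{\mathfrak{D}}p = \int_{\mathfrak{D}}p_0 = 1$ pins down $c = 1$, giving $p = p_0$ a.e.

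The main obstacle I anticipate is the last step when $\mathfrak{D}$ is disconnected: the ratio $p/p_0$ can \emph{a priori} be a different positive constant on each connected component, and the single normalization constraint is not by itself enough to force all of them to equal $1$. I would handle this either by invoking connectedness of $\mathfrak{D}$ (under which the argument above is tight) or by appealing to additional structure of the family $\mathcal{P}(\mathfrak{D})$ that rules out component-wise rescaling. A secondary technical concern is the measure-theoretic step showing $\{\varphi_{C_j,\mathfrak{D},j}=0\}$ is null, which relies essentially on Definition~\ref{def_V}'s insistence that each section decomposes into countably many \emph{disjoint} intervals with well-defined endpoints.
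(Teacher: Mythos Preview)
Your proposal follows essentially the same route as the paper's proof: use Fubini to show the set where some $\varphi_{C_j,\mathfrak{D},j}$ vanishes is Lebesgue-null, deduce $(\boldsymbol{h}\circ\boldsymbol{\varphi})(\boldsymbol{x})\succ\boldsymbol{0}$ a.e., conclude $\nabla\log p=\nabla\log p_0$ a.e., and then pass from equal gradients to $p=p_0$ via the common normalization.

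The connectedness concern you flag is real, and the paper's own proof does not address it either---it simply asserts that $\nabla\log p_0=\nabla\log p$ a.e.\ on $\mathfrak{D}$ implies $\log p_0=\log p+c_0$ a.e.\ for a \emph{single} constant $c_0$, then uses $\int_{\mathfrak{D}}p=\int_{\mathfrak{D}}p_0=1$ to force $c_0=0$. On a domain with multiple connected components of positive measure this step is not justified without an additional hypothesis (e.g., connectedness of $\mathfrak{D}$, or identifiability within the parametric family $\mathcal{P}(\mathfrak{D})$ ruling out component-wise rescalings). So your caution here is warranted; you have not missed anything the paper supplies.
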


\begin{proof}[Proof of Lemma \ref{lem_identifiability}]
By the measurability and definition of $\mathfrak{D}$, and using the
Fubini-Tonelli theorem, the component-wise boundary set $\underline{\partial}\mathfrak{D}$ from
Definition~\ref{def_V} is a Lebesgue-null set. Thus,
$\boldsymbol{\varphi}(\boldsymbol{x})\succ\boldsymbol{0}$ for almost
every $\boldsymbol{x}\in\mathfrak{D}$, so that
$(\boldsymbol{h}\circ\boldsymbol{\varphi})(\boldsymbol{x})\succ
\boldsymbol{0}$ for a.e.~$\boldsymbol{x}\in\mathfrak{D}$. So
$J_{\boldsymbol{h},\boldsymbol{C},\mathfrak{D}}(P)=0$ if and only if
$\nabla\log p_0(\boldsymbol{x})=\nabla\log p(\boldsymbol{x})$ for
a.e.~$\boldsymbol{x}\in\mathfrak{D}$, i.e., $\log p_0(\boldsymbol{x})=\log p(\boldsymbol{x})+c_0$ for a.e.~$\boldsymbol{x}\in\mathfrak{D}$ for some constant $c_0$, or $p_0(\boldsymbol{x})=c_1\cdot p(\boldsymbol{x})$ for a.e.~$\boldsymbol{x}\in\mathfrak{D}$ for some non-zero constant $c_1\equiv\exp(c_0)$. Since $p_0$ and $p$ both integrate to $1$ over $\mathfrak{D}$, we have $c_1=1$ and $p_0=p$ for a.e.~$\boldsymbol{x}\in\mathfrak{D}$.
\end{proof}

According to Lemma \ref{lem_identifiability} the proposed score
matching method requires the domain $\mathfrak{D}$ to have
positive Lebesgue measure in $\mathbb{R}^m$.  For some
null sets, e.g, a probability simplex, an appropriate transformation
to a lower-dimensional set with positive measure can be given but
we defer discussion of such domains to future work, and assume $\mathfrak{D}$ has positive Lebesgue measure for the rest of this paper.

\begin{lemma}\label{lem_loss} Similar to (A0.1) and (A0.2) in Section
  \ref{Score Matching on R+m}, assume the following to hold for all $p\in\mathcal{P}(\mathfrak{D})$,
\begin{enumerate}[label=(A.\arabic*),leftmargin=30pt]
\item $p_0(x_j;\boldsymbol{x}_{-j})h_j(\varphi_j(\boldsymbol{x}))\partial_j\log p(x_j;\boldsymbol{x}_{-j})\left|_{x_j \searrow a_{k}(\boldsymbol{x}_{-j})^+}^{x_j\nearrow b_{k}(\boldsymbol{x}_{-j})^-}\right.=0$\\ for all $k=1,\ldots,K_j(\boldsymbol{x}_{-j})$ and  $\boldsymbol{x}_{-j}\in\mathfrak{S}_{-j}$ for all $j$;
\item $\int_{\mathfrak{D}}p_0(\boldsymbol{x})\left\|\nabla\log
    p(\boldsymbol{x})\odot
    (\boldsymbol{h}\circ\boldsymbol{\varphi})^{1/2}(\boldsymbol{x})\right\|_2^2\d\boldsymbol{x}<+\infty$,
  \\
$\int_{\mathfrak{D}}p_0(\boldsymbol{x})\left\|\left[\nabla\log p(\boldsymbol{x})\odot(\boldsymbol{h}\circ\boldsymbol{\varphi})(\boldsymbol{x})\right]'\right\|_1\d\boldsymbol{x}<+\infty$.
\end{enumerate}
Also assume that 
\begin{enumerate}[label=(A.\arabic*),leftmargin=30pt]
\setcounter{enumi}{2}
\item $\forall j=1,\ldots,m$ and
  a.e.~$\boldsymbol{x}_{-j}\in\mathfrak{S}_{-j}$, the component
  function $h_j$ of $\boldsymbol{h}$ is absolutely continuous in any
  bounded sub-interval of the section
  $\mathfrak{C}_j(\boldsymbol{x}_{-j})$.
  This implies the same for $(h_j\circ\varphi_j)$ and also that $\partial_j (h_j\circ\varphi_j)$ exists a.e.
\end{enumerate}
 Then the loss $J_{\boldsymbol{h},\boldsymbol{C},\mathfrak{D}}(P)$ is equal to a constant depending on $p_0$ only (i.e., independent of $p$) plus
\begin{multline}\label{eq_equivalent_loss}
\frac{1}{2}\sum_{j=1}^m\int_{\mathfrak{D}}p_0(\boldsymbol{x})\cdot(h_j\circ\varphi_j)(\boldsymbol{x})\cdot\left[\partial_j\log p(\boldsymbol{x})\right]^2\d\boldsymbol{x}
+\sum_{j=1}^m\int_{\mathfrak{D}}p_0(\boldsymbol{x})\cdot\partial_j \left[(h_j\circ\varphi_j)(\boldsymbol{x})\cdot\partial_j\log p(\boldsymbol{x})\right]\d\boldsymbol{x}.
\end{multline}
\end{lemma}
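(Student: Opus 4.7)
The plan is to mimic the integration-by-parts argument of \citet{hyv05}, adapted to the section-wise structure of a general domain $\mathfrak{D}$. First I would expand the squared norm inside the integrand:
$$\|\nabla\log p\odot(\boldsymbol{h}\circ\boldsymbol{\varphi})^{1/2}-\nabla\log p_0\odot(\boldsymbol{h}\circ\boldsymbol{\varphi})^{1/2}\|_2^2=\sum_{j=1}^m(h_j\circ\varphi_j)\bigl[(\partial_j\log p)^2-2\partial_j\log p\cdot\partial_j\log p_0+(\partial_j\log p_0)^2\bigr].$$
Multiplying by $p_0/2$ and integrating over $\mathfrak{D}$, the pure-$p_0$ term contributes only a constant independent of $p$, and the $(\partial_j\log p)^2$ term already matches the first summand of~\eqref{eq_equivalent_loss}. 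The whole content of the lemma is therefore the identity
$$-\sum_{j=1}^m\!\int_{\mathfrak{D}}\!p_0(\boldsymbol{x})(h_j\!\circ\!\varphi_j)(\boldsymbol{x})\partial_j\log p(\boldsymbol{x})\partial_j\log p_0(\boldsymbol{x})\d\boldsymbol{x}=\sum_{j=1}^m\!\int_{\mathfrak{D}}\!p_0(\boldsymbol{x})\partial_j\bigl[(h_j\!\circ\!\varphi_j)(\boldsymbol{x})\partial_j\log p(\boldsymbol{x})\bigr]\d\boldsymbol{x},$$
whose proof is the sole calculation to carry out.

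To that end, I would rewrite $p_0\partial_j\log p_0=\partial_jp_0$ on the left-hand side and, using the integrability in (A.2), invoke Fubini--Tonelli to reduce the $j$th summand to an iterated integral $\int_{\mathfrak{S}_{-j}}\int_{\mathfrak{C}_j(\boldsymbol{x}_{-j})}(\cdots)\d x_j\,\d\boldsymbol{x}_{-j}$. For fixed $\boldsymbol{x}_{-j}\in\mathfrak{S}_{-j}$, I would write $\mathfrak{C}_j(\boldsymbol{x}_{-j})=\bigcup_k I_k(\boldsymbol{x}_{-j})$ with endpoints $a_k\le b_k$. Inside each $I_k$, the map $x_j\mapsto\varphi_j(x_j;\boldsymbol{x}_{-j})$ is $1$-Lipschitz, so (A.3) gives absolute continuity of $h_j\circ\varphi_j$ on bounded subintervals; together with $p_0\in C^2$ and $\partial_j\log p\in C^1$, this makes $(h_j\circ\varphi_j)\partial_j\log p$ absolutely continuous in $x_j$. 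Standard integration by parts on $I_k$ yields
$$\int_{I_k}(h_j\circ\varphi_j)\partial_j\log p\cdot\partial_jp_0\,\d x_j=\Bigl[p_0(h_j\circ\varphi_j)\partial_j\log p\Bigr]_{x_j\searrow a_k^+}^{x_j\nearrow b_k^-}-\int_{I_k}p_0\cdot\partial_j\bigl[(h_j\circ\varphi_j)\partial_j\log p\bigr]\d x_j.$$
The boundary term vanishes by (A.1), and summing over $k$ and integrating over $\boldsymbol{x}_{-j}$ produces the desired identity.

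The main obstacle I anticipate is carrying out the integration by parts when an endpoint is $\pm\infty$ and when $K_j(\boldsymbol{x}_{-j})=\infty$: strictly speaking, one should first do the computation on a compact exhaustion $[a_k+\varepsilon,b_k-\varepsilon]\cap[-N,N]$ and only finitely many components, then pass to the limit. The inner limits are handled by (A.1); to interchange the limits with the outer integration over $\boldsymbol{x}_{-j}$ and summation over $k$, I would use dominated convergence with the integrable majorants supplied by the two bounds in~(A.2), the second of which (the $\ell_1$-bound on $[\nabla\log p\odot(\boldsymbol{h}\circ\boldsymbol{\varphi})]'$) controls exactly the integrand that appears on the right-hand side. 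A minor bookkeeping point is that the $\boldsymbol{x}_{-j}$'s for which the section structure degenerates lie in $\underline\partial\mathfrak{D}$, already shown to be Lebesgue-null in the proof of Lemma~\ref{lem_identifiability}, and so can be discarded.
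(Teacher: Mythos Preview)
Your proposal is correct and follows essentially the same approach as the paper: expand the square, isolate the cross term, use Fubini--Tonelli to pass to iterated integrals over sections, decompose each section into its component intervals, and integrate by parts on each using (A.1) to kill the boundary terms and (A.2) to justify the interchanges. The paper packages the absolute continuity of $h_j\circ\varphi_j$ into a separate preparatory lemma (exploiting that $\varphi_j$ is piecewise linear with slopes $\pm1$ on each $I_k$), whereas you invoke the $1$-Lipschitz property directly; both are fine, and your explicit mention of the compact-exhaustion/dominated-convergence step for unbounded $I_k$ and infinite $K_j$ is, if anything, slightly more careful than the paper's treatment.
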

The proof of the lemma is given in Appendix \ref{Proofs}. The lemma
enables us to estimate the population loss, or rather those parts that are
relevant for estimation of $P$,  using the empirical loss 
\begin{equation}\label{eq_empirical_loss}
\hat{J}_{\boldsymbol{h},\boldsymbol{C},\mathfrak{D}}(P)=\frac{1}{2}\sum_{j=1}^m\sum_{i=1}^n\frac{1}{2}(h_j\circ\varphi_j)\left(\boldsymbol{x}^{(i)}\right)\cdot\left[\partial_j\log p\left(\boldsymbol{x}^{(i)}\right)\right]^2 
+\partial_j \left[(h_j\circ\varphi_j)\left(\boldsymbol{x}^{(i)}\right)\cdot\partial_j\log p\left(\boldsymbol{x}^{(i)}\right)\right].
\end{equation}
As the canonical choices of $\boldsymbol{h}$ are power functions in $\boldsymbol{x}$, we give the following sufficient conditions for the assumptions in the lemma.
\begin{proposition}\label{prop_assumptions_power}
Suppose for all $j=1,\ldots,m$, $h_j(x_j)=x_j^{\alpha_j}$ for some $\alpha_j>0$. Suppose in addition that for all $j$ and $\boldsymbol{x}_{-j}\in\mathfrak{S}_{-j}$ and all $p\in\mathcal{P}$ we have
\begin{enumerate}[(1)]
\item $p_0(x_j;\boldsymbol{x}_{-j})\partial_j\log p(x_j;\boldsymbol{x}_{-j})=o\left(1/(x_j-c_{k,j})^{\alpha_j}\right)$ as $x_j\nearrow c_{k,j}\equiv b_{k,j}(\boldsymbol{x}_{-j})<+\infty$ or as $x_j\searrow c_{k,j}\equiv a_{k,j}(\boldsymbol{x}_{-j})>-\infty$ for all $k$, and
\item $p_0(x_j;\boldsymbol{x}_{-j})\partial_j\log p(x_j;\boldsymbol{x}_{-j})\to 0$ as $x_j\nearrow+\infty$ if $\mathfrak{C}_{j}(\boldsymbol{x}_{-j})$ is unbounded from above, and as $x_j\searrow-\infty$ if $\mathfrak{C}_{j}(\boldsymbol{x}_{-j})$ is unbounded from below.
\end{enumerate}
Then (A.1) and (A.3) are satisfied.
\end{proposition}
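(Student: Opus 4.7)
The plan is to verify (A.3) and (A.1) separately under the stated hypotheses. For (A.3), the key observation is that each $h_j(y)=y^{\alpha_j}$ with $\alpha_j>0$ has derivative $h_j'(y)=\alpha_j y^{\alpha_j-1}$ that is Lebesgue integrable on any bounded interval $[0,c]\subset\mathbb{R}_+$, since $\int_0^c \alpha_j y^{\alpha_j-1}\d y = c^{\alpha_j}<\infty$. By the standard characterization of absolutely continuous functions via integrable derivatives, $h_j$ is absolutely continuous on any such bounded interval. Because $\varphi_j(\,\cdot\,;\boldsymbol{x}_{-j})$ is, on each connected component, piecewise affine with slopes in $\{-1,0,1\}$ (it is a $\min$ over affine functions like $x_j-a_{k,j}$, $b_{k,j}-x_j$, and the constant $C_j$), it is $1$-Lipschitz. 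Hence the composition $h_j\circ\varphi_j$ inherits absolute continuity on bounded sub-intervals of $\mathfrak{C}_j(\boldsymbol{x}_{-j})$, and $\partial_j(h_j\circ\varphi_j)$ exists almost everywhere.

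For (A.1), I would split into four sub-cases according to whether each endpoint $a_{k,j}\equiv a_{k,j}(\boldsymbol{x}_{-j})$ and $b_{k,j}\equiv b_{k,j}(\boldsymbol{x}_{-j})$ is finite or infinite. Consider first the limit $x_j\nearrow b_{k,j}^-$ with $b_{k,j}<+\infty$. Inspecting the definition of $\varphi_j$, once $x_j$ is close enough to $b_{k,j}$ one has $b_{k,j}-x_j\le\min\{C_j,\,x_j-a_{k,j}\}$ (the second term being vacuous when $a_{k,j}=-\infty$), so $\varphi_j(\boldsymbol{x})=b_{k,j}-x_j$ and therefore $h_j(\varphi_j(\boldsymbol{x}))=(b_{k,j}-x_j)^{\alpha_j}$. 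Multiplying by $p_0\cdot\partial_j\log p$ and invoking condition (1), read with absolute values so that the right-hand side is $o(|x_j-c_{k,j}|^{-\alpha_j})$, yields a product that is $o(1)$ and hence vanishes in the limit. The case $x_j\searrow a_{k,j}^+$ with $a_{k,j}>-\infty$ is symmetric, with $\varphi_j(\boldsymbol{x})=x_j-a_{k,j}$ eventually.

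For the remaining sub-cases where the endpoint is infinite, say $b_{k,j}=+\infty$, the map $\varphi_j$ is bounded above by $C_j$ by construction, so $h_j(\varphi_j(\boldsymbol{x}))\le C_j^{\alpha_j}$ is uniformly bounded. Condition (2) then forces $p_0\cdot\partial_j\log p\to 0$ as $x_j\nearrow+\infty$, so the product tends to zero; symmetrically for $a_{k,j}=-\infty$. Combining all four sub-cases across every $k$ and every $\boldsymbol{x}_{-j}\in\mathfrak{S}_{-j}$ yields (A.1).

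The only real difficulty is bookkeeping: one must notice that $\varphi_j$ collapses to the simple affine form $b_{k,j}-x_j$ (resp.\ $x_j-a_{k,j}$) in a neighborhood of any finite endpoint, so that the vanishing factor $\varphi_j^{\alpha_j}$ is exactly what the $o(|x_j-c_{k,j}|^{-\alpha_j})$ rate in (1) is calibrated to cancel, while the truncation at $C_j$ is what allows the weaker decay in (2) to suffice at infinite endpoints. A minor subtlety in (A.3) for $\alpha_j\in(0,1)$ is that $h_j'$ is unbounded at $y=0$, but only integrability, not boundedness, is required for absolute continuity, so no further argument is needed.
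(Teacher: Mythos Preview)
Your proposal is correct and follows the same approach as the paper's proof, which is equally brief: it notes that power functions $h_j(y)=y^{\alpha_j}$ are absolutely continuous on bounded sub-intervals of $\mathbb{R}_+$ for (A.3), and for (A.1) observes that $(h_j\circ\varphi_j)(\boldsymbol{x})$ reduces to $|x_j-c_{k,j}|^{\alpha_j}$ near any finite endpoint $c_{k,j}$ (so condition~(1) cancels it) and is bounded by $C_j^{\alpha_j}$ at infinite endpoints (so condition~(2) suffices). Your write-up simply fills in more of the mechanics, including the Lipschitz property of $\varphi_j$ and the integrability of $h_j'$ near~$0$.
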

\begin{proof}[Proof of Proposition \ref{prop_assumptions_power}]
Condition (A.3) is satisfied by the property of $h_j$. (A.1) is also satisfied since by construction $(h_j\circ\varphi_j)(\boldsymbol{x})$ becomes $|x_j-c_{k,j}|^{\alpha_j}$ as $x_j\to c_{k,j}\in\cup_{k=1}^{K_j}\{a_{k,j}(\boldsymbol{x}_{-j}),b_{k,j}(\boldsymbol{x}_{-j})\}$ and also $(h_j\circ\varphi_j)$ is bounded by $C^{\alpha_j}$ as $x_j\nearrow +\infty$ or $x_j\searrow -\infty$, if applicable.
\end{proof}

\subsection{Exponential Families and Regularized Score Matching}\label{Exponential Families and Regularized Score Matching}
Consider the case where
$\mathcal{P}(\mathfrak{D})\equiv\{p_{\boldsymbol{\theta}}:\boldsymbol{\theta}\in\boldsymbol{\Theta}\subset\mathbb{R}^r\}$
for some $r=1,2,\ldots$ is an exponential family comprised of continuous distributions supported on $\mathfrak{D}$ with densities of the form
\begin{equation}\label{eq_exp_density}
\log p_{\boldsymbol{\theta}}(\boldsymbol{x})=\boldsymbol{\theta}^{\top}\boldsymbol{t}(\boldsymbol{x})-\psi(\boldsymbol{\theta})+b(\boldsymbol{x}),\quad\boldsymbol{x}\in\mathfrak{D},
\end{equation}
where $\boldsymbol{\theta}\in\mathbb{R}^r$ is the unknown canonical parameter of interest, $\boldsymbol{t}(\boldsymbol{x})\in\mathbb{R}^r$ are the sufficient statistics, $\psi(\boldsymbol{\theta})$ is the normalizing constant, and $b(\boldsymbol{x})$ is the base measure. The empirical loss $\hat{J}_{\boldsymbol{h},\boldsymbol{C},\mathfrak{D}}$ (\ref{eq_empirical_loss}) can then be written as a quadratic function in the canonical parameter:
\begin{align}\label{eq_loss_exponential}
\hspace{-0.02in}\hat{J}_{\boldsymbol{h},\boldsymbol{C},\mathfrak{D}}(p_{\boldsymbol{\theta}})&=\frac{1}{2}\boldsymbol{\theta}^{\top}\boldsymbol{\Gamma}(\mathbf{x})\boldsymbol{\theta}-\boldsymbol{g}(\mathbf{x})^{\top}\boldsymbol{\theta}+\mathrm{const},\quad\text{with}\\
\boldsymbol{\Gamma}(\mathbf{x})&=\frac{1}{n}\sum_{i=1}^{n}\sum_{j=1}^m(h_j\circ\varphi_j)\left(\boldsymbol{X}^{(i)}\right)\partial_j\boldsymbol{t}(\boldsymbol{X}^{(i)})
\left(\partial_j\boldsymbol{t}\left(\boldsymbol{X}^{(i)}\right)\right)^{\top}\quad\text{and}\label{def_Gamma}\\
\boldsymbol{g}(\mathbf{x})&=-\frac{1}{n}\sum_{i=1}^n\sum_{j=1}^m\left[(h_j\circ\varphi_j)\left(\boldsymbol{X}^{(i)}\right)\partial_j b\left(\boldsymbol{X}^{(i)}\right)
\partial_j\boldsymbol{t}\left(\boldsymbol{X}^{(i)}\right)\nonumber \right.\\
&\quad\left.+(h_j\circ\varphi_j)\left(\boldsymbol{X}^{(i)}\right)\partial_{jj}\boldsymbol{t}\left(\boldsymbol{X}^{(i)}\right) 
+\partial_j(h_j\circ\varphi_j)\left(\boldsymbol{X}^{(i)}\right)\partial_j\boldsymbol{t}\left(\boldsymbol{X}^{(i)}\right)\right]\label{def_g},
\end{align}
where $\partial_j\boldsymbol{t}(\boldsymbol{x})=(\partial_j t_1(\boldsymbol{x}),\ldots,\partial_j t_r(\boldsymbol{x}))^{\top}\in\mathbb{R}^r$. Note that  (\ref{def_Gamma}) and (\ref{def_g}) are sample averages of functions in the data matrix $\mathbf{x}$ only. These forms are an exact analog of those in Theorem 5 in \citet{yus19}. As expected, we can thus obtain the following consistency result similar to Theorem 6 in \citet{yus19}:

\begin{theorem}[Theorem 6 of \citet{yus19}]
Suppose the true density is $p_0\equiv p_{\boldsymbol{\theta}_0}$ and that
\begin{enumerate}[label=(C\arabic*)]
\item $\boldsymbol{\Gamma}$ is almost surely invertible, and 
\item $\boldsymbol{\Sigma}_0\equiv\mathbb{E}_{p_0}\left[\left(\boldsymbol{\Gamma}(\mathbf{x})\boldsymbol{\theta}_0-\boldsymbol{g}(\mathbf{x})\right)\left(\boldsymbol{\Gamma}(\mathbf{x})\boldsymbol{\theta}_0-\boldsymbol{g}(\mathbf{x})\right)^{\top}\right]$, $\boldsymbol{\Gamma}_0\equiv\mathbb{E}_{p_0}\boldsymbol{\Gamma}(\mathbf{x})$, $\boldsymbol{\Gamma}_0^{-1}$, and $\boldsymbol{g}_0\equiv\mathbb{E}_{p_0}\boldsymbol{g}(\mathbf{x})$ exist and are component-wise finite.
\end{enumerate}
Then the minimizer of (\ref{eq_loss_exponential}) is almost surely unique with closed form solution $\hat{\boldsymbol{\theta}}\equiv\boldsymbol{\Gamma}(\mathbf{x})^{-1}\boldsymbol{g}(\mathbf{x})$ with
\[\hat{\boldsymbol{\theta}}\to_{\text{a.s.}}\boldsymbol{\theta}_0
\quad\text{and}\quad\sqrt{n}(\hat{\boldsymbol{\theta}}-\boldsymbol{\theta}_0)\to_d\mathcal{N}_{r}\left(\boldsymbol{0},\boldsymbol{\Gamma}_0^{-1}\boldsymbol{\Sigma}_0\boldsymbol{\Gamma}_0^{-1}\right)\quad\text{as }n\to\infty.\]
\end{theorem}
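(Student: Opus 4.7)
The theorem is a standard asymptotic result for a quadratic M-estimator, so the plan is to exploit the explicit closed form afforded by the quadratic structure in~\eqref{eq_loss_exponential} and then apply the classical SLLN/CLT machinery componentwise.

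First, I would verify existence and uniqueness of the minimizer. Since~\eqref{eq_loss_exponential} is quadratic in $\boldsymbol{\theta}$, its gradient is $\boldsymbol{\Gamma}(\mathbf{x})\boldsymbol{\theta}-\boldsymbol{g}(\mathbf{x})$, and setting this to zero while invoking (C1) gives the unique stationary point $\hat{\boldsymbol{\theta}}=\boldsymbol{\Gamma}(\mathbf{x})^{-1}\boldsymbol{g}(\mathbf{x})$ almost surely. Because $\boldsymbol{\Gamma}(\mathbf{x})$ is a sum of Gram-type outer products $\partial_j\boldsymbol{t}(\boldsymbol{X}^{(i)})(\partial_j\boldsymbol{t}(\boldsymbol{X}^{(i)}))^{\top}$ weighted by the non-negative factors $(h_j\circ\varphi_j)(\boldsymbol{X}^{(i)})$, it is positive semidefinite, and (C1) upgrades it to positive definite, so $\hat{\boldsymbol{\theta}}$ is indeed the unique global minimizer.

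Next, I would establish the population identity $\boldsymbol{\Gamma}_0\boldsymbol{\theta}_0=\boldsymbol{g}_0$. The derivation leading to~\eqref{eq_loss_exponential} applied in expectation shows that $\mathbb{E}_{p_0}\hat{J}_{\boldsymbol{h},\boldsymbol{C},\mathfrak{D}}(p_{\boldsymbol{\theta}})=\frac{1}{2}\boldsymbol{\theta}^{\top}\boldsymbol{\Gamma}_0\boldsymbol{\theta}-\boldsymbol{g}_0^{\top}\boldsymbol{\theta}+\text{const}$, which up to a constant is $J_{\boldsymbol{h},\boldsymbol{C},\mathfrak{D}}(p_{\boldsymbol{\theta}})$. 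By Lemma~\ref{lem_identifiability} this population loss is nonnegative and vanishes uniquely at $\boldsymbol{\theta}=\boldsymbol{\theta}_0$, so the first-order condition $\boldsymbol{\Gamma}_0\boldsymbol{\theta}_0=\boldsymbol{g}_0$ holds (equivalently, $\boldsymbol{\theta}_0=\boldsymbol{\Gamma}_0^{-1}\boldsymbol{g}_0$ using the invertibility of $\boldsymbol{\Gamma}_0$ granted in (C2)). This step is the conceptual heart of the argument, tying the algebraic identity to the Fisher-consistency of score matching.

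Strong consistency then follows by the strong law of large numbers: both $\boldsymbol{\Gamma}(\mathbf{x})$ and $\boldsymbol{g}(\mathbf{x})$ are sample averages of i.i.d.\ functions of $\boldsymbol{X}^{(i)}$ with finite expectations under (C2), so $\boldsymbol{\Gamma}(\mathbf{x})\to_{\text{a.s.}}\boldsymbol{\Gamma}_0$ and $\boldsymbol{g}(\mathbf{x})\to_{\text{a.s.}}\boldsymbol{g}_0$ entrywise. Matrix inversion is continuous at the invertible limit $\boldsymbol{\Gamma}_0$, so $\hat{\boldsymbol{\theta}}=\boldsymbol{\Gamma}(\mathbf{x})^{-1}\boldsymbol{g}(\mathbf{x})\to_{\text{a.s.}}\boldsymbol{\Gamma}_0^{-1}\boldsymbol{g}_0=\boldsymbol{\theta}_0$.

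For the asymptotic distribution, I would write
\[
\sqrt{n}(\hat{\boldsymbol{\theta}}-\boldsymbol{\theta}_0)=\boldsymbol{\Gamma}(\mathbf{x})^{-1}\sqrt{n}\bigl(\boldsymbol{g}(\mathbf{x})-\boldsymbol{\Gamma}(\mathbf{x})\boldsymbol{\theta}_0\bigr),
\]
and note that the right factor is the centered sample average of i.i.d.\ random vectors $\boldsymbol{g}^{(i)}-\boldsymbol{\Gamma}^{(i)}\boldsymbol{\theta}_0$ whose mean is zero by the population identity and whose covariance equals $\boldsymbol{\Sigma}_0$ by definition in (C2). The multivariate CLT yields $\sqrt{n}(\boldsymbol{g}-\boldsymbol{\Gamma}\boldsymbol{\theta}_0)\to_d\mathcal{N}_r(\boldsymbol{0},\boldsymbol{\Sigma}_0)$, and combining with $\boldsymbol{\Gamma}(\mathbf{x})^{-1}\to_{\text{a.s.}}\boldsymbol{\Gamma}_0^{-1}$ via Slutsky delivers the stated sandwich covariance $\boldsymbol{\Gamma}_0^{-1}\boldsymbol{\Sigma}_0\boldsymbol{\Gamma}_0^{-1}$. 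The only real obstacle is the Fisher-consistency/population identity step; everything afterwards is routine SLLN, CLT, and Slutsky, justified by the finite-moment hypotheses in (C2).
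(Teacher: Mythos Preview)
The paper does not supply its own proof of this statement: it is quoted verbatim as Theorem~6 of \citet{yus19} and simply invoked. Your proposal is correct and is exactly the standard route one would expect in that reference---closed form from the quadratic structure, Fisher consistency $\boldsymbol{\Gamma}_0\boldsymbol{\theta}_0=\boldsymbol{g}_0$ via Lemma~\ref{lem_identifiability}, then SLLN for almost-sure convergence and CLT plus Slutsky for the sandwich limit.
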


Estimation in high-dimensional settings, in which the number of parameters $r$ may
exceed 
the sample size $n$,  usually benefits from regularization. For exponential families, as in \citet{yus19}, we add an $\ell_1$ penalty on $\boldsymbol{\theta}$ to the loss in (\ref{eq_loss_exponential}), while multiplying the diagonals of the $\boldsymbol{\Gamma}$ by a \emph{diagonal multiplier} $\delta>1$:
\begin{definition}
Let $\delta>1$, and $\boldsymbol{\Gamma}_{\delta}(\mathbf{x})$ be $\boldsymbol{\Gamma}(\mathbf{x})$ with diagonal entries multiplied by $\delta$. For exponential family distributions in (\ref{eq_exp_density}), the \emph{regularized generalized $(\boldsymbol{h},\boldsymbol{C},\mathfrak{D})$-score matching loss} is defined as
\begin{equation}\label{eq_loss_regularized}
\hat{J}_{\boldsymbol{h},\boldsymbol{C},\mathfrak{D},\lambda,\delta}(p_{\boldsymbol{\theta}})\equiv\frac{1}{2}\boldsymbol{\theta}^{\top}\boldsymbol{\Gamma}_{\delta}(\mathbf{x})\boldsymbol{\theta}-\boldsymbol{g}(\mathbf{x})^{\top}\boldsymbol{\theta}+\lambda\|\boldsymbol{\theta}\|_1.
\end{equation}
\end{definition}
The multiplier $\delta>1$, together with the $\ell_1$ penalty,
resembles an elastic net penalty and prevents the loss in
(\ref{eq_loss_regularized}) from being unbounded from below for
smaller $\lambda$, in which case there can be infinitely many
minimizers.  This is discussed 
in Section 4 in
\citet{yus19}, where a default for $\delta$ is given, so that no
tuning for this parameter is necessary. 
Minimization of
(\ref{eq_loss_regularized}) can be efficiently done using
coordinate-descent with warm starts, which along with other
computational details is discussed in Section 5.3 of \citet{yus19}.

\subsection{Extension of the Method from  \citet{liu19} to Unbounded
  Domains}\label{Extension of g0}

A key ingredient to our treatment of unbounded domains is truncation
of distances.  This idea can also be applied to the method proposed
for general bounded domains in \citet{liu19}; recall Section
\ref{sec_intro_liu19}.  For any component-wise countable union of
intervals $\mathfrak{D}$ as 
in Definition \ref{def_V}, we may
modify the loss in (\ref{eq_loss_g0}) to
\begin{equation*}
J_{g_0,C,\mathfrak{D}}(P)\equiv\sup_{g\in\mathcal{G}}\frac{1}{2}\int_{\mathbb{R}_+^m}g(\boldsymbol{x})p_0(\boldsymbol{x})\left\|\nabla\log p(\boldsymbol{x})-\nabla\log p_0(\boldsymbol{x})\right\|_2^2\d\boldsymbol{x},
\end{equation*}
but with
$\mathcal{G}\equiv\{g|g(\boldsymbol{x})=0,\forall\boldsymbol{x}\in\underline{\partial}\mathfrak{D},\,g\text{  is }L\text{-Lipschitz continuous}\text{ and }g\leq C\}$
instead. Here, we use the same Lipschitz constant $L>0$ but add an
extra truncation constant $C>0$.  Moreover, we use the component-wise boundary set 
$\underline{\partial}\mathfrak{D}$ (\ref{def_V_boundary}) instead of the usual boundary set $\partial\mathfrak{D}$ used in \citet{liu19}. Following the same proof as for their Proposition 1 and dropping the Lipschitz constant $L$ by replacing $C$ with $C/L$ (or equivalently choosing $L=1$), it is easy to see that the maximum is obtained at
\begin{equation}\label{eq_extended_g0}
g_0(\boldsymbol{x})\equiv\min\left\{C,\inf_{\boldsymbol{x}'\in\underline{\partial} \mathfrak{D}}\|\boldsymbol{x}-\boldsymbol{x}'\|_2\right\},
\end{equation}
the $\ell_2$ distance of $\boldsymbol{x}$ to $\underline{\partial}\mathfrak{D}$ truncated above by $C$, which naturally extends the method in \citet{liu19} to unbounded domains. In the special case where $\underline{\partial}\mathfrak{D}=\varnothing$, we must have $\mathfrak{D}\equiv\mathbb{R}^m$ and $g_0(\boldsymbol{x})\equiv C$ by the expression above (with the convention of $\inf\varnothing=+\infty$), which coincides with the original score matching in \citet{hyv05}.

Assuming that (A.1) and (A.2) from Lemma \ref{lem_loss} hold when replacing $(\boldsymbol{h}\circ\boldsymbol{\varphi})(\boldsymbol{x})$ by $g_0(\boldsymbol{x})\mathbf{1}_m$ and $h_j(\varphi_j(\boldsymbol{x}))$ by $g_0(\boldsymbol{x})$, the same conclusion there applies, i.e.
\begin{equation*}
J_{g_0,C,\mathfrak{D}}(P)\equiv\frac{1}{2}\sum_{j=1}^m\int_{\mathfrak{D}}p_0(\boldsymbol{x})g_0(\boldsymbol{x})\left[\partial_j\log p(\boldsymbol{x})\right]^2\d\boldsymbol{x}+\sum_{j=1}^m\int_{\mathfrak{D}}p_0(\boldsymbol{x})\partial_j \left[g_0(\boldsymbol{x})\partial_j\log p(\boldsymbol{x})\right]\d\boldsymbol{x}
\end{equation*}
plus a constant depending on $p_0$ but not on $p$; this is the same loss as in Equation (13) in \citet{liu19} with a truncation by $C$ applied to $g_0$. The proof is in the same spirit of that for Lemma \ref{lem_loss} and is thus omitted.

\section{Pairwise Interaction Models on Domains with Positive Measure}\label{$a$-$b$ Models on Domains with Positive Measure}


\subsection{Pairwise Interaction Power $a$-$b$ Models}\label{Pairwise Interaction Power $a$-$b$ Models}

As one realm  of application of the proposed estimation method, we
consider exponential family models that postulate pairwise
interactions between power-transformations of the observed variables,
as
in (\ref{eq_interaction_density}):
\begin{equation}\label{eq_interaction_density2}
p_{\boldsymbol{\eta},\mathbf{K}}(\boldsymbol{x})\propto\exp\left(-\frac{1}{2a}{\boldsymbol{x}^a}^{\top}\mathbf{K}\boldsymbol{x}^a+\frac{1}{b}\boldsymbol{\eta}^{\top}\boldsymbol{x}^b\right)\mathds{1}_{\mathfrak{D}}(\boldsymbol{x})
\end{equation}
for which we treat $\boldsymbol{x}^0\equiv\log\boldsymbol{x}$ and
$1/0\equiv1$, on a domain $\mathfrak{D}\subseteq\mathbb{R}^m$ with a
positive measure. Here $a\geq 0$ and $b\geq 0$ are known constants and
the interaction matrix $\mathbf{K}\in\mathbb{R}^{m\times m}$ and the
linear vector $\boldsymbol{\eta}\in\mathbb{R}^m$ are unknown
parameters of interest. As in \citet{yus19}, our focus will be on the
support of $\mathbf{K}$, $S(\mathbf{K})=\{(i,j):\kappa_{ij}\neq 0\}$,
that for product domains defines the conditional independence graph of
$\boldsymbol{X}\sim p_{\boldsymbol{\eta},\mathbf{K}}$.  However, we
simultaneously estimate the nuisance parameter $\boldsymbol{\eta}$
unless it is explicitly assumed to be $\mathbf{0}$.

When $a=b=1$, model~(\ref{eq_interaction_density2}) is a truncated
Gaussian model. From $a=b=1/2$, we may obtain the exponential square-root graphical model in \citet{ino16}. The gamma model 
in \citet{yus19} has $a=1/2$ and $b=0$. 

\subsection{Finite Normalizing Constant and Validity of Score Matching}

The following theorem gives detailed sufficient conditions for the
$a$-$b$ density $p_{\boldsymbol{\eta},\mathbf{K}}$ in
(\ref{eq_interaction_density2}) to be a proper density on a domain $\mathfrak{D}\subseteq\mathbb{R}^m$ with positive Lebesgue measure.

\begin{theorem}[Sufficient conditions for finite normalizing constant]\label{thm_norm_const}
Denote
$\rho_j(\mathfrak{D})\equiv\overline{\{x_j:\boldsymbol{x}\in\mathfrak{D}\}}$
the closure of the range of $x_j$ in the domain $\mathfrak{D}$. If any
of the following conditions holds, the density in
(\ref{eq_interaction_density2}) is a proper density, i.e., the right-hand of (\ref{eq_interaction_density2}) is integrable over $\mathfrak{D}$:

\begin{enumerate}[label=(CC\arabic*),leftmargin=35pt]
\item $a>0$, $b>0$, $\mathfrak{D}$ is bounded;
\item $a>0$, $b>0$, ${\boldsymbol{v}^a}^{\top}\mathbf{K}\boldsymbol{v}^a> 0\,\,\forall\boldsymbol{v}\in\mathfrak{D}\backslash\{\boldsymbol{0}\}$, and either $2a>b$ or $\boldsymbol{\eta}^{\top}\boldsymbol{v}^b\leq 0$ $\forall\boldsymbol{v}\in\mathfrak{D}$;
\item $a>0$, $b=0$, $\eta_j>-1$ for all $j$ s.t.~$0\in\rho_j(\mathfrak{D})$, and one of the following holds:
\begin{enumerate}[(i),leftmargin=0pt]
\item $\mathfrak{D}$ is bounded;
\item $\mathfrak{D}$ is unbounded and ${\boldsymbol{v}^a}^{\top}\mathbf{K}\boldsymbol{v}^a> 0\,\,\forall\boldsymbol{v}\in\mathfrak{D}\backslash\{\boldsymbol{0}\}$;
\item $\mathfrak{D}$ is unbounded, ${\boldsymbol{v}^a}^{\top}\mathbf{K}\boldsymbol{v}^a\geq 0\,\,\forall\boldsymbol{v}\in\mathfrak{D}$ and  $\eta_j<-1$ for all $j$ s.t.~$\rho_j(\mathfrak{D})$ is unbounded (which implies that $\rho_j(\mathfrak{D})=[0,+\infty)$ is not allowed for any $j$);
\end{enumerate}
\item $a=0$, $\mathfrak{D}$ is bounded and $0\not\in\rho_j(\mathfrak{D})$ for all $j$;
\item $a=0$, $b=0$, $\log(\boldsymbol{x})^{\top}\mathbf{K}\log(\boldsymbol{x})>0$ $\forall \boldsymbol{x}\in\mathfrak{D}$;
\item $a=0$, $b>0$, $\log(\boldsymbol{x})^{\top}\mathbf{K}\log(\boldsymbol{x})> 0$ $\forall\boldsymbol{x}\in\mathfrak{D}$ and $\eta_j\leq 0$ for all $j$ s.t.~$\rho_j(\mathfrak{D})$ is unbounded;
\item $a=0$, $b>0$, $\log(\boldsymbol{x})^{\top}\mathbf{K}\log(\boldsymbol{x})\geq 0$ $\forall\boldsymbol{x}\in\mathfrak{D}$ and $\eta_j< 0$ for all $j$ s.t.~$\rho_j(\mathfrak{D})$ is unbounded.
\end{enumerate}
In the centered case where $\boldsymbol{\eta}=\boldsymbol{0}$ is known, any condition in terms of $b$ and $\boldsymbol{\eta}$ can be ignored.
\end{theorem}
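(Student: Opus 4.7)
The plan is to treat the seven subcases (CC1)--(CC7) via a common template: decompose $\mathfrak{D}$ into a bounded ``core'' and an unbounded ``tail'', verify integrability on each piece, and, in the subcases where $a=0$ or $b=0$, additionally inspect the behaviour as some coordinate $x_j\to 0^+$ (since $\log x_j$ then appears in the exponent). The integrand is positive and continuous on the interior of $\mathfrak{D}$, so integrability reduces to local integrability near the relevant boundary points and at infinity. The centered remark ($\boldsymbol{\eta}=\boldsymbol{0}$) is immediate once the general case is proved, since the linear/log term then vanishes and any hypothesis on $\boldsymbol{\eta}$ can be dropped.

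The bounded cases (CC1) and (CC4) are essentially immediate: the integrand is continuous on a compact set in the relevant open orthant (the extra condition $0\notin\rho_j(\mathfrak{D})$ in (CC4) keeps $\log x_j$ bounded), so its supremum is finite. For the unbounded case (CC2) I would write $\boldsymbol{x}^a=r\boldsymbol{u}$ with $r\ge 0$ and $\|\boldsymbol{u}\|=1$, so that the quadratic contributes $r^2\boldsymbol{u}^\top\mathbf{K}\boldsymbol{u}$ while $|\boldsymbol{\eta}^\top\boldsymbol{x}^b|=O(r^{b/a})$. The strict positivity ${\boldsymbol{v}^a}^\top\mathbf{K}\boldsymbol{v}^a>0$ on $\mathfrak{D}\setminus\{\boldsymbol{0}\}$, combined with continuity of $\boldsymbol{u}\mapsto\boldsymbol{u}^\top\mathbf{K}\boldsymbol{u}$ and compactness of the unit sphere, supplies a uniform lower bound $\boldsymbol{u}^\top\mathbf{K}\boldsymbol{u}\ge c>0$ on the closure of the attainable directions; then either $2a>b$ (so $r^2$ beats $r^{b/a}$) or $\boldsymbol{\eta}^\top\boldsymbol{x}^b\le 0$ (so the linear term only helps) yields an exponentially decaying dominant function, and Fubini with a polar-type change of variables closes the argument.

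For (CC3) ($a>0$, $b=0$), the integrand becomes $e^{-Q(\boldsymbol{x})}\prod_j x_j^{\eta_j}$. Local integrability as $x_j\to 0^+$ (when $0\in\rho_j(\mathfrak{D})$) is the classical $\int_0^1 x^{\eta_j}\,\d x<\infty\iff\eta_j>-1$, exactly the first-line assumption of (CC3). For the tail, subcase (i) is bounded and trivial, (ii) replays the (CC2) argument, and (iii) is the most delicate: $\mathbf{K}$ is only positive semi-definite on $\mathfrak{D}$, so some directions are not killed by the quadratic; there, the assumption $\eta_j<-1$ on each $j$ with $\rho_j(\mathfrak{D})$ unbounded supplies the decay via $\int_C^\infty x^{\eta_j}\,\d x<\infty$, and Tonelli handles the remaining coordinates. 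The $a=0$ cases (CC5)--(CC7) reduce to the preceding pattern by the substitution $y_j=\log x_j$, which converts the exponent into $-\tfrac{1}{2}\boldsymbol{y}^\top\mathbf{K}\boldsymbol{y}+\tfrac{1}{b}\boldsymbol{\eta}^\top e^{b\boldsymbol{y}}$ (or $\boldsymbol{\eta}^\top\boldsymbol{y}$ when $b=0$) plus the Jacobian $\prod_j e^{y_j}$; the analysis then runs in $\boldsymbol{y}$-coordinates, with the double-exponential term $\eta_j e^{by_j}$ (and $\eta_j<0$) producing the decay that is missing from the quadratic in the semi-definite case (CC7).

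The main obstacle is the uniform-in-direction argument used in (CC2), (CC3)(ii)--(iii), and their $a=0$ analogues: the direction set $\{\boldsymbol{x}^a/\|\boldsymbol{x}^a\|:\boldsymbol{x}\in\mathfrak{D}\setminus\{\boldsymbol{0}\}\}$ need not be closed, so strict positivity of the quadratic form there can degenerate on the closure. I would handle this by splitting on $c:=\inf_{\boldsymbol{u}\in\overline{S}}\boldsymbol{u}^\top\mathbf{K}\boldsymbol{u}$: when $c>0$ the (CC2)-style exponential bound immediately yields integrability; when $c=0$, as in (CC3)(iii) and (CC7), the degenerate directions correspond to sequences in $\mathfrak{D}$ with at least one coordinate diverging, and the sign conditions on the relevant $\eta_j$ are precisely what make the integrand integrable along those directions. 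Assembling the estimates case by case yields the seven claimed sufficient conditions.
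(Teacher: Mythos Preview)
Your plan is correct and shares the paper's overall architecture: the bounded cases (CC1), (CC4) are immediate by continuity on a compact set, the $a=0$ cases are reduced via the substitution $y_j=\log x_j$ (picking up the Jacobian $\prod_j e^{y_j}$ exactly as you describe), and the unbounded tail is controlled through a uniform lower bound on the quadratic form. The one technical divergence is how that tail bound is cashed in. You propose polar-type coordinates $\boldsymbol{x}^a=r\boldsymbol{u}$ and a radial integral; the paper instead uses the product bound
\[
\exp\Bigl(-\tfrac{1}{2a}\,\boldsymbol{x}^{a\top}\mathbf{K}\boldsymbol{x}^a\Bigr)\;\le\;\prod_{j=1}^m\exp\Bigl(-\tfrac{N_{\mathbf{K}}}{2a}\,x_j^{2a}\Bigr),
\qquad N_{\mathbf{K}}\equiv\inf_{\boldsymbol{v}\in\mathfrak{D}\setminus\{\boldsymbol{0}\}}\frac{\boldsymbol{v}^{a\top}\mathbf{K}\boldsymbol{v}^a}{\boldsymbol{v}^{a\top}\boldsymbol{v}^a}>0,
\]
and then dominates $\int_{\mathfrak{D}}$ by the product $\prod_j\int_{\rho_j(\mathfrak{D})}$ of one-dimensional integrals (an incomplete-Gamma computation in (CC3)(ii), a univariate Gaussian in (CC5), and so on). This product-of-marginals route is slightly cleaner since it avoids the Jacobian of the polar map and directly isolates each coordinate's contribution; it also handles (CC3)(iii) and (CC7) without a separate ``degenerate direction'' analysis, because once the quadratic is dropped (using only $\ge 0$) the remaining integrand already factors as $\prod_j x_j^{\eta_j}$ or $\prod_j e^{y_j+\eta_j e^{by_j}}$. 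Your flagging of the closure obstacle is well taken: the paper dispatches it in a single line by bounding the infimum over the normalised direction set $\mathfrak{D}_+$ below by the infimum over the full sphere $\mathbb{S}^{m-1}$, which tacitly invokes more than pointwise strict positivity on $\mathfrak{D}$ alone.
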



To simplify our discussion, the following corollary gives a simpler set of sufficient conditions for integrability of the density.
\begin{corollary}[Sufficient conditions for finite normalizing constant; simplified]\label{cor_norm_const}
Suppose 
\begin{enumerate}[label=(CC\arabic*{*}),leftmargin=40pt]
\setcounter{enumi}{-1}
\item $\mathbf{K}$ is positive definite 
\end{enumerate}
and one of the following conditions holds:
\begin{enumerate}[label=(CC\arabic*{*}),leftmargin=40pt]
\item $2a>b>0$ or $a=b=0$;
\item $a>0$, $b=0$, $\boldsymbol{\eta}\succ-\mathbf{1}_m$;
\item $a=0$, $b>0$, $\eta_j\leq 0$ for any $j$ such that $x_j$ is unbounded in $\mathfrak{D}$.
\end{enumerate}
Then the right-hand side of (\ref{eq_interaction_density2}) is integrable over $\mathfrak{D}$.
In the 
case where $\boldsymbol{\eta}\equiv\boldsymbol{0}$, (CC0*) is sufficient.
\end{corollary}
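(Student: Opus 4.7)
The plan is to deduce Corollary \ref{cor_norm_const} directly from Theorem \ref{thm_norm_const} by showing that each simplified hypothesis is strong enough to trigger one of the conditions (CC1)--(CC7). The main tool is that positive definiteness of $\mathbf{K}$ (condition CC0*) forces the quadratic forms appearing in Theorem \ref{thm_norm_const} to be nonnegative on all of $\mathfrak{D}$ and strictly positive wherever the inner argument does not vanish. Concretely, for any $a\ge 0$ and $\boldsymbol{v}\in\mathfrak{D}$, $\mathbf{K}\succ\boldsymbol{0}$ yields ${\boldsymbol{v}^a}^{\top}\mathbf{K}\boldsymbol{v}^a\ge 0$ with strict inequality exactly when $\boldsymbol{v}^a\neq\boldsymbol{0}$; analogously $(\log\boldsymbol{x})^{\top}\mathbf{K}(\log\boldsymbol{x})\ge 0$ strictly whenever $\boldsymbol{x}\neq\mathbf{1}_m$. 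I will address the single exceptional point $\boldsymbol{x}=\mathbf{1}_m$ (or $\boldsymbol{v}=\boldsymbol{0}$) as a measure-zero issue that does not affect integrability.

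Next I proceed by cases on which of (CC1*)--(CC3*) is assumed, mapping each to a condition of Theorem \ref{thm_norm_const}. If (CC1*) holds with $2a>b>0$, then $\mathbf{K}\succ\boldsymbol{0}$ gives ${\boldsymbol{v}^a}^{\top}\mathbf{K}\boldsymbol{v}^a>0$ on $\mathfrak{D}\setminus\{\boldsymbol{0}\}$ (note $\boldsymbol{v}^a=\boldsymbol{0}$ iff $\boldsymbol{v}=\boldsymbol{0}$ for $a>0$ on the domains at hand), so (CC2) of Theorem \ref{thm_norm_const} applies. If instead $a=b=0$, then $\mathbf{K}\succ\boldsymbol{0}$ yields (CC5) up to the single point where $\log\boldsymbol{x}=\boldsymbol{0}$, which contributes nothing to integrability. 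If (CC2*) holds, split according to whether $\mathfrak{D}$ is bounded: boundedness directly triggers (CC3)(i), while unboundedness combined with $\mathbf{K}\succ\boldsymbol{0}$ triggers (CC3)(ii), and in both cases $\boldsymbol{\eta}\succ-\mathbf{1}_m$ supplies the required condition on $\boldsymbol{\eta}$. If (CC3*) holds, the same positive-definiteness argument yields (CC6) of Theorem \ref{thm_norm_const}, with the hypothesis $\eta_j\leq 0$ for unbounded coordinates matching the $\boldsymbol{\eta}$ requirement verbatim.

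The centered statement follows immediately: when $\boldsymbol{\eta}=\boldsymbol{0}$, every condition in Theorem \ref{thm_norm_const} involving $\boldsymbol{\eta}$ can be dropped, so (CC0*) alone suffices, because at least one of the remaining structural conditions (depending on the values of $a$ and $b$) is met by positive definiteness of $\mathbf{K}$.

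The only subtle point, and the one I expect to be the main obstacle, is reconciling Theorem \ref{thm_norm_const}'s strict inequalities (e.g.\ ${\boldsymbol{v}^a}^{\top}\mathbf{K}\boldsymbol{v}^a>0\,\forall\boldsymbol{v}\in\mathfrak{D}\setminus\{\boldsymbol{0}\}$, or $(\log\boldsymbol{x})^{\top}\mathbf{K}(\log\boldsymbol{x})>0\,\forall\boldsymbol{x}\in\mathfrak{D}$) with the fact that positive definiteness of $\mathbf{K}$ allows the quadratic form to vanish precisely when $\log\boldsymbol{x}=\boldsymbol{0}$, i.e.\ at $\boldsymbol{x}=\mathbf{1}_m$. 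I would handle this by observing that the integrand is bounded in a neighborhood of any such exceptional point, so a single-point failure of strictness does not threaten integrability; alternatively one can restrict attention to $\mathfrak{D}\setminus\{\mathbf{1}_m\}$ and note that this modification does not change the integral. This technicality aside, the proof is a straightforward verification matching simplified hypotheses to the cases of the main theorem.
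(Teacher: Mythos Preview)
Your proposal is correct and takes exactly the route the paper intends: the paper gives no separate proof of the corollary, treating it as an immediate specialization of Theorem~\ref{thm_norm_const}, and your case-by-case matching of (CC1*)--(CC3*) under (CC0*) to the appropriate clauses (CC2), (CC5), (CC3)(i)--(ii), (CC6) is precisely that derivation. Your handling of the lone point $\boldsymbol{x}=\mathbf{1}_m$ (where $(\log\boldsymbol{x})^{\top}\mathbf{K}\log\boldsymbol{x}=0$) is also right; inspection of the proof of Theorem~\ref{thm_norm_const} shows that what is actually used is the Rayleigh-quotient bound $\boldsymbol{w}^{\top}\mathbf{K}\boldsymbol{w}\geq\lambda_{\min}(\mathbf{K})\|\boldsymbol{w}\|_2^2$, which positive definiteness supplies directly, so the single-point failure of strictness is indeed harmless.
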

\noindent For simplicity, we use 
conditions (CC0*)--(CC3*) throughout the 
paper. The following theorem gives sufficient conditions on $\boldsymbol{h}$ that satisfy 
conditions (A.1)--(A.3) in Lemma \ref{lem_loss} for score matching.

\begin{theorem}[Sufficient conditions that satisfy assumptions for score matching]\label{thm_A}
Suppose (CC0*) and one of (CC1*) through (CC3*) holds, and $\boldsymbol{h}(\boldsymbol{x})=\left(x_1^{\alpha_1},\ldots,x_m^{\alpha_m}\right)$, where
\begin{enumerate}[(1)]
\item if $a>0$ and $b>0$, $\alpha_j>\max\{0,1-a,1-b\}$;
\item if $a>0$ and $b=0$, $\alpha_j>1-\eta_{0,j}$;
\item if $a=0$, $\alpha_j\geq 0$.
\end{enumerate}
Then conditions (A.1), (A.2) and (A.3) in Lemma \ref{lem_loss} are satisfied and the equivalent form of the generalized score matching loss (\ref{eq_equivalent_loss}) holds, and the empirical loss (\ref{eq_empirical_loss}) is valid. In the centered case with $\boldsymbol{\eta}\equiv\boldsymbol{0}$, it suffices to have $a>0$ and $\alpha_j>\max\{0,1-a\}$ or $a=0$ and $\alpha_j\geq 0$.
\end{theorem}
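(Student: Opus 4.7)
The plan is to verify each of the three conditions (A.1), (A.2), (A.3) in Lemma~\ref{lem_loss} separately, relying on the explicit form of $\partial_j \log p$ and $\partial_{jj}\log p$ for the $a$-$b$ model and the integrability that comes from positive-definiteness of $\mathbf{K}$ together with the case-appropriate growth condition on $\boldsymbol{\eta}$.

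First, (A.3) is almost immediate. Each $h_j(y) = y^{\alpha_j}$ with $\alpha_j > 0$ (resp.\ $\alpha_j \ge 0$ in the case $a=0$) is locally absolutely continuous on $\mathbb{R}_+$, and $\varphi_j$ is $1$-Lipschitz in $x_j$ and takes values in $[0,C_j]$; so $h_j \circ \varphi_j$ is absolutely continuous on every bounded sub-interval of a section, and its $x_j$-derivative exists Lebesgue-a.e.

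Second, (A.1) will be obtained via Proposition~\ref{prop_assumptions_power}. Differentiating (\ref{eq_interaction_density2}) gives, for $a,b>0$, the closed form $\partial_j\log p(\boldsymbol{x}) = -x_j^{a-1}(\mathbf{K}\boldsymbol{x}^a)_j + \eta_j\, x_j^{b-1}$; for $a>0$, $b=0$, the linear term becomes $\eta_j/x_j$; and for $a=0$ the first term becomes $-(\mathbf{K}\log\boldsymbol{x})_j/x_j$. Near a finite endpoint $c_{k,j}>0$ of a section, $\partial_j\log p$ stays bounded and $p_0$ is bounded, so the required decay $o(|x_j-c_{k,j}|^{-\alpha_j})$ is trivial. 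The delicate points are $x_j\searrow 0^+$ (when $0$ is an endpoint) and $|x_j|\to\infty$. At $x_j\searrow 0^+$, $p_0$ is bounded in the case $a>0, b>0$, while $|\partial_j\log p|$ grows at worst like $x_j^{\min(a,b)-1}$, so $(h_j\circ\varphi_j)\,p_0\,\partial_j\log p \lesssim x_j^{\alpha_j + \min(a,b) - 1}\to 0$ under $\alpha_j > \max\{0,1-a,1-b\}$. For $a>0$, $b=0$, we pick up an extra factor $x_j^{\eta_j}$ in $p_0$, yielding the sharper requirement $\alpha_j > 1-\eta_j$. For $a=0$, the log-quadratic term in the exponent makes $p_0$ decay faster than any polynomial in $x_j$ as $x_j\to 0^+$, which kills the $\log x_j/x_j$ blow-up of $\partial_j\log p$ for any $\alpha_j \ge 0$. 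At $|x_j|\to\infty$, positive-definiteness of $\mathbf{K}$ together with the condition $2a>b$ (when both are positive) or $\eta_j\le 0$ (in the remaining cases) ensures that $\exp(-\boldsymbol{x}^{a\top}\mathbf{K}\boldsymbol{x}^a/(2a))$ or the log-quadratic factor dominates every polynomial factor, so $p_0\,\partial_j\log p \to 0$.

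Third, (A.2) requires finite $p_0$-expectations of $\sum_j (h_j\circ\varphi_j)(\boldsymbol{X})(\partial_j\log p(\boldsymbol{X}))^2$ and of $\sum_j \bigl|\partial_j[(h_j\circ\varphi_j)(\boldsymbol{X})\,\partial_j\log p(\boldsymbol{X})]\bigr|$. Both integrands are sums of products of powers of $X_j$, powers of $\log X_j$ (when $a=0$), and entries of $\mathbf{K}\boldsymbol{X}^a$ or $\mathbf{K}\log\boldsymbol{X}$; the derivative step introduces at most one additional power $x_j^{-1}$ or an extra factor of the form $x_j^{a-2}$ or $x_j^{b-2}$. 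The same tail/boundary analysis as for (A.1), plus the fact that $p_0$ on an unbounded $\mathfrak{D}$ has finite moments of every order under (CC0*)--(CC3*) (again using positive-definiteness of $\mathbf{K}$ to dominate any polynomial growth), makes each integrand pointwise integrable. On bounded $\mathfrak{D}$ the integrability reduces to local integrability of $x_j^{\alpha_j + \eta_j - 2}$ near the coordinate axis (in the $a>0, b=0$ case) and of the log-quadratic singularity (in the $a=0$ case), both of which are controlled by the same threshold on $\alpha_j$ used for (A.1).

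The main obstacle is the bookkeeping across the three regimes: $a>0,b>0$; $a>0,b=0$; $a=0$. The critical calculation in each is matching the $x_j$-exponent from $(h_j\circ\varphi_j)$ against the boundary blow-up of $\partial_j\log p$ and, in the $a>0,b=0$ regime, against the Lebesgue singularity $x_j^{\eta_j}$ of $p_0$ itself, which is precisely what forces the strengthened requirement $\alpha_j > 1-\eta_j$. The centered statement $\boldsymbol{\eta}\equiv\boldsymbol{0}$ follows by the same argument with the $b$-term removed, so only the $a$-term constraints remain.
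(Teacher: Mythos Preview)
Your proposal is correct and follows essentially the same route as the paper: (A.3) is immediate from the power-function form of $h_j$; (A.1) is checked via Proposition~\ref{prop_assumptions_power} by matching the exponent $\alpha_j$ of $(h_j\circ\varphi_j)$ against the boundary blow-up of $p_0\,\partial_j\log p$ at $x_j\searrow 0^+$, at positive finite endpoints, and at infinity; and (A.2) is obtained from the same exponent bookkeeping near the axes together with finiteness of all polynomial (or log-polynomial, when $a=0$) moments under (CC0*). The only organizational difference is that the paper first reduces the general domain $\mathfrak{D}$ to the special case $\mathfrak{D}=\mathbb{R}_+^m$ (observing that $(h_j\circ\varphi_j)$ near any finite endpoint behaves like $h_j$ near $0$, and that integrability on $\mathfrak{D}$ is dominated by integrability on $\mathbb{R}_+^m$) and then cites \citet{yus19} verbatim for the entire $a>0$ regime, whereas you handle all endpoint types and all $(a,b)$ regimes directly; the underlying computations are the same.
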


\subsection{Estimation}\label{Estimation_general}

Let $\boldsymbol{\Psi}\equiv \left[\mathbf{K}^{\top}\,\, \boldsymbol{\eta}\right]^{\top}\in\mathbb{R}^{(m+1)\times m}$. In this section, we give the form of $\boldsymbol{\Gamma}\in\mathbb{R}^{(m+1)m\times (m+1)m}$ and $\boldsymbol{g}\in\mathbb{R}^{(m+1)m}$ in the unpenalized loss $\frac{1}{2}\mathrm{vec}(\boldsymbol{\Psi})^{\top}\boldsymbol{\Gamma}\mathrm{vec}(\boldsymbol{\Psi})-\boldsymbol{g}^{\top}\mathrm{vec}(\boldsymbol{\Psi})$ following (\ref{def_Gamma})--(\ref{def_g}). $\boldsymbol{\Gamma}$ is block-diagonal, with the $j$-th $\mathbb{R}^{(m+1)\times (m+1)}$ block
\begin{align*}
\boldsymbol{\Gamma}_j(\mathbf{x})&
\equiv\begin{bmatrix}
\boldsymbol{\Gamma}_{\mathbf{K},j} & \boldsymbol{\gamma}_{\mathbf{K},\boldsymbol{\eta},j} \\
\boldsymbol{\gamma}_{\mathbf{K},\boldsymbol{\eta},j}^{\top} & {\gamma}_{\boldsymbol{\eta},j}
\end{bmatrix}\\
&\equiv\frac{1}{n}\sum\limits_{i=1}^n
\begin{bmatrix}
(h_j\circ\varphi_j)\left(\boldsymbol{X}^{(i)}\right){X_j^{(i)}}^{2a-2}{\boldsymbol{X}^{(i)}}^a{{\boldsymbol{X}^{(i)}}^a}^{\top} & -(h_j\circ\varphi_j)\left(\boldsymbol{X}^{(i)}\right){X_j^{(i)}}^{a+b-2}{\boldsymbol{X}^{(i)}}^a\\
-(h_j\circ\varphi_j)\left(\boldsymbol{X}^{(i)}\right){X_j^{(i)}}^{a+b-2}{{\boldsymbol{X}^{(i)}}^a}^{\top} & (h_j\circ\varphi_j)\left(\boldsymbol{X}^{(i)}\right){X_j^{(i)}}^{2b-2}
\end{bmatrix}.
\end{align*}
On the other hand, $\boldsymbol{g}\equiv\mathrm{vec}\left(\left[ \mathbf{g}_{\mathbf{K}}^{\top}\,\,\,\, \boldsymbol{g}_{\boldsymbol{\eta}}\right]^{\top}\right)\in\mathbb{R}^{(m+1)m}$, where $\mathbf{g}_{\mathbf{K}}\in\mathbb{R}^{m\times m}$ and $\boldsymbol{g}_{\boldsymbol{\eta}}\in\mathbb{R}^m$ correspond to $\mathbf{K}$ and $\boldsymbol{\eta}$, respectively. The $j$-th column of $\mathbf{g}_{\mathbf{K}}$ is 
\begin{multline}\label{eq_gK}
\frac{1}{n}\sum_{i=1}^n\left(\partial_j(h_j\circ\varphi_j)\left(\boldsymbol{X}^{(i)}\right){X_j^{(i)}}^{a-1}+(a-1)(h_j\circ\varphi_j)\left(\boldsymbol{X}^{(i)}\right){X_j^{(i)}}^{a-2}\right){\boldsymbol{X}^{(i)}}^a\\
+a(h_j\circ\varphi_j)\left(\boldsymbol{X}^{(i)}\right){X_j^{(i)}}^{2a-2}\boldsymbol{e}_{j,m},
\end{multline}
where $\boldsymbol{e}_{j,m}\in\mathbb{R}^m$ has $1$ at the $j$-th component and 0 elsewhere, and the $j$-th entry of $\boldsymbol{g}_{\boldsymbol{\eta}}$ is
\begin{equation}\label{eq_geta}
\frac{1}{n}\sum_{i=1}^n-\partial_j(h_j\circ\varphi_j)\left(\boldsymbol{X}^{(i)}\right){X_j^{(i)}}^{b-1}-(b-1)(h_j\circ\varphi_j)\left(\boldsymbol{X}^{(i)}\right){X_j^{(i)}}^{b-2}.
\end{equation}
If $a=0$, set the coefficients $(a-1)$ to $-1$ and $a$ to $1$ in
(\ref{eq_gK}); for $b=0$ set $(b-1)$ to $-1$ in the second term of (\ref{eq_geta}).

As in \citet{yus19} we only apply the diagonal multiplier $\delta$ to the diagonals of $\boldsymbol{\Gamma}_{\mathbf{K},j}\in\mathbb{R}^{m\times m}$, not $\gamma_{\boldsymbol{\eta},j}\in\mathbb{R}$. Note that each block of $\boldsymbol{\Gamma}$ and $\boldsymbol{g}$ correspond to each column of $\boldsymbol{\Psi}$, i.e.~$(\boldsymbol{\kappa}_{,j},\eta_j)\in\mathbb{R}^{m+1}$. In the penalized generalized score-matching loss (\ref{eq_loss_regularized}),  the penalty $\lambda$ for $\mathbf{K}$ and $\boldsymbol{\eta}$ can be different, $\lambda_{\mathbf{K}}$ and $\lambda_{\boldsymbol{\eta}}$, respectively, as long as the ratio $\lambda_{\boldsymbol{\eta}}/\lambda_{\mathbf{K}}$ is fixed. For $\mathbf{K}$ we follow the convention that we penalize its off-diagonal entries only. That is,
\begin{align}\label{eq_loss_ab}
\frac{1}{2}\mathrm{vec}(\boldsymbol{\Psi})^{\top}{\boldsymbol{\Gamma}}_{\delta}(\mathbf{x})\mathrm{vec}(\boldsymbol{\Psi})-{\boldsymbol{g}}(\mathbf{x})^{\top}\mathrm{vec}(\boldsymbol{\Psi})+\lambda_{\mathbf{K}}\|\mathbf{K}_{\mathrm{off}}\|_1+\lambda_{\boldsymbol{\eta}}\|\boldsymbol{\eta}\|_1.
\end{align}
In the case where we do not penalize $\boldsymbol{\eta}$, i.e.~$\lambda_{\boldsymbol{\eta}}=0$, 
we can simply profile out $\boldsymbol{\eta}$, solve for $\hat{\boldsymbol{\eta}}=\boldsymbol{\Gamma}_{\boldsymbol{\eta}}^{-1}\left(\boldsymbol{g}_{\boldsymbol{\eta}}-\boldsymbol{\Gamma}_{\mathbf{K},\boldsymbol{\eta}}^{\top}\mathrm{vec}(\hat{\mathbf{K}})\right)$, plug this back in and rewrite the loss in $\mathbf{K}$ only. Let $\boldsymbol{\Gamma}_{\delta,\mathbf{K}}\in\mathbb{R}^{m^2\times m^2}$ be the block-diagonal matrix with blocks $\boldsymbol{\Gamma}_{\mathbf{K},j}$ and diagonal multiplier $\delta$, and let $\boldsymbol{\Gamma}_{\mathbf{K},\boldsymbol{\eta}}\in\mathbb{R}^{m^2\times m}$ and $\boldsymbol{\Gamma}_{\boldsymbol{\eta}}\in\mathbb{R}^{m\times m}$ be the (block-)diagonal matrices with blocks $\boldsymbol{\gamma}_{\mathbf{K},\boldsymbol{\eta},j}$ and $\gamma_{\boldsymbol{\eta},j}$, respectively. Denote $\boldsymbol{\Gamma}_{\delta,\mathrm{profiled}}$ as the Schur complement of  $\boldsymbol{\Gamma}_{\delta,\mathbf{K}}$ of $\left[\begin{smallmatrix} 
\boldsymbol{\Gamma}_{\delta,\mathbf{K}} & \boldsymbol{\Gamma}_{\mathbf{K},\boldsymbol{\eta}} \\ \boldsymbol{\Gamma}_{\mathbf{K},\boldsymbol{\eta}}^{\top} & \boldsymbol{\Gamma}_{\boldsymbol{\eta}}
\end{smallmatrix}\right]$,  i.e.~$\boldsymbol{\Gamma}_{\delta,\mathbf{K}}-\boldsymbol{\Gamma}_{\mathbf{K},\boldsymbol{\eta}}\boldsymbol{\Gamma}_{\boldsymbol{\eta}}^{-1}\boldsymbol{\Gamma}_{\mathbf{K},\boldsymbol{\eta}}^{\top}$, which is guaranteed to be positive definite for $\delta>1$. Then the profiled loss is
\begin{multline}\label{eq_loss_regularized_profiled}
\hat{J}_{\boldsymbol{h},\boldsymbol{C},\mathfrak{D},\lambda,\delta,\mathrm{profiled}}(p_{\mathbf{K}})\equiv
\frac{1}{2}\mathrm{vec}(\mathbf{K})^{\top}\boldsymbol{\Gamma}_{\delta,\mathrm{profiled}}\mathrm{vec}(\mathbf{K})-\left(\boldsymbol{g}_{\mathbf{K}}-\boldsymbol{\Gamma}_{\mathbf{K},\boldsymbol{\eta}}\boldsymbol{\Gamma}_{\boldsymbol{\eta}}^{-1}\boldsymbol{g}_{\boldsymbol{\eta}}\right)^{\top}\mathrm{vec}(\mathbf{K})+\lambda_{\mathbf{K}}\|\mathbf{K}\|_1.
\end{multline}

\subsection{Univariate Examples}
To illustrate our generalized score matching framework, we first
present univariate Gaussian models on a general domain
$\mathfrak{D}\subset\mathbb{R}$ that is a countable union of intervals
and has  positive Lebesgue measure.  In particular, we estimate one of $\mu_0$  and $\sigma_0^2$ assuming the other is known, given that the true density is
\[p_{\mu_0,\sigma_0^2}(x)\propto\exp\left\{-\frac{(x-\mu_0)^2}{2\sigma_0^2}\right\},\quad
  x\in\mathfrak{D},\] with $\mu_0\in\mathbb{R}$ and $\sigma_0^2>0$.
Let $X^{(1)},\ldots,X^{(n)}\sim p_{\mu_0,\sigma_0^2}$ be
i.i.d.~samples.   Without any regularization by an $\ell_1$ or $\ell_2$
penalty and assuming the true $\sigma_0^2$ is known, we have
similar to Example 3.1 in \citet{yus19} that our generalized score
matching estimator for
$\mu_0$ is
\[\hat{\mu}\equiv\frac{\sum_{i=1}^n(h\circ\varphi_C)\left(X^{(i)}\right)\cdot X^{(i)}-\sigma_0^2(h\circ\varphi_C)'\left(X^{(i)}\right)}{\sum_{i=1}^n (h\circ\varphi_C)\left(X^{(i)}\right)}.\]
By Theorem \ref{thm_A}, it suffices to choose $h(x)=x^{\alpha}$ with $\alpha>0$. Similar to \citet{yus19}, we have
\begin{equation}\label{eq_univariate_mu_var}
\sqrt{n}\left(\hat{\mu}-\mu_0\right)\to_d\mathcal{N}\left(0,\frac{\mathbb{E}_0\left[\sigma_0^2(h\circ\varphi_C)^2(X)+\sigma_0^4{(h\circ\varphi_C)'}^2(X)\right]}{\mathbb{E}_0^2\left[(h\circ\varphi_C)(X)\right]}\right),
\end{equation}
if the expectations exist. On the other hand, assuming the true $\mu_0$ is known, the estimator for $\sigma^2$ is
\[\hat{\sigma}^2\equiv\frac{\sum_{i=1}^n(h\circ\varphi_C)\left(X^{(i)}\right)\cdot\left(X^{(i)}-\mu_0\right)^2}{\sum_{i=1}^n(h\circ\varphi_C)\left(X^{(i)}\right)+(h\circ\varphi_C)'\left(X^{(i)}\right)\cdot\left(X^{(i)}-\mu_0\right)}, \text{ with limiting distribution}\]
\begin{multline}\label{eq_univariate_sigma_var}
\sqrt{n}\left(\hat{\sigma}^2-\sigma_0^2\right)\to_d\mathcal{N}\left(0, 
\frac{\mathbb{E}_0\left[2\sigma_0^6(h\circ\varphi_C)^2(X)\cdot(X-\mu_0)^2+\sigma^8_0{(h\circ\varphi_C)'}^2(X)\cdot(X-\mu_0)^2\right]}{\mathbb{E}_0^2\left[(h\circ\varphi_C)(X)\cdot(X-\mu_0)^2\right]}\right).
\end{multline}

Figure \ref{plot_univariate} shows a standard
normal distribution $\mathcal{N}(0,1)$ restricted to three univariate
domains: $\mathfrak{D}_2\equiv(-\infty,
-3/2]\cup[3/2,+\infty)$, $\mathfrak{D}_3\equiv[-1,-3/4]\cup[3/4,1]$,
and their union $\mathfrak{D}_1\equiv(-\infty,
-3/2]\cup[-1,-3/4]\cup[3/4,1]\cup[3/2,+\infty)$.
The endpoints are chosen so that the probability of the variable lying
in each interval is roughly the same: $\mathcal{N}(0,1)\left((-\infty,-3/2]\right)\approx
0.0668$ and $\mathcal{N}(0,1)\left([-1,-3/4]\right)\approx
0.0680$.  To pick the truncation point $C$ for the distance $\varphi_{C,\mathfrak{D}}$, we choose $\pi\in(0,1]$
and let $C$ be the $\pi$ quantile of the distribution of
$\varphi_{+\infty,\mathfrak{D}}(X)$, where the random variable $X$
follows the truncation of $\mathcal{N}(0,1)$ to the domain
$\mathfrak{D}$.  So, $C$ is such that $\mathbb{P}\left(\varphi_{+\infty,\mathfrak{D}}(X)\leq C\right)=\pi$. Here, $\varphi_{+\infty,\mathfrak{D}_1}(X)=|X|-3/2$ if $|X|>3/2$, or $\min(|X|-3/4,1-|X|)$ otherwise, $\varphi_{+\infty,\mathfrak{D}_2}(X)=|X|-3/2$, and $\varphi_{+\infty,\mathfrak{D}_3}(X)=\min(|X|-3/4,1-|X|)$.

The first subfigure in each row of Figure \ref{plot_univariate} shows
the density on each domain, along with the corresponding
$\varphi_{+\infty}(X)$ in red, whose $y$ axis is on the right. The
second plot in each row shows the log asymptotic variance for the corresponding $\hat{\mu}$, as on the right-hand side of (\ref{eq_univariate_mu_var}), and the third shows that for $\hat{\sigma}^2$ as in (\ref{eq_univariate_sigma_var}). Each curve represents a different $\alpha$ in $h(x)=x^\alpha$, and the $x$ axis represents the quantiles $\pi$ associated with the truncation point $C$ as above. Finally, the red dotted curve shows $C$ versus $\pi$ for each domain. The ``bumps'' in the variance for $x^{0.5}$ are due to numerical instability in integration.
\begin{figure}
\begin{tabular}{|c|c|c|}
\hline
\subfloat[Density on $\mathfrak{D}_1$]
{\hspace{-0.0in}\includegraphics[width=0.27\textwidth]{{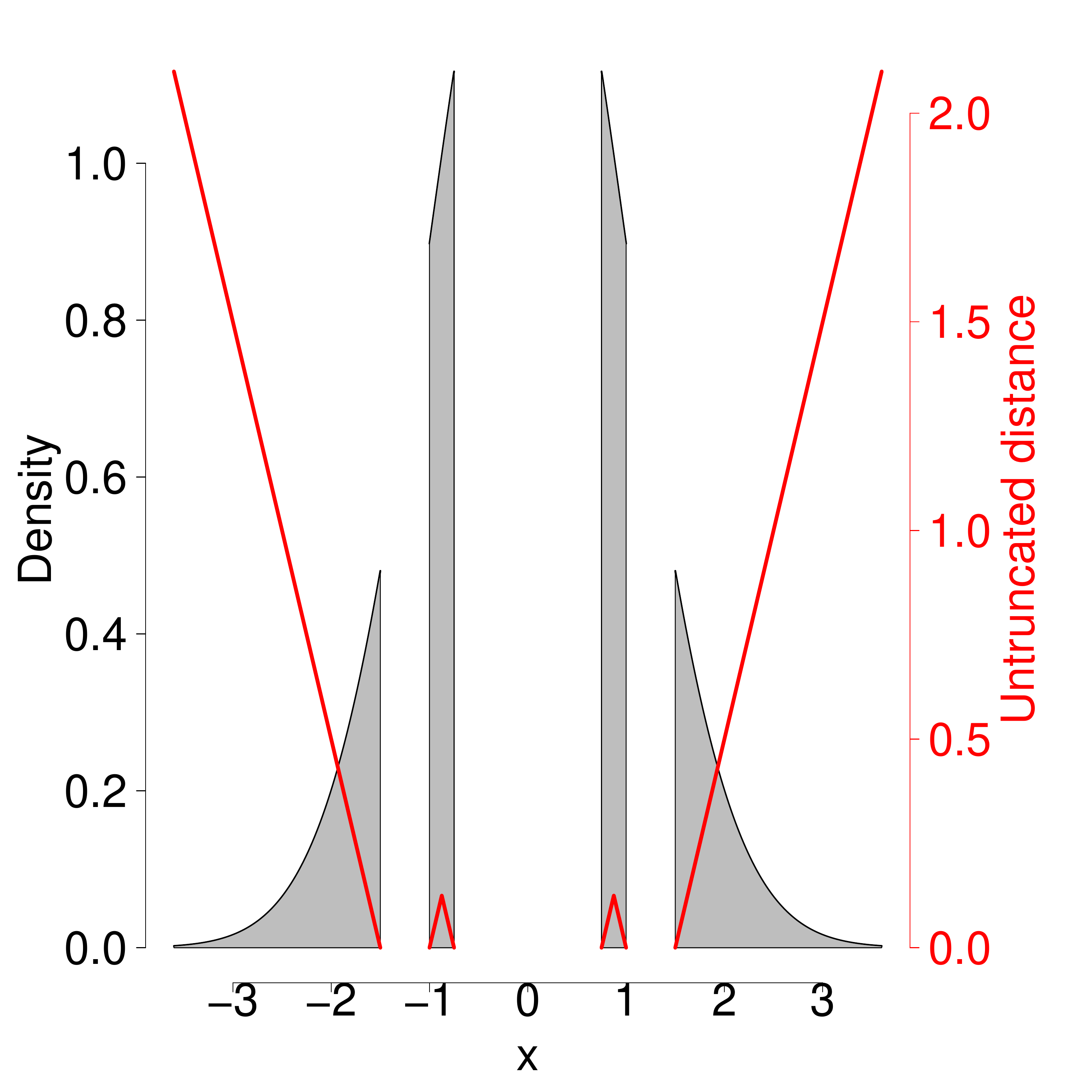}}\hspace{-0.1in}} &
\subfloat[Asymptotic $\log\mathrm{var}\left(\hat{\mu}\right)$ for  $\mathfrak{D}_1$]
{\hspace{-0.01in}\includegraphics[width=0.27\textwidth]{{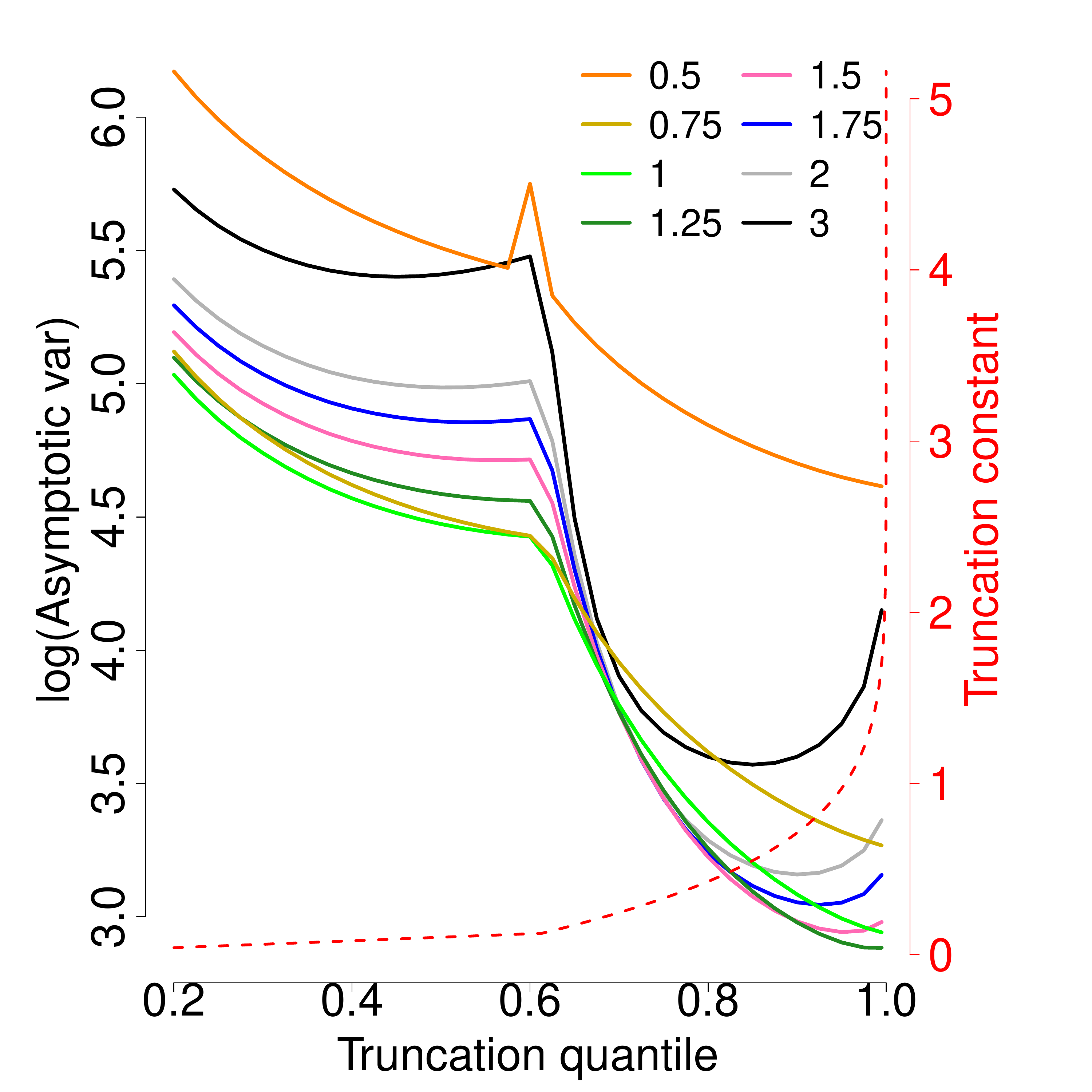}}\hspace{-0.01in}} &
\subfloat[Asymptotic $\log\mathrm{var}\left(\hat{\sigma}^2\right)$ for  $\mathfrak{D}_1$]
{\hspace{-0.01in}\includegraphics[width=0.27\textwidth]{{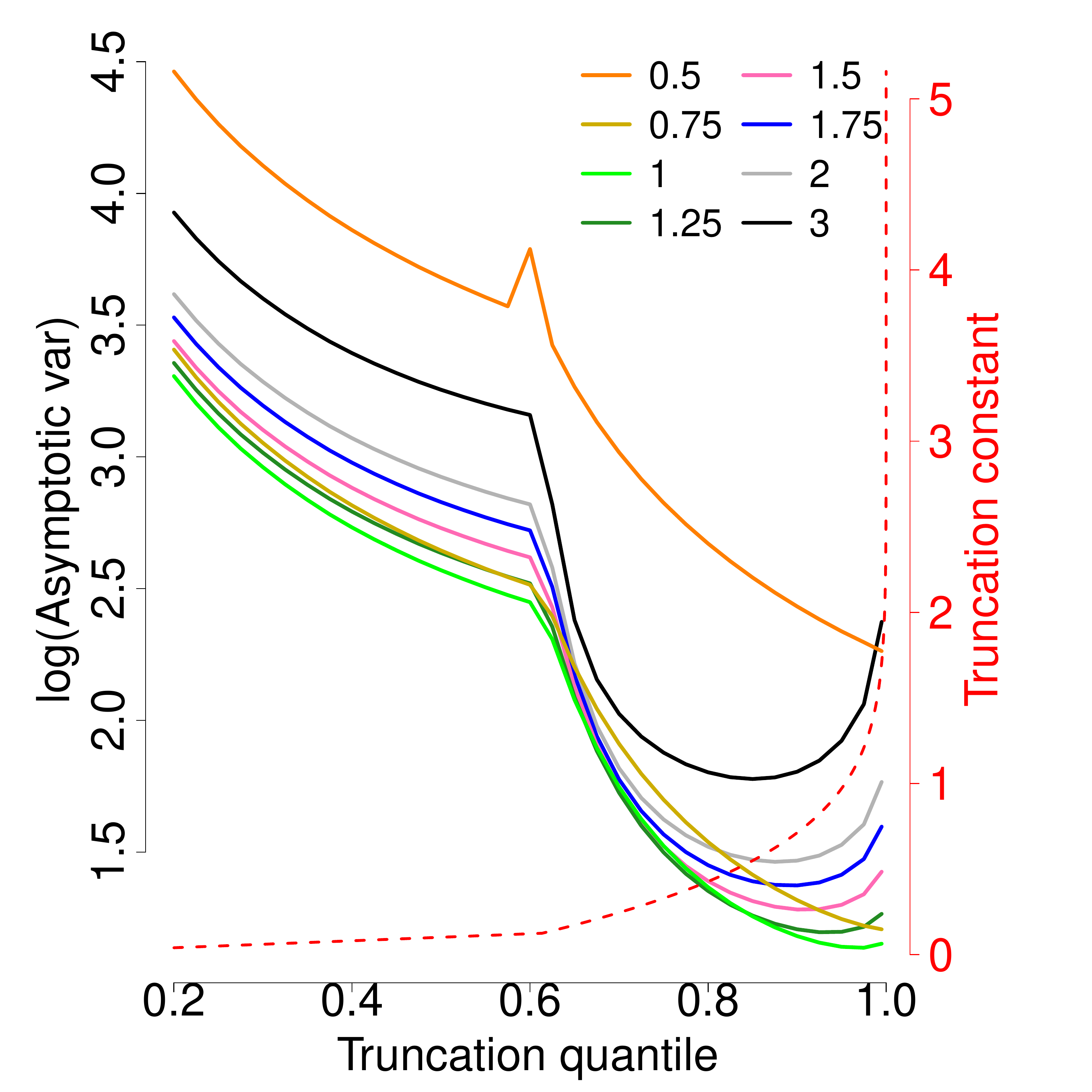}}\hspace{-0.01in}} \\ \hline
\subfloat[Density on $\mathfrak{D}_2$]
{\hspace{-0.1in}\includegraphics[width=0.27\textwidth]{{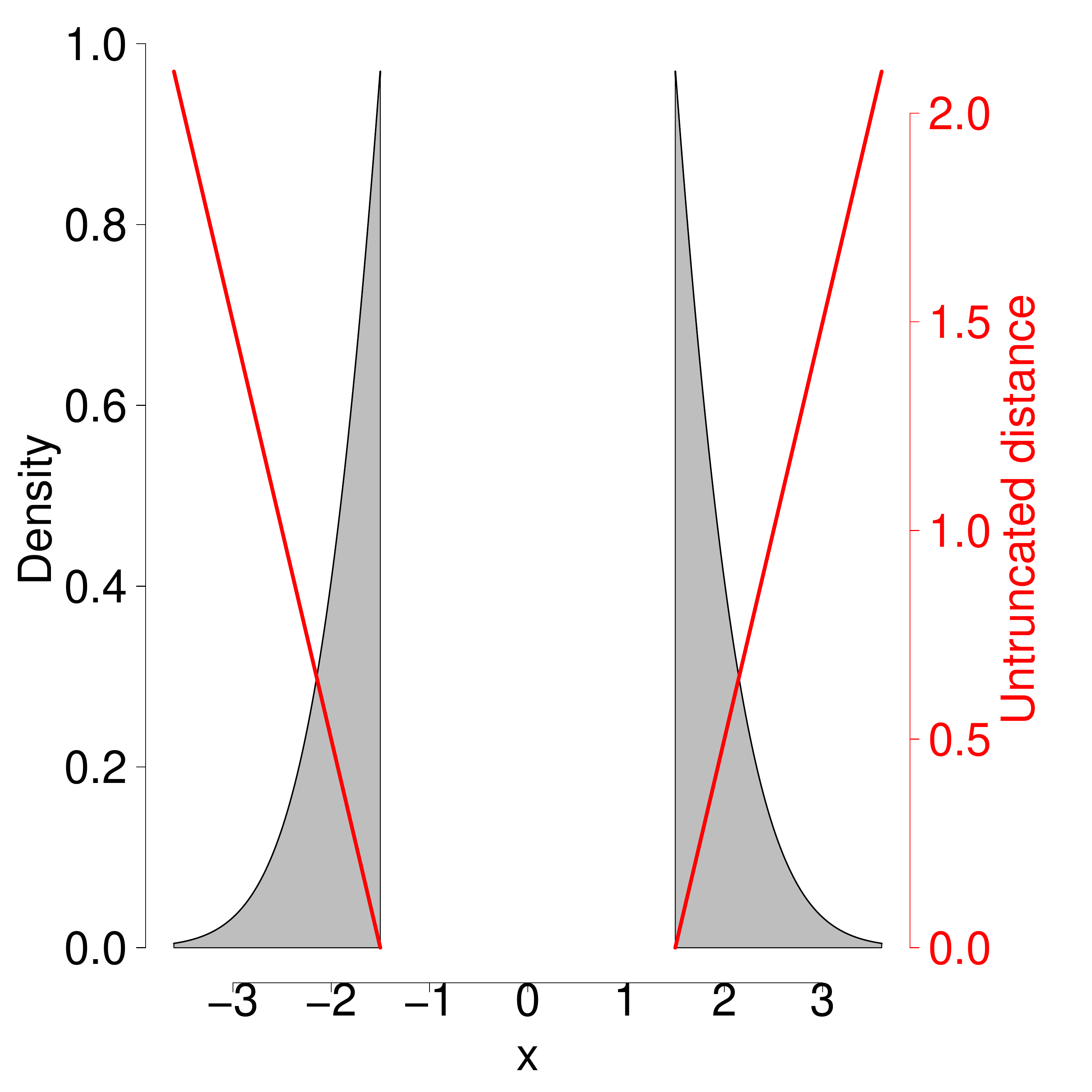}}\hspace{-0.1in}} &
\subfloat[Asymptotic $\log\var\left(\hat{\mu}\right)$ for  $\mathfrak{D}_2$]
{\hspace{-0.01in}\includegraphics[width=0.27\textwidth]{{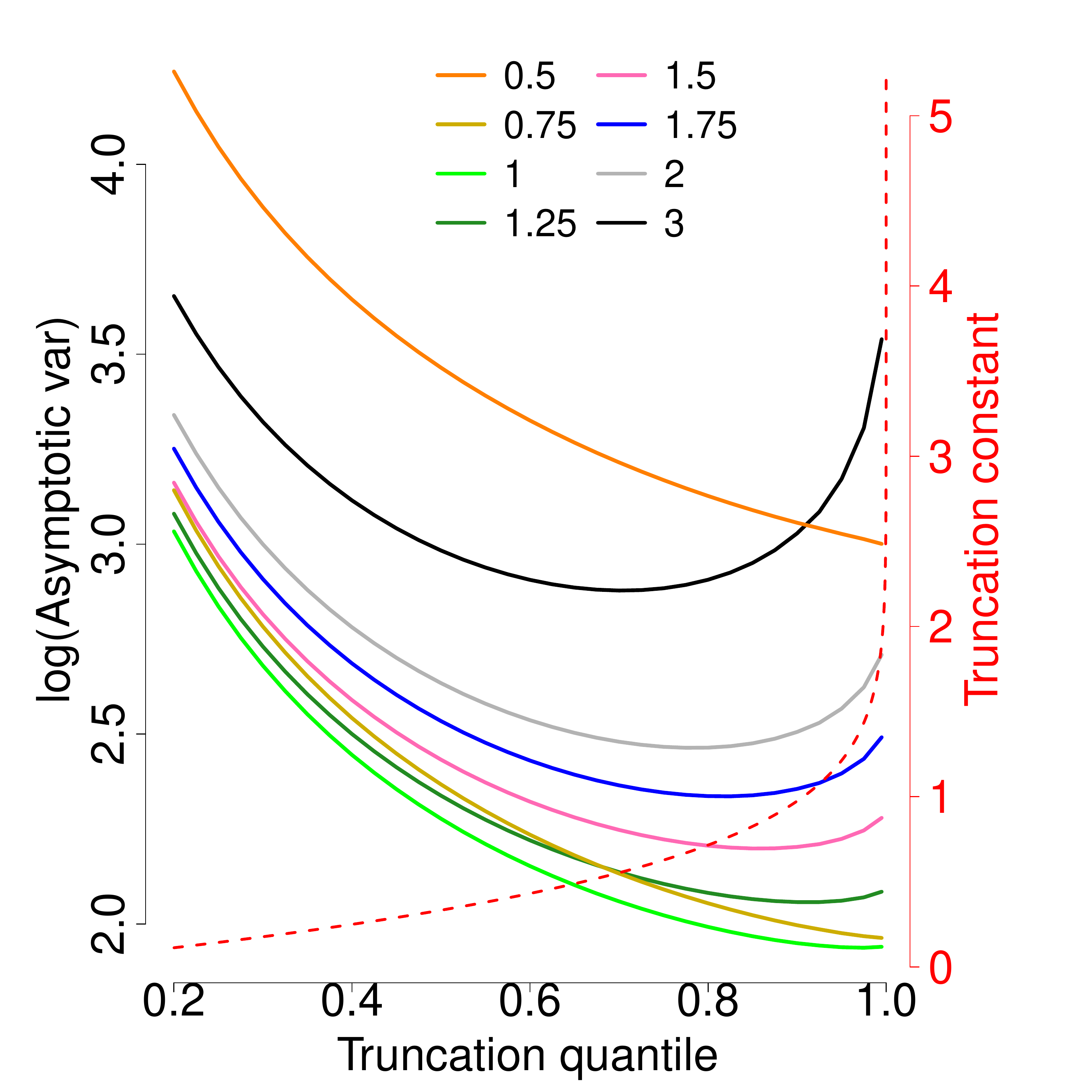}}\hspace{-0.01in}} & 
\subfloat[Asymptotic $\log\var\left(\hat{\sigma}^2\right)$ for $\mathfrak{D}_2$]
{\hspace{-0.01in}\includegraphics[width=0.27\textwidth]{{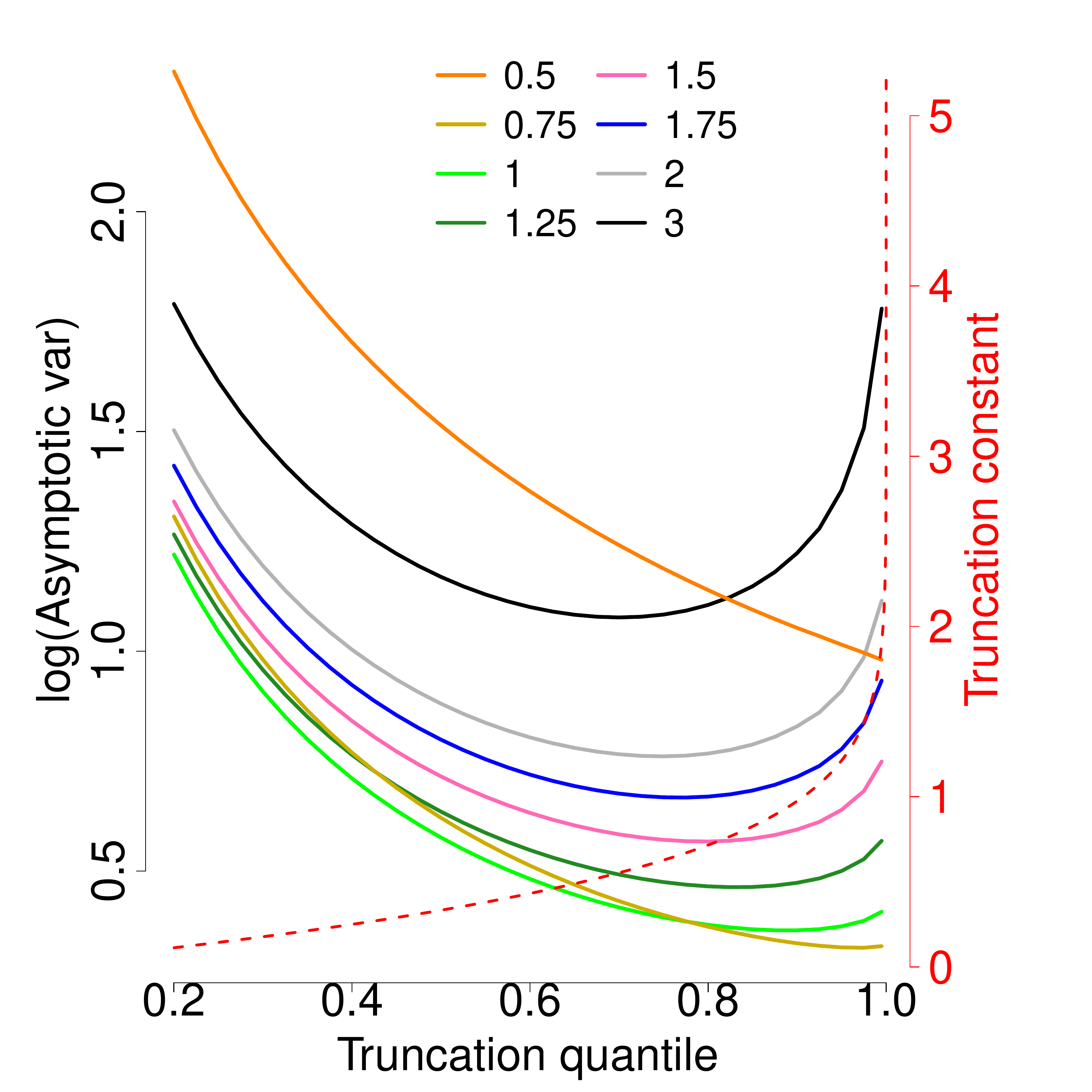}}\hspace{-0.01in}} \\ \hline
\subfloat[Density on $\mathfrak{D}_3$]
{\hspace{-0.1in}\includegraphics[width=0.27\textwidth]{{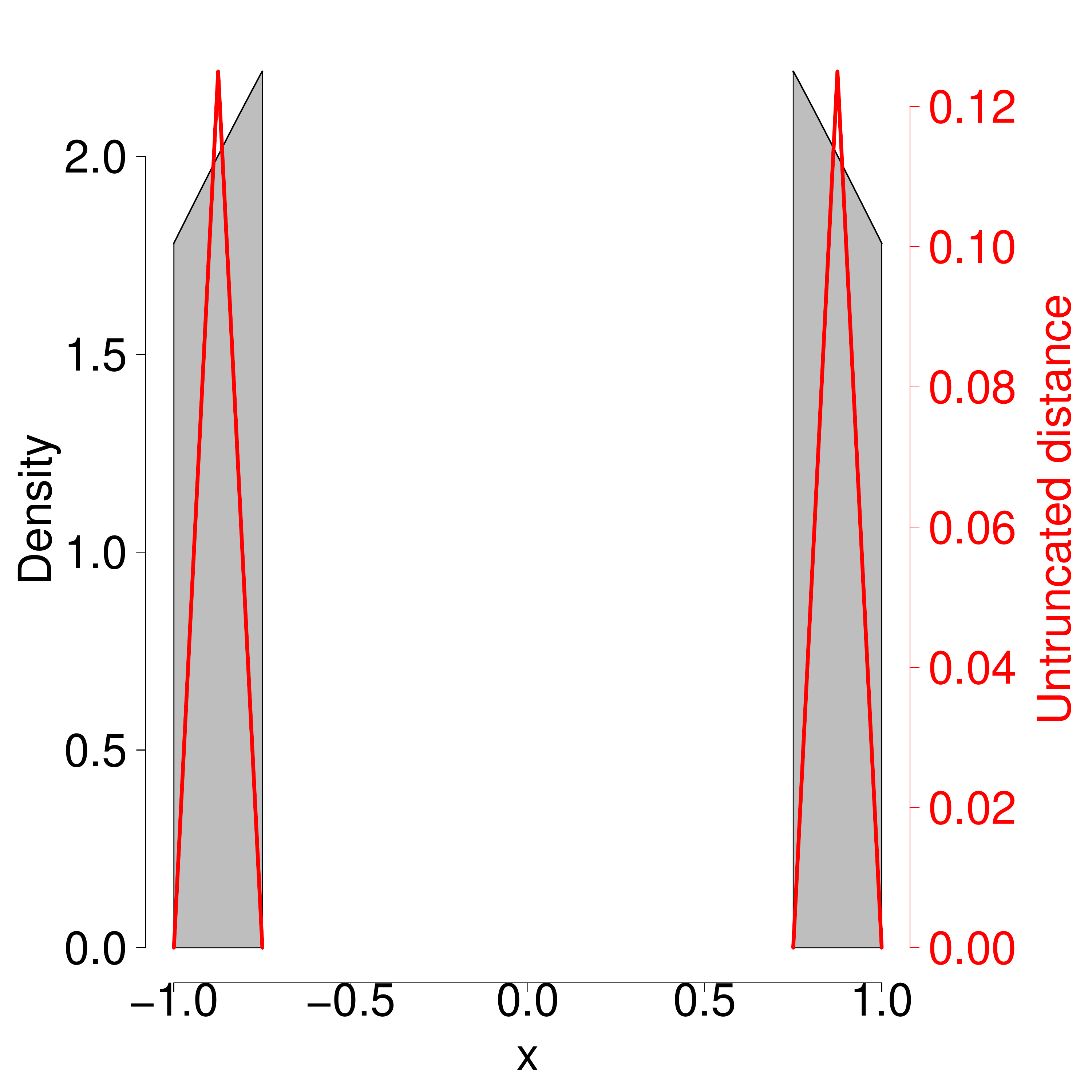}}\hspace{-0.1in}} &
\subfloat[Asymptotic $\log\var\left(\hat{\mu}\right)$ for $\mathfrak{D}_3$]
{\hspace{-0.01in}\includegraphics[width=0.27\textwidth]{{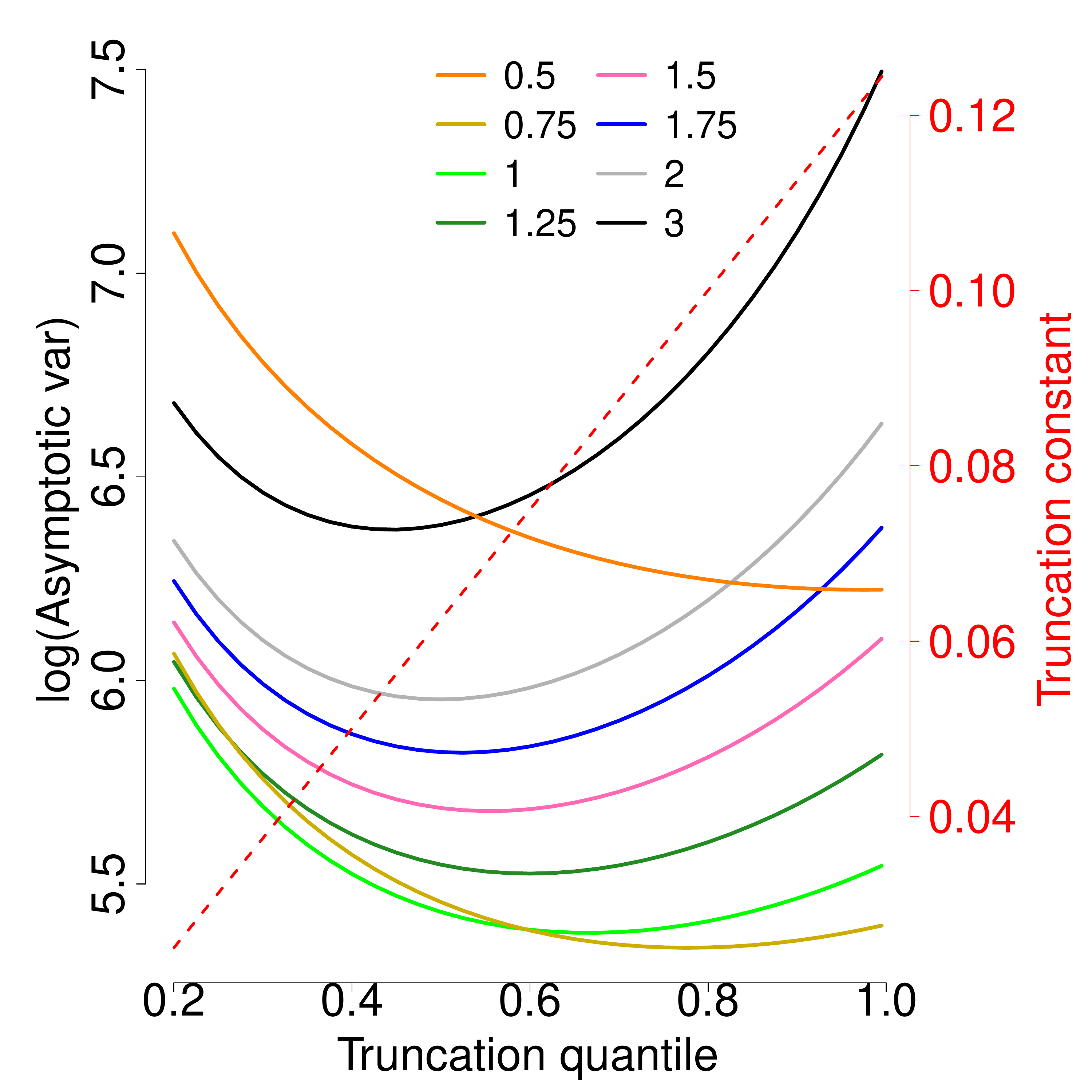}}\hspace{-0.01in}} &
\subfloat[Asymptotic $\log\var\left(\hat{\sigma}^2\right)$ for $\mathfrak{D}_3$]
{\hspace{-0.01in}\includegraphics[width=0.27\textwidth]{{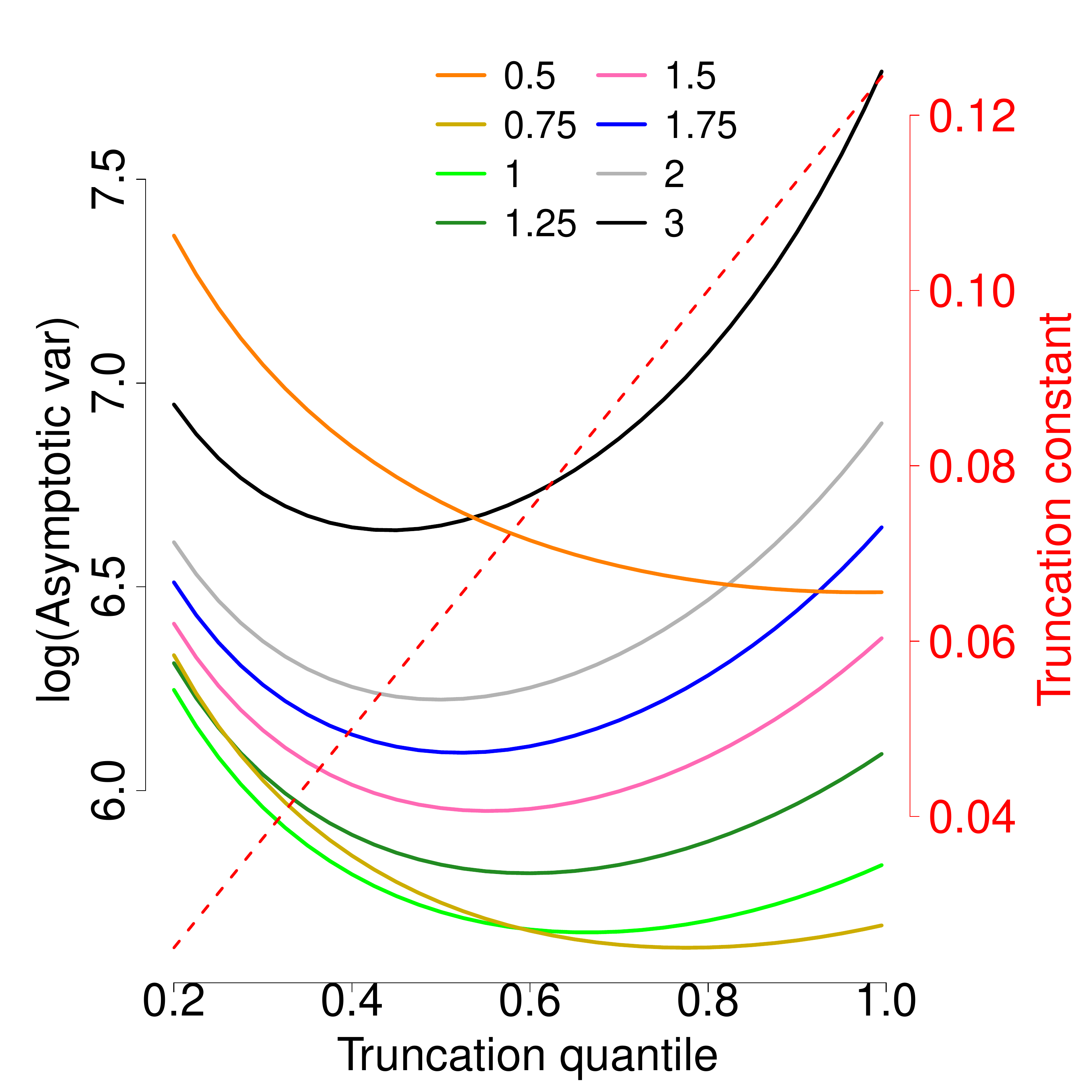}}\hspace{-0.01in}} \\ \hline
\end{tabular}
\caption{Univariate Gaussian example. Each row represents a domain,
  with the first subfigure plotting the density, the second the
  asymptotic log variance of $\hat{\mu}$, the third that of
  $\hat{\sigma}^2$. In each first subfigure the red lines show the
  untruncated distances $\varphi_{+\infty}$ for each domain, and the
  red dotted lines in the second and third show the truncation point
  $C$ versus the $\pi$ quantile.
}\label{plot_univariate}
\end{figure}

As we show in Section \ref{Theory}, for the purpose of edge recovery for graphical models, we recommend using $\boldsymbol{h}(\boldsymbol{x})=\left(x_1^{\alpha_1},\ldots,x_m^{\alpha_m}\right)$ with $\alpha\geq 1$ for $\mathfrak{D}$ that is a finite disjoint union of convex subsets of $\mathbb{R}^m$. Although minimizing the asymptotic variance in the univariate case is a different task, $\alpha=1$ also seems to be consistently the best performing choice.

For $\mathfrak{D}_2$ and $\mathfrak{D}_3$, all variance curves are
U-shaped, while for $\mathfrak{D}_1=\mathfrak{D}_2\cup\mathfrak{D}_3$
we see two such curves piecewise connected at
$C_0=\max\varphi_{+\infty,\mathfrak{D}_3}(x)=(1-3/4)/2=0.125$. To the
right of $C_0$, the truncation is applied to the two unbounded
intervals (i.e.~$\mathfrak{D}_2$) only. The first segments of most
$\mathrm{var}(\hat{\mu})$ curves for $\mathfrak{D}_1$ as well as most
curves for $\mathfrak{D}_2$ indicate there might still be benefit from
truncating the distances $\varphi$ within the bounded intervals,
although the $\mathrm{var}\left(\hat{\sigma}^2\right)$ curves for
$\mathfrak{D}_1$ as well as both curves for $x^{0.75}$ on
$\mathfrak{D}_2$ suggest otherwise. On the other hand, the curves for
$\mathfrak{D}_1$ and $\mathfrak{D}_2$ imply that a truncation constant
larger than $C_0$ is favorable; the ticks on the right-hand side
indicate that the curves for $\mathfrak{D}_2$ reach their minimum at
$C\geq 0.5$.  Hence,  a separate truncation point $C$ for each
connected component of $\mathfrak{D}$ could be beneficial, especially
for unbounded sets.  However, 
the necessary tuning of multiple parameters becomes infeasible for
$m\gg 1$ and we do not further examine it in this paper.

\section{Theoretical Properties}\label{Theory}
This section presents theoretical guarantees for our generalized
method applied to the pairwise interaction power $a$-$b$ models. We
first state a result analogous to \citet{yus19} for truncated Gaussian
densities on a general domain $\mathfrak{D}$,
and then present a general result for $a$-$b$ models. In particular,
similar to and as a generalization of \citet{yus19}, we give a high
probability bound on the deviation of our estimates $\hat{\mathbf{K}}$
and $\hat{\boldsymbol{\eta}}$ from their true values $\mathbf{K}_0$
and $\boldsymbol{\eta}_0$. The main challenge in deriving the results
lies in obtaining marginal probability tail bounds of each observed
value in $\boldsymbol{\Gamma}$ and $\boldsymbol{g}$. We first restate
Definition 12 from \citet{yus19}.

\begin{definition}\label{def_constants_centered}
Let
$\boldsymbol{\Gamma}_0\equiv\mathbb{E}_0\boldsymbol{\Gamma}(\mathbf{x})$
and $\boldsymbol{g}_0\equiv\mathbb{E}_0\boldsymbol{g}(\mathbf{x})$ be
the population versions of $\boldsymbol{\Gamma}(\mathbf{x})$ and
$\boldsymbol{g}(\mathbf{x})$ under the distribution given by a true
parameter matrix
$\mathbf{\Psi}_0\equiv\left[\mathbf{K}_0,\boldsymbol{\eta}_0\right]^{\top}\in\mathbb{R}^{m(m+1)}$,
or $\mathbf{\Psi}_0\equiv\mathbf{K}_0\in\mathbb{R}^{m^2}$ in the
centered case with $\boldsymbol{\eta}_0\equiv\boldsymbol{0}$.  The
support of a matrix $\mathbf{\Psi}$ is
$S(\mathbf{\Psi})\equiv\{(i,j):\psi_{ij}\neq 0\}$, and we let
$S_0=S(\mathbf{\Psi}_0)$.   We define
$d_{\mathbf{\Psi}_0}$ to be the maximum number of non-zero entries in
any column of  $\mathbf{\Psi}_0$, and $c_{\mathbf{\Psi}_0}\equiv\mnorm{\mathbf{\Psi}_0}_{\infty,\infty}$. Writing $\mathbf{\Gamma}_{0,AB}$ for the $A\times B$ submatrix of $\boldsymbol{\Gamma}_0$, we define
$c_{\boldsymbol{\Gamma}_0}\equiv\mnorm{(\boldsymbol{\Gamma}_{0,S_0S_0})^{-1}}_{\infty,\infty}.$
Finally, $\boldsymbol{\Gamma}_0$ satisfies \emph{the irrepresentability condition with incoherence parameter $\omega\in(0,1]$ and edge set $S_0$} if
\begin{equation}\label{irrepresentability}
\mnorm{\boldsymbol{\Gamma}_{0,S_0^cS_0}(\boldsymbol{\Gamma}_{0,S_0S_0})^{-1}}_{\infty,\infty}\leq (1-\omega).
\end{equation}
\end{definition}

\subsection{Truncated Gaussian Models on A Finite Disjoint Union of Convex Sets} 
Truncated Gaussian models are covered by our $a$-$b$ models described in Section \ref{Pairwise Interaction Power $a$-$b$ Models} with $a=b=1$. When the domain $\mathfrak{D}$ is a finite disjoint union of convex sets with a positive Lebesgue measure, we have the following theorem similar to Theorem 17 in \citet{yus19}, which bounds the errors as long as one uses finite truncation points $\boldsymbol{C}$ for $\boldsymbol{\varphi}_{\boldsymbol{C},\mathfrak{D}}$ and each component in $\boldsymbol{h}(\boldsymbol{x})$ is a power function with a positive exponent. 

Specifically, we consider the truncated Gaussian distribution on $\mathfrak{D}$ with inverse covariance parameter $\mathbf{K}_0\in\mathbb{R}^{m\times m}$ and mean parameter $\boldsymbol{\mu}_0$, namely with density
\[p_{\boldsymbol{\eta}_0,\mathbf{K}_0}(\boldsymbol{x})\propto\exp\left(-\frac{1}{2}{\boldsymbol{x}}^{\top}\mathbf{K}_0\boldsymbol{x}+\boldsymbol{\eta}_0^{\top}\boldsymbol{x}\right)\mathds{1}_{\mathfrak{D}}(\boldsymbol{x})\]
with $\mathbf{K}_0$ positive definite and $\boldsymbol{\eta}_0\equiv\mathbf{K}_0\boldsymbol{\mu}_0$. We assume $\mathfrak{D}\subset\mathbb{R}^m$ to be a component-wise countable union of intervals (Def \ref{def_V}) with positive Lebesgue measure, and assume it is a finite disjoint union of convex sets $\boldsymbol{\Delta}\equiv\left\{\mathfrak{D}_1,\ldots,\mathfrak{D}_{|\boldsymbol{\Delta}|}\right\}$, i.e.~$\mathfrak{D}\equiv\mathfrak{D}_1\sqcup\cdots\sqcup\mathfrak{D}_{|\boldsymbol{\Delta}|}$. 

\begin{theorem}\label{theorem_convex_ggm}
Suppose the data matrix contains $n$ i.i.d.~copies of $\boldsymbol{X}$ following a truncated Gaussian distribution on $\mathfrak{D}$ as above with parameters $\mathbf{K}_0\in\mathbb{R}^{m\times m}$ and $\boldsymbol{\mu}_0$,  Let $\mathbf{\Psi}_0\equiv\left[\mathbf{K}_0,\boldsymbol{\eta}_0\right]^{\top}\equiv \left[\mathbf{K}_0,\mathbf{K}_0\boldsymbol{\mu}_0\right]^{\top}$. Assume that (A.1)--(A.3) in Lemma \ref{lem_loss} hold, and in addition that $\boldsymbol{h}$ and the truncation points $\boldsymbol{C}$ in the truncated component-wise distance $\boldsymbol{\varphi}_{\boldsymbol{C},\mathfrak{D}}$ satisfy $0\leq (h_j\circ\varphi_{C_j,\mathfrak{D},j})(\boldsymbol{x})\leq M$ and $0\leq \partial_j(h_j\circ\varphi_{C_j,\mathfrak{D},j})(\boldsymbol{x})\leq M'$ almost surely for all $j=1,\ldots,m$ for some constants $0<M,M'<+\infty$. Note that $\boldsymbol{h}(\boldsymbol{x})=(x_1^{\alpha_1},\ldots,x_m^{\alpha_m})$ with $\alpha_1,\ldots,\alpha_m\geq 1$ satisfies all these assumptions, according to Theorem \ref{thm_A}. Let the diagonal multiplier $\delta$ introduced in Section \ref{Exponential Families and Regularized Score Matching} satisfy
\[1<\delta<C(n,m)\equiv 2-\left(1+4e\max\left\{\left(6\log m+2\log|\boldsymbol{\Delta}|\right)/n,\sqrt{\left(6\log m+2\log|\boldsymbol{\Delta}|\right)/n}\right\}\right)^{-1}\]
and suppose further that $\boldsymbol{\Gamma}_{0,S_0S_0}$ is invertible and satisfies the irrepresentability condition (\ref{irrepresentability}) with $\omega\in(0,1]$. Define $c_{\boldsymbol{X}}\equiv2\max_{\mathfrak{D}'\in\boldsymbol{\Delta}}\max_j\Big|2\sqrt{(\mathbf{K}_0^{-1})_{jj}}+\sqrt{e}\,\mathbb{E}_{0}X_j\mathds{1}_{\mathfrak{D}'}(\boldsymbol{X})\Big|$. Suppose for $\tau>3$ the sample size and the regularization
 parameter satisfy
\begin{align}
n&>\mathcal{O}\left(\left(\tau\log m+\log|\boldsymbol{\Delta}|\right)\max\left\{\frac{M^2c_{\boldsymbol{\Gamma}_0}^2c_{\boldsymbol{X}}^4d_{\boldsymbol{\Psi}_0}^2}{\omega^2},\frac{Mc_{\boldsymbol{\Gamma}_0}c_{\boldsymbol{X}}^2d_{\boldsymbol{\Psi}_0}}{\omega}\right\}\right),\\
\lambda&>\mathcal{O}\left[(Mc_{\boldsymbol{\Psi}_0}c_{\boldsymbol{X}}^2+M'c_{\boldsymbol{X}}+M)\left(
\sqrt{\frac{\tau\log m+\log|\boldsymbol{\Delta}|}{n}}+\frac{\tau\log m+\log|\boldsymbol{\Delta}|}{n}\right)\right].
\end{align}
Then the following statements hold with probability $1-m^{3-\tau}$:
\begin{enumerate}[(a)]
\item The regularized generalized $\boldsymbol{h}$-score matching estimator $\hat{\boldsymbol{\Psi}}$ that minimizes (\ref{eq_loss_regularized}) is unique, has its support included in the true support, $\hat{S}\equiv S(\hat{\boldsymbol{\Psi}})\subseteq S_0$, and satisfies
\begin{alignat*}{3}
\|\hat{\mathbf{K}}-\mathbf{K}_0\|_{\infty}&\leq\frac{c_{\boldsymbol{\Gamma}_0}}{2-\omega}\lambda,&&\|\hat{\boldsymbol{\eta}}-\boldsymbol{\eta}_0\|_{\infty}\leq\frac{c_{\boldsymbol{\Gamma}_0}}{2-\omega}\lambda,\\
\mnorm{\hat{\mathbf{K}}-\mathbf{K}_0}_{F}&\leq\frac{c_{\boldsymbol{\Gamma}_0}}{2-\omega}\lambda\sqrt{|S_0|},&&\mnorm{\hat{\boldsymbol{\eta}}-\boldsymbol{\eta}_0}_{F}\leq\frac{c_{\boldsymbol{\Gamma}_0}}{2-\omega}\lambda\sqrt{|S_0|},\\
\mnorm{\hat{\mathbf{K}}-\mathbf{K}_0}_{2}&\leq\frac{c_{\boldsymbol{\Gamma}_0}}{2-\omega}\lambda\min\left(\sqrt{|S_0|},d_{\boldsymbol{\Psi}_0}\right),
\qquad&& 
\mnorm{\hat{\boldsymbol{\eta}}-\boldsymbol{\eta}_0}_{2}\leq\frac{c_{\boldsymbol{\Gamma}_0}}{2-\omega}\lambda\min\left(\sqrt{|S_0|},d_{\boldsymbol{\Psi}_0}\right).
\end{alignat*}
\item If in addition $\min_{j,k:(j,k)\in S_0}|\kappa_{0,jk}|>\frac{c_{\boldsymbol{\Gamma}_0}}{2-\omega}$ and $\min_{j:(m+1,j)\in S_0}|\eta_{0,j}|>\frac{c_{\boldsymbol{\Gamma}_0}}{2-\omega}\lambda$,
then we have $\hat{S}=S_0$, $\mathrm{sign}(\hat{\kappa}_{jk})=\mathrm{sign}(\kappa_{0,jk})$ for all $(j,k)\in S_0$ and $\mathrm{sign}(\hat{\eta}_j)=\mathrm{sign}(\eta_{0j})$ for $(m+1,j)\in S_0$.
\end{enumerate}
In the centered setting, the same bounds hold by removing the dependencies on $\hat{\boldsymbol{\eta}}$ and $\boldsymbol{\eta}_0$.
\end{theorem}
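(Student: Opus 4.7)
The plan is to follow the primal-dual witness construction used for Theorem 17 in \citet{yus19}, adapted to the more general domain $\mathfrak{D}$ and to the truncated component-wise distance $\boldsymbol{\varphi}_{\boldsymbol{C},\mathfrak{D}}$. First I would form the restricted estimator $\tilde{\boldsymbol{\Psi}}$ that minimizes the regularized loss (\ref{eq_loss_regularized}) subject to vanishing on $S_0^c$; invertibility of $\boldsymbol{\Gamma}_{0,S_0S_0}$ combined with the diagonal multiplier $\delta>1$ guarantees that the sample Hessian is strictly positive definite on $S_0$ with high probability, so $\tilde{\boldsymbol{\Psi}}$ is unique. I would then verify strict dual feasibility on $S_0^c$, i.e.\ that the induced subgradient lies strictly inside $(-\lambda,\lambda)$; by the usual KKT argument this certifies $\tilde{\boldsymbol{\Psi}}=\hat{\boldsymbol{\Psi}}$, yielding uniqueness and $\hat{S}\subseteq S_0$.

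The analytic core of the argument is to control the entry-wise maximum deviations $\|\boldsymbol{\Gamma}(\mathbf{x})-\boldsymbol{\Gamma}_0\|_\infty$ and $\|\boldsymbol{g}(\mathbf{x})-\boldsymbol{g}_0\|_\infty$. With $a=b=1$ the entries of $\boldsymbol{\Gamma}$ are sample averages of terms of the form $(h_j\circ\varphi_j)(\boldsymbol{X})\,X_k X_\ell$ and the entries of $\boldsymbol{g}$ of the form $(h_j\circ\varphi_j)(\boldsymbol{X})$, $\partial_j(h_j\circ\varphi_j)(\boldsymbol{X})\,X_k$, etc. The uniform bounds $(h_j\circ\varphi_j)\leq M$ and $\partial_j(h_j\circ\varphi_j)\leq M'$ reduce these to sample averages of sub-Gaussian or sub-exponential functions of $\boldsymbol{X}$, and I would apply Bernstein's inequality combined with a union bound over the $\mathcal{O}(m^2)$ entries to produce the advertised $\sqrt{(\tau\log m + \log|\boldsymbol{\Delta}|)/n}$ rate.

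The main obstacle is establishing sharp tail bounds on the coordinates $X_j$ under the truncated Gaussian law on $\mathfrak{D}$. This is exactly where the assumption that $\mathfrak{D}=\mathfrak{D}_1\sqcup\cdots\sqcup\mathfrak{D}_{|\boldsymbol{\Delta}|}$ is a finite disjoint union of convex sets enters: restricted to each convex component $\mathfrak{D}_\ell$, the Gaussian with positive definite inverse covariance $\mathbf{K}_0$ retains a sub-Gaussian marginal for each coordinate $X_j$, with variance proxy controlled by $(\mathbf{K}_0^{-1})_{jj}$ and a conditional mean appearing in $c_{\boldsymbol{X}}$ (this is the analogue of the log-concave truncation argument in \citet{yus19}, which on $\mathbb{R}_+^m$ reduces to a single convex piece). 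A union bound over the $|\boldsymbol{\Delta}|$ pieces produces the additive $\log|\boldsymbol{\Delta}|$ term, and the conditional sub-Gaussianity translates into tail bounds on the products $X_k X_\ell$ via the Hanson–Wright inequality applied piece by piece. This step is what dictates the appearance of $c_{\boldsymbol{X}}^2$ and $c_{\boldsymbol{X}}^4$ in the sample complexity and in the choice of $\lambda$.

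With these deviation bounds in hand, the rest proceeds along the standard $\ell_1$-regularized M-estimator template: the irrepresentability condition (\ref{irrepresentability}) with incoherence $\omega$, together with the condition on $\delta<C(n,m)$, shows that the dual subgradient on $S_0^c$ is strictly bounded by $(1-\omega/2)\lambda<\lambda$ on the good event, giving part (a) and the $\ell_\infty$ bound $\|\hat{\boldsymbol{\Psi}}-\boldsymbol{\Psi}_0\|_\infty\leq\frac{c_{\boldsymbol{\Gamma}_0}}{2-\omega}\lambda$ via the standard inversion argument. The Frobenius and operator-norm bounds follow by converting $\ell_\infty$ to $\ell_2$ using $|S_0|$ and to $\ell_2$ operator norm using the column sparsity $d_{\boldsymbol{\Psi}_0}$. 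Part (b) is then immediate: whenever every nonzero entry of $\mathbf{K}_0$ and $\boldsymbol{\eta}_0$ exceeds $\frac{c_{\boldsymbol{\Gamma}_0}}{2-\omega}\lambda$ in magnitude, the entry-wise error cannot flip its sign, so $\hat S=S_0$ and signs are recovered. In the centered case, the block corresponding to $\boldsymbol{\eta}$ disappears and the same argument runs verbatim on the $\mathbf{K}$-block alone.
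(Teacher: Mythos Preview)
Your proposal is correct and follows essentially the same route as the paper: invoke the primal--dual witness proof of Theorem~17 in \citet{yus19}, with the only new ingredient being the decomposition $\mathfrak{D}=\bigsqcup_{\ell}\mathfrak{D}_\ell$ so that the sub-Gaussian tail bound (Lemma~22.1 of \citet{yus19}, which relies only on convexity of the domain) can be applied to each $X_j\mathds{1}_{\mathfrak{D}_\ell}(\boldsymbol{X})$ separately and then combined via the triangle inequality and a union bound, producing the extra $\log|\boldsymbol{\Delta}|$ terms. One small terminological remark: the paper works with the indicator-truncated variables $X_j\mathds{1}_{\mathfrak{D}'}(\boldsymbol{X})$ rather than conditioning on $\boldsymbol{X}\in\mathfrak{D}'$, which is why $c_{\boldsymbol{X}}$ involves $\mathbb{E}_0 X_j\mathds{1}_{\mathfrak{D}'}(\boldsymbol{X})$ rather than a conditional mean, but this does not affect the structure of your argument.
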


The proposed method naturally extends our previous work, and the above
results follow by applying the proof for Theorem 17 of
\citet{yus19} with two modifications: (i) using the triangle
inequality, split the concentration bounds in (39), (43), and (44) in
\citet{yus19} into one for each set
$\mathfrak{D}_1,\ldots,\mathfrak{D}_{|\boldsymbol{\Delta}|}$ and
combine the results with a union bound; (ii) in the proof of Lemma
22.1 of \citet{yus19}, replace $\mathfrak{D}\equiv\mathbb{R}_+^m$ by
each 
$\mathfrak{D}'=\mathfrak{D}_1,\ldots,\mathfrak{D}_{|\boldsymbol{\Delta}|}$
and replace $X_1$ by $X_1\mathds{1}_{\mathfrak{D}'}(\boldsymbol{X})$,
as the proof there only uses the convexity of the
domain.

\subsection{Bounded Domains in $\mathbb{R}_+^m$ with Positive Measure}
In this section we present results for general $a$-$b$ models on bounded domains with positive measure.

\begin{theorem}\label{theorem_bounded_nonlog_ggm}
(1) Suppose $a>0$ and $b\geq 0$. Let $\mathfrak{D}$ be a bounded
subset of $\mathbb{R}_+^m$ with positive Lebesgue measure with
$\mathfrak{D}\subseteq[u_1,v_1]\times\cdots\times [u_m,v_m]$ for
finite nonnegative constants $u_1,v_1,\ldots,u_m,v_m$, and suppose
that the true parameters $\mathbf{K}_0$ and $\boldsymbol{\eta}_0$
satisfy the conditions in Corollary \ref{cor_norm_const} (for a well-defined density). Assume $\boldsymbol{h}(\boldsymbol{x})\equiv(x_1^{\alpha_1},\ldots,x_m^{\alpha_m})$ with $\alpha_1,\ldots,\alpha_m\geq \max\{1,2-a,2-b\}$, and suppose $\boldsymbol{\varphi}_{\boldsymbol{C},\mathfrak{D}}$ has truncation points $\boldsymbol{C}=(C_1,\ldots,C_m)$ with $0<C_j< +\infty$ for $j=1,\ldots,m$. Define
\begin{align*}
\zeta_j(\alpha_j,p_j)&\equiv\begin{cases}\min\left\{C_j,(v_j-u_j)/2\right\}^{\alpha_j}(u_j+v_j)^{p_j}/2^{p_j}, & p_j<0, v_j-u_j\leq 2C_j, \\ 
\min\left\{C_j,(v_j-u_j)/2\right\}^{\alpha_j}(u_j+C_j)^{p_j}, & p_j<0, v_j-u_j> 2C_j, \\
\min\left\{C_j,(v_j-u_j)/2\right\}^{\alpha_j}v_j^{p_j}, & p_j\geq 0,\end{cases}\\
\varsigma_{\boldsymbol{\Gamma}}&\,\equiv\max\limits_{j,k =1,\ldots,m}\max\left\{\zeta_j(\alpha_j,2a-2)v_k^{2a},\,\zeta_j(\alpha_j,2b-2)\right\},\\
\varsigma_{\boldsymbol{g}}&\,\equiv\max\limits_{j,k =1,\ldots,m}\max\big\{\alpha_j\zeta_j(\alpha_j-1,a-1)v_k^{a}+|a-1|\zeta_j(\alpha_j,a-2)v_k^{a}+a\zeta_j(\alpha_j,2a-2),\\
&\quad\quad\quad\quad\quad\quad\quad\,\,\alpha_j\zeta_j(\alpha_j-1,b-1)+|b-1|\zeta_j(\alpha_j,b-2)\big\}.
\end{align*}
Suppose that $\boldsymbol{\Gamma}_{0,S_0S_0}$ is invertible and satisfies the irrepresentability condition (\ref{irrepresentability}) with $\omega\in(0,1]$. Suppose that for $\tau>0$ the sample size, the regularization parameter and the diagonal multiplier satisfy
\begin{align}
n&>72c_{\boldsymbol{\Gamma}_0}^2d_{\boldsymbol{\Psi}_0}^2\varsigma_{\boldsymbol{\Gamma}}^2(\tau\log m+\log 4)/\omega^2,\label{eq_bounded_nonlog_n}\\
\lambda&>\frac{3(2-\omega)}{\omega}\max\left\{c_{\boldsymbol{\Psi}_0}\varsigma_{\boldsymbol{\Gamma}}\sqrt{2(\tau\log m+\log 4)/n},\varsigma_{\boldsymbol{g}}\sqrt{(\tau\log m+\log 4)/(2n)}\right\},\\
1&<\delta<C_{\text{bounded}}(n,m,\tau)\equiv 1+\sqrt{(\tau\log m+\log 4)/(2n)}.\label{eq_bounded_nonlog_delta}
\end{align}
Then the statements (a) and (b) in Theorem \ref{theorem_convex_ggm} hold with probability at least $1-m^{-\tau}$.

(2) For $a=0$ and $b\geq 0$, if $u_1,\ldots,u_m>0$, letting $w_j\equiv\max\{|\log u_j|,|\log v_j|\}$, the above holds with 
\begin{align*}
\varsigma_{\boldsymbol{\Gamma}}&\equiv\max\limits_{j,k =1,\ldots,m}\max\{\zeta_j(\alpha_j,-2)w_k^2,\,\zeta_j(\alpha_j,2b-2)\},\\
\varsigma_{\boldsymbol{g}}&\equiv\max\limits_{j,k =1,\ldots,m}\max\{\alpha_j\zeta_j(\alpha_j-1,-1)w_k+|a-1|\zeta_j(\alpha_j,-2)w_k+a\zeta_j(\alpha_j,-2),\\
&\phantom{\max\limits_{j,k =1,\ldots,m}\max\{}\alpha_j\zeta_j(\alpha_j-1,b-1)+|b-1|\zeta_j(\alpha_j,b-2)\}.
\end{align*}
\end{theorem}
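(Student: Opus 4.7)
The strategy is to mirror the primal--dual witness argument already developed for Theorem \ref{theorem_convex_ggm} and its ancestor, Theorem 17 of \citet{yus19}, but exploit the crucial simplification afforded by boundedness of $\mathfrak{D}\subseteq [u_1,v_1]\times\cdots\times [u_m,v_m]\subset\mathbb{R}_+^m$: every summand in the sample averages defining $\boldsymbol{\Gamma}(\mathbf{x})$ and $\boldsymbol{g}(\mathbf{x})$ is almost surely bounded, so Hoeffding's inequality (rather than any sub-exponential or Bernstein-type tail bound) suffices to produce the concentration needed for the oracle-type argument. Thus the proof factors into three steps: (i) derive uniform a.s.\ bounds $\varsigma_{\boldsymbol{\Gamma}}$ and $\varsigma_{\boldsymbol{g}}$ on the per-sample contributions; (ii) apply Hoeffding and a union bound to obtain high-probability deviation control for each entry of $\boldsymbol{\Gamma}$ and $\boldsymbol{g}$; (iii) feed these into the primal--dual witness argument to deduce support recovery and the norm bounds listed in (a) and (b).

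\textbf{Deriving $\varsigma_{\boldsymbol{\Gamma}}$ and $\varsigma_{\boldsymbol{g}}$.} The heart of step (i) is bounding expressions of the form $(h_j\circ\varphi_j)(\boldsymbol{x})\cdot x_j^{p_j}$ uniformly on $\mathfrak{D}$. Since $h_j(y)=y^{\alpha_j}$ and since $\varphi_j(\boldsymbol{x})$ is at most the truncation $C_j$ \emph{and} at most half the length of the relevant section interval, we have $(h_j\circ\varphi_j)(\boldsymbol{x})\le \min\{C_j,(v_j-u_j)/2\}^{\alpha_j}$. The case $p_j\ge 0$ is immediate via $x_j\le v_j$. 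For $p_j<0$, I split into two sub-cases paralleling the two lines of the $\zeta_j$ definition: when $v_j-u_j\le 2C_j$, the distance to the boundary is tight, and $x_j$ is always at least $(u_j+v_j)/2-\min\{C_j,(v_j-u_j)/2\}$, giving the first branch; when $v_j-u_j>2C_j$, the truncation at $C_j$ activates and $\varphi_j(\boldsymbol{x})>0$ forces $x_j\ge u_j+C_j$ (from the boundary at $u_j$), giving the second branch. The condition $\alpha_j\ge\max\{1,2-a,2-b\}$ is used precisely so that the combined exponent on $x_j$ never turns negative in the Hessian-style factors appearing in $\boldsymbol{\Gamma}$ (so that $2a-2+\alpha_j\ge 0$, etc.). Reading off the entries of $\boldsymbol{\Gamma}_j(\mathbf{x})$ and of $\boldsymbol{g}$ from (\ref{eq_gK})--(\ref{eq_geta}) and combining these factor-wise bounds delivers $\varsigma_{\boldsymbol{\Gamma}}$ and $\varsigma_{\boldsymbol{g}}$ exactly as stated. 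For part (2) with $a=0$, the assumption $u_j>0$ ensures $|\log x_j|\le w_j$, and the same book-keeping produces the modified constants.

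\textbf{Concentration and the primal--dual witness.} With the uniform bound in hand, each entry of $\boldsymbol{\Gamma}(\mathbf{x})$ (resp.\ $\boldsymbol{g}(\mathbf{x})$) is an average of $n$ independent summands in $[-\varsigma_{\boldsymbol{\Gamma}},\varsigma_{\boldsymbol{\Gamma}}]$ (resp.\ $[-\varsigma_{\boldsymbol{g}},\varsigma_{\boldsymbol{g}}]$), so Hoeffding gives an exponential tail at rate $nt^2/(2\varsigma_{\bullet}^2)$. A union bound over the $O(m^2)$ entries of $\boldsymbol{\Gamma}$ and the $O(m^2)$ entries of $\boldsymbol{g}$ at deviation level proportional to $\sqrt{(\tau\log m+\log 4)/n}$ yields $\|\boldsymbol{\Gamma}(\mathbf{x})-\boldsymbol{\Gamma}_0\|_\infty$ and $\|\boldsymbol{g}(\mathbf{x})-\boldsymbol{g}_0\|_\infty$ bounds that hold with probability at least $1-m^{-\tau}$; the extra factor of $\log 4$ is the usual union-bound slack. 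The diagonal-multiplier correction $\delta-1<\sqrt{(\tau\log m+\log 4)/(2n)}$ keeps $\boldsymbol{\Gamma}_\delta-\boldsymbol{\Gamma}$ within this same rate on the diagonal. From this event on, the proof proceeds identically to Theorem \ref{theorem_convex_ggm}: construct the restricted estimator $\tilde{\boldsymbol{\Psi}}$ supported on $S_0$, use the KKT conditions and $c_{\boldsymbol{\Gamma}_0}$ to obtain the $\ell_\infty$ bound on $\tilde{\boldsymbol{\Psi}}_{S_0}-\boldsymbol{\Psi}_{0,S_0}$, verify strict dual feasibility on $S_0^c$ via the irrepresentability condition (\ref{irrepresentability}) with slack $\omega$, and conclude that $\tilde{\boldsymbol{\Psi}}$ coincides with the minimizer of (\ref{eq_loss_regularized}). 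The Frobenius and operator-norm bounds then follow from the $\ell_\infty$ bound via $|S_0|$ and $d_{\boldsymbol{\Psi}_0}$ in the standard way, and (b) is an immediate consequence of (a) combined with the minimum-signal condition.

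\textbf{Anticipated obstacle.} The only genuinely new content relative to Theorem \ref{theorem_convex_ggm} is the derivation of $\varsigma_{\boldsymbol{\Gamma}}$ and $\varsigma_{\boldsymbol{g}}$, and within that the tightness of the $\zeta_j(\alpha_j,p_j)$ bound for $p_j<0$ when $u_j=0$ is the most delicate point: one must use that $\varphi_j(\boldsymbol{x})\le x_j$ whenever the section's lower endpoint equals $0$ together with $\alpha_j+p_j\ge 0$ to conclude boundedness, rather than a naive product bound which would diverge. Once that case analysis is nailed down, the remainder of the argument is a routine adaptation of the bounded-summand version of the proof of \citet[Theorem 17]{yus19}, and part (2) is an essentially verbatim re-run with the logarithmic substitution $x_k^a\rightsquigarrow\log x_k$ and the envelope $w_k$.
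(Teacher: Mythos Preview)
Your high-level strategy matches the paper's exactly: bound each per-sample entry of $\boldsymbol{\Gamma}$ and $\boldsymbol{g}$ uniformly, apply Hoeffding plus a union bound, absorb the diagonal multiplier, and then invoke the primal--dual witness result (Theorem~1 of \citet{lin16}). Steps (ii) and (iii) are fine.

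The genuine gap is in step~(i), your derivation of $\zeta_j(\alpha_j,p_j)$ for $p_j<0$. Both of your claimed lower bounds on $x_j$ are wrong. First, ``$x_j$ is always at least $(u_j+v_j)/2-\min\{C_j,(v_j-u_j)/2\}$'' collapses to $x_j\ge u_j$ when $v_j-u_j\le 2C_j$, which for $p_j<0$ gives only $x_j^{p_j}\le u_j^{p_j}$ (useless if $u_j=0$), not $((u_j+v_j)/2)^{p_j}$. Second, ``$\varphi_j(\boldsymbol{x})>0$ forces $x_j\ge u_j+C_j$'' is simply false; $\varphi_j>0$ only says $x_j$ is off the boundary. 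A factor-by-factor bound cannot produce the stated constants.

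The paper instead bounds the \emph{product} $(h_j\circ\varphi_j)(\boldsymbol{x})\,x_j^{p_j}$ jointly. It splits by $x_j\gtrless (u_j+v_j)/2$. On the right half the two factors are bounded separately (this is where $((u_j+v_j)/2)^{p_j}$ first appears, from $x_j\ge (u_j+v_j)/2$). On the left half one sets $f(x)=\min\{C_j,x-u_j\}^{\alpha_j}x^{p_j}$ and computes $(\log f)'=\alpha_j/(x-u_j)\mathds{1}_{x<u_j+C_j}+p_j/x$; since $\alpha_j+p_j\ge 0$ and $x>u_j\ge 0$, this shows $f$ is increasing on $(u_j,u_j+C_j)$ and has the sign of $p_j$ beyond that. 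Hence the maximum on $(u_j,(u_j+v_j)/2]$ is attained at $(u_j+v_j)/2$ when $p_j\ge 0$ or $v_j-u_j\le 2C_j$, and at $u_j+C_j$ otherwise---exactly the two branches of $\zeta_j$. You correctly flag $\alpha_j+p_j\ge 0$ and $\varphi_j\le x_j-u_j$ as the key in your ``anticipated obstacle'', but your main argument never actually uses them; they enter precisely through this monotonicity computation. Fix this step and the rest of your plan goes through.
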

We note that the requirement on $\alpha_j\geq 1$ is only for bounding the two $\partial_j(h_j\circ\varphi_j)$ terms in $\boldsymbol{g}(\mathbf{x})$ and might not be necessary in practice as we see in the simulation studies. 

\subsection{Unbounded Domains in $\mathbb{R}_+^m$ with Positive Measure}

For unbounded domains $\mathfrak{D}\subset\mathbb{R}_+^m$ in the
non-negative orthant, we are able to give consistency results but only
with a sample complexity that includes
an additional
unknown constant factor that may depend on $m$. For simplicity we only show the results for $a>0$. 
The following lemma enables us to bound each row of the data matrix
$\mathbf{x}$ by a finite cube with high probability and then proceed
as for Theorem \ref{theorem_bounded_nonlog_ggm}.

\begin{lemma}\label{theorem_subexp_positive_measure}
Suppose $\mathfrak{D}$ has positive measure, and the true parameters $\mathbf{K}_0$ and $\boldsymbol{\eta}_0$ satisfy the conditions in Corollary \ref{cor_norm_const}. Then for all $j=1,\ldots,m$, $X_j^{2a}$ is sub-exponential if $a>0$ and $\log X_j$ is sub-exponential if $a=0$.
\end{lemma}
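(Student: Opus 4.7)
The plan is to show that the moment generating function of $Y\equiv X_j^{2a}$ (when $a>0$) or $Y\equiv\log X_j$ (when $a=0$) is finite on an open neighborhood of $t=0$, which is the standard characterization of sub-exponentiality. In each case I would absorb the factor $e^{tY}$ into the exponent of the density in~\eqref{eq_interaction_density2} and reduce to an $a$-$b$ kernel with mildly perturbed parameters $(\mathbf{K}_t,\boldsymbol{\eta}_t)$ whose integrability is guaranteed by Corollary~\ref{cor_norm_const}. The resulting MGF is then a ratio of two finite normalizing constants.

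For $a>0$, using $tx_j^{2a}=\tfrac{1}{2a}\boldsymbol{x}^{a\top}(2at\,\mathbf{e}_j\mathbf{e}_j^{\top})\boldsymbol{x}^a$ I would write
\[
e^{tx_j^{2a}}\,p_{\boldsymbol{\eta}_0,\mathbf{K}_0}(\boldsymbol{x})\;\propto\;\exp\!\left(-\tfrac{1}{2a}\boldsymbol{x}^{a\top}\mathbf{K}_t\,\boldsymbol{x}^a+\tfrac{1}{b}\boldsymbol{\eta}_0^{\top}\boldsymbol{x}^b\right)\mathds{1}_{\mathfrak{D}}(\boldsymbol{x}),
\]
with $\mathbf{K}_t\equiv \mathbf{K}_0-2at\,\mathbf{e}_j\mathbf{e}_j^{\top}$. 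The bound $\boldsymbol{v}^{\top}\mathbf{K}_t\boldsymbol{v}\ge(\lambda_{\min}(\mathbf{K}_0)-2at)\|\boldsymbol{v}\|_2^2$ shows $\mathbf{K}_t\succ\mathbf{0}$ for all $t<\lambda_{\min}(\mathbf{K}_0)/(2a)$, so condition (CC0*) is preserved; since (CC1*) and (CC2*) involve only $a,b,\boldsymbol{\eta}_0$, they are unchanged. Corollary~\ref{cor_norm_const} then delivers integrability of the perturbed kernel, so $\mathbb{E}[e^{tX_j^{2a}}]$ is finite on an interval containing $0$.

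For $a=0$ and $b=0$, the identity $t\log x_j=(t\mathbf{e}_j)^{\top}\log\boldsymbol{x}$ absorbs directly into the linear term and merely shifts $\boldsymbol{\eta}_0$ to $\boldsymbol{\eta}_0+t\mathbf{e}_j$; since (CC1*) with $a=b=0$ places no condition on $\boldsymbol{\eta}$, Corollary~\ref{cor_norm_const} gives finiteness for every $t\in\mathbb{R}$. For $a=0$ and $b>0$ the perturbation $t\log x_j$ does not fit the $a$-$b$ template, and this is the one real obstacle. I would handle it with the Young-type inequality
\[
t\log x_j\;\le\;\tfrac{\lambda_{\min}(\mathbf{K}_0)}{4}(\log x_j)^2+\tfrac{t^2}{\lambda_{\min}(\mathbf{K}_0)},
\]
and absorb the quadratic piece into $-\tfrac{1}{2}(\log\boldsymbol{x})^{\top}\mathbf{K}_0\log\boldsymbol{x}$ by passing to $\mathbf{K}_t'\equiv \mathbf{K}_0-\tfrac{\lambda_{\min}(\mathbf{K}_0)}{2}\mathbf{e}_j\mathbf{e}_j^{\top}$, which remains positive definite by construction. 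Condition (CC3*) on $\boldsymbol{\eta}_0$ is untouched, so Corollary~\ref{cor_norm_const} applied to $(\mathbf{K}_t',\boldsymbol{\eta}_0)$ bounds the integral of the dominated integrand.

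In every case the MGF of $Y$ is finite on an open interval around $0$, which proves sub-exponentiality. The main delicacy is the $a=0,b>0$ subcase where the reparametrization does not land back in the $a$-$b$ family; the Young bound above is the workaround, after which all verifications amount to a single application of Corollary~\ref{cor_norm_const}.
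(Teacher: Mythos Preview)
Your proposal is correct and follows the paper's overall strategy: show that the moment generating function of $Y$ is finite on a neighborhood of $0$ by absorbing $e^{tY}$ into a perturbed $a$--$b$ kernel and then invoking Corollary~\ref{cor_norm_const}. For $a>0$ your direct rank-one perturbation $\mathbf{K}_t=\mathbf{K}_0-2at\,\mathbf{e}_j\mathbf{e}_j^{\top}$ is a minor variant of the paper's argument, which first bounds ${\boldsymbol{x}^a}^{\top}\mathbf{K}_0\boldsymbol{x}^a\ge\lambda_{\min}(\mathbf{K}_0)\|\boldsymbol{x}^a\|_2^2$ and then works with the diagonal matrix $\mathrm{diag}\bigl(\lambda_{\min}(\mathbf{K}_0)\mathbf{1}_m-2at\,\boldsymbol{e}_j\bigr)$; both yield the same range $t<\lambda_{\min}(\mathbf{K}_0)/(2a)$.

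The one genuine difference is the subcase $a=0,\,b>0$. The paper's written proof for $a=0$ absorbs $t\log x_j$ into the linear part, shifting $\boldsymbol{\eta}_0$ to $\boldsymbol{\eta}_0+t\mathbf{e}_j$ and appealing to the statement that Corollary~\ref{cor_norm_const} places ``no restrictions on $\boldsymbol{\eta}$''; this is literally the case $a=b=0$ (condition (CC1{*})), since for $b>0$ the linear term is $\boldsymbol{\eta}^{\top}\boldsymbol{x}^b/b$, not $\boldsymbol{\eta}^{\top}\log\boldsymbol{x}$, and (CC3{*}) does constrain $\boldsymbol{\eta}$. Your Young-inequality step cleanly handles this: the quadratic piece is absorbed into $\mathbf{K}_0$ as a rank-one subtraction that preserves positive definiteness, while $\boldsymbol{\eta}_0$ is left untouched so (CC3{*}) continues to hold for all $t\in\mathbb{R}$. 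In that sense your treatment of this subcase is more careful than the paper's, while the rest of the argument coincides.
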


We have the following corollary of Theorem
\ref{theorem_bounded_nonlog_ggm}. The result involves an unknown
constant, namely the sub-exponential norm $\|\cdot\|_{\psi_1}$ of
$X_j^{2a}$, and
$$
n=\Omega\left(\log
  m\right)\cdot\max_j\mathcal{O}\left(\left\|X_j^{2a}\right\|_{\psi_1}\right)^{(\alpha_j+\max\{4a,2b\}-2)/a}
$$
becomes the required sample complexity. We conjecture that the
sub-exponential norm scales like $\Omega\left(\left(\log
    m\right)^c\right)$ for $c$ small, but leave the exact dependency on $m$  for further research.
\begin{corollary}\label{corollary_apos_unbounded}
Suppose $a>0$ and $\mathfrak{D}$ is a subset of $\mathbb{R}_+^m$ with positive measure and suppose that the true parameters $\mathbf{K}_0$ and $\boldsymbol{\eta}_0$ satisfy the conditions in Corollary \ref{cor_norm_const}. Let $\rho_j(\mathfrak{D})\equiv\overline{\{x_j:\boldsymbol{x}\in\mathfrak{D}\}}$ and $\rho_{\mathfrak{D}}^*\equiv\{j=1,\ldots,m:\sup\rho_j(\mathfrak{D})<+\infty\}$, and suppose $\rho_j(\mathfrak{D})\subseteq[u_j,v_j]$ for $j\in\rho_{\mathfrak{D}}^*$. Then Theorem \ref{theorem_bounded_nonlog_ggm} holds with $\log 4$ replaced by $\log 6$ in (\ref{eq_bounded_nonlog_n})--(\ref{eq_bounded_nonlog_delta}), and $u_j=\max\left\{\mathbb{E}_0 X_j^{2a}-\epsilon_{3,j},0\right\}^{1/(2a)}$ and $v_j=\left(\mathbb{E}_0 X_j^{2a}+\epsilon_{3,j}\right)^{1/(2a)}$ for $j\not\in\rho_{\mathfrak{D}}^*$, where
\begin{align*}
\epsilon_{3,j}&\equiv\max\left\{2\sqrt{2}e\left\| X_j^{2a}\right\|_{\psi_1}\sqrt{\log 3+\log n+\tau\log m+\log\left(m-|\rho_{\mathfrak{D}}^*|\right)},\right.\\
&\quad\quad\quad\quad\quad\left. 4e\left\| X_j^{2a}\right\|_{\psi_1}\left(\log 3+\log n+\tau\log m+\log\left(m-|\rho_{\mathfrak{D}}^*|\right)\right)\right\},\\
\left\| X_j^{2a}\right\|_{\psi_1}&\equiv\sup_{q\geq 1}\left(\mathbb{E}_0| X_j|^{2aq}\right)^{1/q}/q\geq \mathbb{E}_0 X_j^{2a}.
\end{align*}
\end{corollary}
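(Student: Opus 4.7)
The overall plan is to reduce the unbounded case to the bounded case handled by Theorem \ref{theorem_bounded_nonlog_ggm} by conditioning on a high-probability event $A$ on which the entire data matrix lies in a finite cube $\prod_{j=1}^m [u_j,v_j]$. For coordinates $j \in \rho_{\mathfrak{D}}^*$ (where $\rho_j(\mathfrak{D})$ is already bounded), we simply use the given envelope $[u_j,v_j]$. For the remaining coordinates, Lemma \ref{theorem_subexp_positive_measure} guarantees that $X_j^{2a}$ is sub-exponential, so a Bernstein-type tail bound lets us truncate each $X_j^{2a}$ to an interval around its mean with cost only polynomial in $n$ and $m$. On the intersection of all these truncation events, the samples are uniformly bounded and Theorem \ref{theorem_bounded_nonlog_ggm} applies verbatim.

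First I would fix $j \notin \rho_{\mathfrak{D}}^*$ and invoke the standard Bernstein inequality for sub-exponentials: for each $i$,
\[
\mathbb{P}\!\left(\big|X_j^{(i)\,2a}-\mathbb{E}_0 X_j^{2a}\big| > t\right) \;\le\; 2\exp\!\left(-\frac{1}{e}\min\!\left\{\frac{t^2}{8\|X_j^{2a}\|_{\psi_1}^2},\,\frac{t}{4\|X_j^{2a}\|_{\psi_1}}\right\}\right).
\]
Solving for $t$ so that the right-hand side is at most $\tfrac{1}{3n(m-|\rho_{\mathfrak{D}}^*|)\,m^{\tau}}$ yields exactly the threshold $\epsilon_{3,j}$ stated in the corollary (the two cases correspond to the sub-Gaussian and sub-exponential regimes of Bernstein). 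A union bound over the $n(m-|\rho_{\mathfrak{D}}^*|)$ pairs $(i,j)$ with $j\notin \rho_{\mathfrak{D}}^*$ then shows that the event
\[
A \;\equiv\; \bigcap_{i=1}^{n}\bigcap_{j\notin\rho_{\mathfrak{D}}^*}\!\Big\{u_j \le X_j^{(i)} \le v_j\Big\}
\]
(with $u_j,v_j$ defined via $\epsilon_{3,j}$ as in the statement) occurs with probability at least $1-\tfrac{2}{3}m^{-\tau}$. Because $X_j^{2a}\ge 0$, truncating the lower estimate at $0$ via $\max\{\mathbb{E}_0 X_j^{2a}-\epsilon_{3,j},0\}$ is harmless.

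Next, on the event $A$ every data row lies in the bounded cube $\prod_{j=1}^{m}[u_j,v_j]\subset\mathbb{R}_+^m$, so I would re-run the proof of Theorem \ref{theorem_bounded_nonlog_ggm} verbatim using these envelopes in the definitions of $\zeta_j$, $\varsigma_{\boldsymbol{\Gamma}}$, and $\varsigma_{\boldsymbol{g}}$. The Hoeffding/bounded-difference concentration bounds employed in that proof remain valid because all relevant summands $(h_j\circ\varphi_j)X_j^{p}$ are a.s.\ bounded on $A$. Conditional on $A$, those arguments yield a failure probability of at most $\tfrac{1}{3}m^{-\tau}$ provided the constant inside the $\log(\cdot)$ term in (\ref{eq_bounded_nonlog_n})--(\ref{eq_bounded_nonlog_delta}) is increased from $\log 4$ to $\log 6$ to absorb the extra $1/3$ factor lost to $A$; combining with $\mathbb{P}(A^c)\le \tfrac{2}{3}m^{-\tau}$ via a final union bound gives the claimed $1-m^{-\tau}$ guarantee for statements (a) and (b).

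The main obstacle is the bookkeeping in the last step: one must show that the truncation event $A$ can be folded into the concentration bounds underlying Theorem \ref{theorem_bounded_nonlog_ggm} \emph{without} contaminating independence or altering the population quantities $\boldsymbol{\Gamma}_0,\boldsymbol{g}_0$. The cleanest way I see is to define surrogate variables $\widetilde{X}_j^{(i)}\equiv X_j^{(i)}\wedge v_j \vee u_j$, run the bounded-case argument on $\widetilde{\mathbf{x}}$ (which has exactly the same distribution as $\mathbf{x}$ on $A$), and then observe that $\widehat{\boldsymbol\Psi}(\widetilde{\mathbf{x}})=\widehat{\boldsymbol\Psi}(\mathbf{x})$ on $A$; the population-level quantities $\boldsymbol{\Gamma}_0,\boldsymbol{g}_0$ are those of the untruncated distribution (so the irrepresentability and $c_{\boldsymbol{\Gamma}_0}$ assumptions are unchanged), while all empirical deviation bounds are computed for the bounded surrogate. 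Verifying that this switch produces exactly the numerical constant $\log 6$ rather than a looser constant is the only delicate point; the remaining calculations are straightforward adaptations of those in the proof of Theorem \ref{theorem_bounded_nonlog_ggm}.
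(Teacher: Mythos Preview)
Your approach is essentially the same as the paper's: use the sub-exponential tail bound from Lemma~\ref{theorem_subexp_positive_measure} on each $X_j^{2a}$, union-bound over all $n(m-|\rho_{\mathfrak{D}}^*|)$ pairs to confine the data to a box with probability at least $1-\tfrac{1}{3}m^{-\tau}$, and then rerun Theorem~\ref{theorem_bounded_nonlog_ggm} with $\log 4$ replaced by $\log 6$ so the two Hoeffding bounds each cost $\tfrac{1}{3}m^{-\tau}$. Two small remarks: (i) your stated Bernstein inequality does not quite produce the constants in $\epsilon_{3,j}$ as claimed---the paper uses the form $\mathbb{P}(|X_j^{2a}-\mathbb{E}_0X_j^{2a}|\ge t)\le \exp\!\big(-\min\{t^2/(8e^2K^2),\,t/(4eK)\}\big)$ with no leading factor of~$2$, which is what makes the $\log 3$ and the factors $2\sqrt{2}e$, $4e$ come out exactly and allocates only $\tfrac{1}{3}m^{-\tau}$ (not $\tfrac{2}{3}m^{-\tau}$) to $A^c$; (ii) your surrogate-variable discussion is more careful than the paper, which simply writes ``the rest follows as in the proof of Theorem~\ref{theorem_bounded_nonlog_ggm}'' without addressing how conditioning on $A$ interacts with the Hoeffding step.
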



\section{Numerical Experiments}\label{Numerical Experiments}

In this section we present results of numerical experiments using our
method from Sections \ref{Methodology} and \ref{Lemmas and Empirical
  Loss}, as well as our extension of \citet{liu19} from Section
\ref{Extension of g0}.

\subsection{Estimation --- Choice of $\boldsymbol{h}$ and $\boldsymbol{C}$}\label{sec_simulation_h_and_C}
Multiplying the gradient $\nabla\log p(\boldsymbol{x})$
with functions $(\boldsymbol{h}\circ\boldsymbol{\varphi}_{\boldsymbol{C},\mathfrak{D}})^{1/2}(\boldsymbol{x})$ is key to our method, where the $j$-th component of $\boldsymbol{\varphi}_{\boldsymbol{C},\mathfrak{D}}(\boldsymbol{x})=(\varphi_{C_1,\mathfrak{D},1}(\boldsymbol{x}),\ldots,\varphi_{C_m,\mathfrak{D},m}(\boldsymbol{x}))$ is the distance of $x_j$ to the boundary of its domain holding $\boldsymbol{x}_{-j}$ fixed, with this distance truncated from above by some constant $C_j>0$.
We use a single function $h$ for all components
(so, $\boldsymbol{h}(\boldsymbol{x})=(h(x_1),\ldots,h(x_m))$),
which we choose as $h(x)=x^{c}$ with exponent
$c=i/4$ for $i=0,1,\ldots,8$.  Instead of
pre-specifying  truncation points in $\boldsymbol{C}$,
we select
$0<\pi\leq 1$ and set each $C_j$ to be the $\pi$th sample quantile of
$\{\varphi_{+\infty,\mathfrak{D},j}(\boldsymbol{x}^{(1)}),\ldots,\varphi_{+\infty,\mathfrak{D},j}(\boldsymbol{x}^{(n)})\}$,
where $\boldsymbol{x}^{(i)}$ is the $i$th row of the data
matrix $\mathbf{x}$.
Infinite values of $\varphi_{+\infty,\mathfrak{D},j}$ are ignored,
and $\varphi_j\equiv 1$ if
$\varphi_{+\infty,\mathfrak{D},j}(\boldsymbol{x}^{(1)})=\cdots=\varphi_{+\infty,\mathfrak{D},j}(\boldsymbol{x}^{(n)})=+\infty$. This
empirical choice of the truncation points automatically adapts to the
scale of data, and we found it to be more effective than fixing the
constant to a grid from 0.5 to 3 as done in \citet{yus19}. Our
experiments consider $\pi=0.2,0.4,0.6,0.8,1$, where $\pi=1$ means
no truncation of finite distances.

Note that for $c=0$,
$(\boldsymbol{h}\circ\boldsymbol{\varphi}_{\boldsymbol{C},\mathfrak{D}})(\boldsymbol{x})\equiv
1$ and the method reduces to the original score-matching for
$\mathbb{R}^m$ of \citet{hyv05}.  With $c=2$, $\boldsymbol{C}=(+\infty,\dots,+\infty)$ and $\mathfrak{D}\equiv\mathbb{R}_+^m$, $(\boldsymbol{h}\circ\boldsymbol{\varphi}_{\boldsymbol{C},\mathfrak{D}})(\boldsymbol{x})\equiv \boldsymbol{x}^2$ corresponds to the estimator of \citet{hyv07,lin16}. The case where $\mathfrak{D}\equiv\mathbb{R}_+^m$ corresponds to \citet{yu18,yus19}.

We also consider  our extension of the method from \citet{liu19}, for
which we use $g_0(\boldsymbol{x})$ from
\eqref{eq_extended_g0} as opposed to
$(\boldsymbol{h}\circ\boldsymbol{\varphi}_{\boldsymbol{C},\mathfrak{D}})^{1/2}(\boldsymbol{x})$;
see Section \ref{Extension of g0}. The constant $C$ in this case is
also determined using quantiles, but now of the untruncated $\ell_2$
distances of the given data sample to
$\underline{\partial}\mathfrak{D}$. For $C=+\infty$ ($\pi=1$) there is
no truncation and the estimator corresponds to \citet{liu19}.

\subsection{Numerical Experiments for Domains with Positive Measure}\label{Numerical Experiments for Domains with Positive Measure}
We present results for general $a$-$b$ models restricted to domains with positive Lebesgue measure.

\subsubsection{Experimental Setup}

Throughout, we choose dimension $m=100$ and sample sizes $n=80$ and
$1000$. For brevity, we only present results for the centered case
(assuming $\boldsymbol{\eta}\equiv \boldsymbol{0}$) where the $b$
power does not come into play, i.e., the density is
proportional to
$\exp\left\{-{\boldsymbol{x}^a}^{\top}\mathbf{K}\boldsymbol{x}^a/(2a)\right\}$
for $a>0$ or
$\exp\left(- {\log
    \boldsymbol{x}}^{\top}\mathbf{K}\log\boldsymbol{x}/2\right)$ for
$a=0$.  Indeed, the experiments in \citet{yus19} suggest that the
results for non-centered settings are similar with the best
choice of $h$ mainly depending on $a$ but not $b$.

We consider six settings: 
(1) $a=0$ (log), (2) $a=1/2$ (exponential square root; \citet{ino16}),
(3) $a=1$ (Gaussian), (4) $a=3/2$ as well as some more extreme cases (5) $a=2$ and (6) $a=3$. For all settings, we consider the following subsets of $\mathbb{R}^m_+$ as our domain $\mathfrak{D}$:
\begin{enumerate}[i)]
\item non-negative $\ell_2$ ball $\{\boldsymbol{x}\in\mathbb{R}_+^m:\|\boldsymbol{x}\|_2\leq c_1\}$, which we call $\ell_2$-nn (``non-negative''),
\item complement of $\ell_2$ ball in $\mathbb{R}_+^m$: $\{\boldsymbol{x}\in\mathbb{R}_+^m:\|\boldsymbol{x}\|_2\geq c_1\}$, which we call $\ell_2^{\complement}$-nn, and 
\item $[c_1,+\infty)^m$, which we call \emph{unif-nn},
\end{enumerate}
for some $c_1>0$. For the Gaussian ($a=1$) case consider in addition the following subsets of $\mathbb{R}^m$:
\begin{enumerate}[i)]
\setcounter{enumi}{3}
\item the entire $\ell_2$ ball $\{\boldsymbol{x}\in\mathbb{R}^m:\|\boldsymbol{x}\|_2\leq c_1\}$, which we call $\ell_2$,
\item the complement of $\ell_2$ ball in $\mathbb{R}^m$: $\{\boldsymbol{x}\in\mathbb{R}^m:\|\boldsymbol{x}\|_2\geq c_1\}$, which we call $\ell_2^{\complement}$, and
\item $\left((-\infty,c_1]\cup[c_1,+\infty)\right)^m$, which we call \emph{unif}.
\end{enumerate}

The constant $c_1$ in each setting above is determined in the
following way. We first generate $n$ samples from the corresponding
untruncated distribution on $\mathbb{R}^m_+$ for i)--iii) or
$\mathbb{R}^m$ for iv)--vi), then determine the $c_1$ so that exactly
half of the samples would fall inside the truncated boundary.


The true interaction matrices $\mathbf{K}_0$ are taken
block-diagonal as in \citet{lin16} and \citet{yus19}, with 10 blocks
of size $m/10=10$. In each block, each lower-triangular element is set
to $0$ with probability $1-\rho$ for some $\rho\in(0,1)$, and is
otherwise drawn from $\text{Uniform}[0.5,1]$. Symmetry determines the upper triangular
elements. The diagonal elements are chosen as a common positive value such that
$\mathbf{K}_0$ has minimum eigenvalue $0.1$. We generate 5 different
$\mathbf{K}_0$, and run 10 trials for each of them. We choose $(\rho,n)=(0.8,1000)$ and $(0.2,80)$ so that $n/(d_{\mathbf{K}_0}^2\log m)$ is roughly constant, recall our theory in Section \ref{Theory}.

\subsubsection{Results}

Our focus is on recovery of the support of
$\mathbf{K}_0=(\kappa_{0,i,j})$, i.e., the set
$S_{0,\text{off}}\equiv\{(i,j):i\neq j\wedge\kappa_{0,i,j}\neq 0\}$
which corresponds to an undirected graph with
$S_{0,\text{off}}$ as edge set. We use the area under the ROC curve
(AUC) as the measure of performance.  Let $\hat{\mathbf{K}}$ be an
estimate with support
$\hat{S}_{\text{off}}\equiv\{(i,j):i\neq j\wedge\hat{\kappa}_{i,j}\neq
0\}$.  Then the ROC curve plots the true positive rate (TPR) against
the false positive rate (FPR), with
\[\mathrm{FPR}\equiv\frac{|\hat{S}_{\text{off}}\backslash S_{0,\text{off}}|}{m(m-1)-|S_{0,\text{off}}|}\quad\quad\text{and}\quad\quad \mathrm{TPR}\equiv\frac{|\hat{S}_{\text{off}}\cap S_{0,\text{off}}|}{|S_{0,\text{off}}|}.\]
We plot the AUC averaged over all 50 trials in each setting against
the probability $\pi$ used to set the truncation
points $\boldsymbol{C}$.
Each plotted curve is for one choice of the function
$h(x)$, or for $g_0(\boldsymbol{x})$.
The $y$-ticks on the right-hand side are the original AUC values,
whereas those on the left are the AUCs divided by the AUC for
$h(x)=1$, measuring the relative performance of each method compared
to the original score matching in \citet{hyv05}; $h(x)=1$
does not depend on the truncation and is constant in each plot.

Plots for $a\leq 1$ are shown below, and for $a>1$ in Appendix
\ref{Additional Plots}.   We conclude that in most settings our method using
$h(x)=x^c$ with $c\approx\max\{2-a,0\}$ works the best, as we
also observed in \citet{yus19}.  In most settings the truncated
$g_0$ function does not work well (\citet{liu19} corresponds to
$\pi=1$). The only notable exceptions are the domains iv)--vi), i.e., Gaussian models on subsets of $\mathbb{R}^m$ not restricted to
$\mathbb{R}_+^m$, see Figure \ref{plot_gaussian}. The original
score matching in \citet{hyv05}
seems to work the best in these settings, suggesting
that estimation of Gaussians on such domains might not be
challenging enough to warrant switching to the more complex generalized
methods. However, 
Figure \ref{plot_gaussian}, for the iv) $\ell_2$ and v)
$\ell_2^{\complement}$ domains, shows only insignificant differences
in the performance of all estimators. 

\begin{figure}[htp]
\centering
\vspace{-0.3in}
\subfloat[$n=80$, $\ell_2$-nn domain]
{\includegraphics[width=0.45\textwidth]{{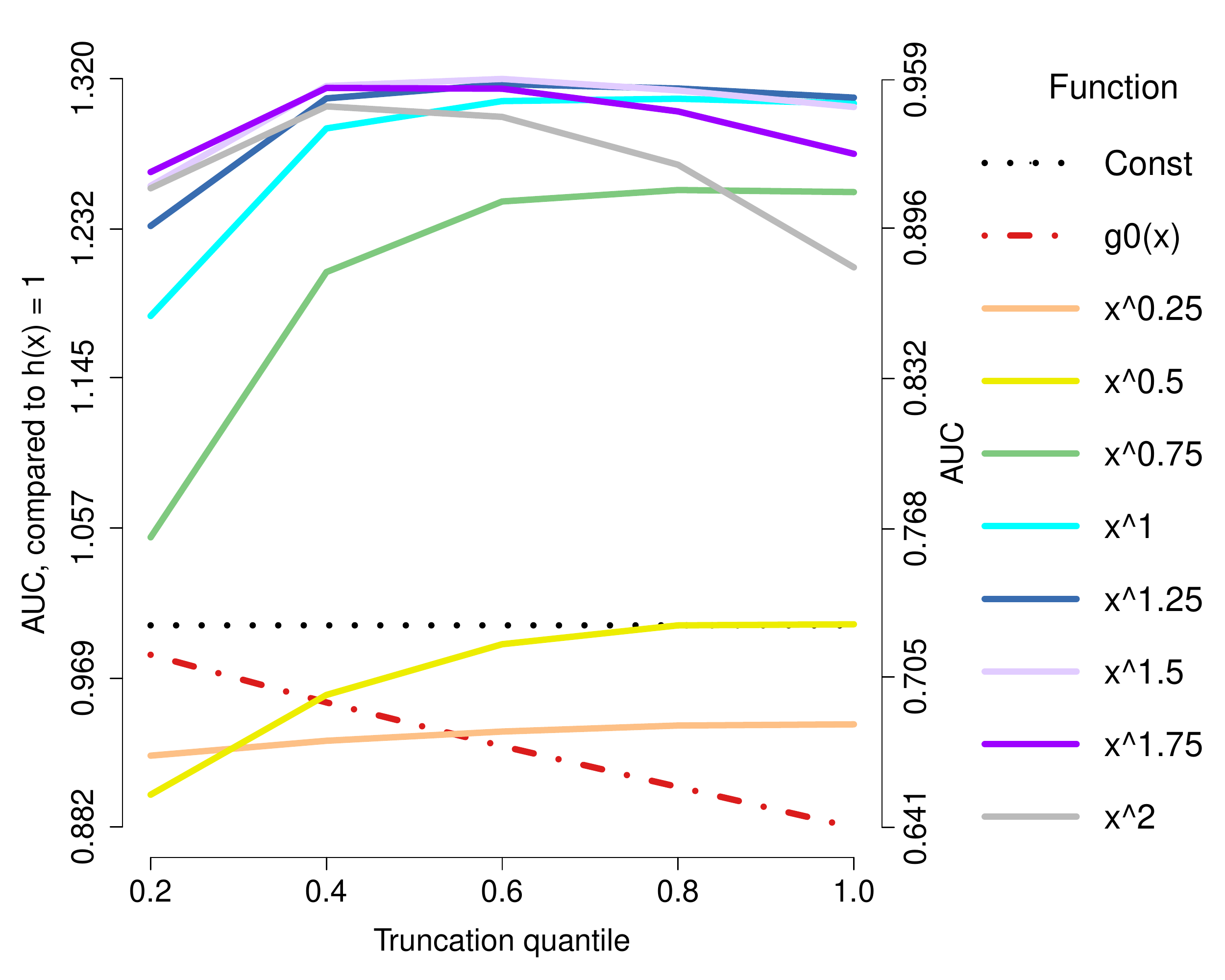}}\hspace{0.1in}}
\subfloat[$n=1000$, $\ell_2$-nn domain]
{\includegraphics[width=0.45\textwidth]{{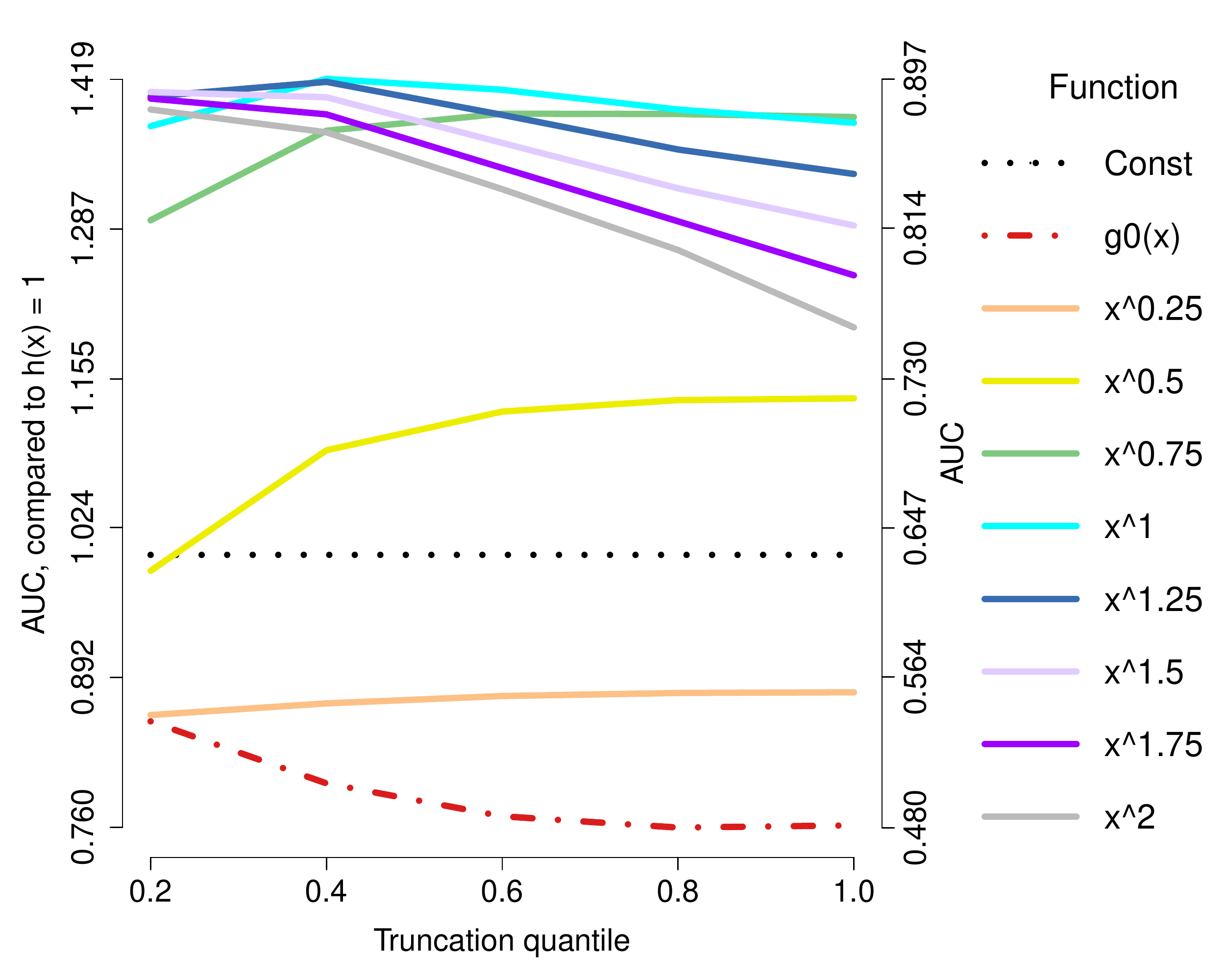}}\hspace{-0.02in}}
\\ \vspace{-0.1in}
\subfloat[$n=80$, $\ell_2^{\complement}$-nn domain]
{\includegraphics[width=0.45\textwidth]{{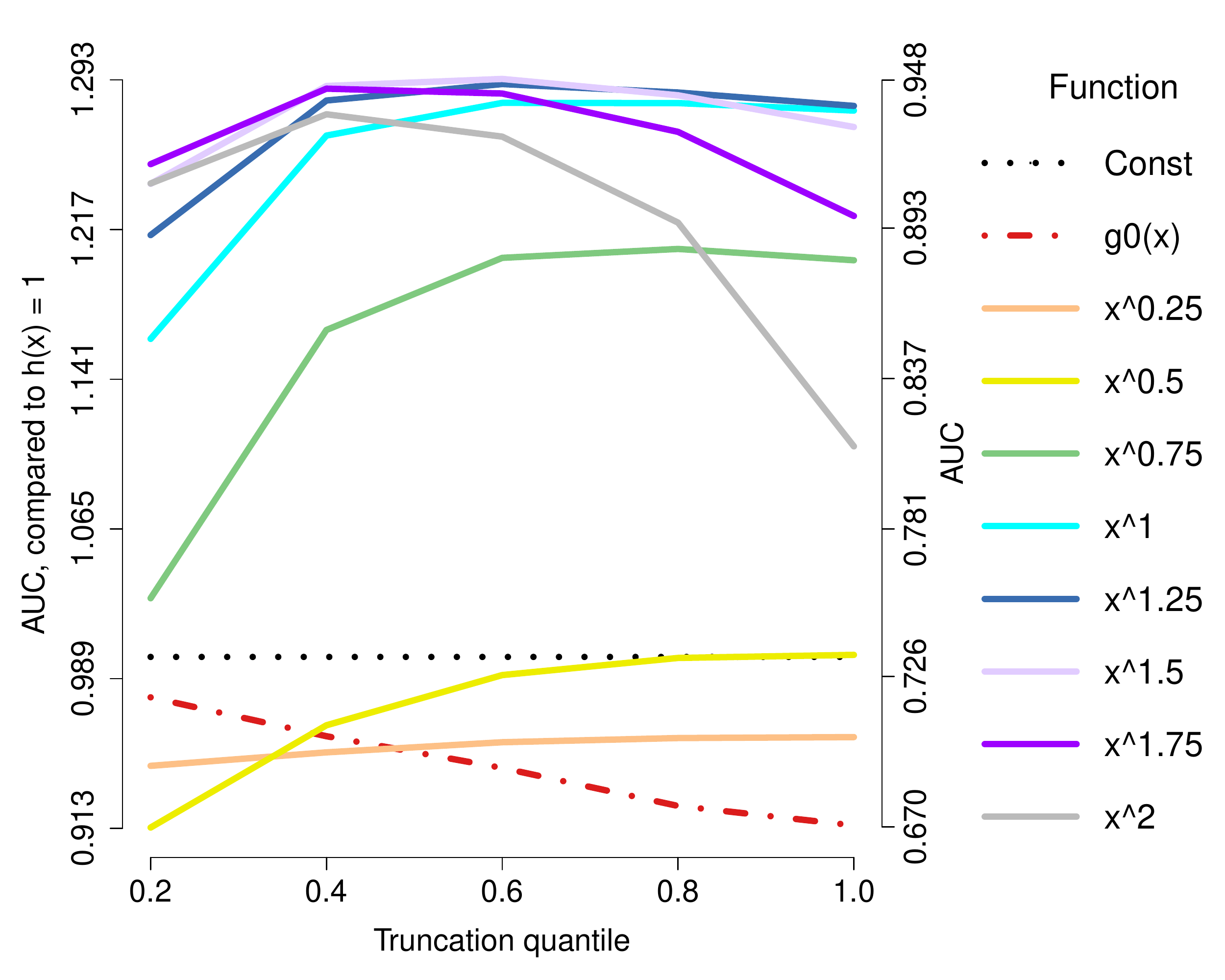}}\hspace{0.1in}}
\subfloat[$n=1000$, $\ell_2^{\complement}$-nn domain]
{\includegraphics[width=0.45\textwidth]{{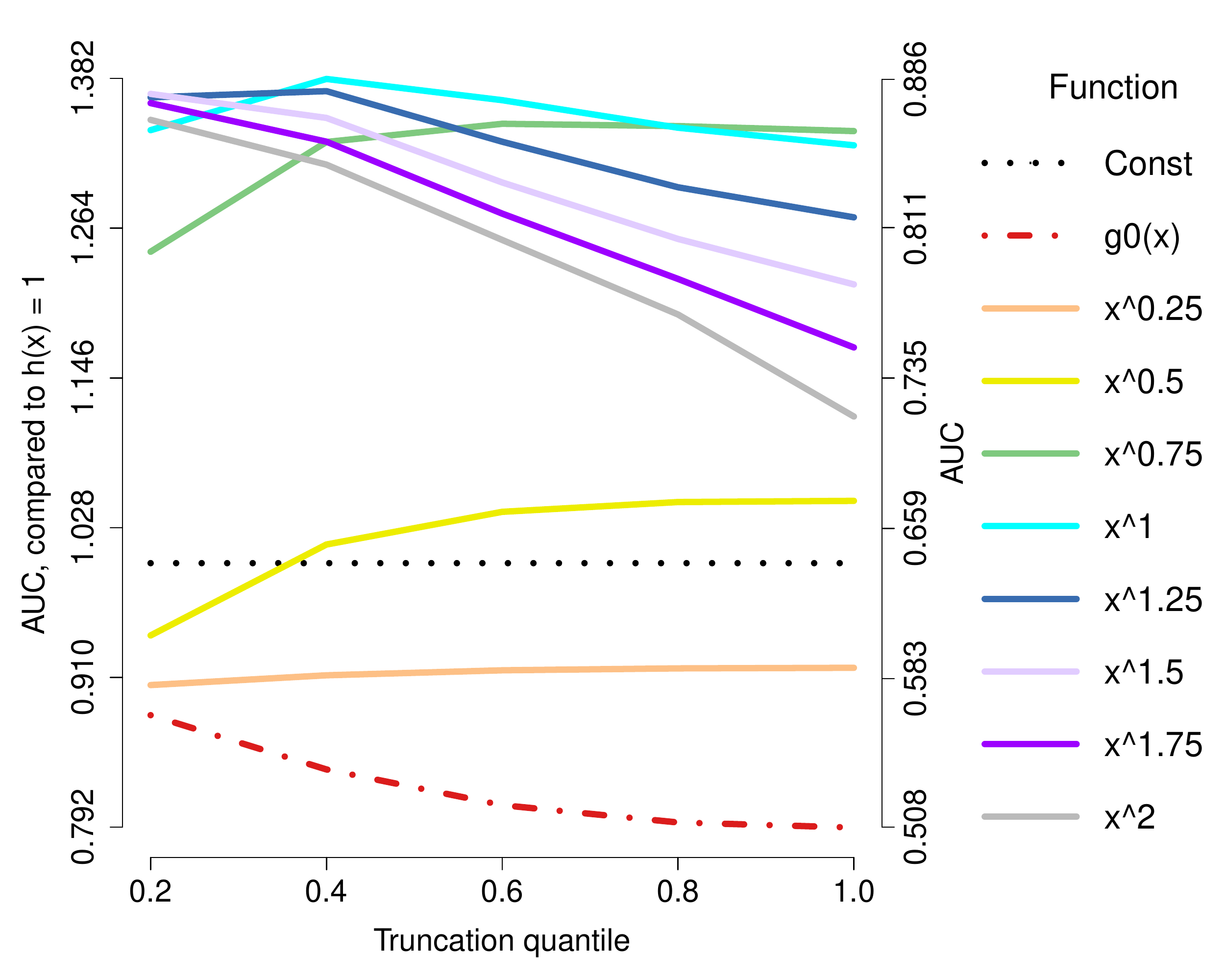}}\hspace{-0.02in}}
\\ \vspace{-0.1in}
\subfloat[$n=80$, unif-nn domain]
{\includegraphics[width=0.45\textwidth]{{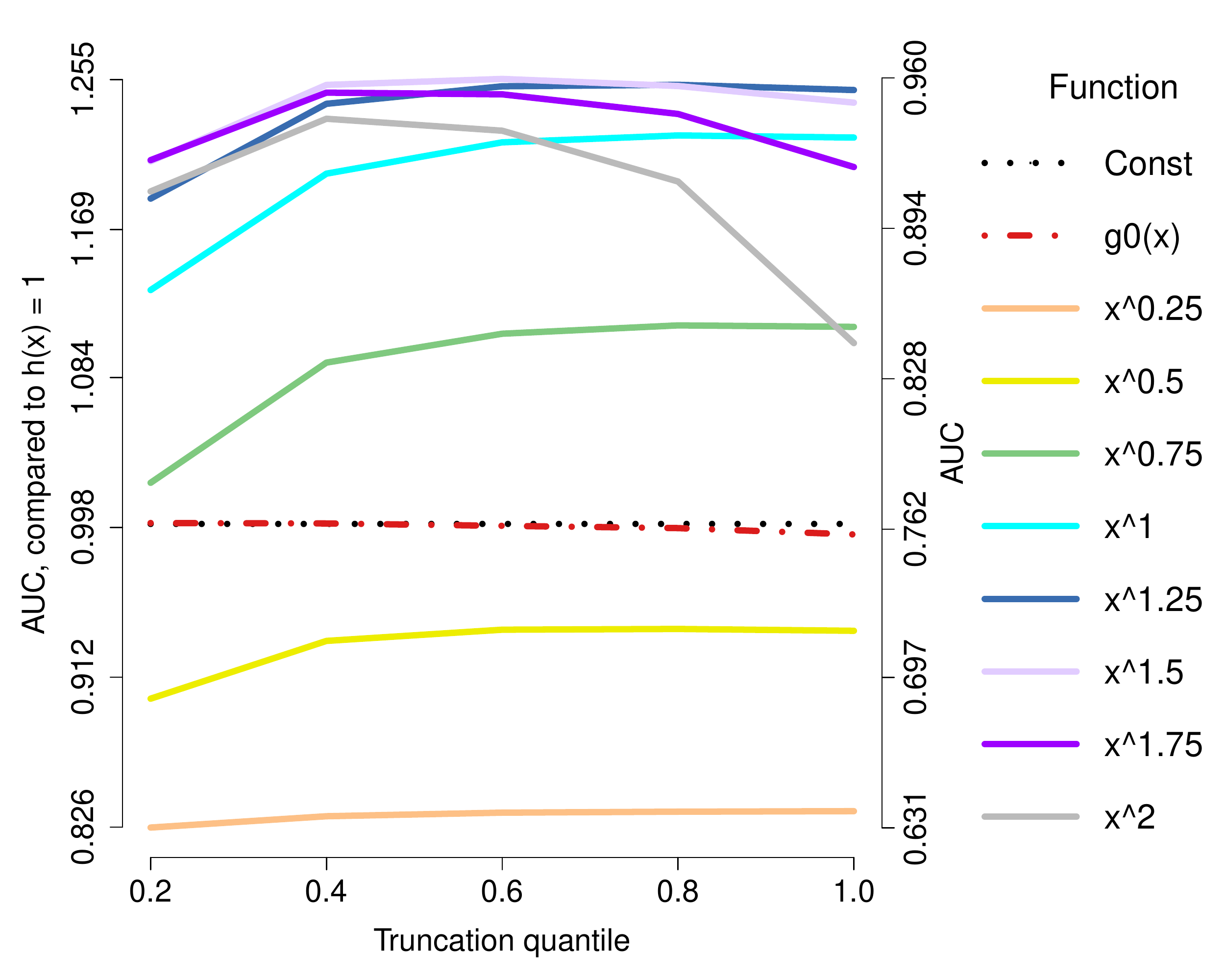}}\hspace{0.1in}}
\subfloat[$n=1000$, unif-nn domain]
{\includegraphics[width=0.45\textwidth]{{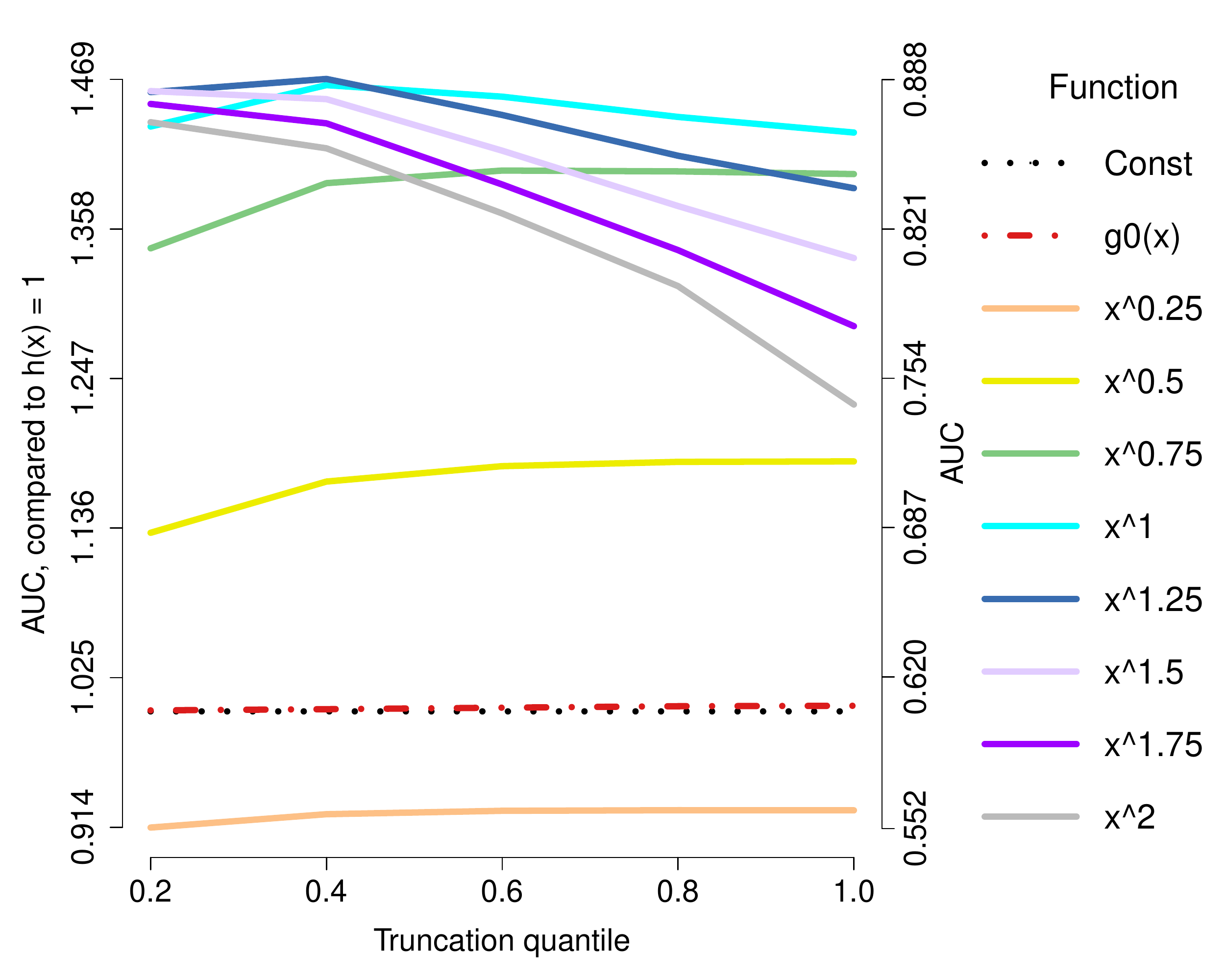}}\hspace{-0.02in}}
\caption{AUCs averaged over 50 trials for support recovery using
  generalized score matching for log models ($a=0$). Each curve
  represents either our extension to $g_0(\boldsymbol{x})$ from
  \citet{liu19} or a choice of power function $h(x)=x^c$.  The $x$
  axes mark the probabilities $\pi$ that determine the truncation points $\boldsymbol{C}$ for the truncated component-wise distances. The colors are sorted by the power $c$.}\label{plot_log}
\end{figure}

\begin{figure}[p]
\centering
\vspace{-0.0in}
\subfloat[$n=80$, $\ell_2$-nn domain]
{\includegraphics[width=0.45\textwidth]{{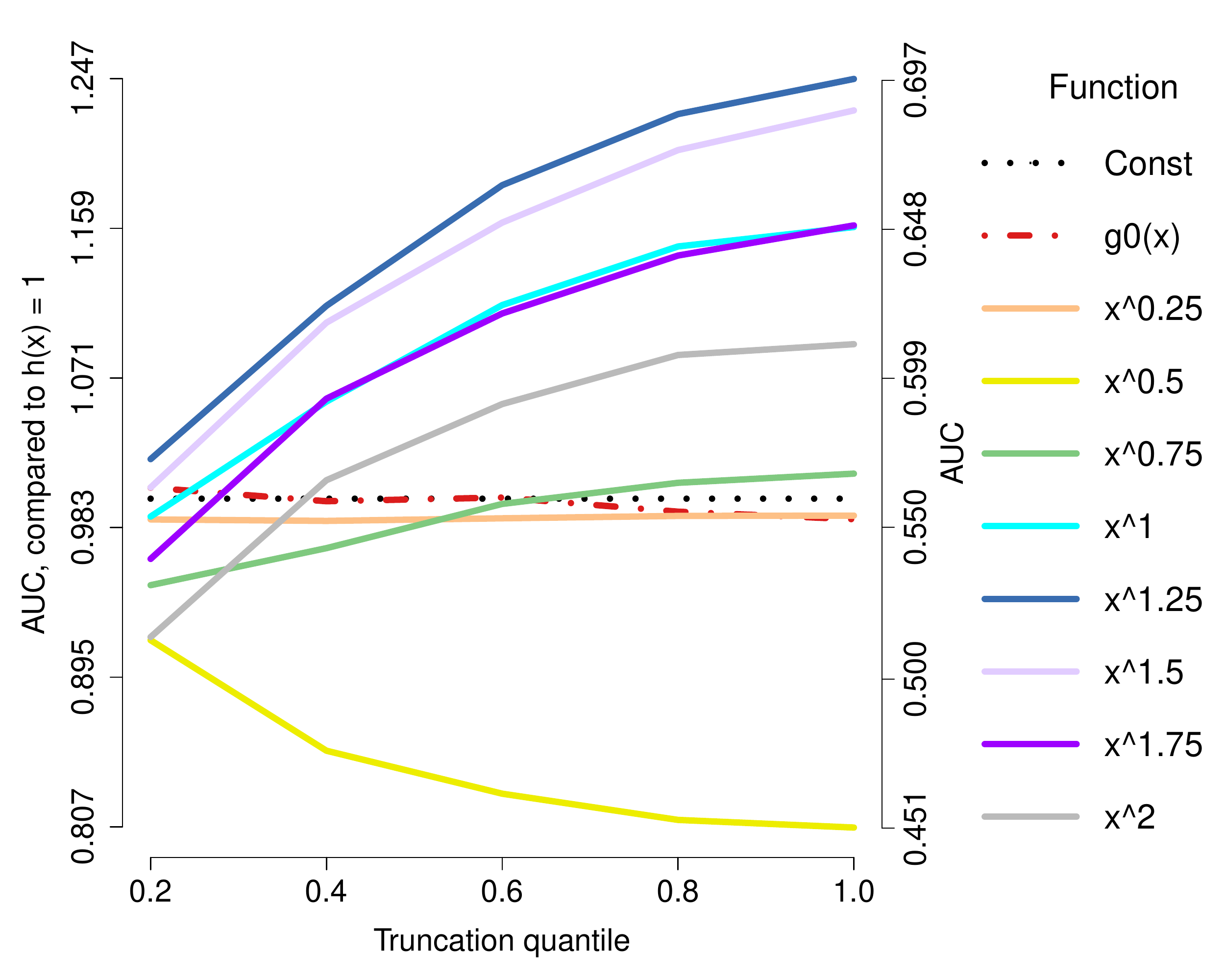}}\hspace{0.1in}}
\subfloat[$n=1000$, $\ell_2$-nn domain]
{\includegraphics[width=0.45\textwidth]{{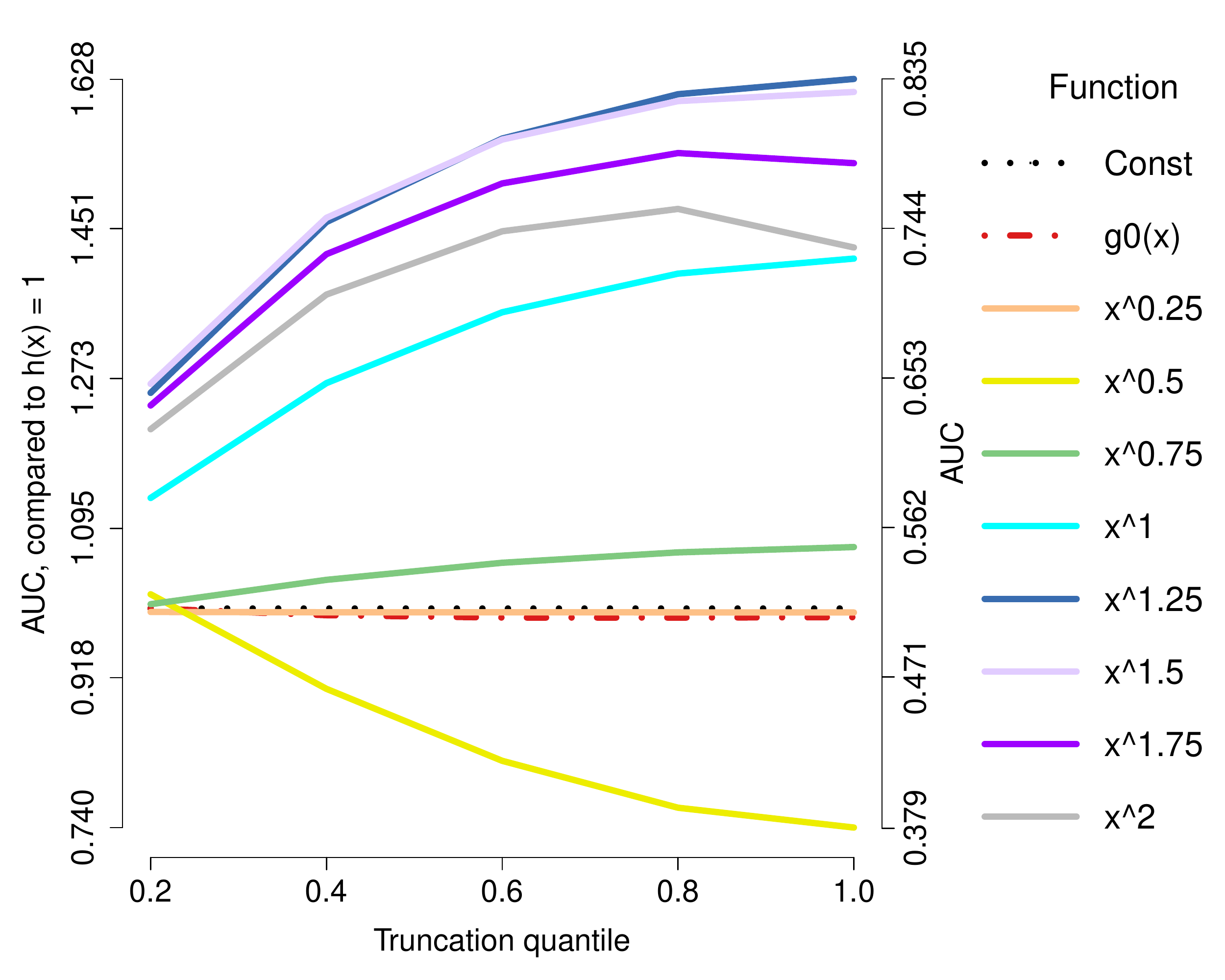}}\hspace{-0.02in}}
\\ \vspace{-0.1in}
\subfloat[$n=80$, $\ell_2^{\complement}$-nn domain]
{\includegraphics[width=0.45\textwidth]{{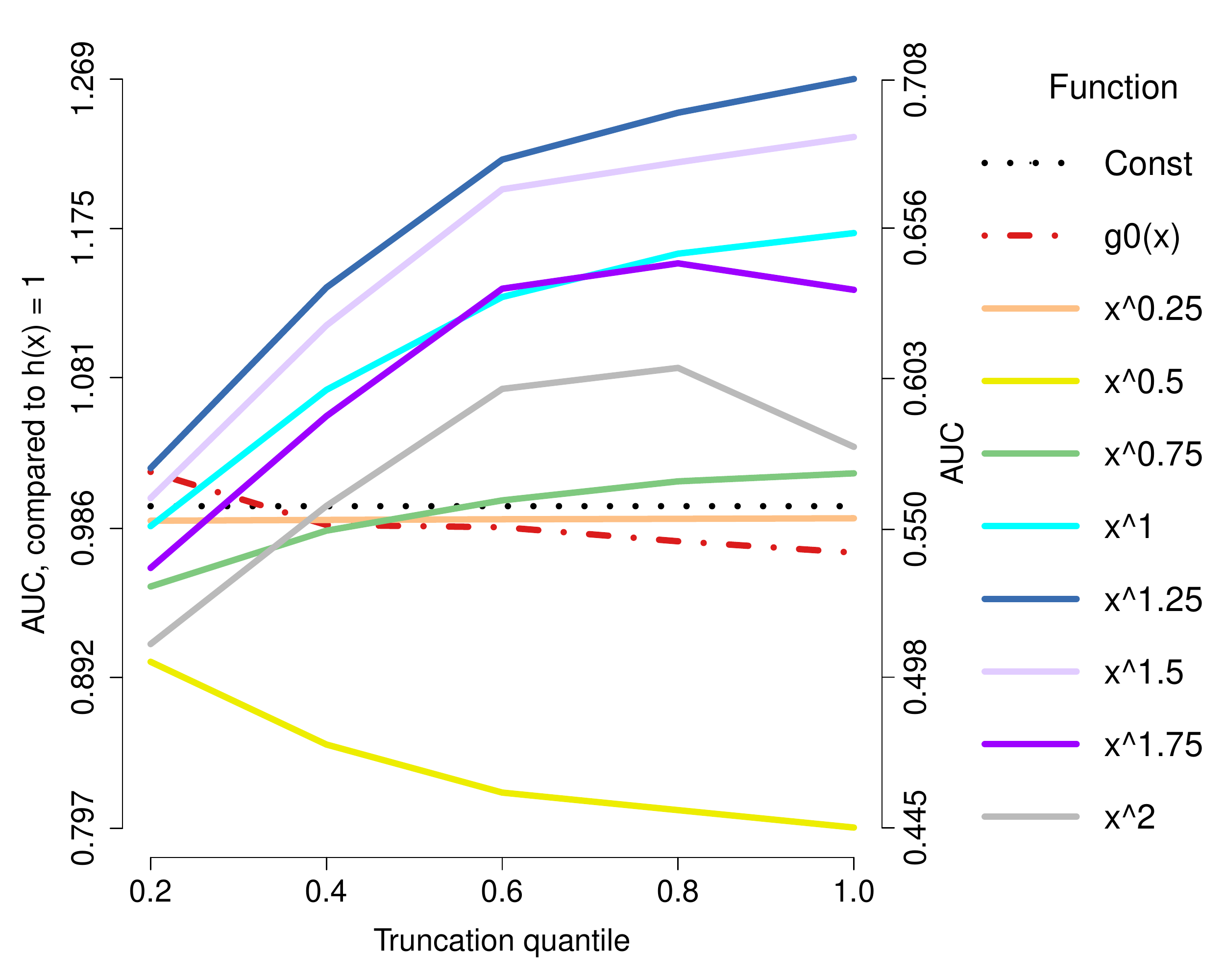}}\hspace{0.1in}}
\subfloat[$n=1000$, $\ell_2^{\complement}$-nn domain]
{\includegraphics[width=0.45\textwidth]{{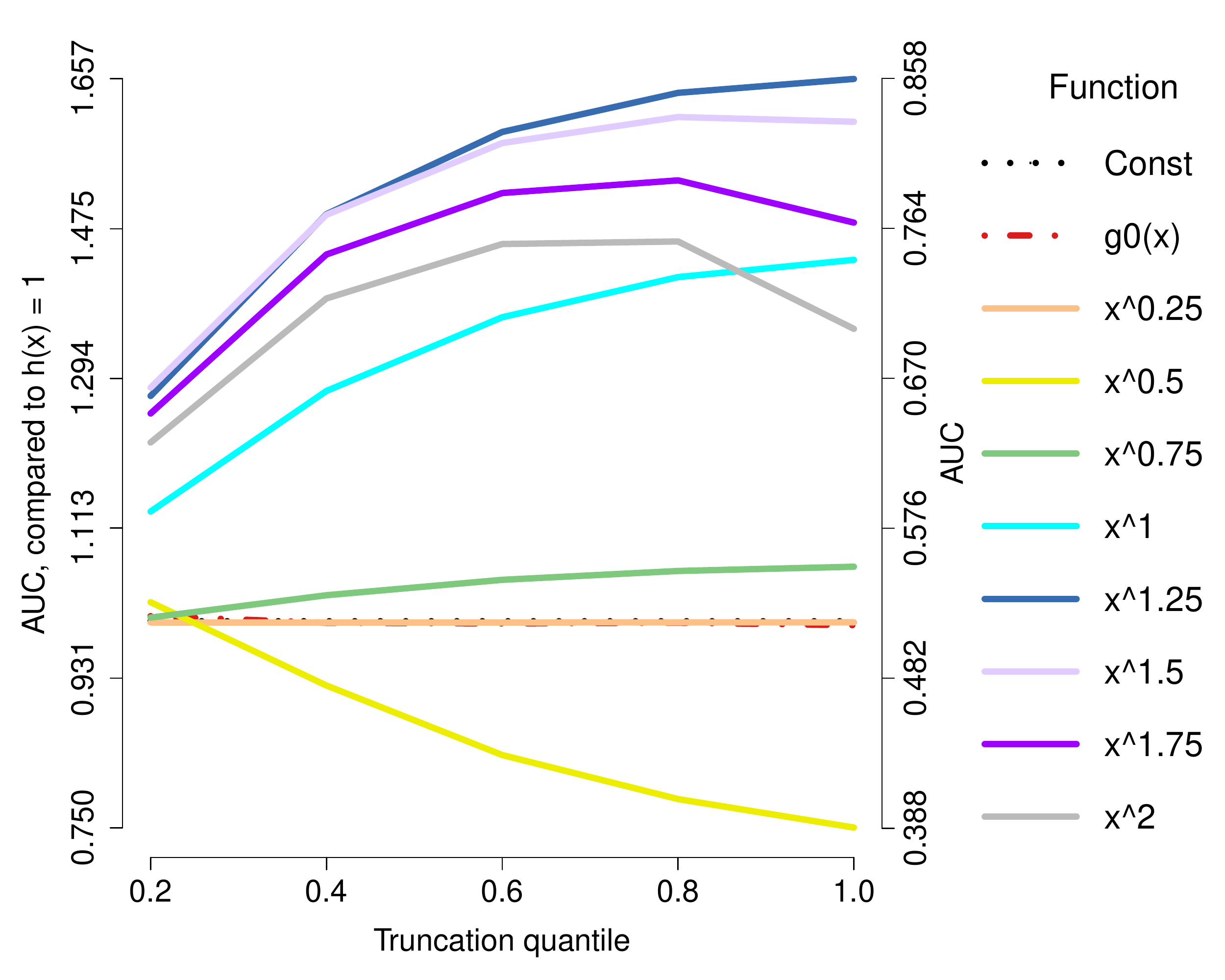}}\hspace{-0.02in}}
\\ \vspace{-0.1in}
\subfloat[$n=80$, unif-nn domain]
{\includegraphics[width=0.45\textwidth]{{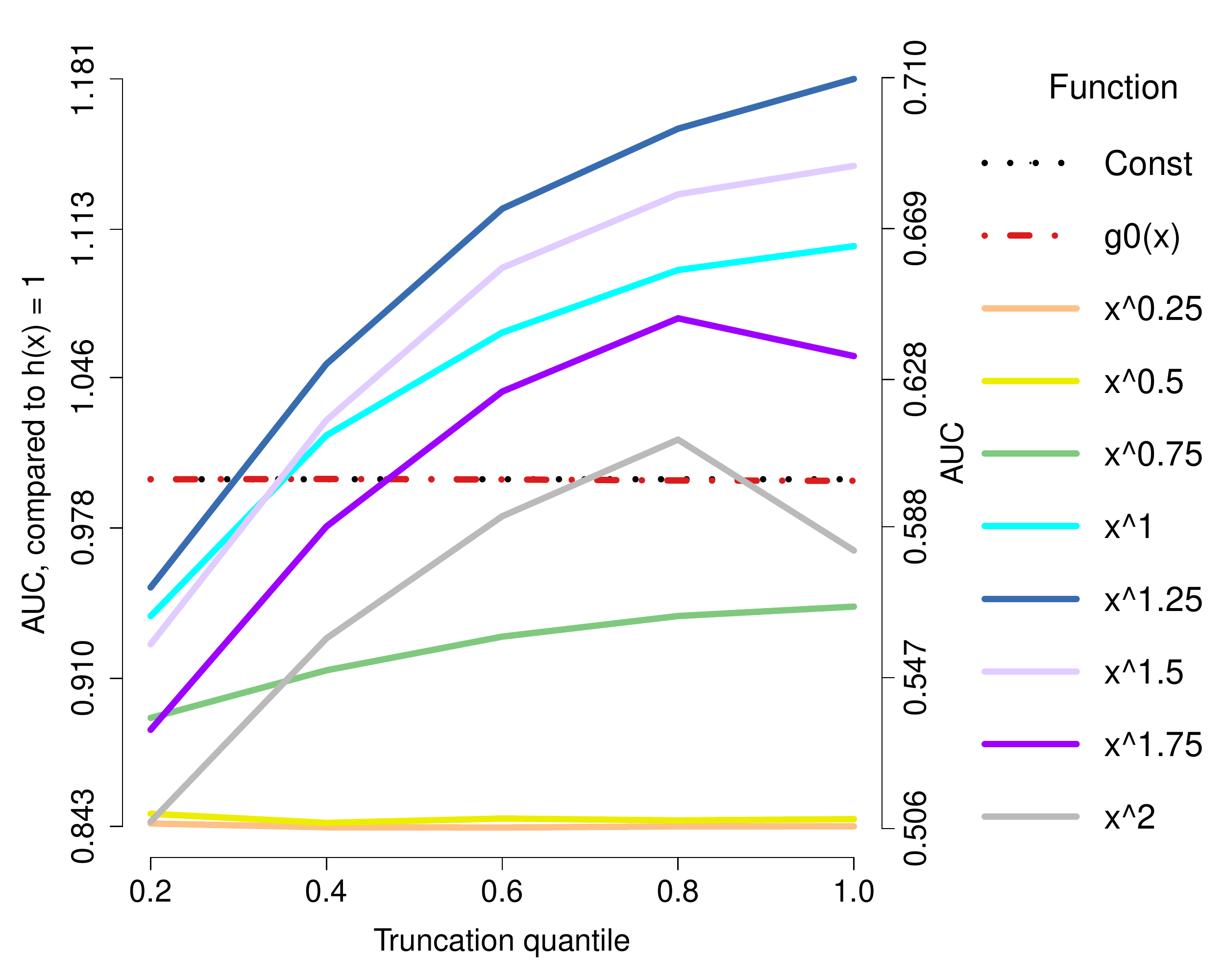}}\hspace{0.1in}}
\subfloat[$n=1000$, unif-nn domain]
{\includegraphics[width=0.45\textwidth]{{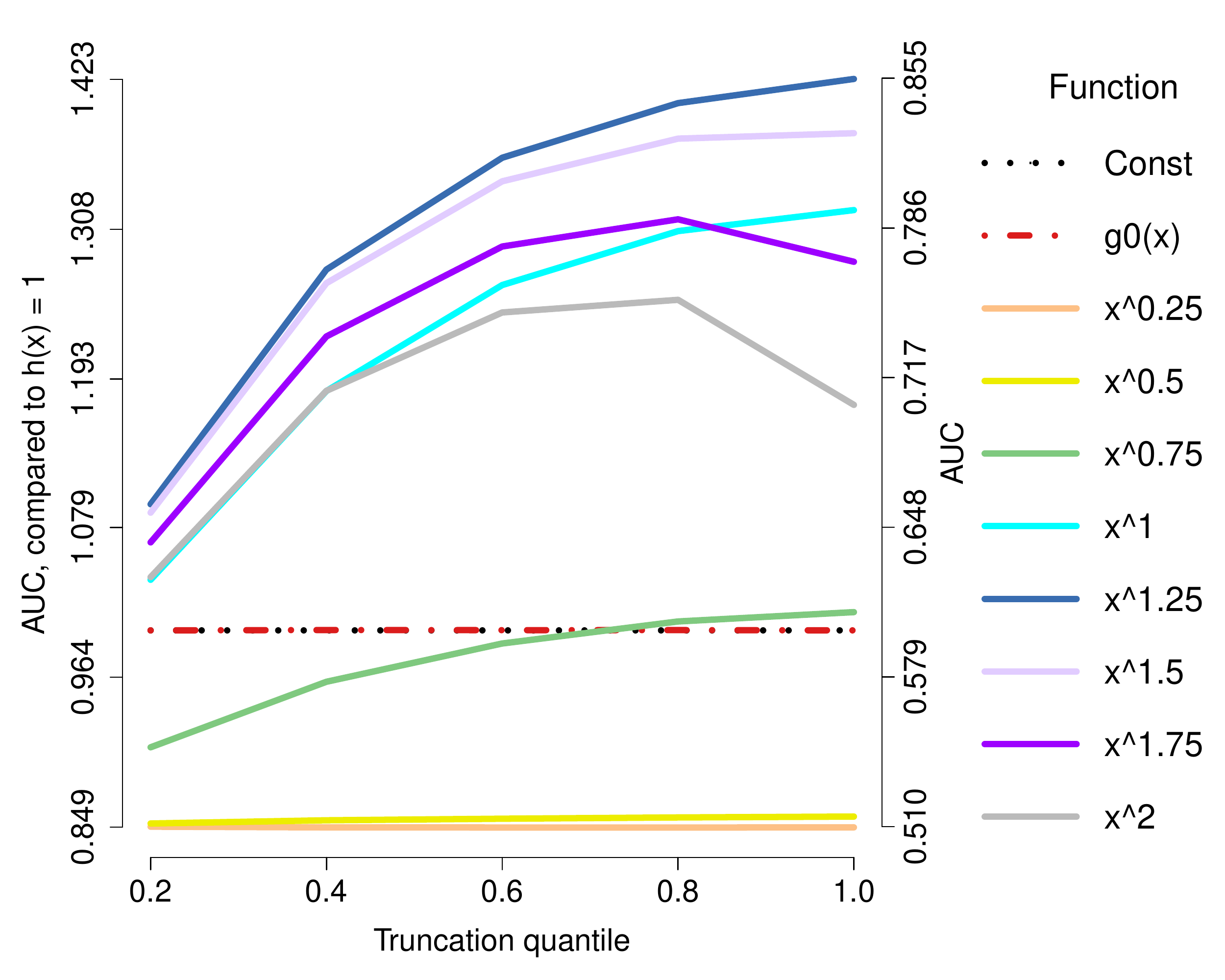}}\hspace{-0.02in}}
\caption{AUCs averaged over 50 trials for support recovery using
  generalized score matching for exponential square-root models
  ($a=1/2$). Each curve represents either our extension to
  $g_0(\boldsymbol{x})$ from \citet{liu19} or a choice of power
  function $h(x)=x^c$.}\label{plot_exp}
\end{figure}

\begin{figure}[p]
\centering
\vspace{-0.0in}
\subfloat[$n=80$, $\ell_2$-nn domain]
{\includegraphics[width=0.49\textwidth]{{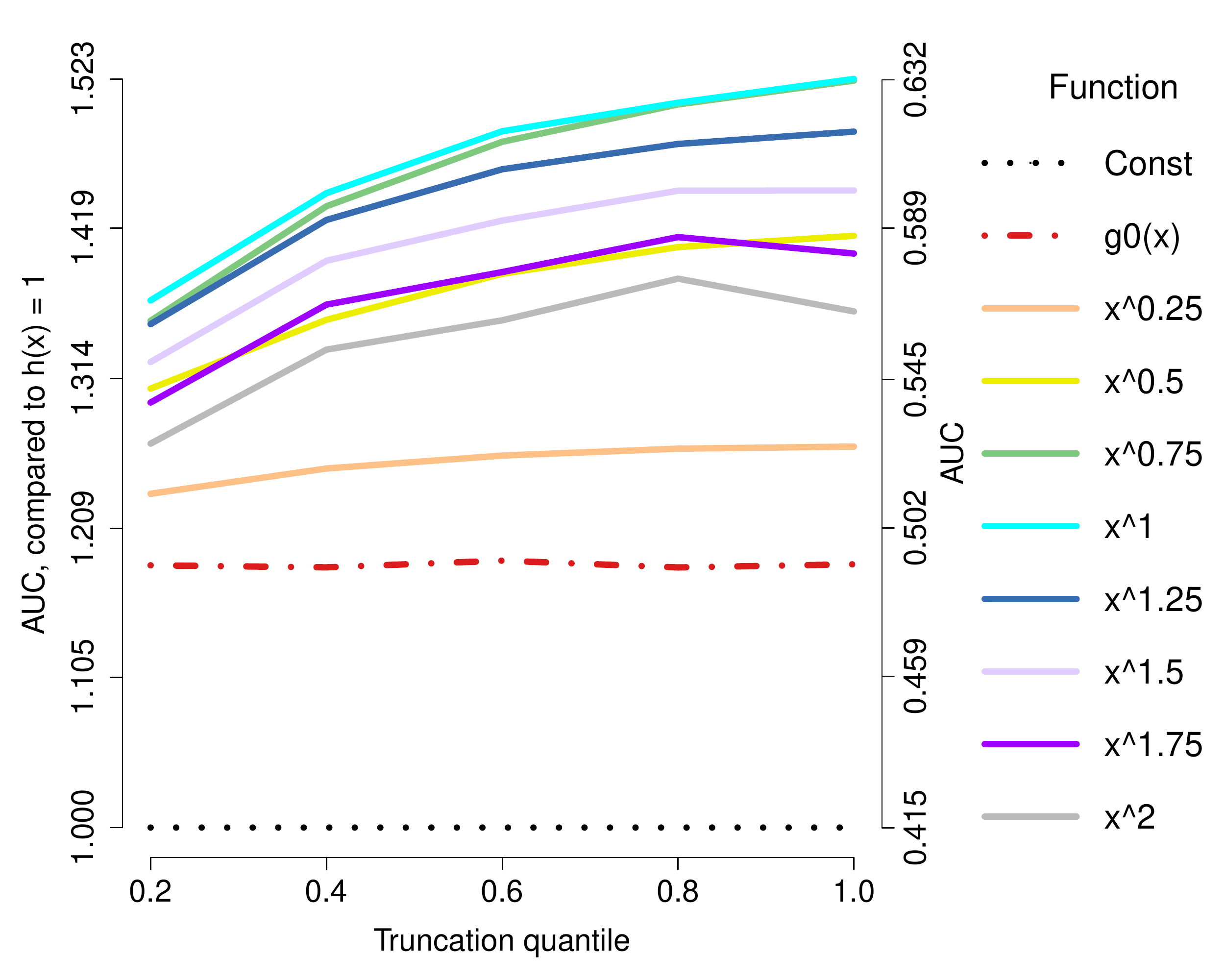}}\hspace{0.1in}}
\subfloat[$n=1000$, $\ell_2$-nn domain]
{\includegraphics[width=0.45\textwidth]{{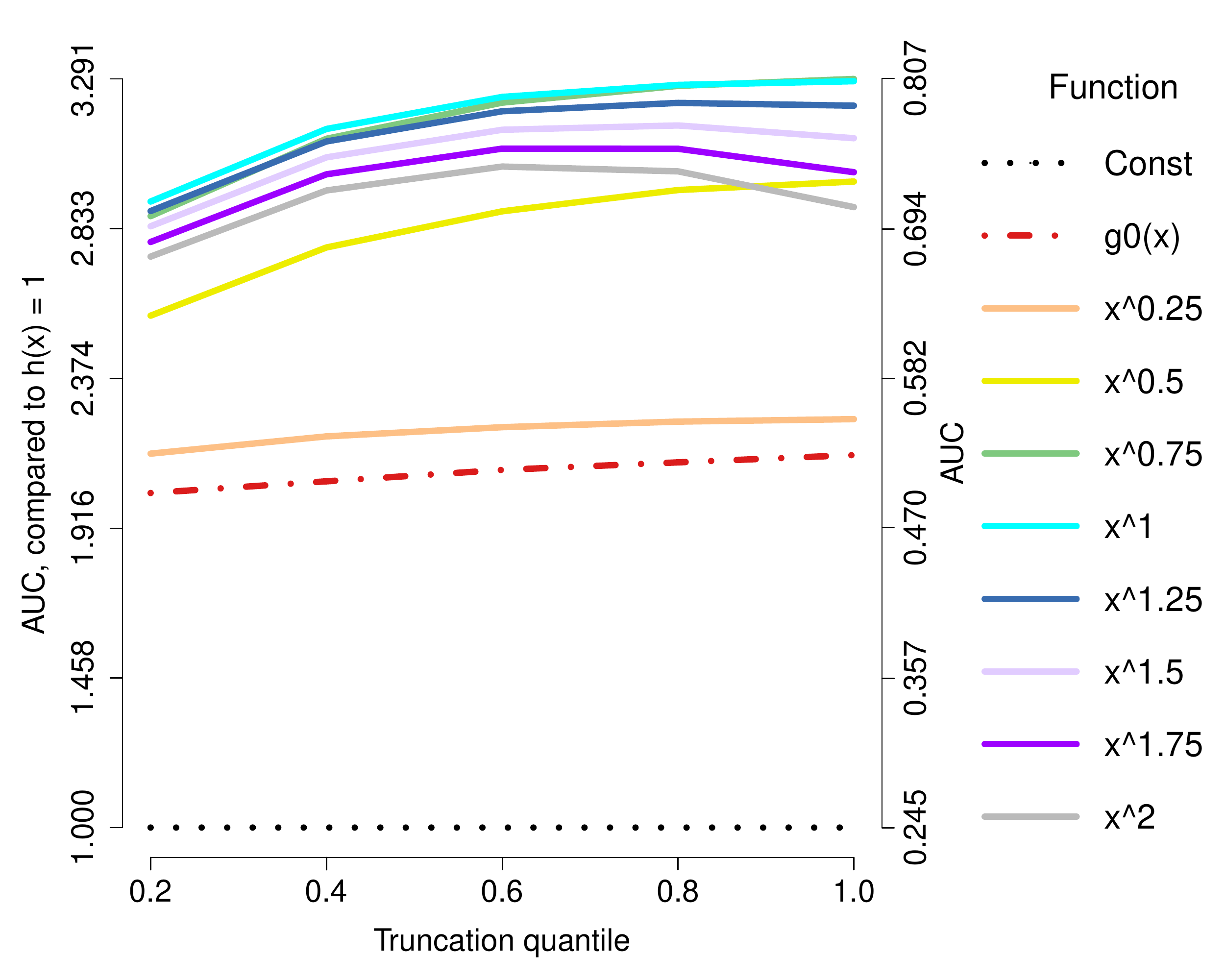}}\hspace{-0.02in}}
\\ \vspace{-0.1in}
\subfloat[$n=80$, $\ell_2^{\complement}$-nn domain]
{\includegraphics[width=0.45\textwidth]{{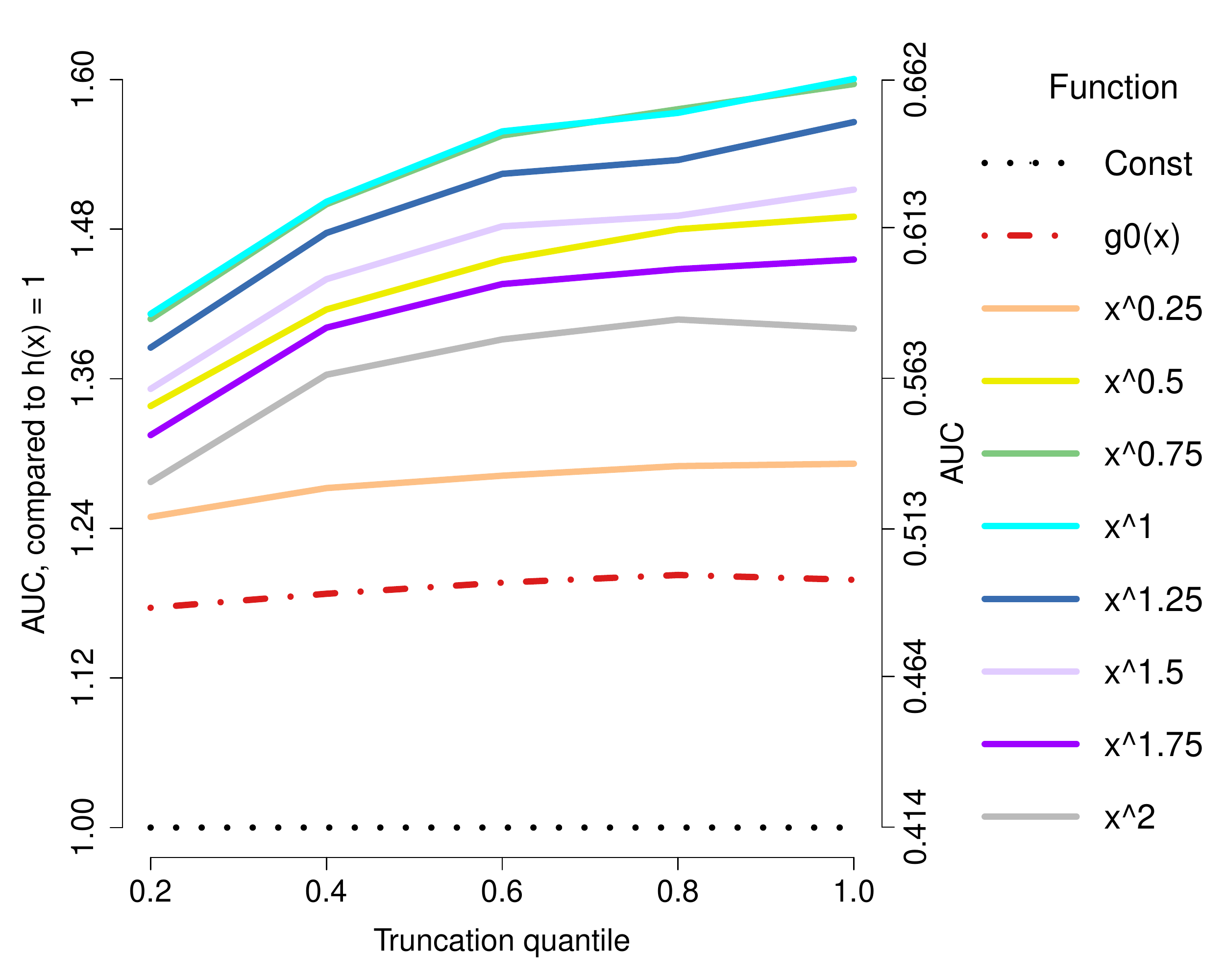}}\hspace{0.1in}}
\subfloat[$n=1000$, $\ell_2^{\complement}$-nn domain]
{\includegraphics[width=0.45\textwidth]{{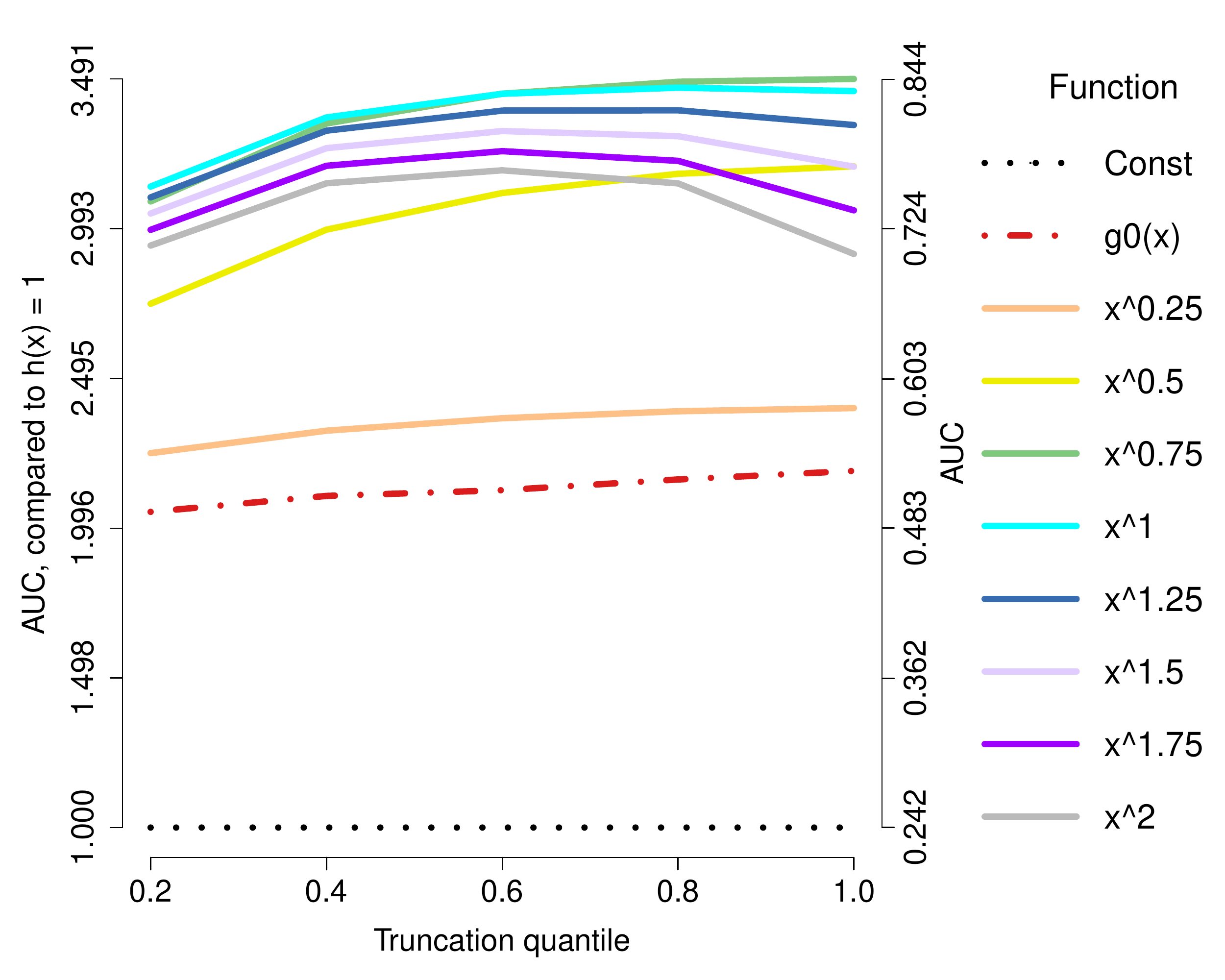}}\hspace{-0.02in}}
\\ \vspace{-0.1in}
\subfloat[$n=80$, unif-nn domain]
{\includegraphics[width=0.45\textwidth]{{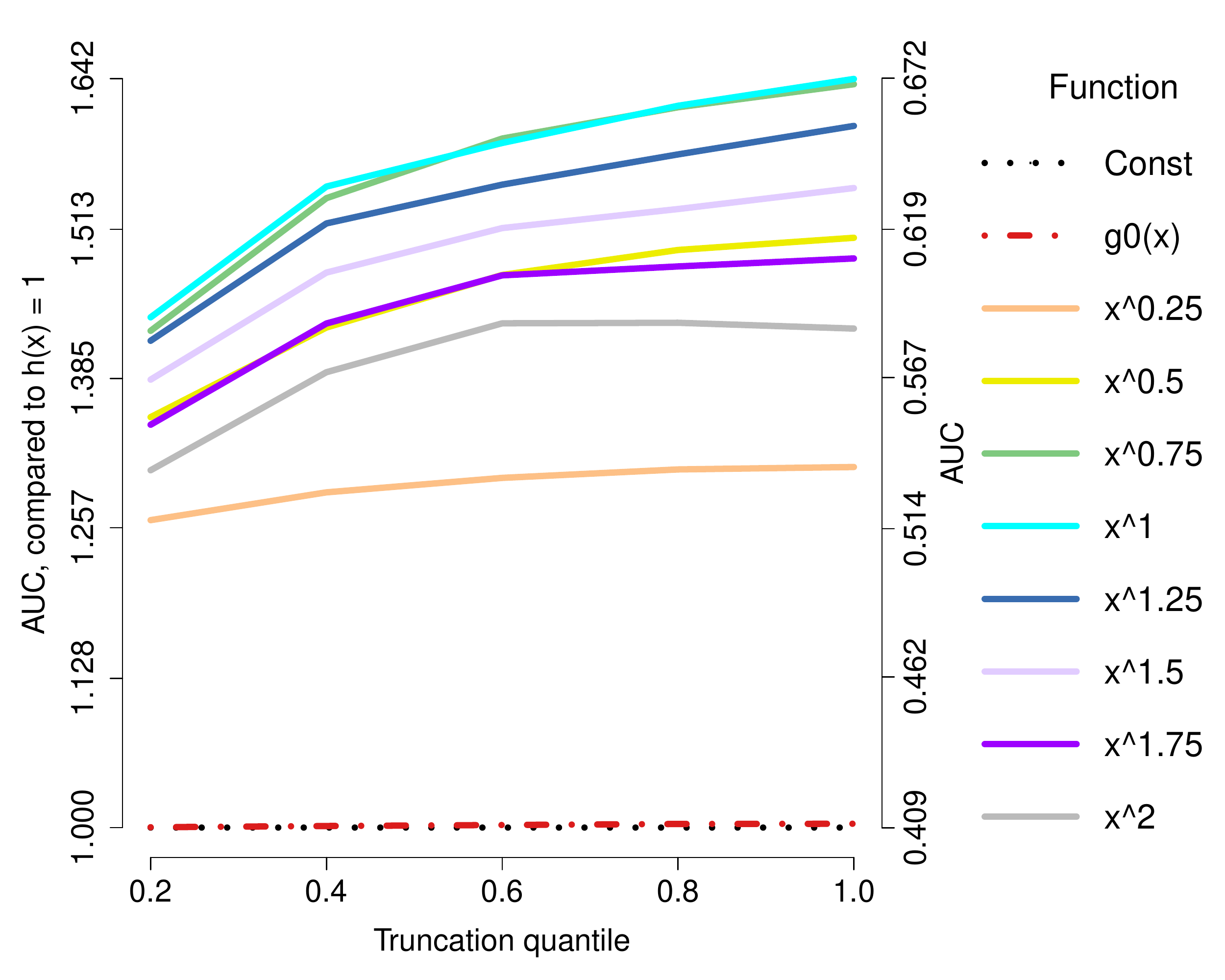}}\hspace{0.1in}}
\subfloat[$n=1000$, unif-nn domain]
{\includegraphics[width=0.45\textwidth]{{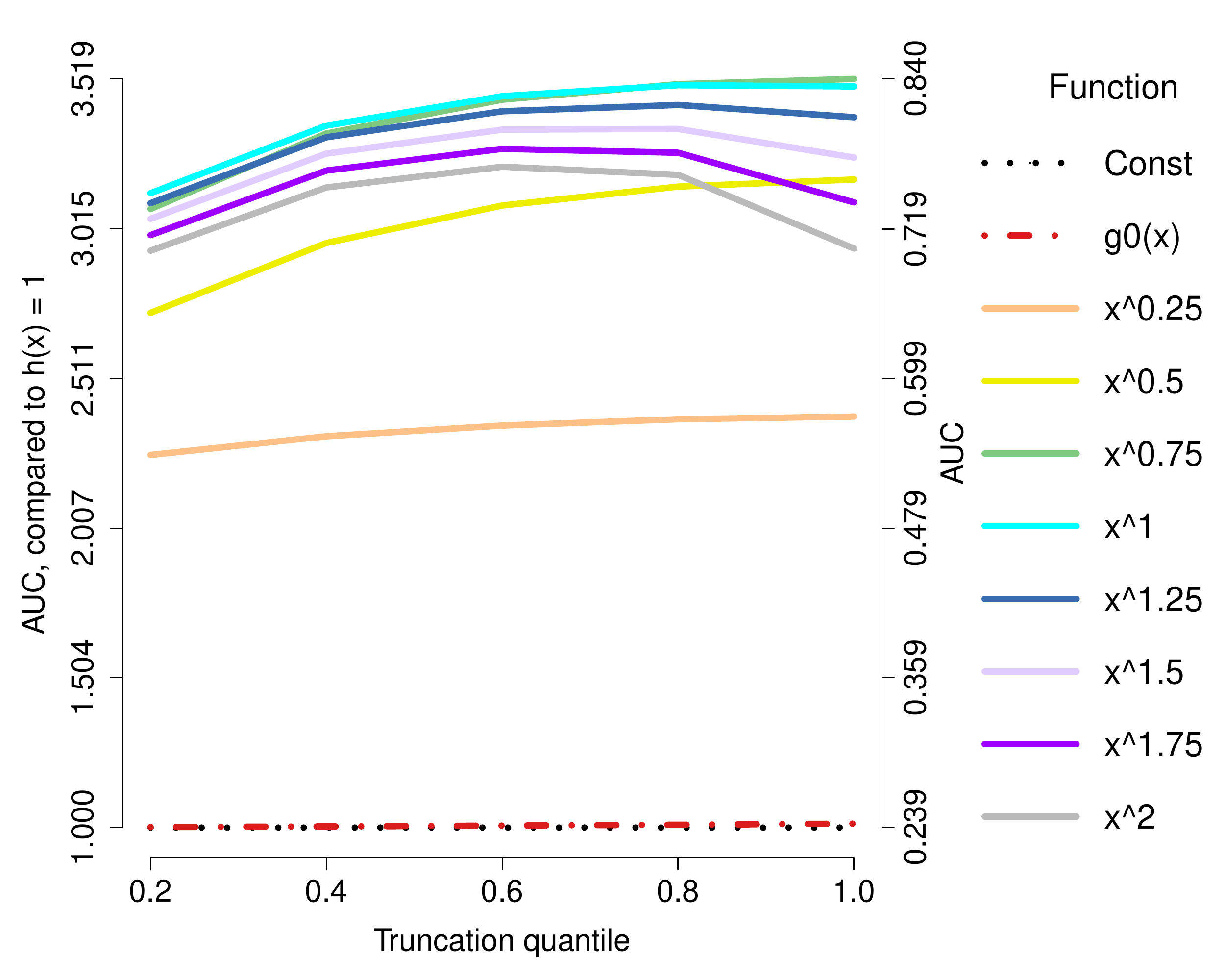}}\hspace{-0.02in}}
\caption{AUCs averaged over 50 trials for support recovery using
  generalized score matching for Gaussian models ($a=1$) on domains
  being subsets of $\mathbb{R}_+^m$.   Each curve represents either our extension to
  $g_0(\boldsymbol{x})$ from \citet{liu19} or a choice of power
  function $h(x)=x^c$. }\label{plot_gaussian_nn}
\end{figure}

\begin{figure}[p]
\centering
\vspace{-0.0in}
\subfloat[$n=80$, $\ell_2$ domain]
{\includegraphics[width=0.45\textwidth]{{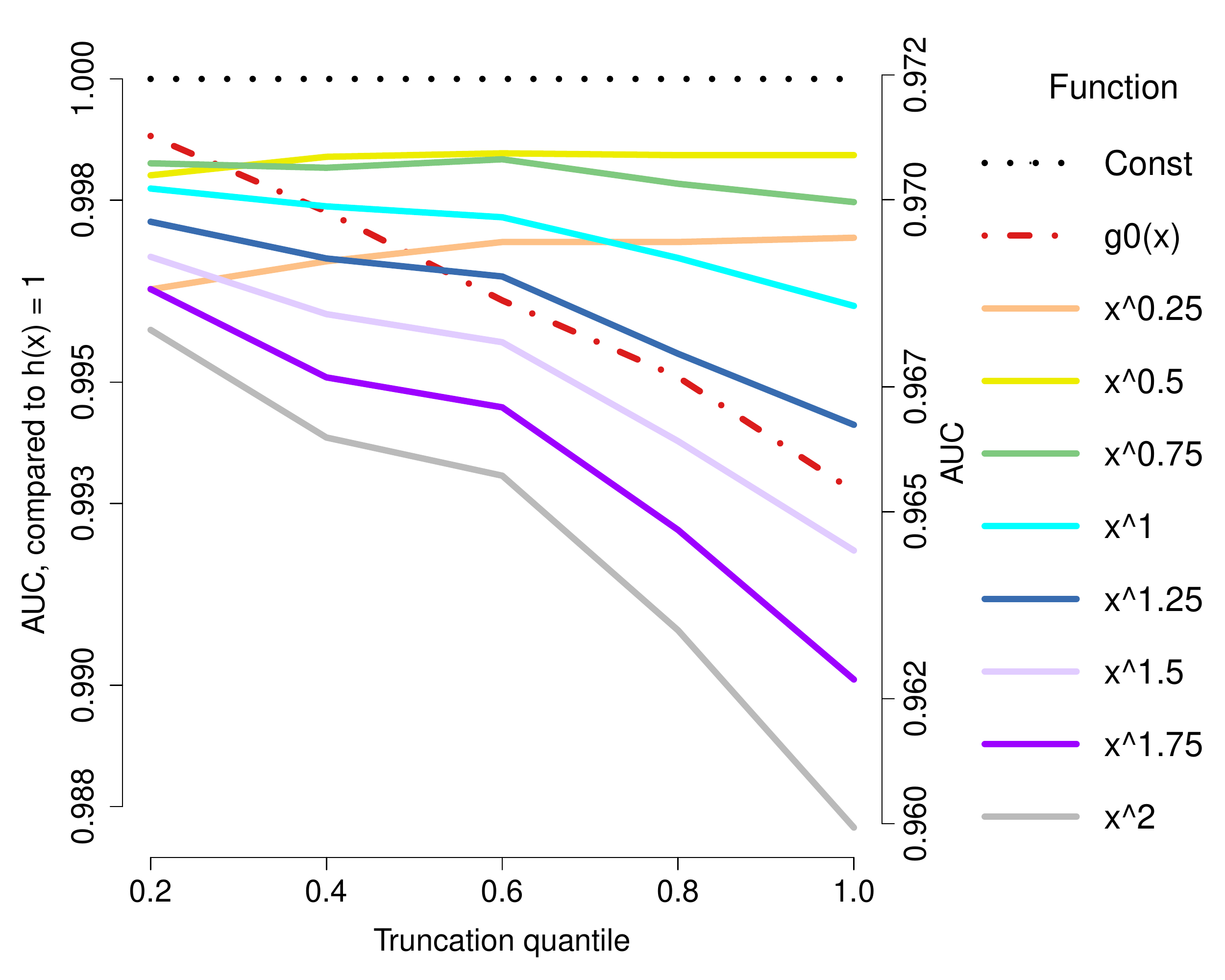}}\hspace{0.1in}}
\subfloat[$n=1000$, $\ell_2$ domain]
{\includegraphics[width=0.45\textwidth]{{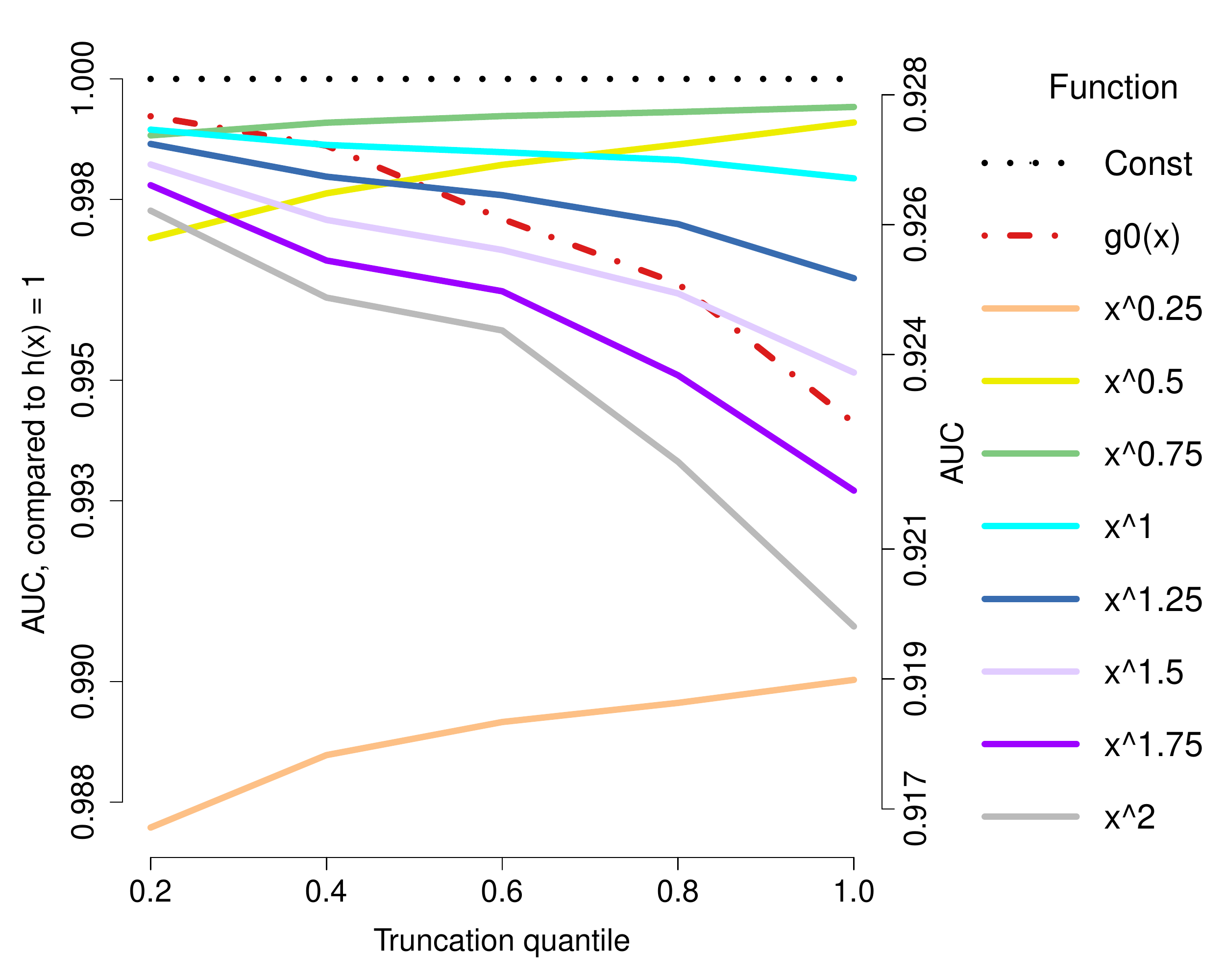}}\hspace{-0.02in}}
\\ \vspace{-0.1in}
\subfloat[$n=80$, $\ell_2^{\complement}$ domain]
{\includegraphics[width=0.45\textwidth]{{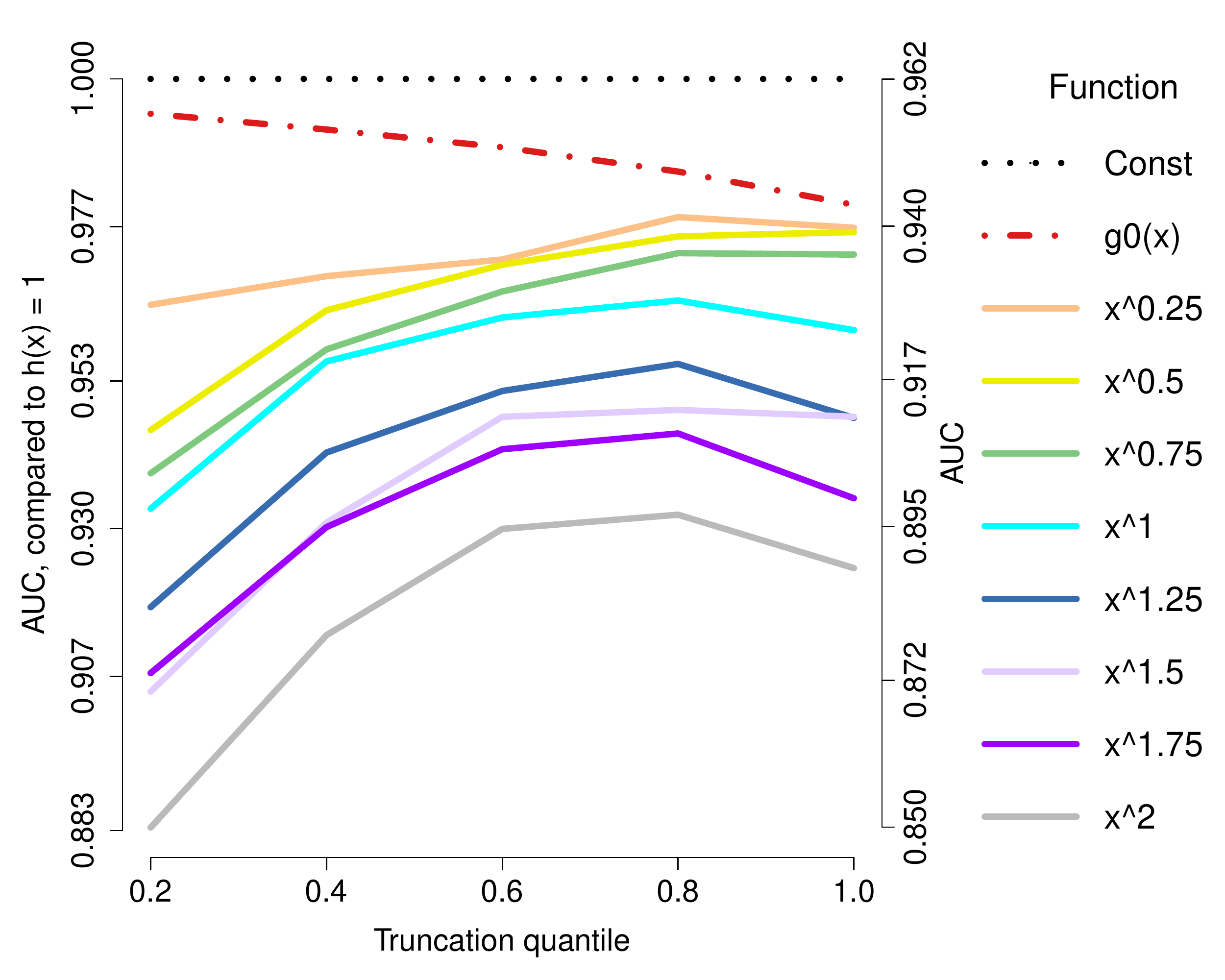}}\hspace{0.1in}}
\subfloat[$n=1000$, $\ell_2^{\complement}$ domain]
{\includegraphics[width=0.45\textwidth]{{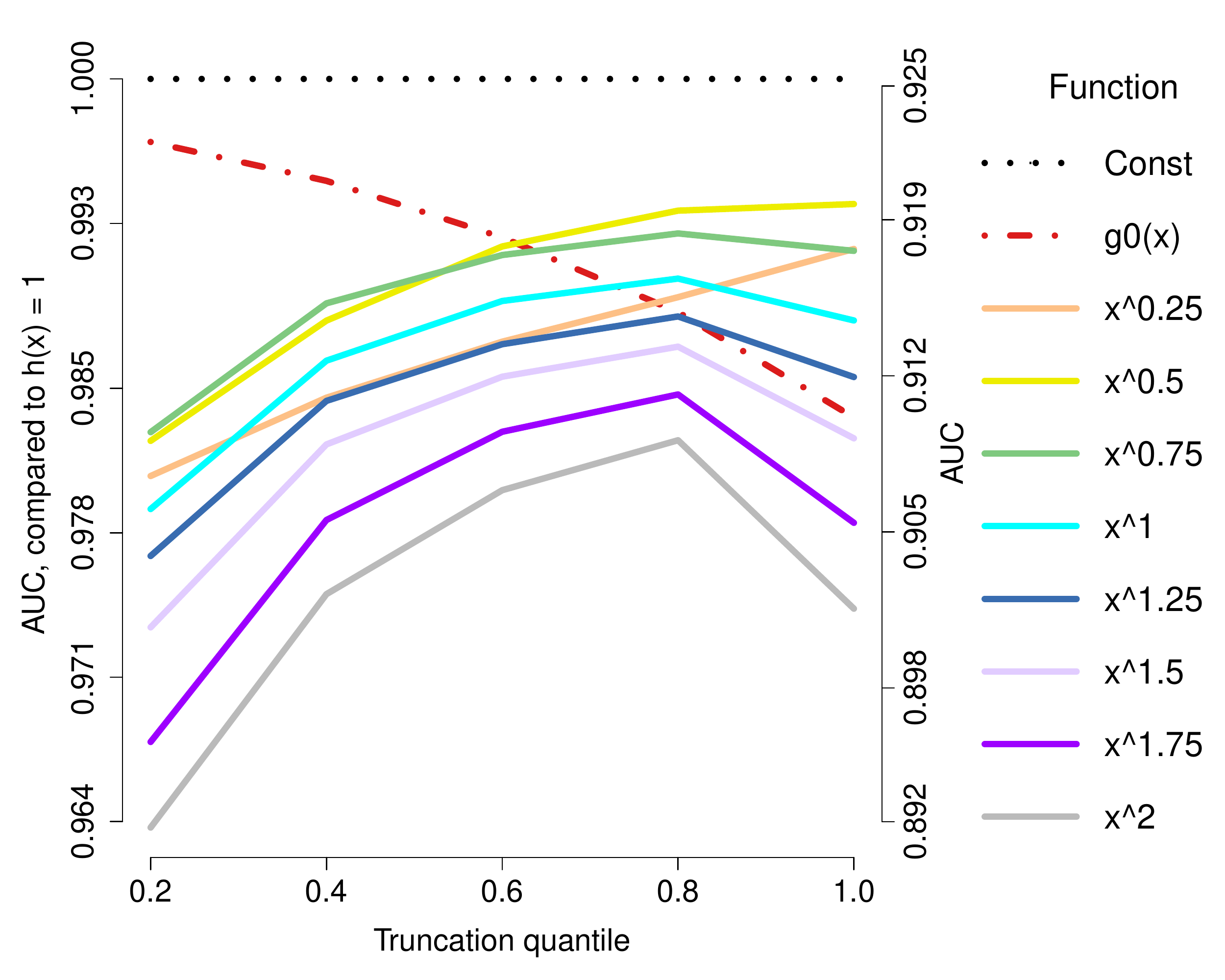}}\hspace{-0.02in}}
\\ \vspace{-0.1in}
\subfloat[$n=80$, unif domain]
{\includegraphics[width=0.45\textwidth]{{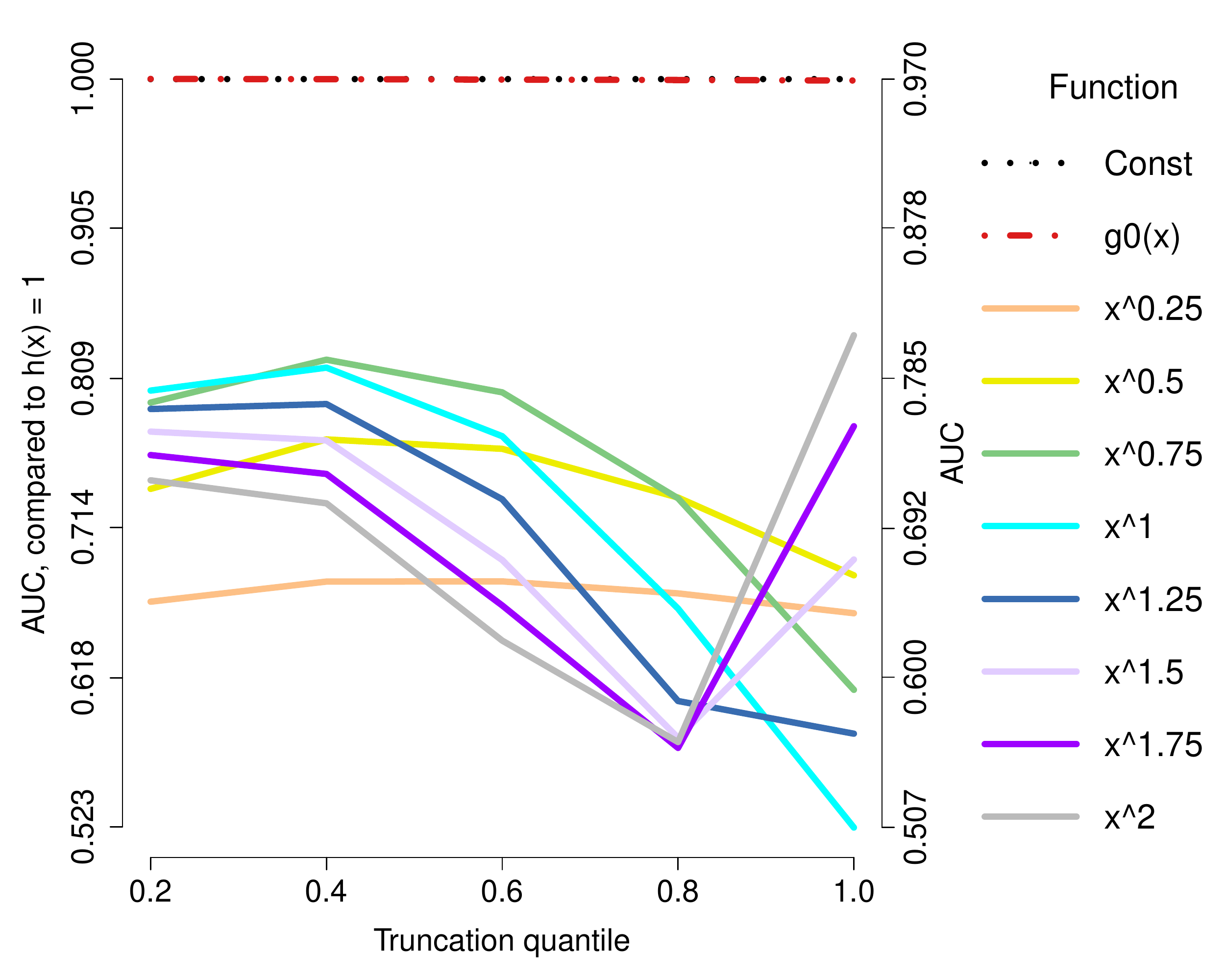}}\hspace{0.1in}}
\subfloat[$n=1000$, unif domain]
{\includegraphics[width=0.45\textwidth]{{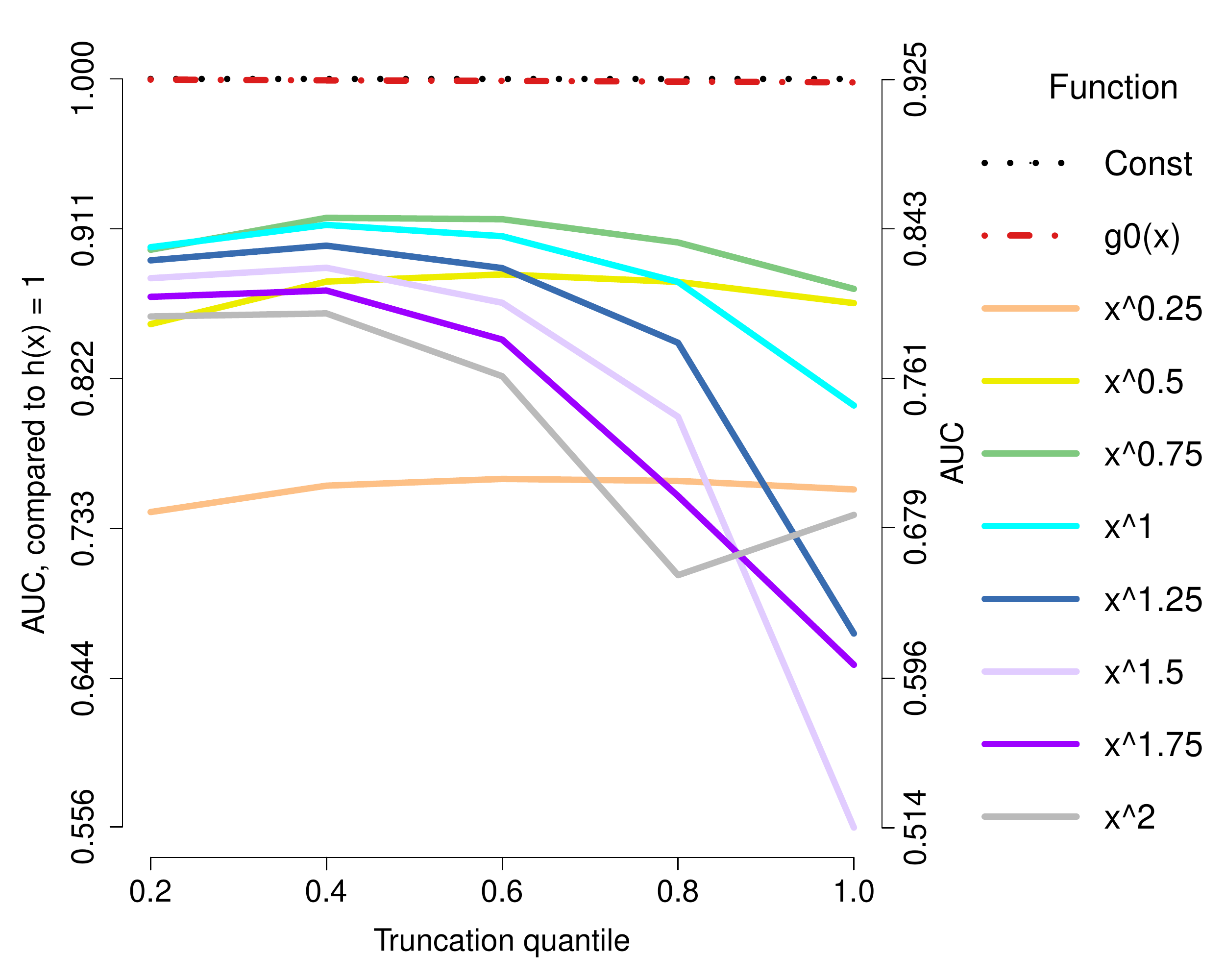}}\hspace{-0.02in}}
\caption{AUCs averaged over 50 trials for support recovery using generalized score matching for Gaussian models ($a=1$) on domains being subsets of $\mathbb{R}^m$ (not restricted to $\mathbb{R}_+^m$). Each curve represents either our extension to
  $g_0(\boldsymbol{x})$ from \citet{liu19} or a choice of power
  function $h(x)=x^c$.}\label{plot_gaussian}
\end{figure}

\section{DNA Methylation Networks}\label{DNA Methylation Data}
We illustrate the use of our generalized score matching for
inference of
conditional independence relations among DNA methylations based on
data for $500$ patients. The dataset contains methylation levels of
CpG islands associated with head and neck cancer from The Cancer
Genome Atlas (TCGA) \citep{wei13}.
Methylation levels are
associated with epigenetic regulation of genes and, according to
\citep{du10}, are commonly reported as
Beta values, a value in $[0,1]$ given by the ratio of the methylated probe
intensity and the sum of methylated and unmethylated probe
intensities, or M values, defined as the base 2-logit of the
Beta values.  Supported on $\mathbb{R}$, M values can be analyzed
using traditional methods, e.g., via Gaussian graphical models.  In
contrast, our new methodology allows direct analysis of Beta values
using generalized score matching for the $a$-$b$ model framework.

We focus on
a subset of CpG sites corresponding to genes known to belong to the
pathway for Thyroid cancer according to the Kyoto Encyclopedia of
Genes and Genomes (KEGG).  Furthermore,  we remove
sites with clearly bimodal methylations,
which we assess 
using the methods from the R package \texttt{Mclust}.
This results in $n=500$ samples and $m=478$ sites belonging to 36 genes. 

When considering M values, we estimate the graph encoding the support of the interaction matrix
(and hence the conditional dependence structure) in a Gaussian model on
$\mathbb{R}^m$, i.e., the $a$-$b$ model with $a=b=1$.  In doing so, we
use the profiled estimator in (\ref{eq_loss_regularized_profiled}),
and choose the upper-bound diagonal multiplier $2-(1+80\sqrt{\log
    m/n})^{-1}$ as suggested in Section 6.2 of \citet{yus19}.
The support being all of $\mathbb{R}^m$ we simply use the original
score matching with
$(\boldsymbol{h}\circ\boldsymbol{\varphi})(\boldsymbol{x})=\mathbf{1}_m$. For
Beta values, we assume a $\log$-$\log$ model ($a=b=0$) on $[0,1]^{m}$,
and use the profiled estimator with the upper-bound diagonal
multiplier $1+\sqrt{(\tau\log m+\log 4)/(2n)}$ as in
(\ref{eq_bounded_nonlog_delta}) with the choice of $\tau=3$. Suggested
by our theory, we use
$\boldsymbol{h}(\boldsymbol{x})=\boldsymbol{x}^2$, and choose the
truncation points in $\boldsymbol{\varphi}$ to be the 40th sample
percentile, as suggested by the simulation results in Figure
\ref{plot_log}.  For our illustration, the $\lambda$ parameter that
defines the $\ell_1$ penalty on $\mathbf{K}$ is chosen so that the
number of edges is equal to 478, the number of sites, following
\citet{lin16} and \citet{yus19}.

The estimated graphs are presented in Figure \ref{data_sites}, where
panel (a) is for Beta values, (c) is for M values,
and (b) shows their common edges, i.e., the intersection graph. The
plots in (a), (b), and (c) exclude isolated nodes and the layout is
optimized for each graph. Figure \ref{data_sites_Beta_layout} in the appendix includes isolated nodes where the layout is optimized for the graph for Beta values. Figure \ref{data_genes} in the appendix shows the graphs in Figure \ref{data_sites} aggregated by the genes associated with the sites. In (a) and (c), red points indicate nodes with degree at least 10. Sites with the highest node degrees are listed in Table \ref{site_table}, where those shared by the two graphs are highlighted in bold.

\begin{figure}[t!]
\centering
\subfloat[Beta values]{\includegraphics[width=0.25\textwidth]{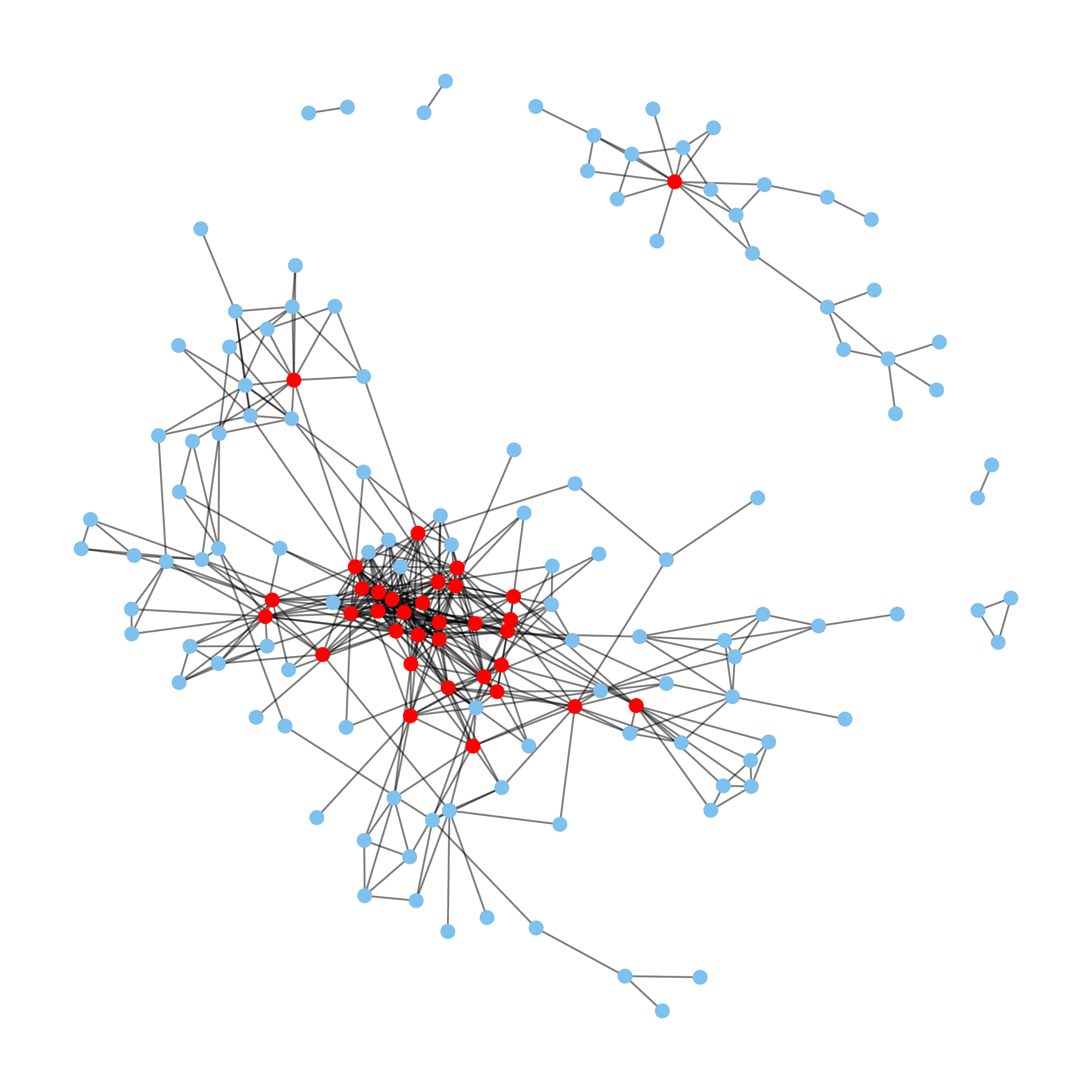}}
\subfloat[Common edges]{\includegraphics[width=0.25\textwidth]{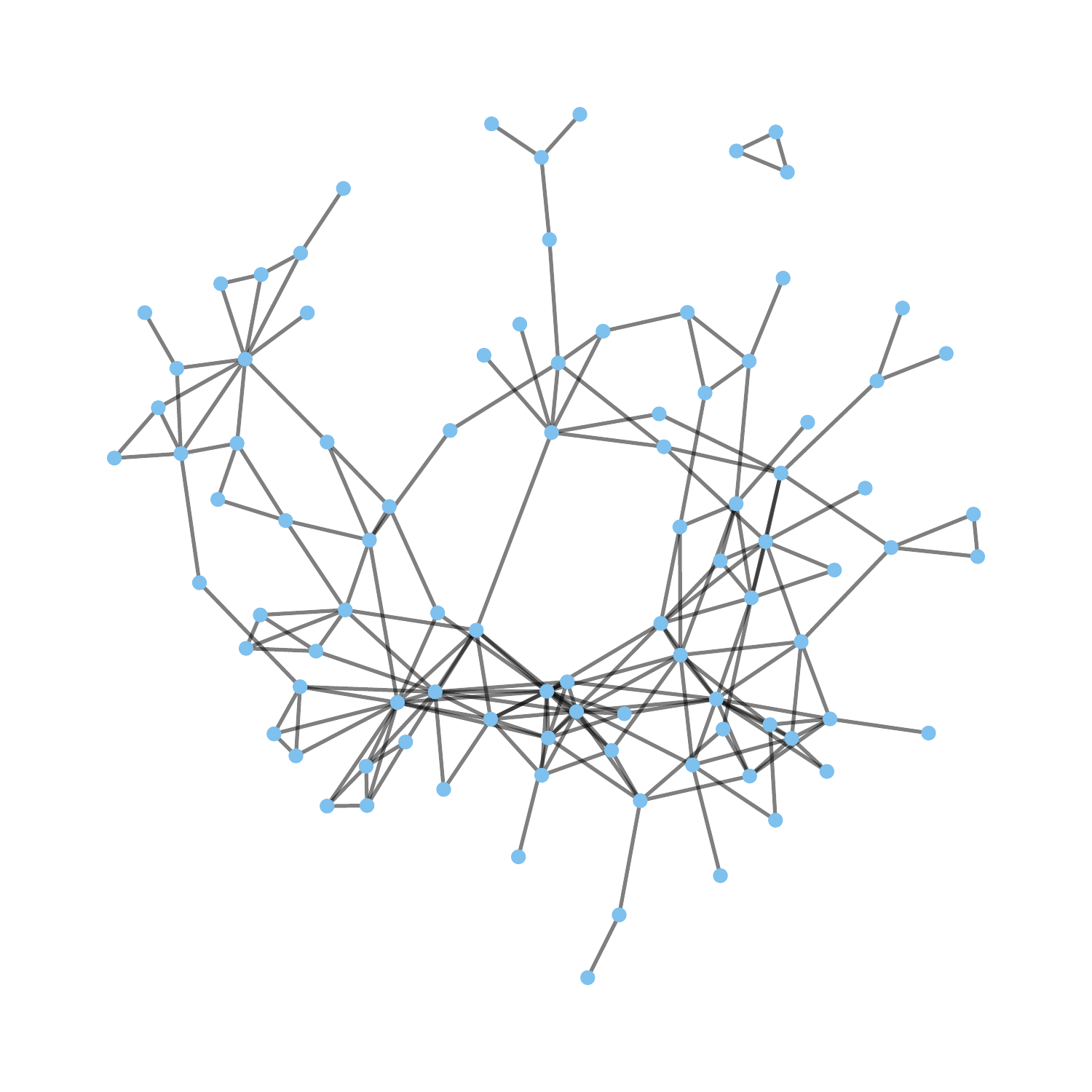}}
\subfloat[M values]{\includegraphics[width=0.25\textwidth]{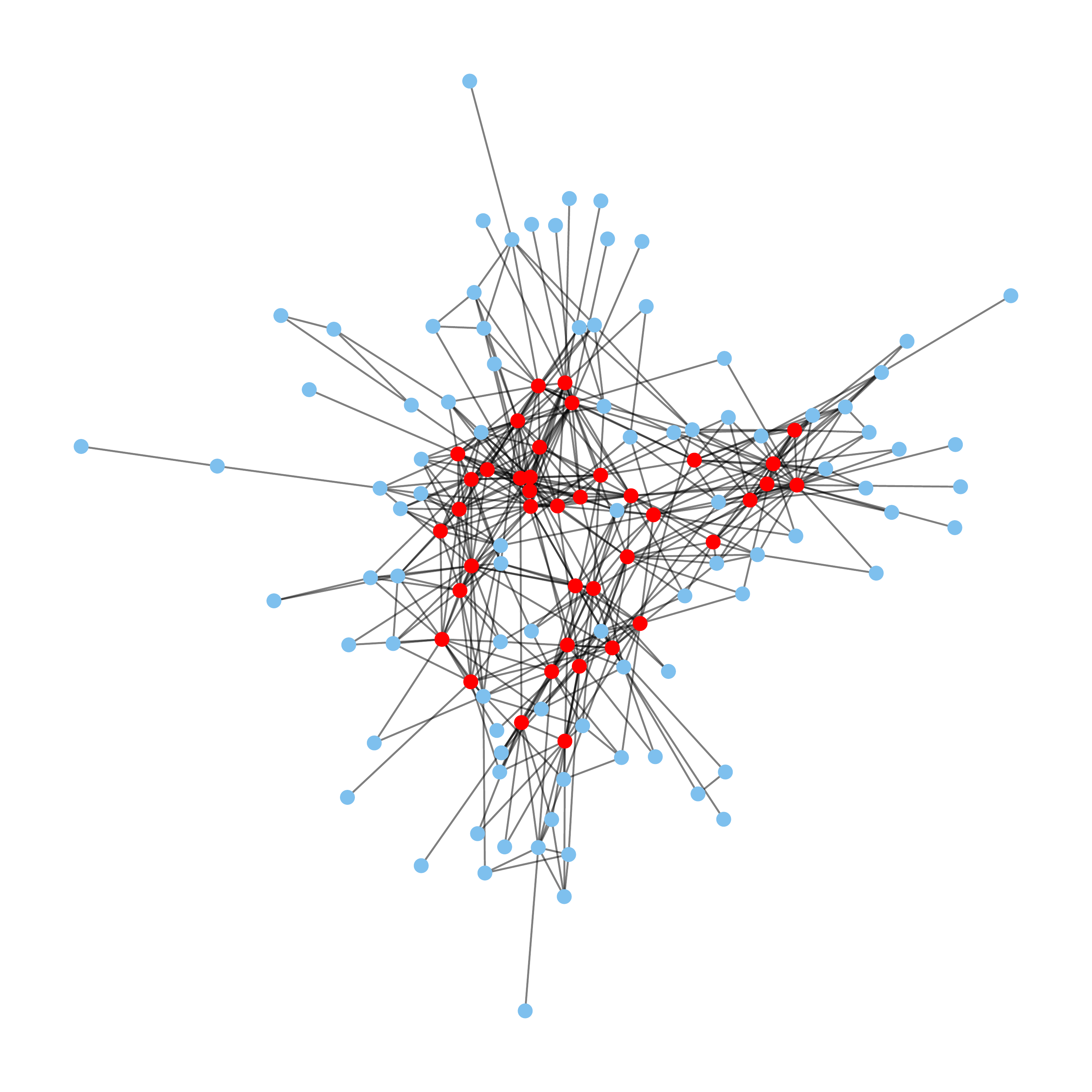}}
\caption{Graphs for CpG sites estimated by regularized generalized score matching estimator using Beta values (a) and M values (c), and their intersection graph (b). 
}\label{data_sites}
\end{figure}

\begin{table}[t!]
\centering
\scalebox{0.75}{
\begin{tabular}{c|c|||c|c}
\hline
Beta values & M values & Beta values & M values
\tabularnewline
\hline
CDH1|4 (28) & RXRB|24 (25)    & LEF1|2 (20) & PAX8|9 (20)\\
\textbf{TCF7L1|18 (22)} & \textbf{MAPK3|8 (22)}     & TCF7L1|13 (20) & TCF7|3 (18) \\
\textbf{RXRA|19 (21)} & PAX8|6 (21)     & CDKN1A|10 (20) & TCF7L1|9 (18) \\ 
RXRA|22 (21) & CCND1|19 (20)     & CDKN1A|6 (19) & \textbf{TCF7L1|18 (18)} \\
RET|22 (21) & RXRA|10 (20)     & \textbf{MAPK3|8 (17)} & TCF7L2|63 (18) \\
RXRB|82 (21) & \textbf{RXRA|19 (20)}     & PAX8|28 (17) & TPM3|12 (18) \\
NTRK1|40 (21) & RXRB|18 (20)    & \textbf{PAX8|29 (17)} & \textbf{PAX8|29 (17)}
\tabularnewline
\hline
\end{tabular}}
\caption{List of sites with the highest node degrees in each estimated graph.}\vspace{-0.1in}
\label{site_table}
\end{table}

We quantify the similarity between the two site graphs (not aggregated) by their Hamming distance and their DeltaCon similarity score \citep{kou13}. The Hamming distance counts the number of edge differences, and thus decreases as two graphs become more similar. Conversely, DeltaCon \citep{kou13} generates a similarity score in $[0,1]$, and the closer the score is to 1, the more similar the two graphs are.

The Hamming distance between the two graphs is 568, which is considerably smaller than 936, the minimal Hamming distance between the graph for Beta values and 10000 randomly generated graphs with the same number of edges, and 940, that value using the graph for M values. 
On the other hand, the DeltaCon similarity score between the two original graphs is 0.114, while the maximal score between the Beta graph and 10000 randomly generated graphs is only 0.0781, while that for the M graph is 0.0761. In Figure~\ref{graph_node_deg}, we compare the distribution of node degrees for both graphs
, with interlaced histogram on the left and Q-Q plot on the right. All
these results suggest that the two estimated graphs are similar to
each other, but that the two analyses also reveal complementary features.

\begin{figure}[t!]
\centering
\subfloat[Beta values]{\includegraphics[scale=0.25]{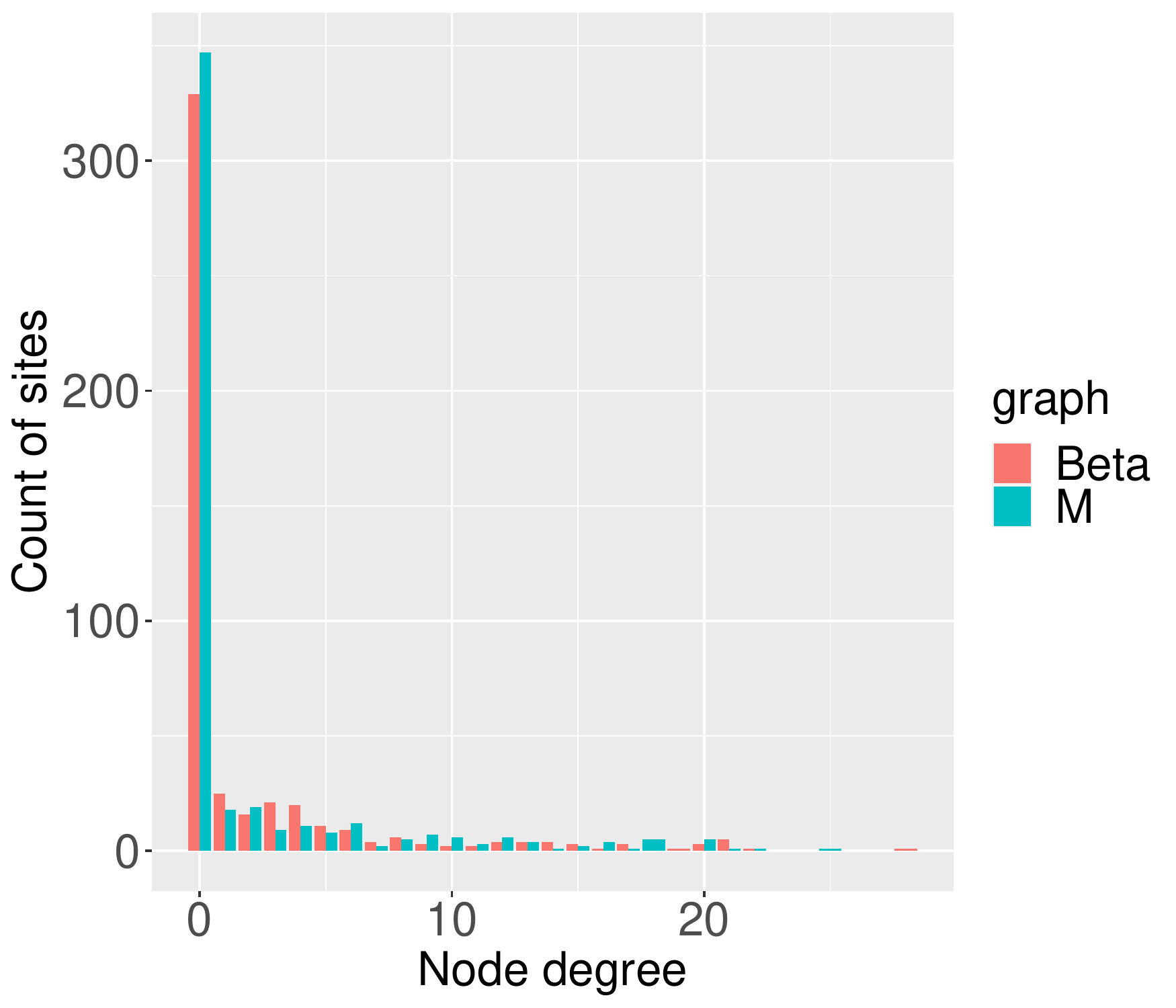}}
\subfloat[Common edges]{\includegraphics[scale=0.25]{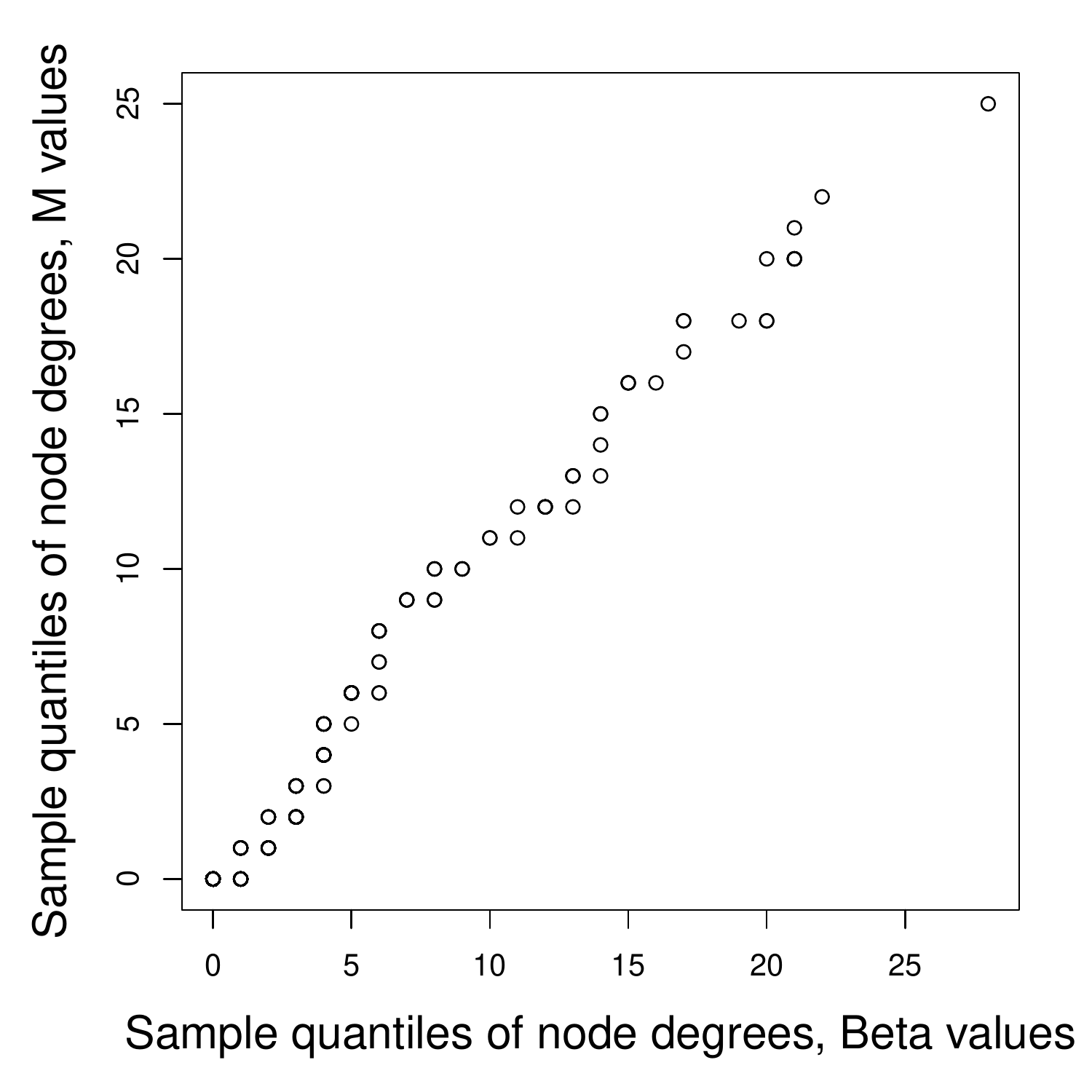}}
\caption{Interlaced histogram (left) and Q-Q plot (right) showing the node degree distributions for both site graphs.}\label{graph_node_deg}
\end{figure}


\section{Conclusion} 
Generalized score matching as proposed in \citet{yus19} is an
extension of the method of \cite{hyv07} that estimates
densities supported on $\mathbb{R}_+^m$ using a loss, in which the log-gradient of the
postulated density, $\nabla\log p(\boldsymbol{x})$, is multiplied
component-wise with a function
$\boldsymbol{h}(\boldsymbol{x})$.
The resulting estimator avoids the often costly calculation of normalizing
constants and has a closed-form
solution in exponential family models.

In this paper, we further extend generalized score matching to be
applicable to more general domains.  Specifically, we allow for
domains $\mathfrak{D}$ that are \emph{component-wise countable union
  of intervals} $\mathfrak{D}$ (see Definition \ref{def_V}).
We accomplish this by composing the function $\boldsymbol{h}$ with a
distance function
$\boldsymbol{\varphi}_{\boldsymbol{C}}=(\varphi_{C_1,1},\ldots,\varphi_{C_m,m}):\mathfrak{D}\to\mathbb{R}_+^m$,
where $\varphi_{C_j,j}(\boldsymbol{x})$ is a truncated distance of $x_j$ to
the boundary of the relevant interval in the section of $\mathfrak{D}$
given by $\boldsymbol{x}_{-j}$.  The resulting loss can again be
approximated by an empirical loss,
which is
quadratic in the canonical parameters for exponential families.  

In our applications we focus on
$a$-$b$ pairwise interaction models supported on domains
$\mathfrak{D}$ with positive Lebesgue measure.  For these models we
give a concrete choice of the function $\boldsymbol{h}$ and 
%
extend the consistency theory for support recovery in \citet{yus19} to
Gaussian models on domains that are finite disjoint unions of convex
sets, and on bounded domains with positive Lebesgue measure, requiring
the sample size to be $n=\Omega(\log m)$. For unbounded domains with
$a>0$, we require an additional multiplicative factor that may weakly
depend on $m$.  Deriving a more explicit requirement on the sample
size would be an interesting topic for future work.
Finally, 
in our simulations we adaptively select the truncation points
$\boldsymbol{C}$ of $\boldsymbol{\varphi}_{\boldsymbol{C}}$ using the
sample quantiles of the untruncated distances.  Developing a method to
choose the best truncation points remains a topic for further
research.

\section*{Acknowledgments}
This work was supported by grants DMS/NIGMS-1561814 from the National Science Foundation (NSF) and R01-GM114029 from the National Institutes of Health (NIH).

\bibliographystyle{plainnat}  
\bibliography{Paper}

\appendix
\section{Additional Plots}\label{Additional Plots}

Below we present additional plots for our simulations in Sections \ref{Numerical Experiments} and \ref{DNA Methylation Data}.
\begin{figure}[!htp]
\centering
\vspace{-0.2in}
\subfloat[$n=80$, $\ell_2$-nn domain]
{\includegraphics[width=0.45\textwidth]{{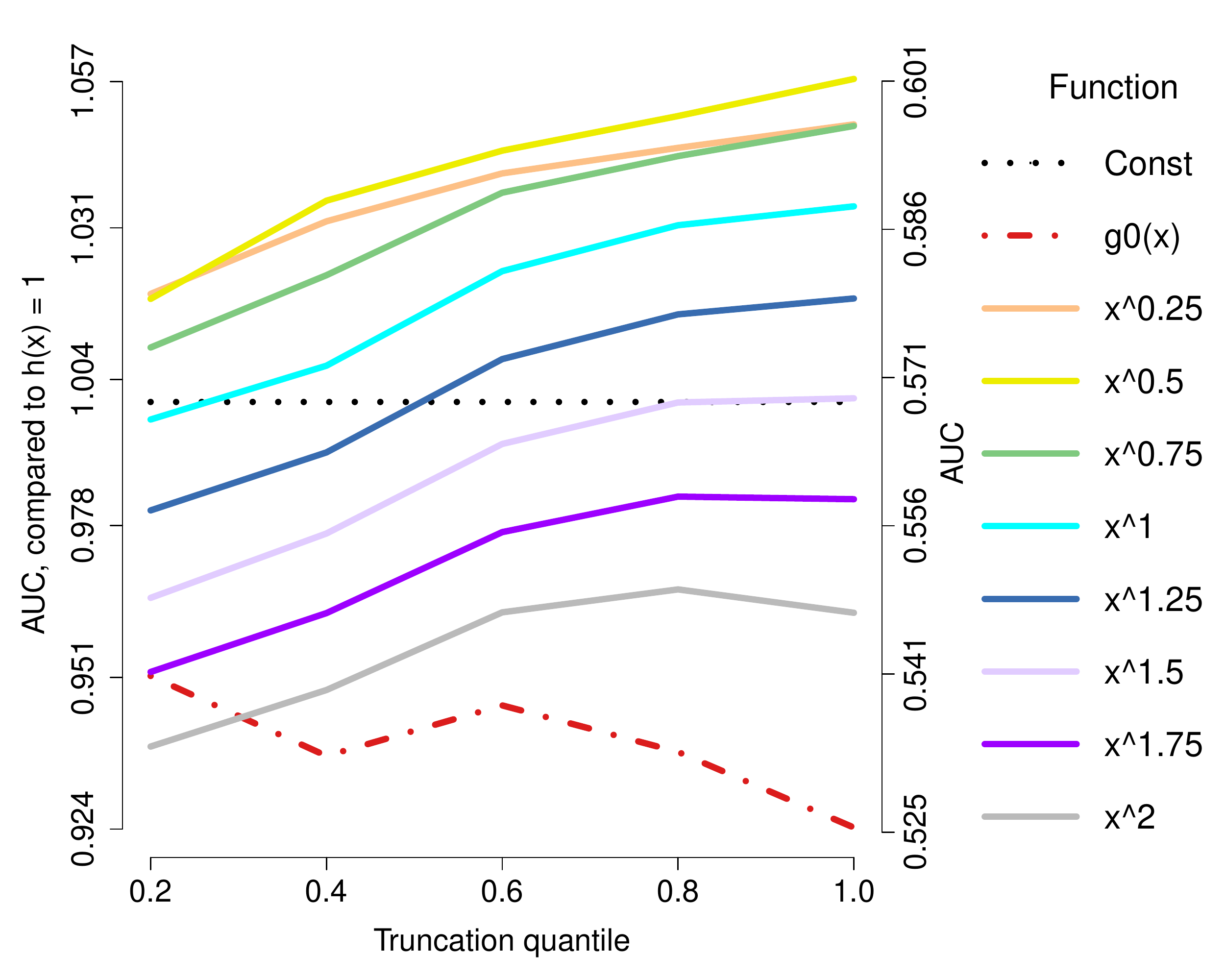}}\hspace{0.1in}}
\subfloat[$n=1000$, $\ell_2$-nn domain]
{\includegraphics[width=0.45\textwidth]{{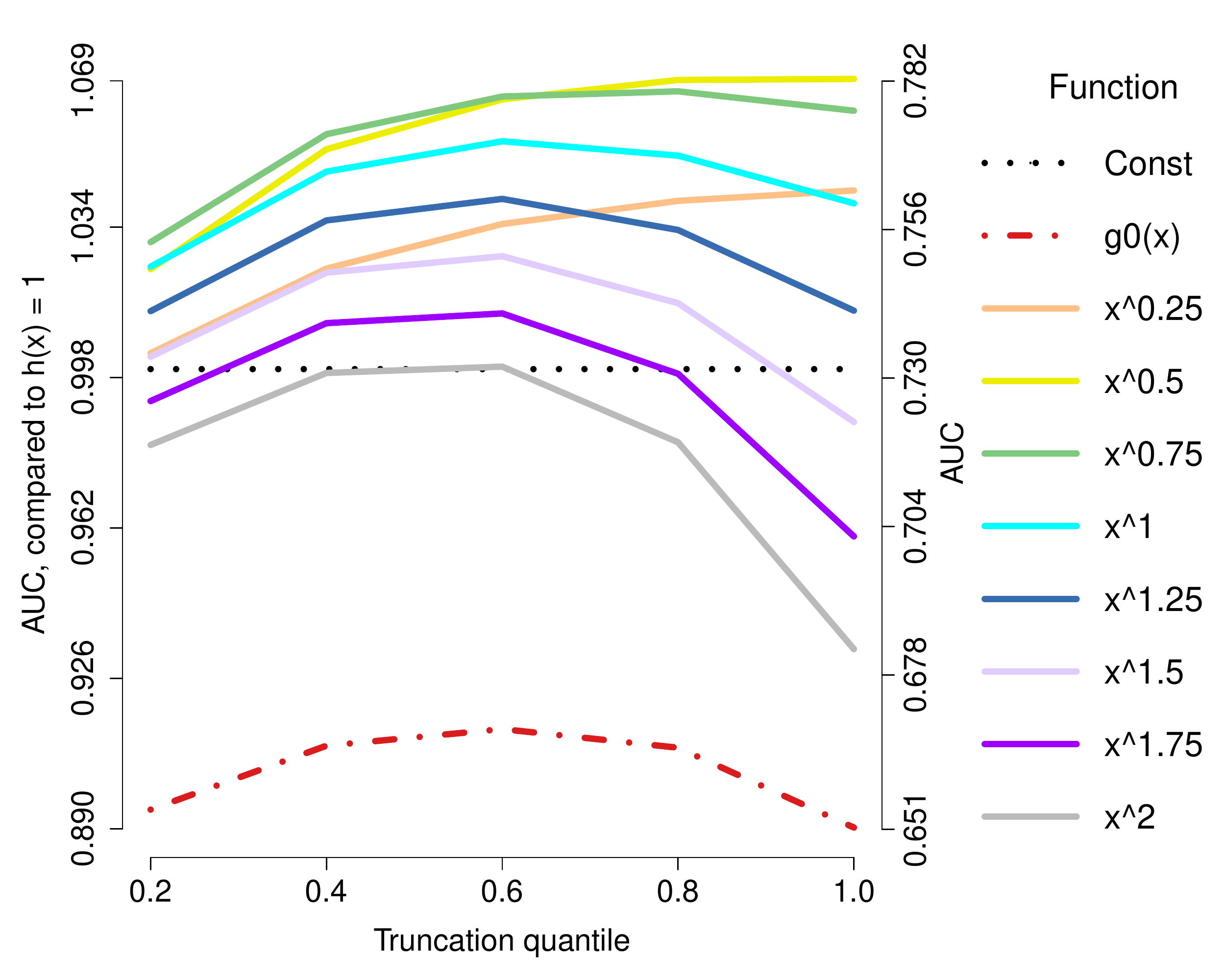}}\hspace{-0.02in}}
\\ \vspace{-0.1in}
\subfloat[$n=80$, $\ell_2^{\complement}$-nn domain]
{\includegraphics[width=0.45\textwidth]{{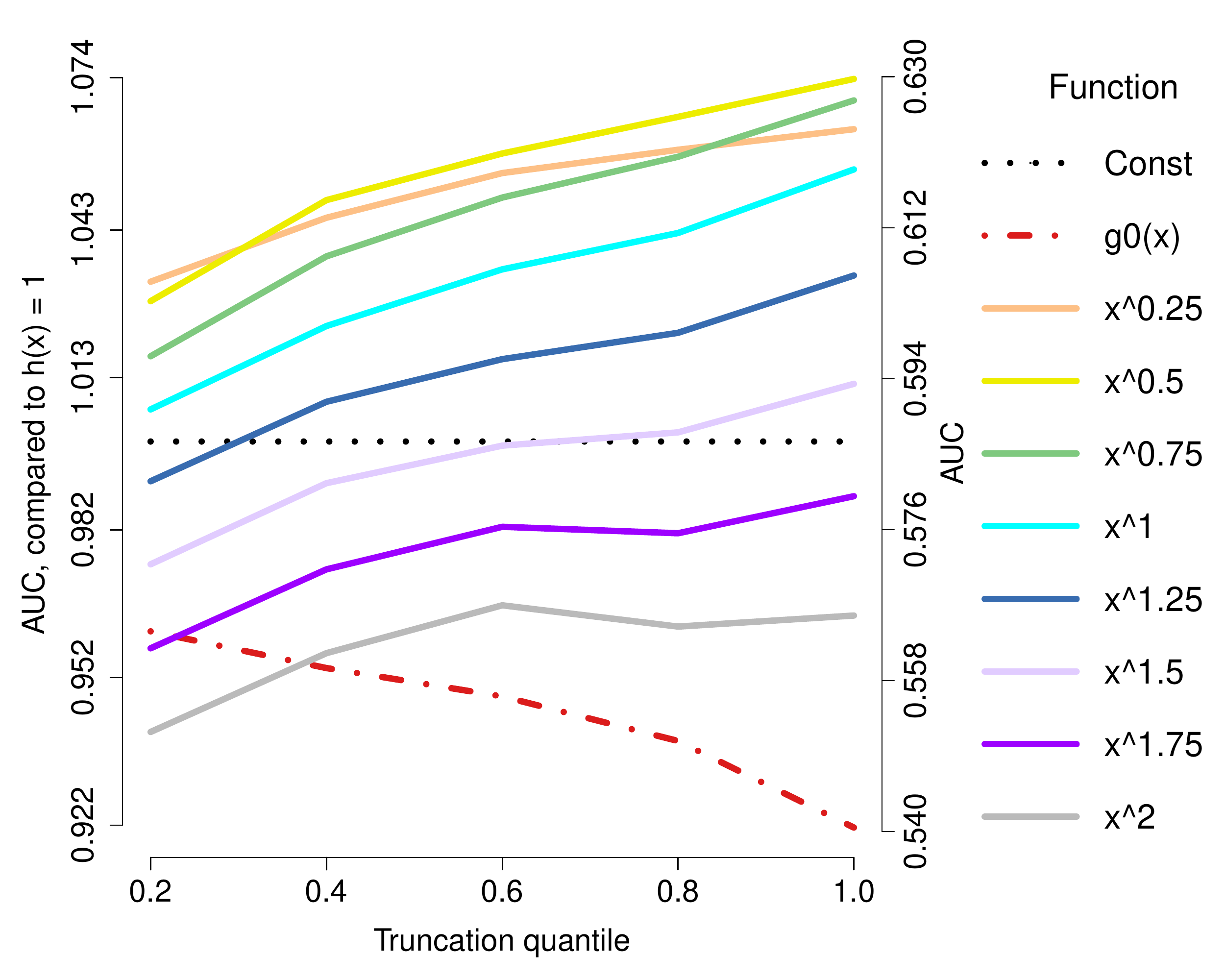}}\hspace{0.1in}}
\subfloat[$n=1000$, $\ell_2^{\complement}$-nn domain]
{\includegraphics[width=0.45\textwidth]{{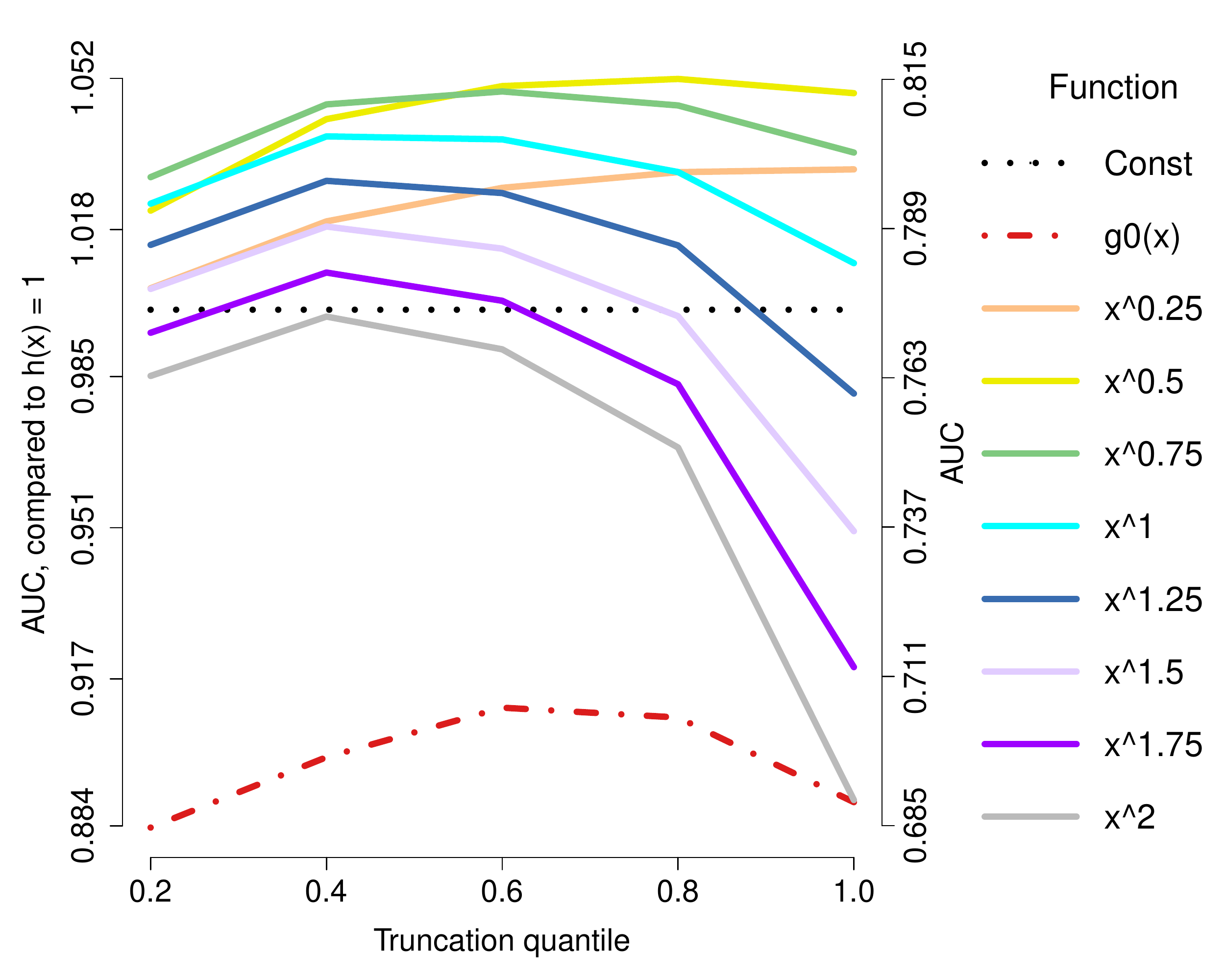}}\hspace{-0.02in}}
\\ \vspace{-0.1in}
\subfloat[$n=80$, unif-nn domain]
{\includegraphics[width=0.45\textwidth]{{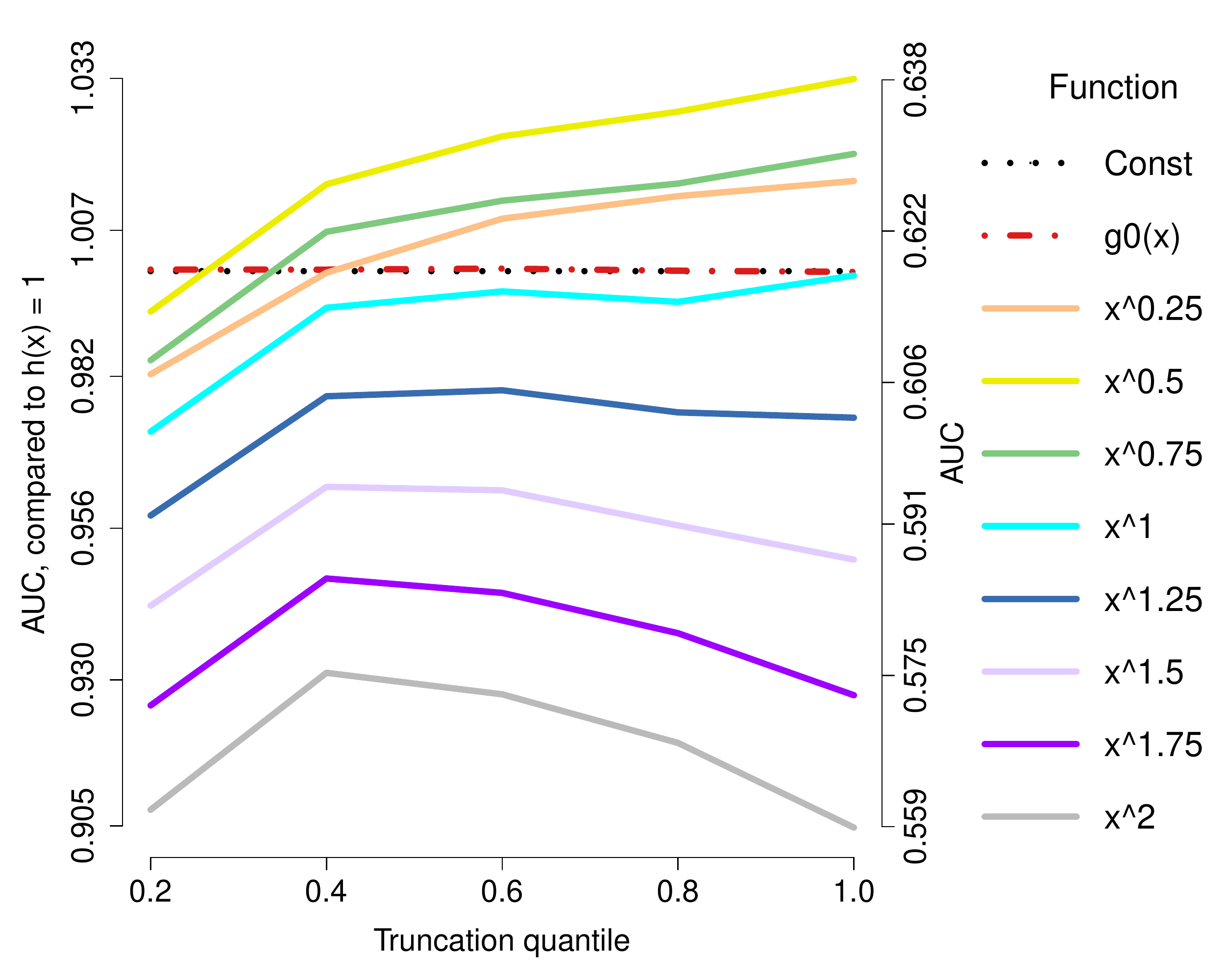}}\hspace{0.1in}}
\subfloat[$n=1000$, unif-nn domain]
{\includegraphics[width=0.45\textwidth]{{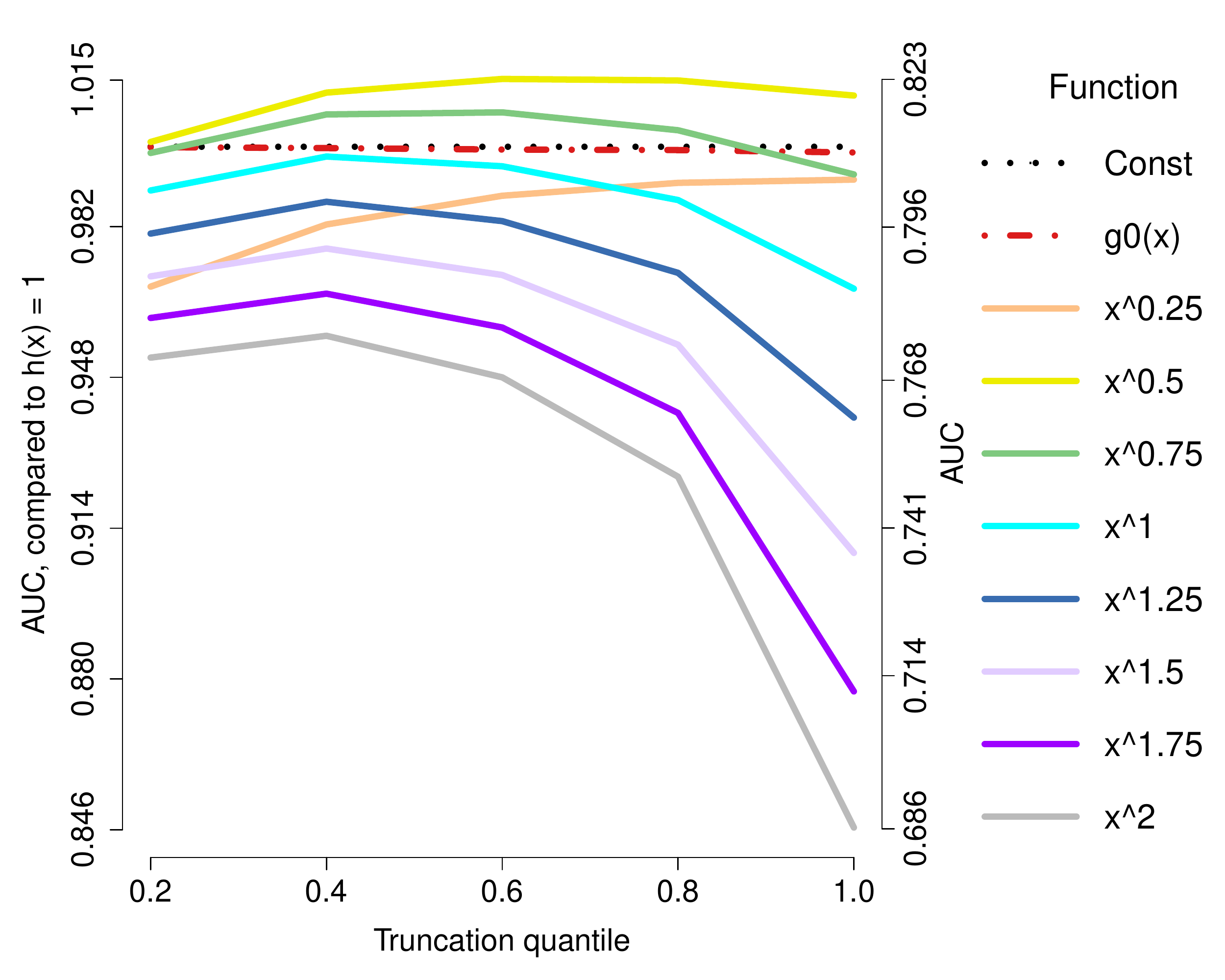}}\hspace{-0.02in}}
\caption{AUCs averaged over 50 trials for support recovery using generalized score matching for the $a=3/2$ models. Each curve represents either our extension to $g_0(\boldsymbol{x})$ from \citet{liu19} or a choice of power function $h(x)=x^c$.}
\vspace{-0.5in}\label{plot_a_3|2}
\end{figure}

\begin{figure}[!htp]
\centering
\vspace{-0.5in}
\subfloat[$n=80$, $\ell_2$-nn domain]
{\includegraphics[width=0.45\textwidth]{{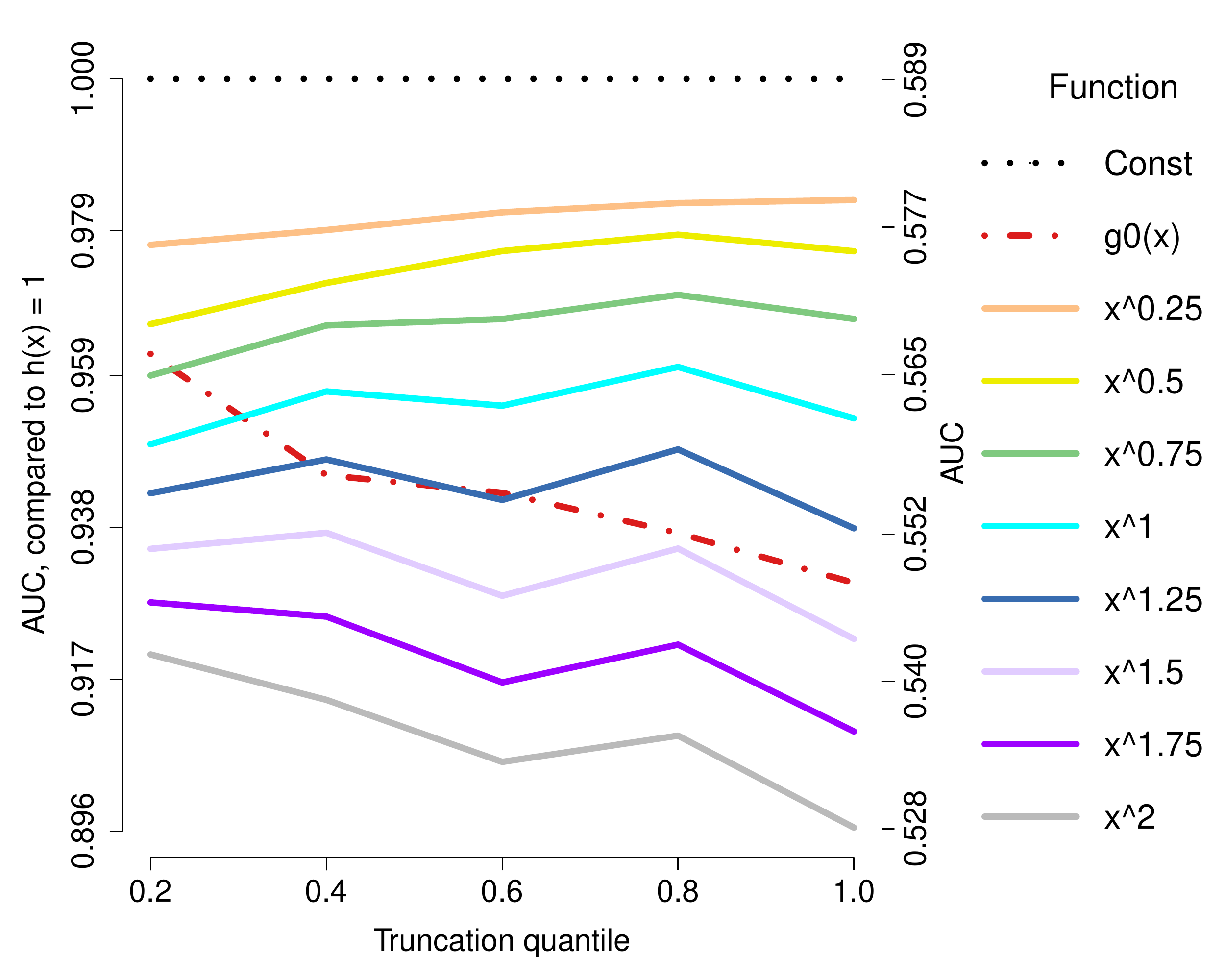}}\hspace{0.1in}}
\subfloat[$n=1000$, $\ell_2$-nn domain]
{\includegraphics[width=0.45\textwidth]{{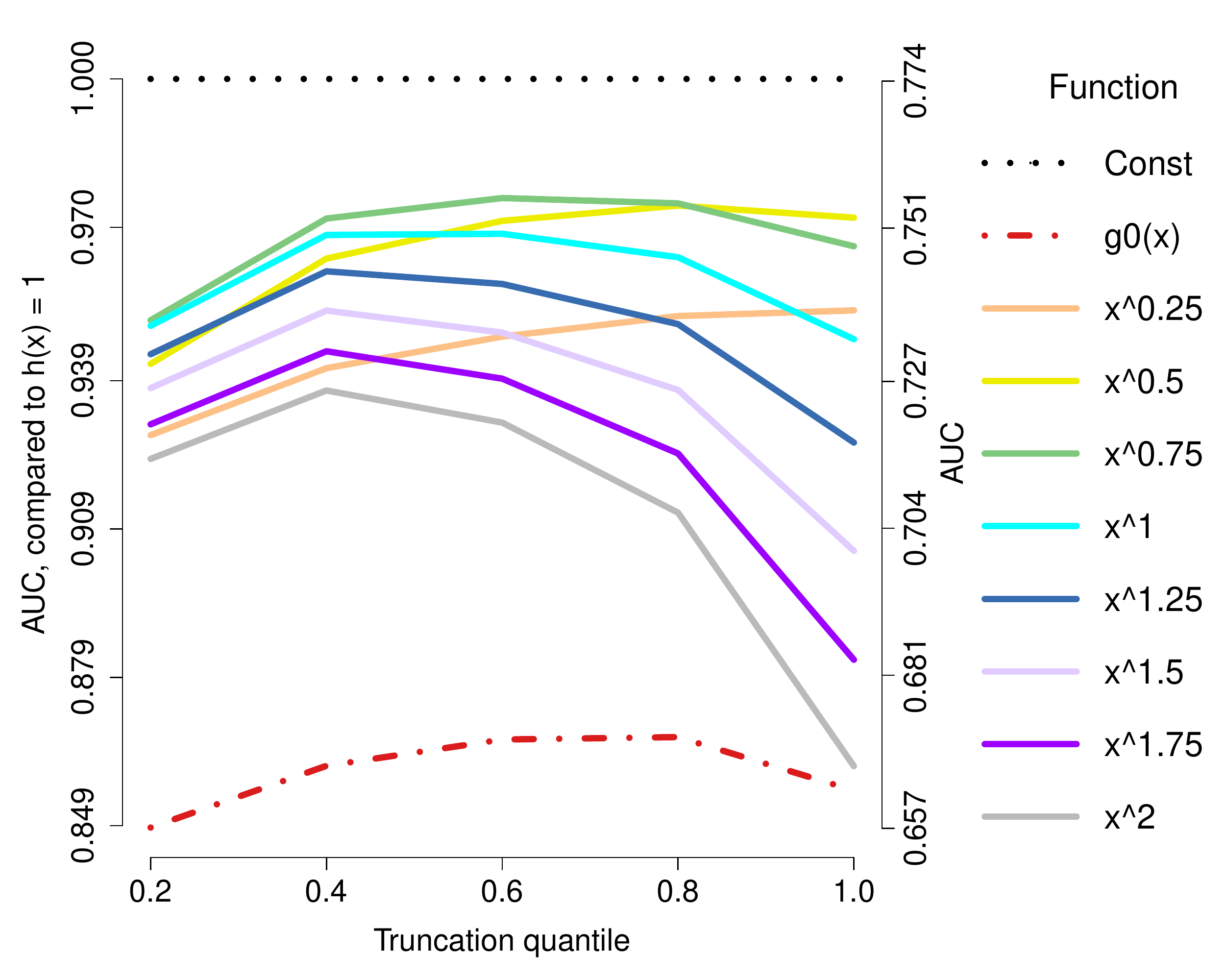}}\hspace{-0.02in}}
\\ \vspace{-0.1in}
\subfloat[$n=80$, $\ell_2^{\complement}$-nn domain]
{\includegraphics[width=0.45\textwidth]{{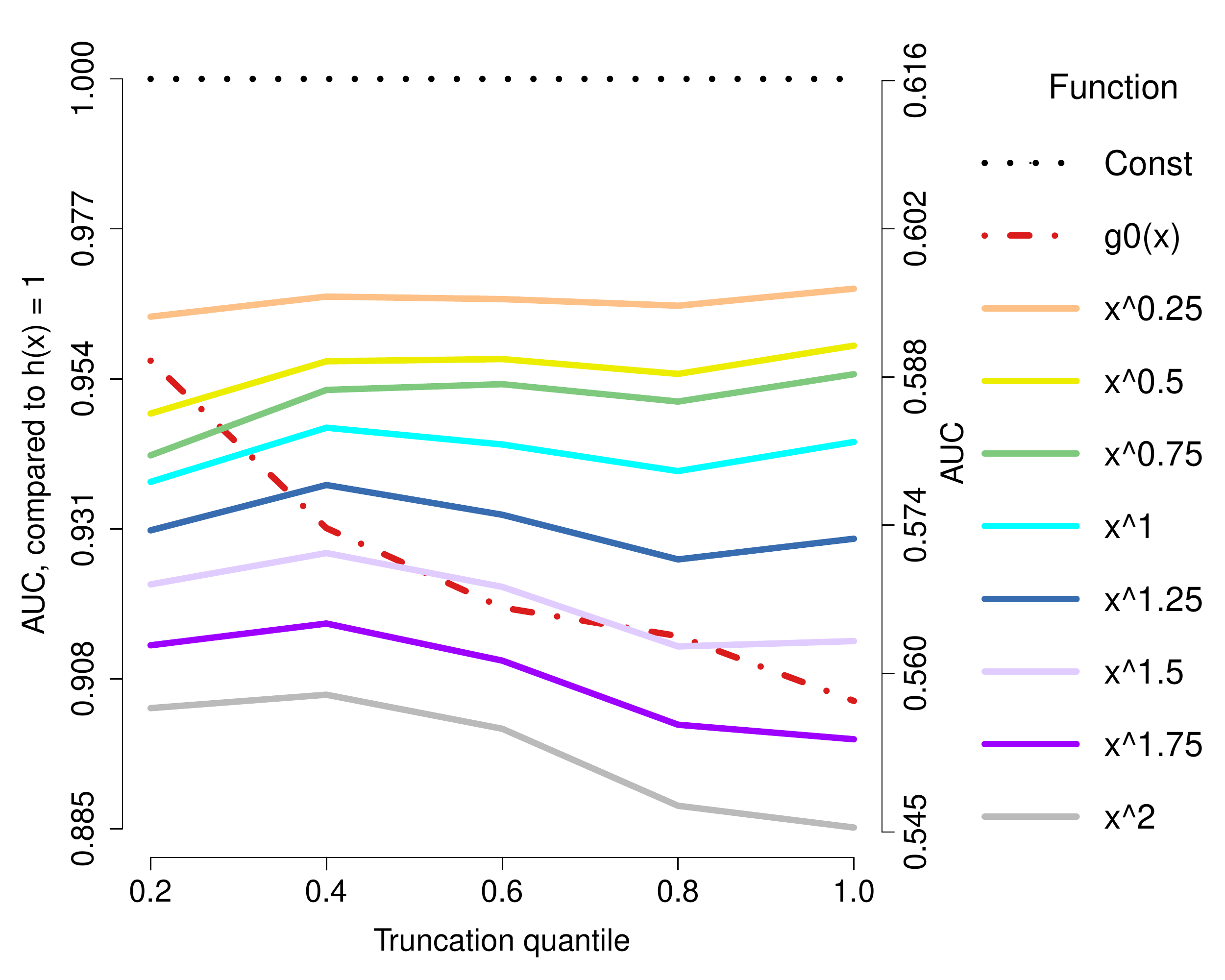}}\hspace{0.1in}}
\subfloat[$n=1000$, $\ell_2^{\complement}$-nn domain]
{\includegraphics[width=0.45\textwidth]{{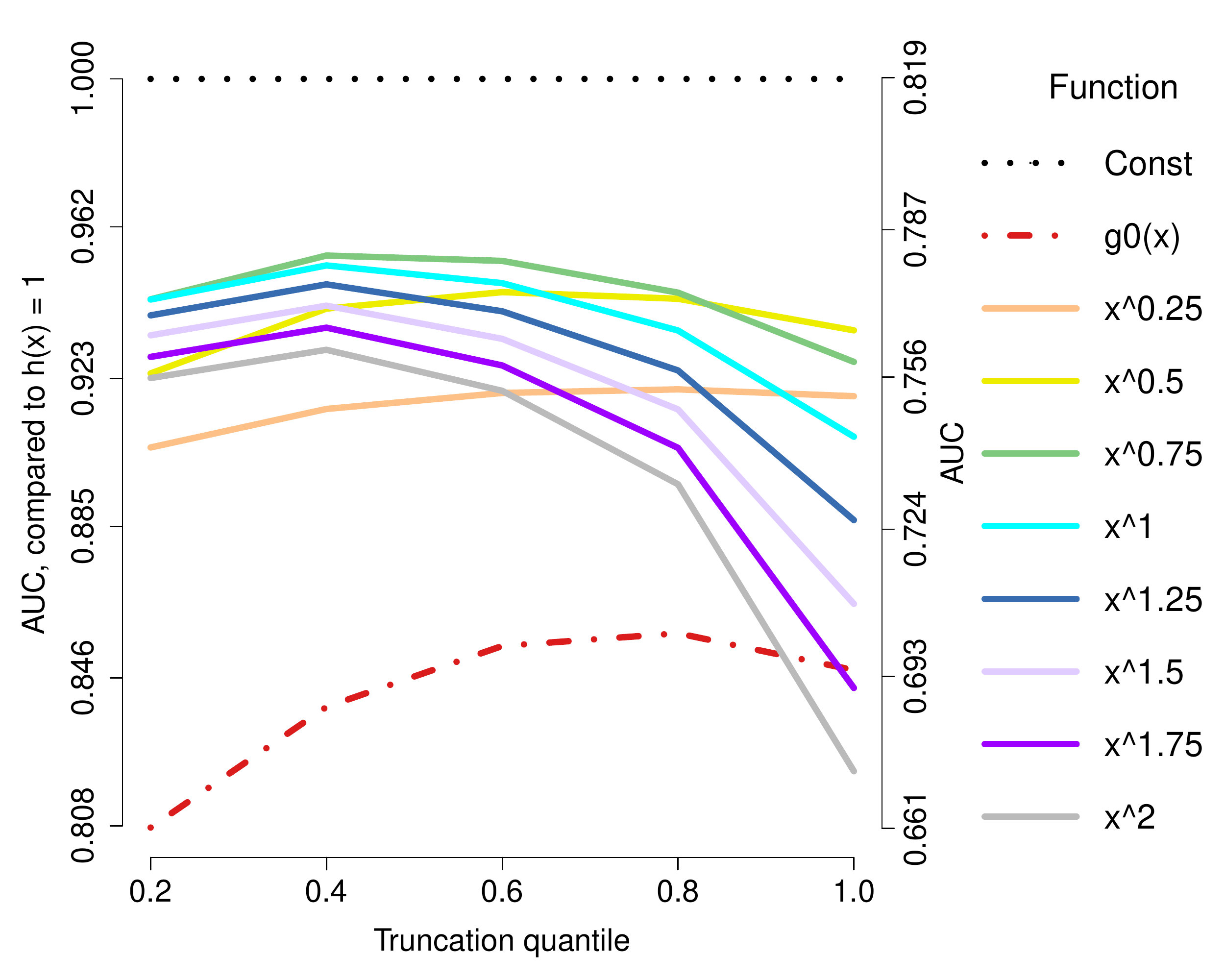}}\hspace{-0.02in}}
\\ \vspace{-0.1in}
\subfloat[$n=80$, unif-nn domain]
{\includegraphics[width=0.45\textwidth]{{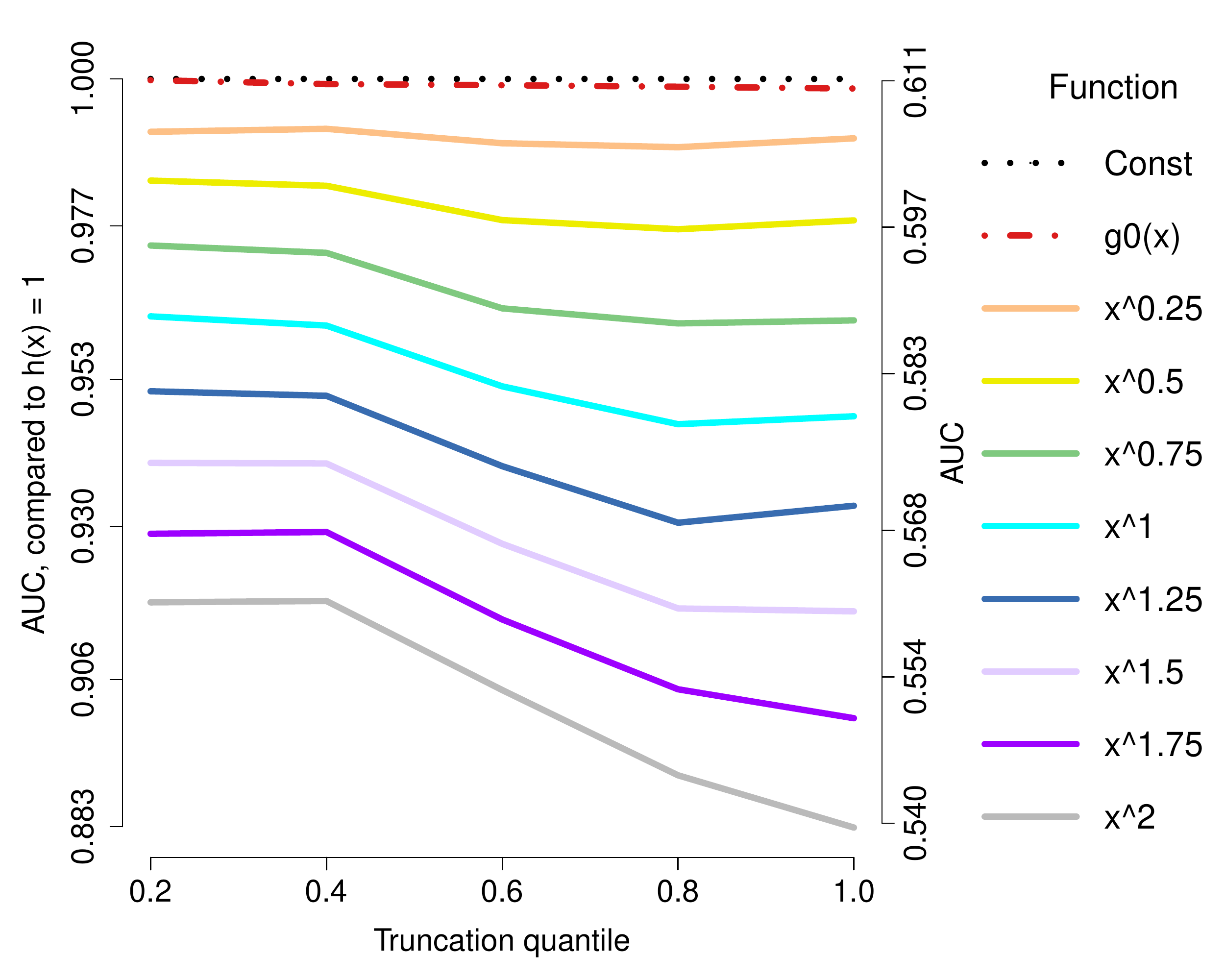}}\hspace{0.1in}}
\subfloat[$n=1000$, unif-nn domain]
{\includegraphics[width=0.45\textwidth]{{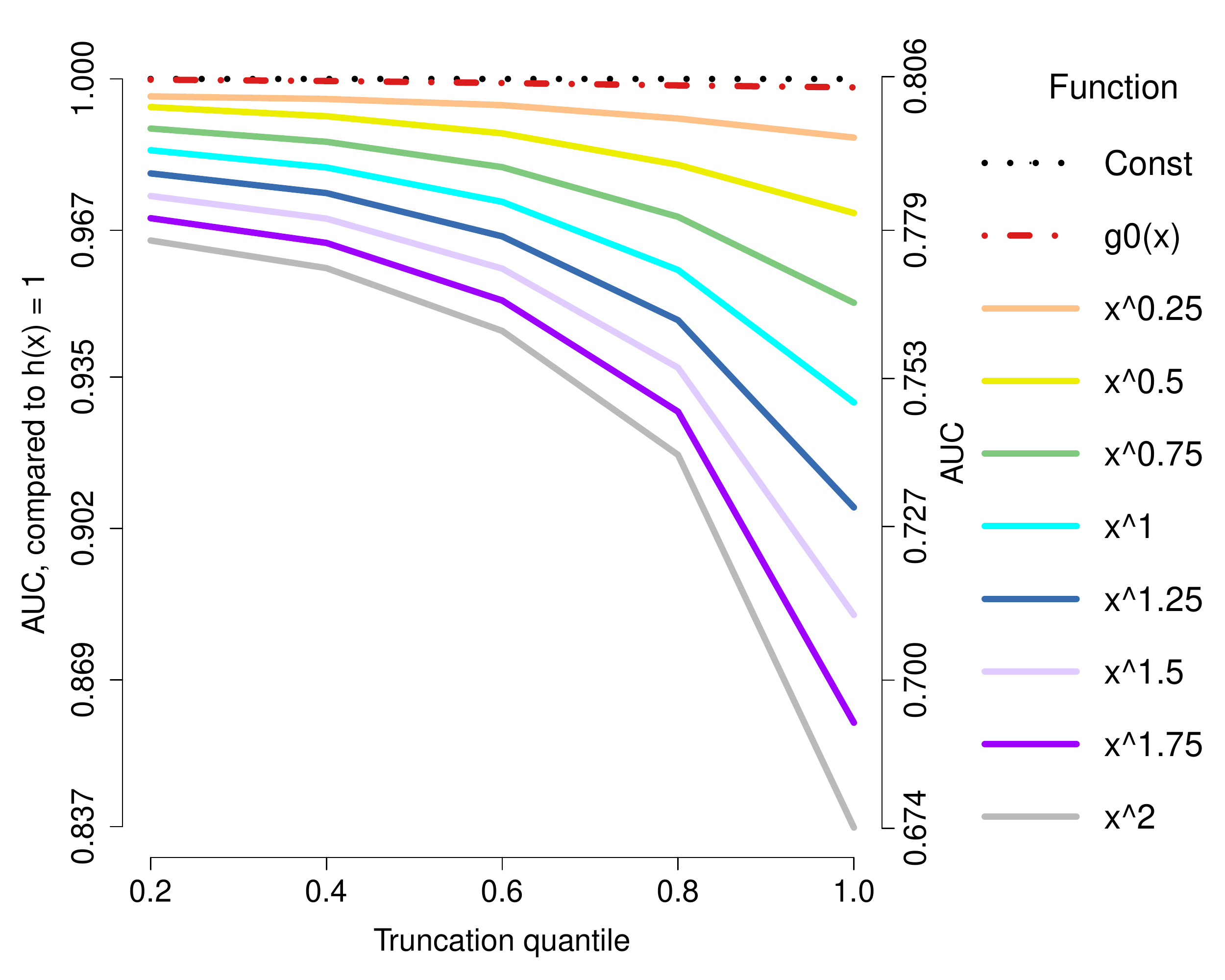}}\hspace{-0.02in}}
\caption{AUCs averaged over 50 trials for support recovery using generalized score matching for the $a=2$ models. Each curve represents either our extension to $g_0(\boldsymbol{x})$ from \citet{liu19} or a choice of power function $h(x)=x^c$.  The $x$ axes mark the probabilities $\pi$ that determine the truncation points $\boldsymbol{C}$ for the truncated component-wise distances. The colors are sorted by the power $c$.}
\label{plot_a_2}
\end{figure}

\begin{figure}[!htp]
\centering
\vspace{-0.5in}
\subfloat[$n=80$, $\ell_2$-nn domain]
{\includegraphics[width=0.45\textwidth]{{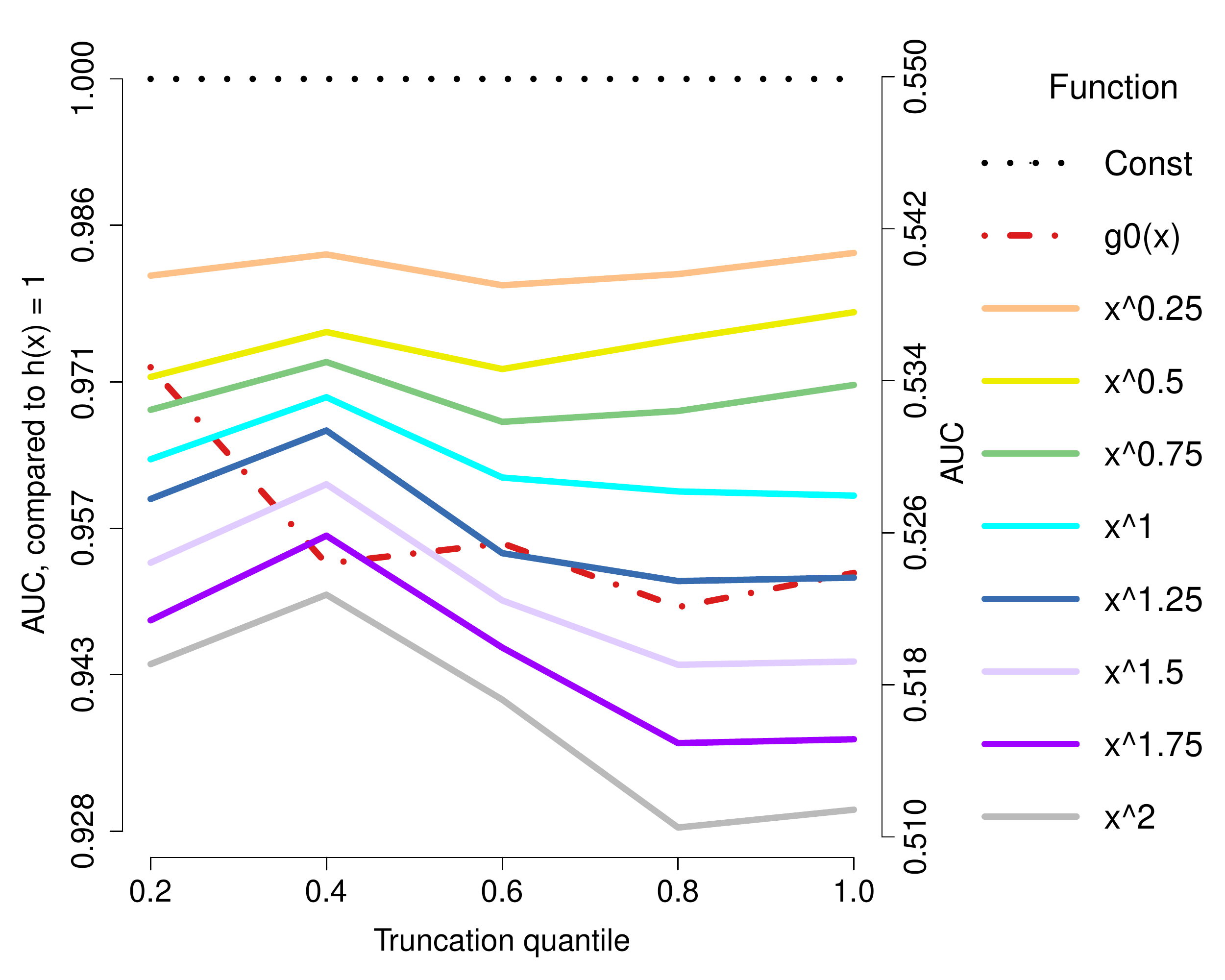}}\hspace{0.1in}}
\subfloat[$n=1000$, $\ell_2$-nn domain]
{\includegraphics[width=0.45\textwidth]{{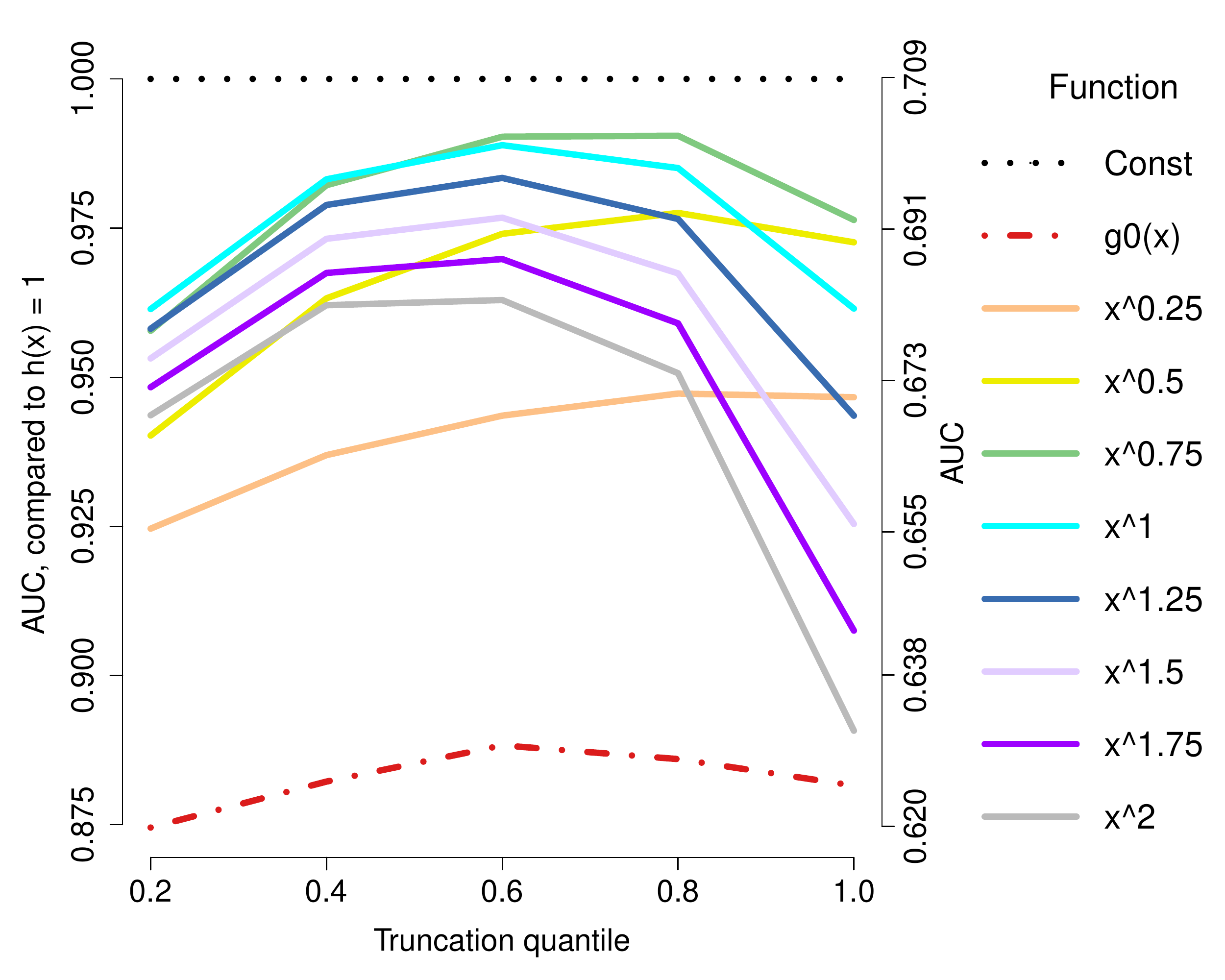}}\hspace{-0.02in}}
\\ \vspace{-0.1in}
\subfloat[$n=80$, $\ell_2^{\complement}$-nn domain]
{\includegraphics[width=0.45\textwidth]{{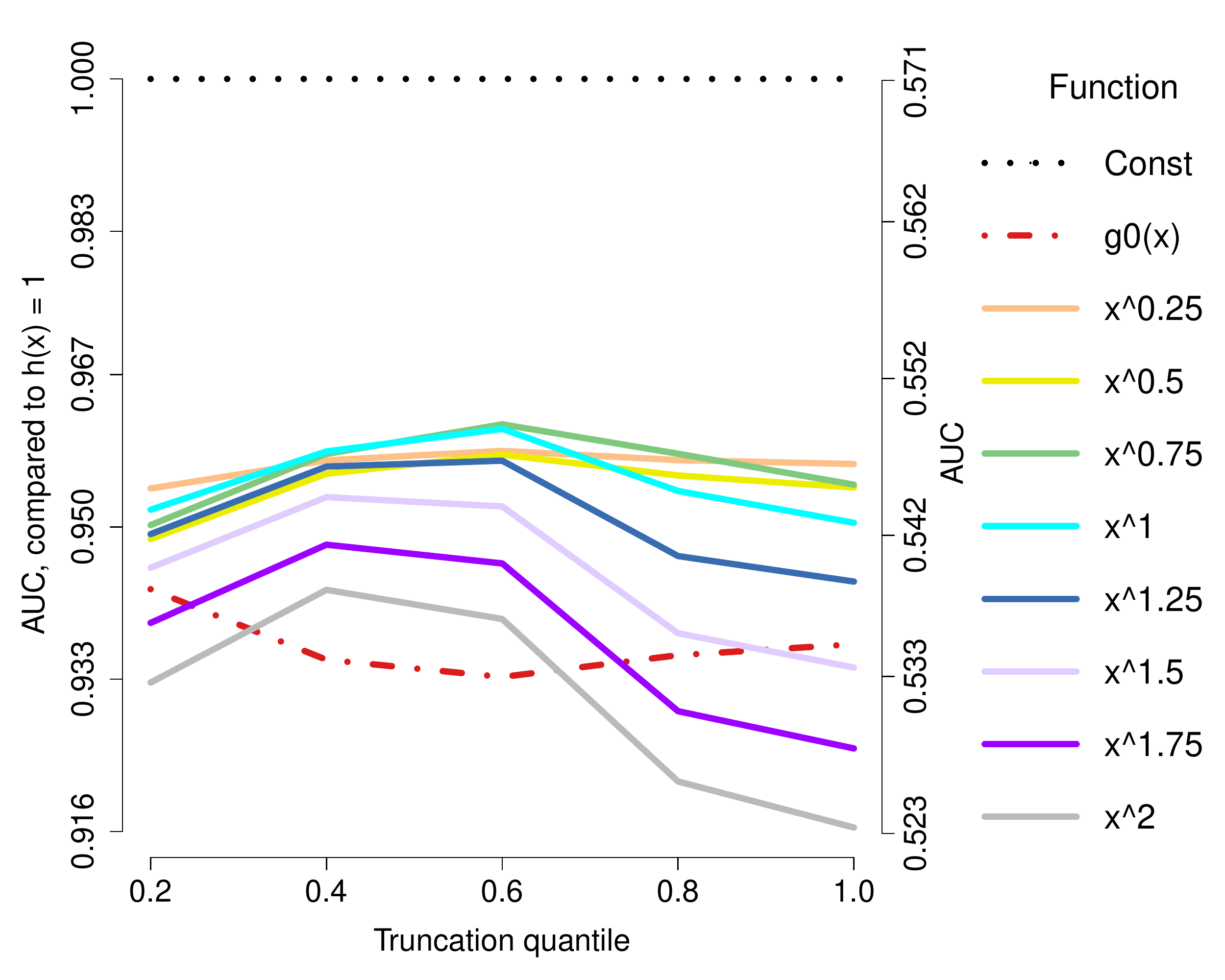}}\hspace{0.1in}}
\subfloat[$n=1000$, $\ell_2^{\complement}$-nn domain]
{\includegraphics[width=0.45\textwidth]{{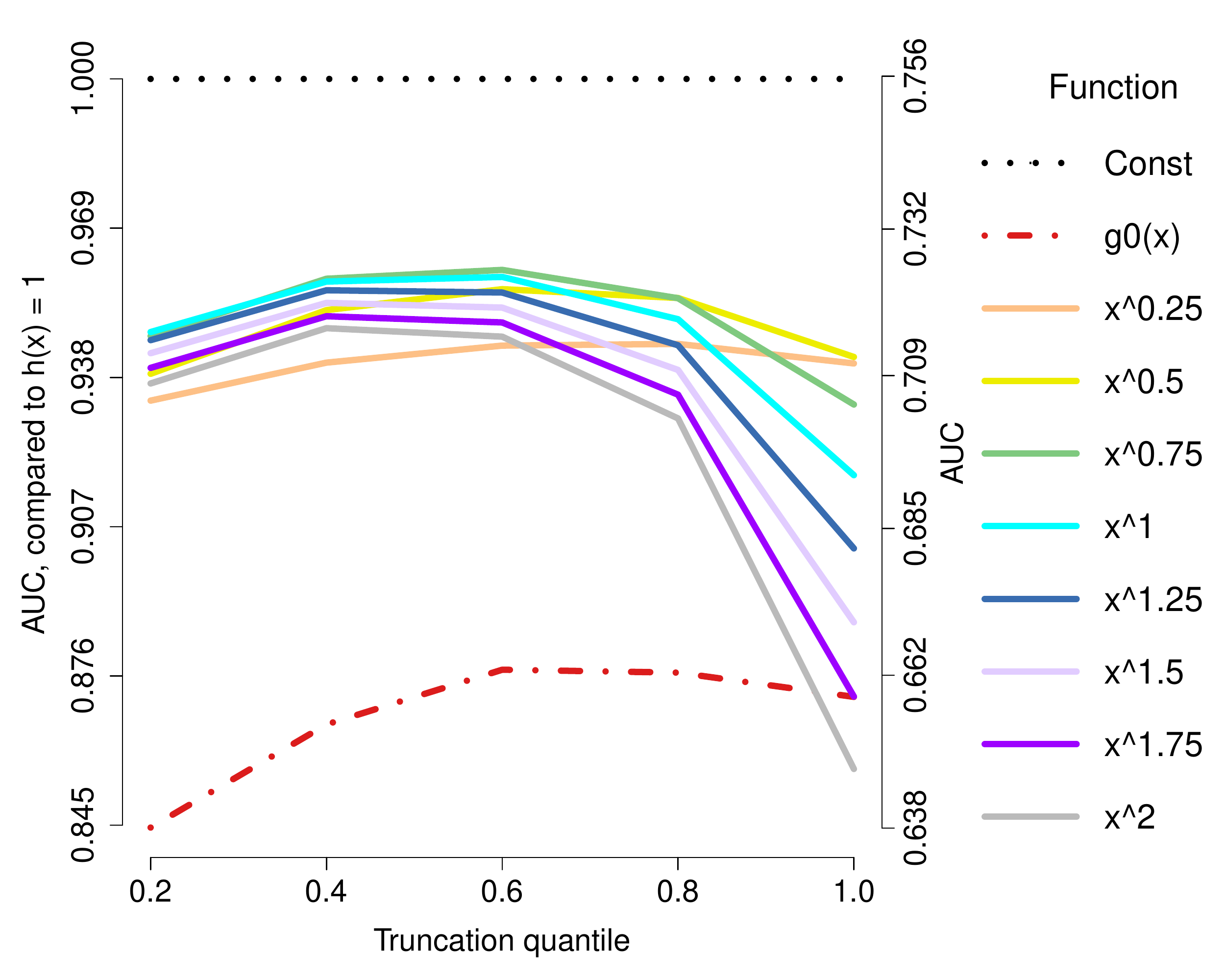}}\hspace{-0.02in}}
\\ \vspace{-0.1in}
\subfloat[$n=80$, unif-nn domain]
{\includegraphics[width=0.45\textwidth]{{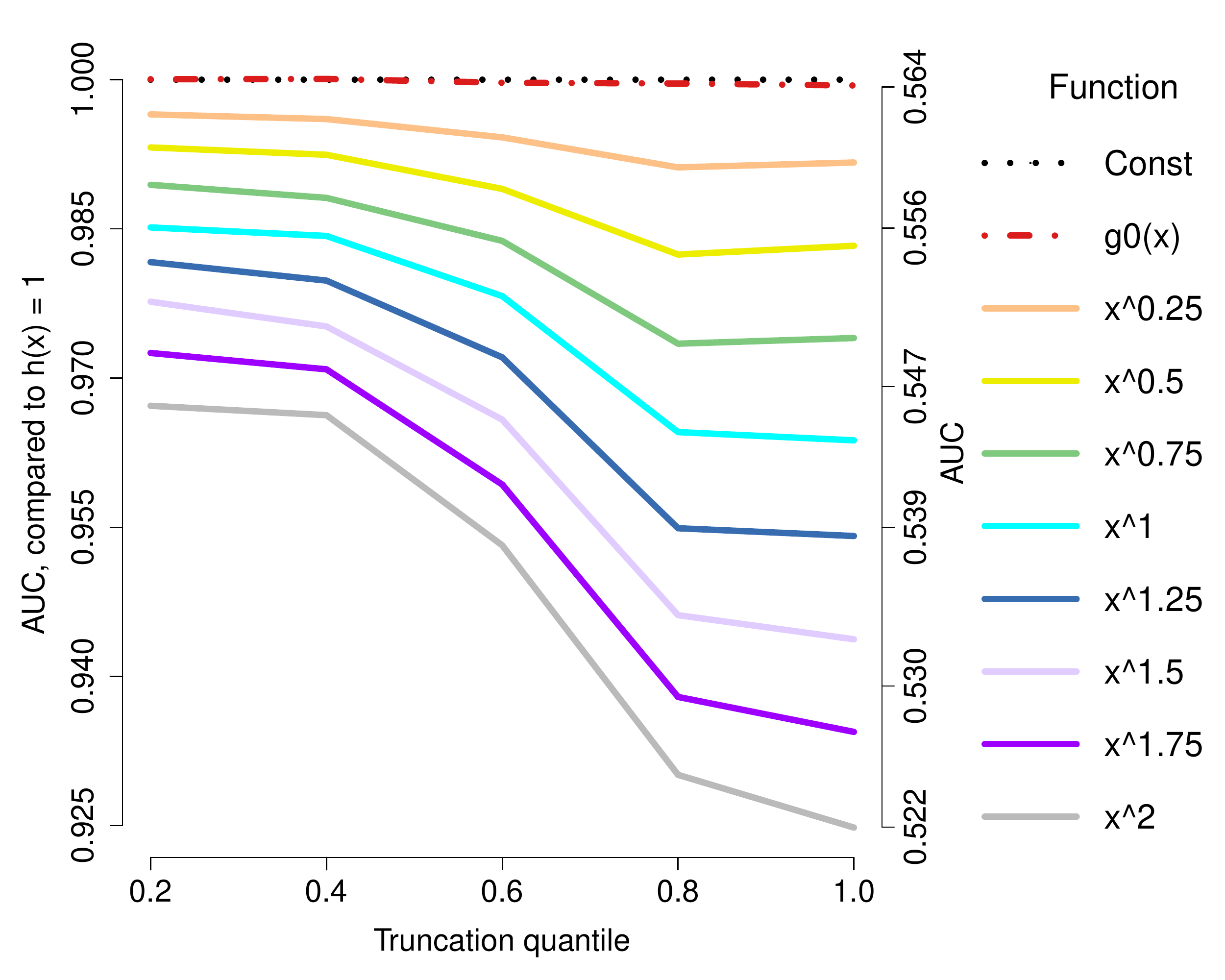}}\hspace{0.1in}}
\subfloat[$n=1000$, unif-nn domain]
{\includegraphics[width=0.45\textwidth]{{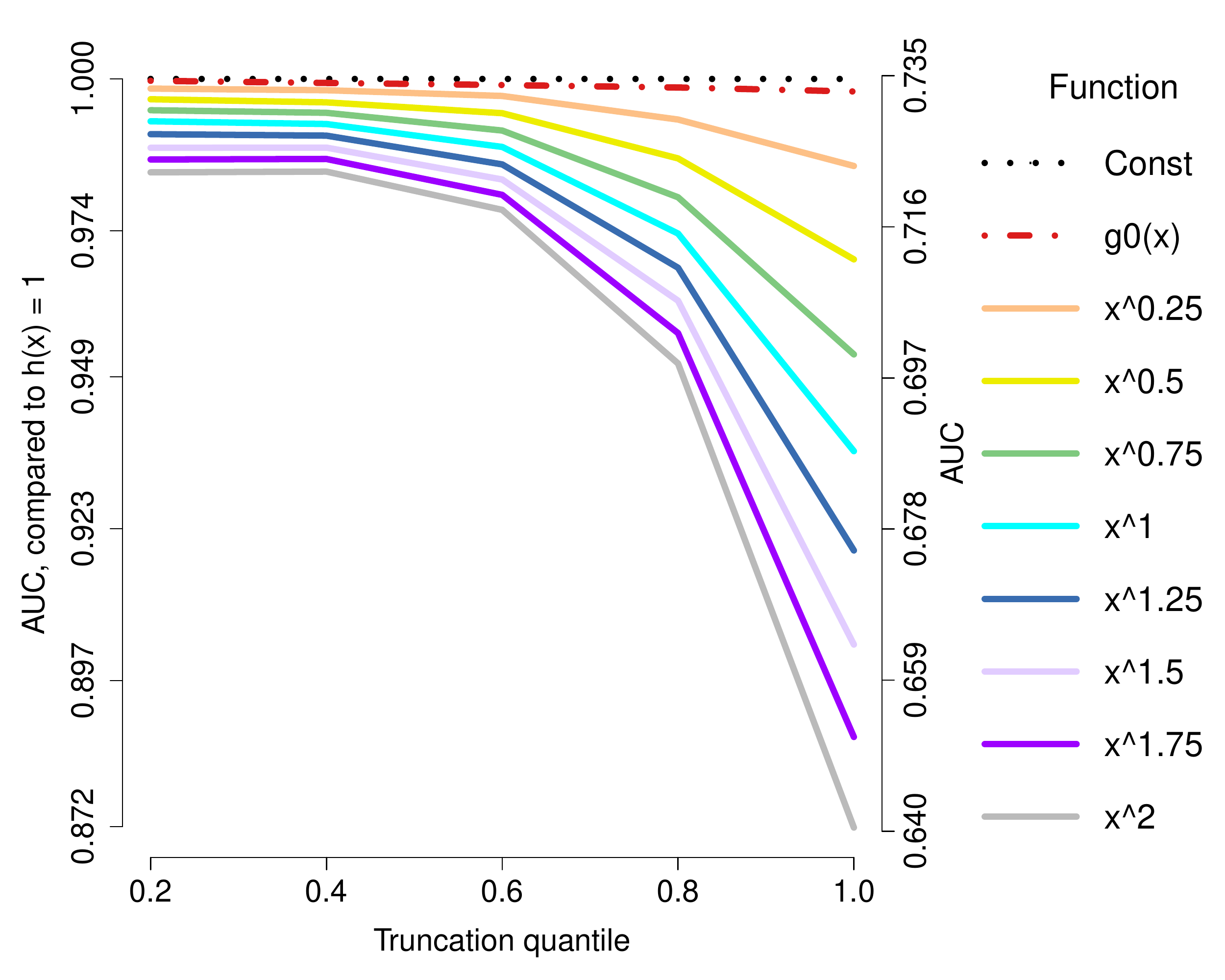}}\hspace{-0.02in}}
\caption{AUCs averaged over 50 trials for support recovery using generalized score matching for the $a=3$ models. Each curve represents either our extension to $g_0(\boldsymbol{x})$ from \citet{liu19} or a choice of power function $h(x)=x^c$.  The $x$ axes mark the probabilities $\pi$ that determine the truncation points $\boldsymbol{C}$ for the truncated component-wise distances. The colors are sorted by the power $c$.}\label{plot_a_3}
\end{figure}

\begin{figure}[t!]
\centering
\subfloat[Beta values]{\includegraphics[width=0.25\textwidth]{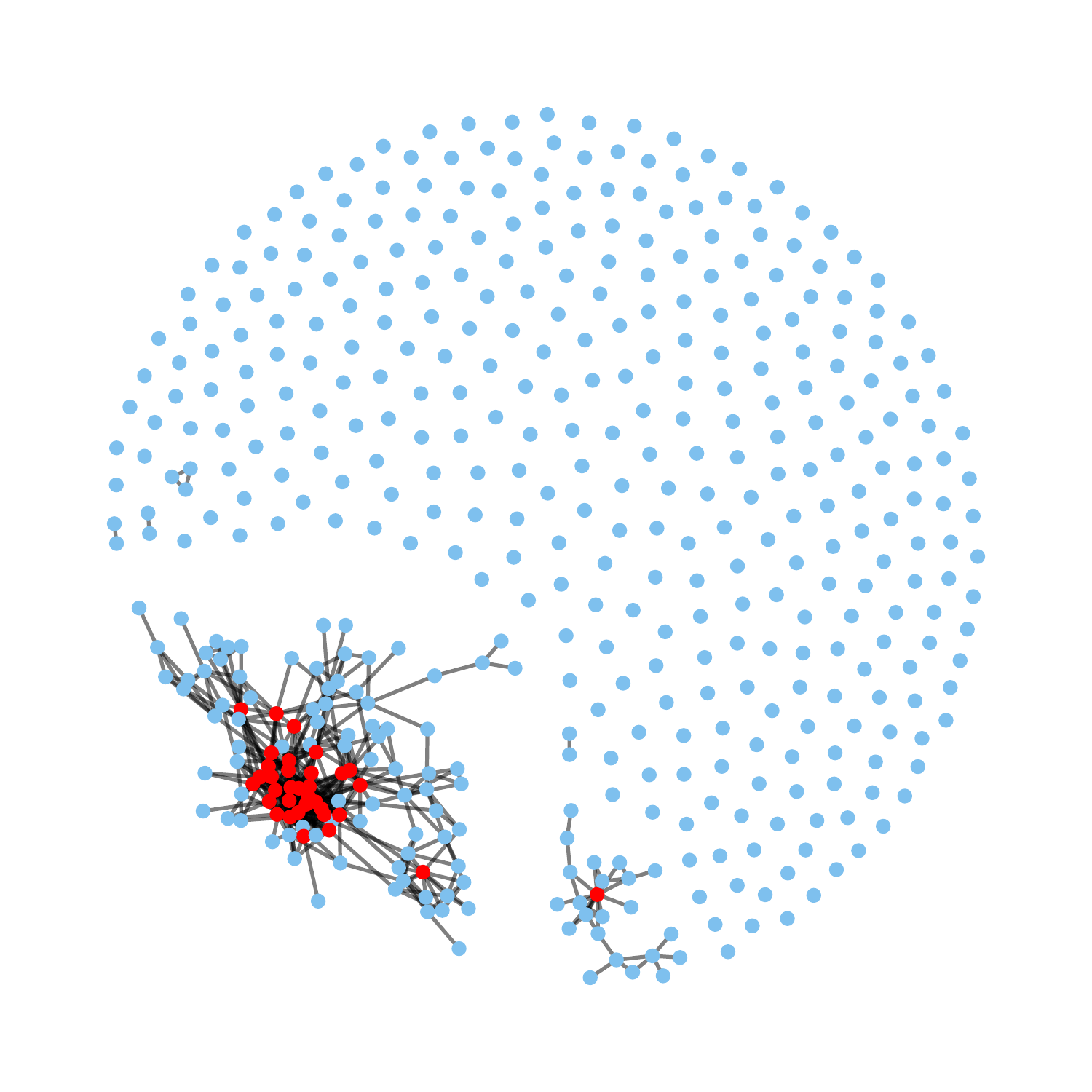}}
\subfloat[Common edges]{\includegraphics[width=0.25\textwidth]{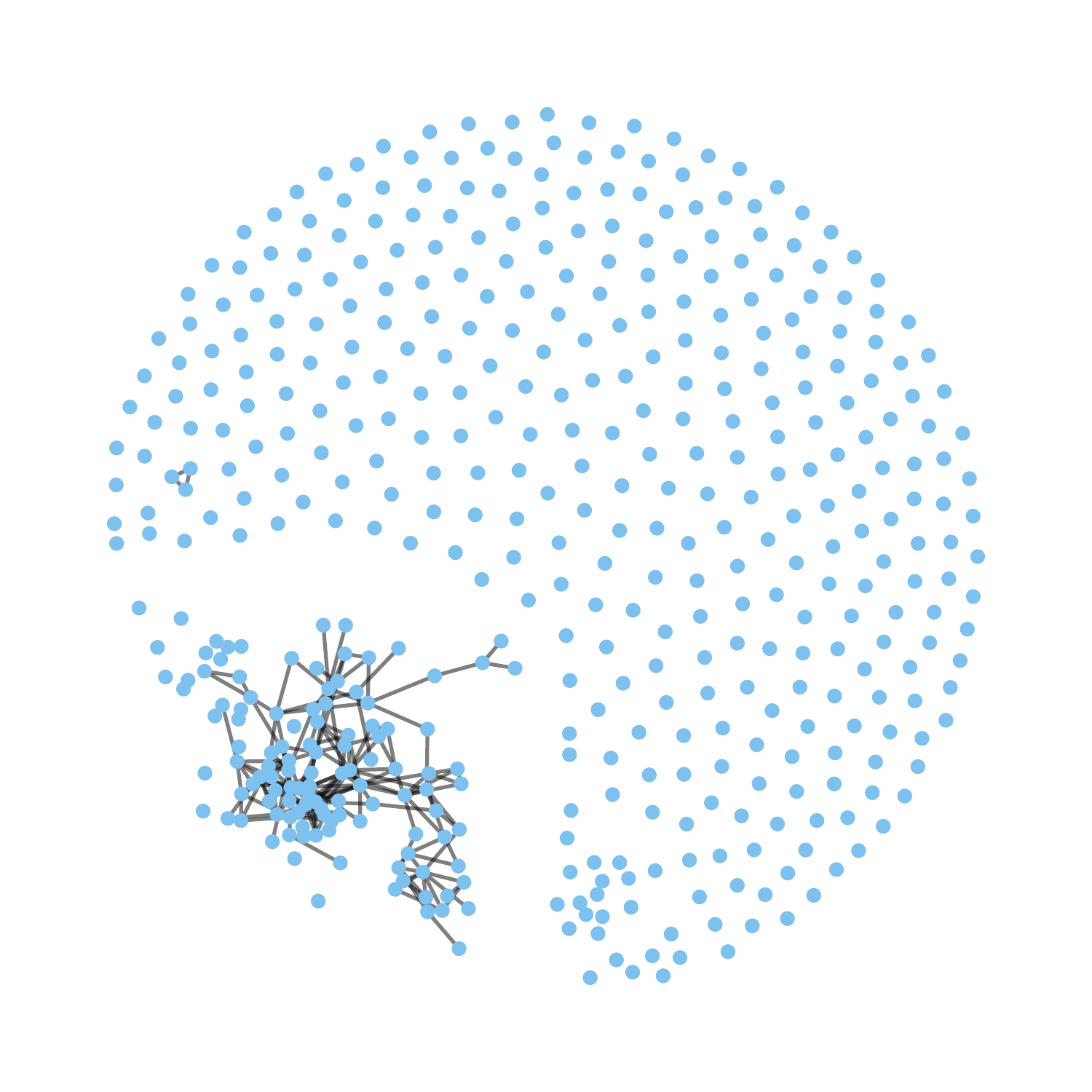}}
\subfloat[M values]{\includegraphics[width=0.25\textwidth]{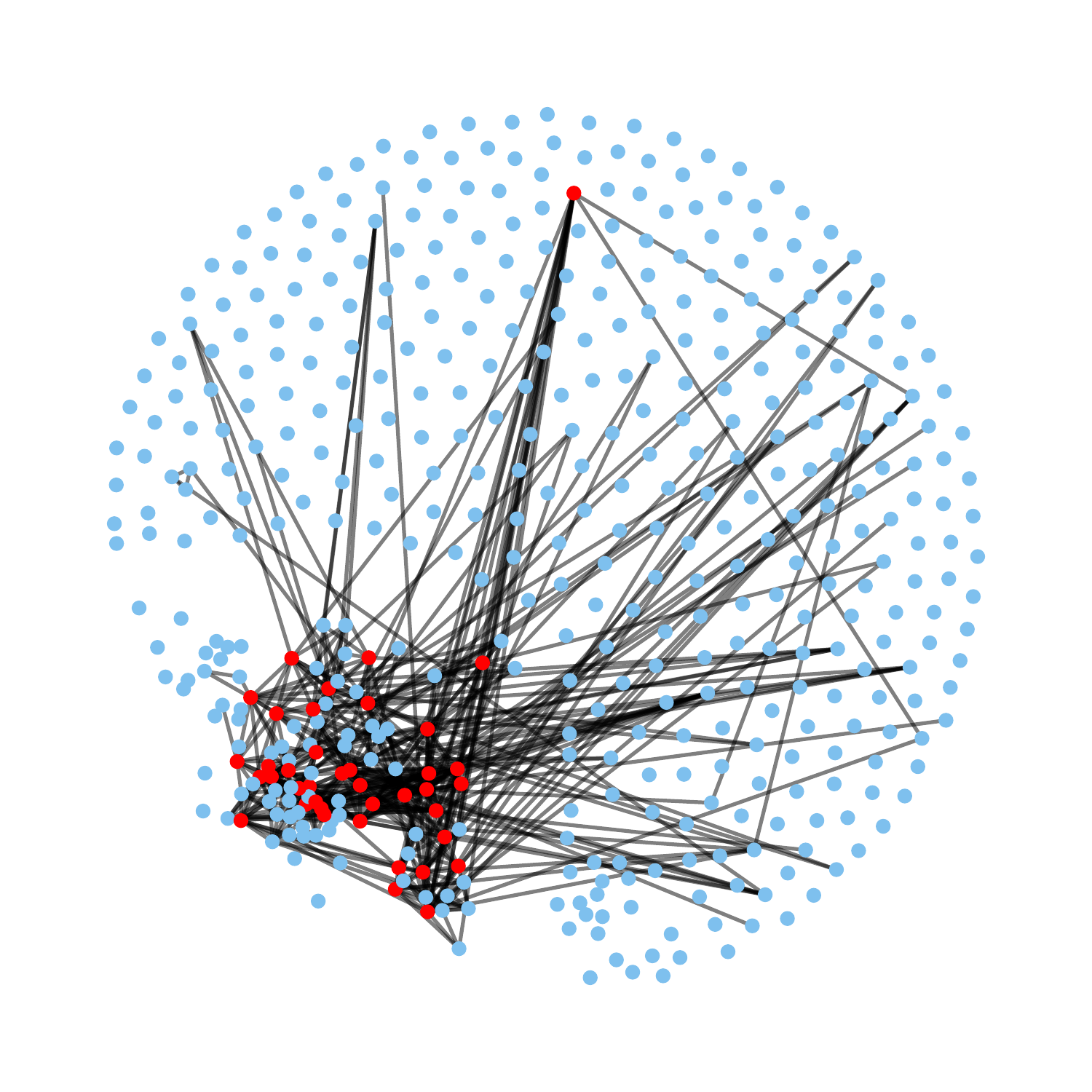}}
\caption{Graphs for CpG sites estimated by regularized generalized score matching estimator using Beta values (a) and M values (c), and their intersection graph (b). 
Isolated nodes are included and the layout is optimized for the graph for Beta values; red nodes have degree at least 10 (``hub nodes'').
}\label{data_sites_Beta_layout}
\centering
\subfloat[Beta values]{\includegraphics[width=0.25\textwidth]{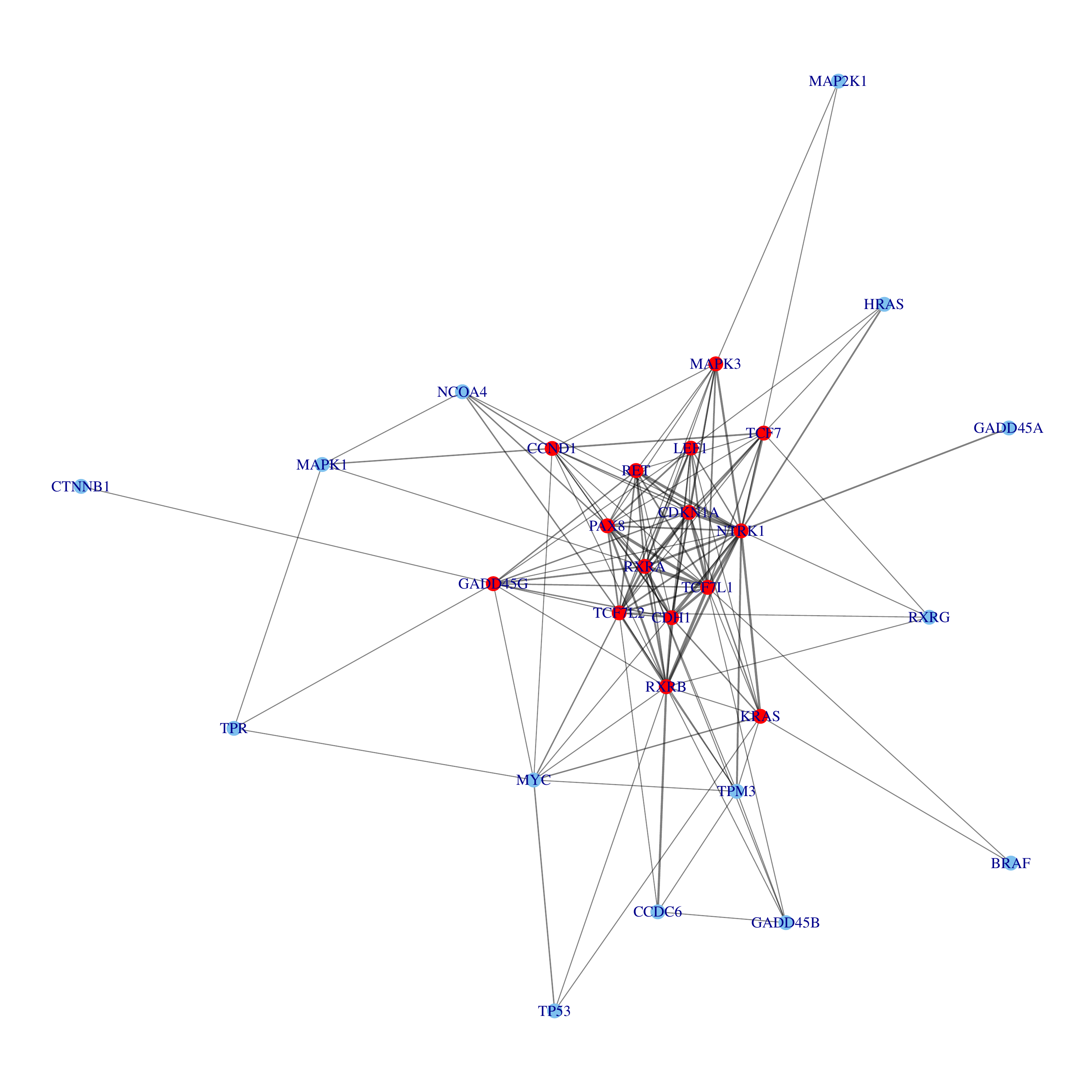}}
\subfloat[Common edges]{\includegraphics[width=0.25\textwidth]{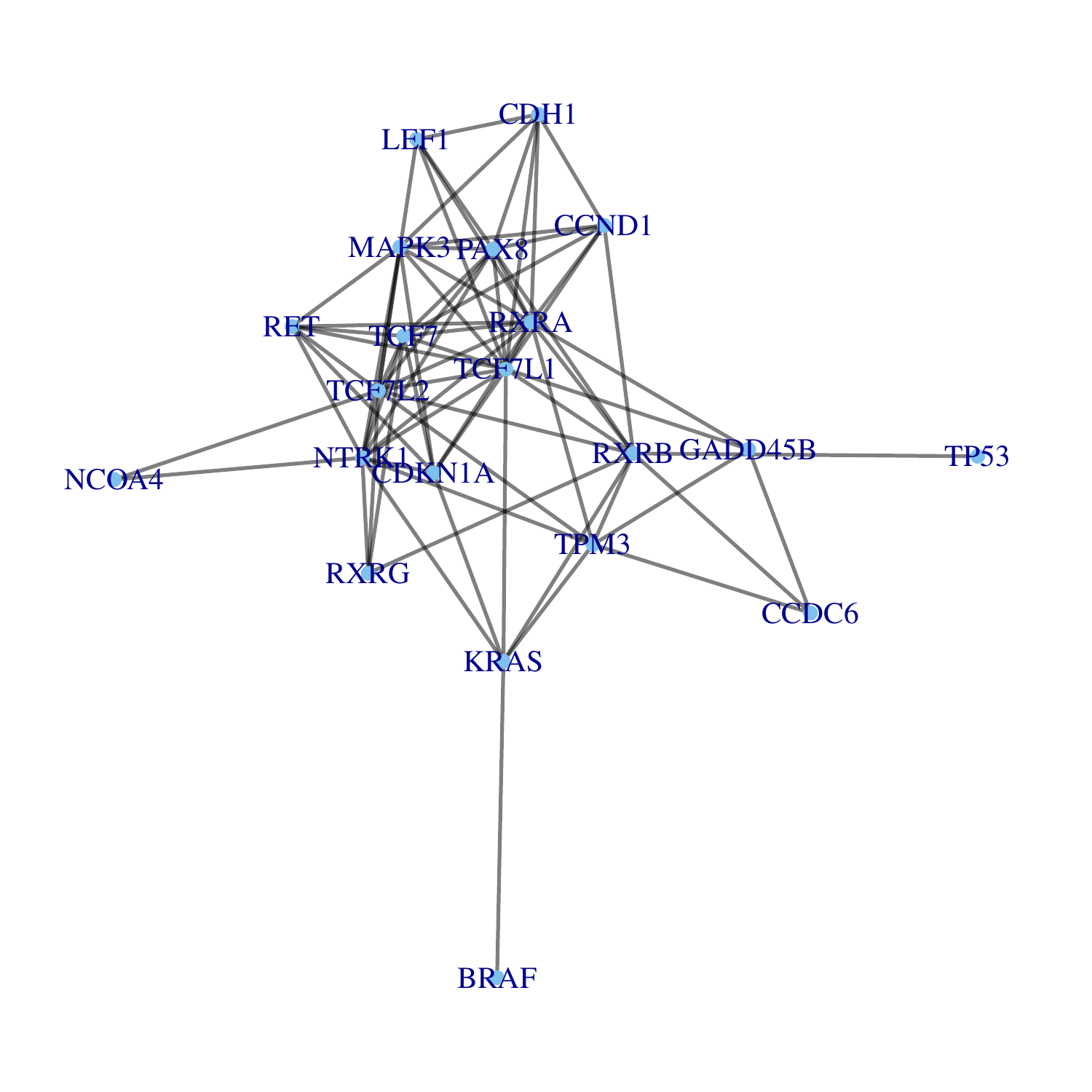}}
\subfloat[M values]{\includegraphics[width=0.25\textwidth]{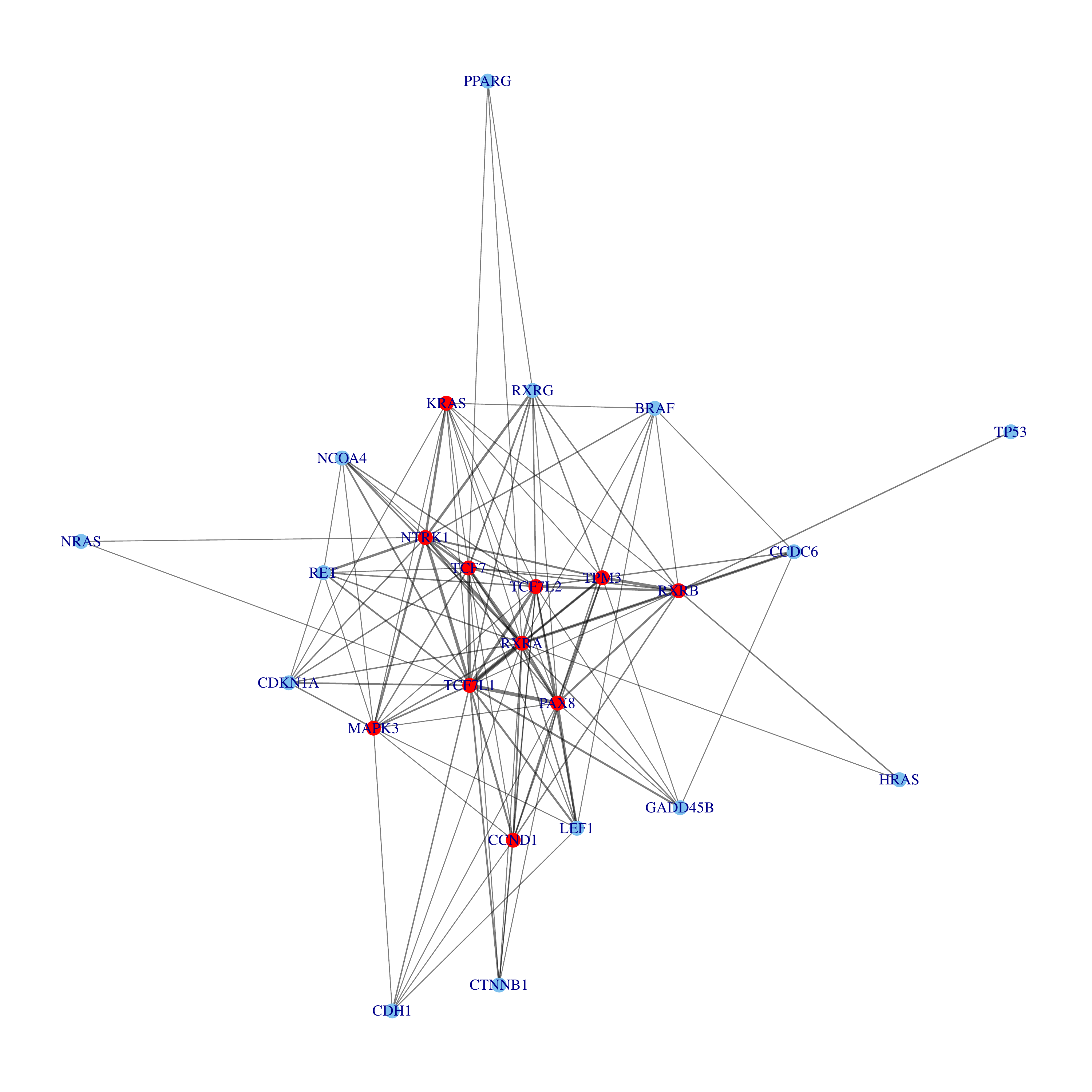}}\\
\subfloat[Beta values]{\includegraphics[width=0.25\textwidth]{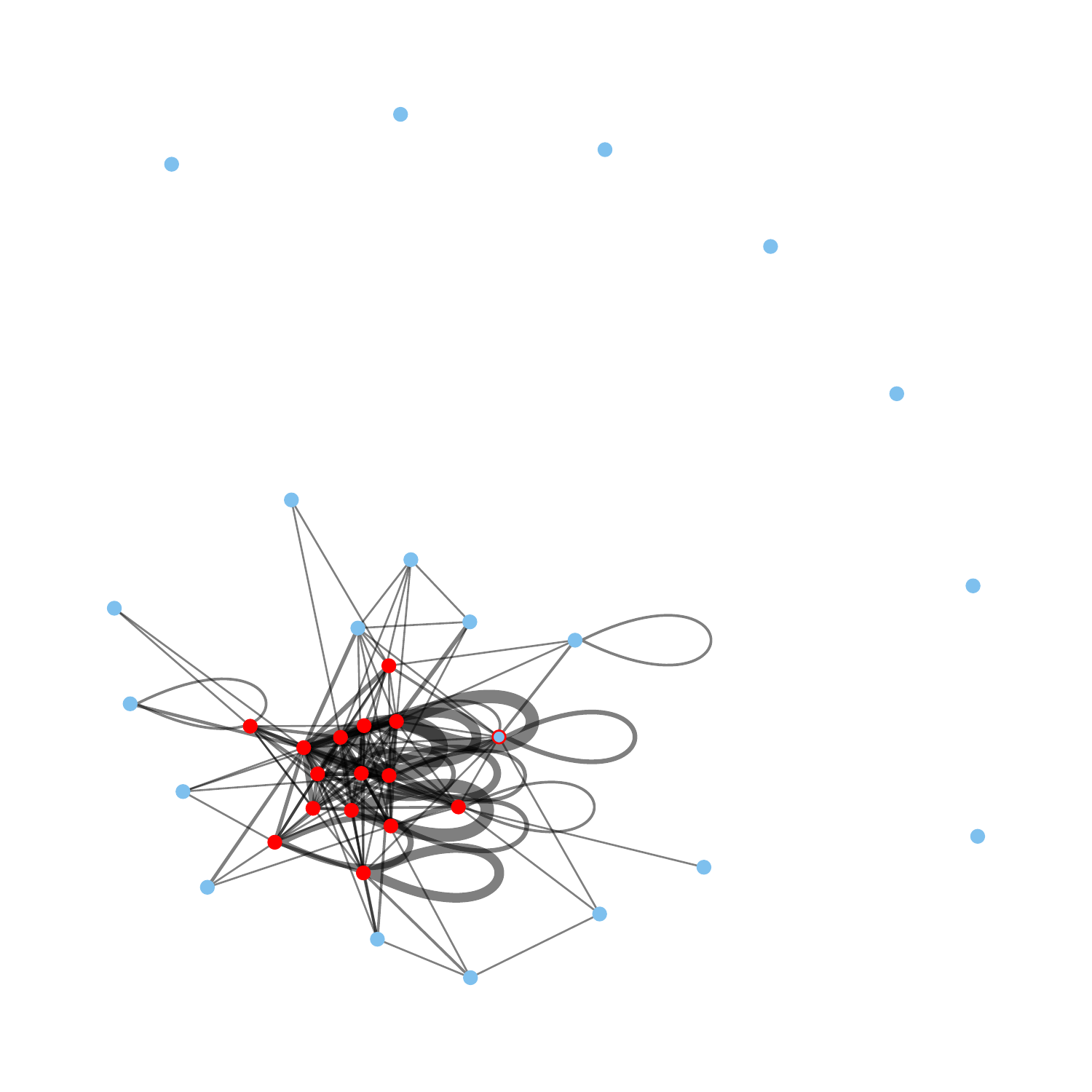}}
\subfloat[Common edges]{\includegraphics[width=0.25\textwidth]{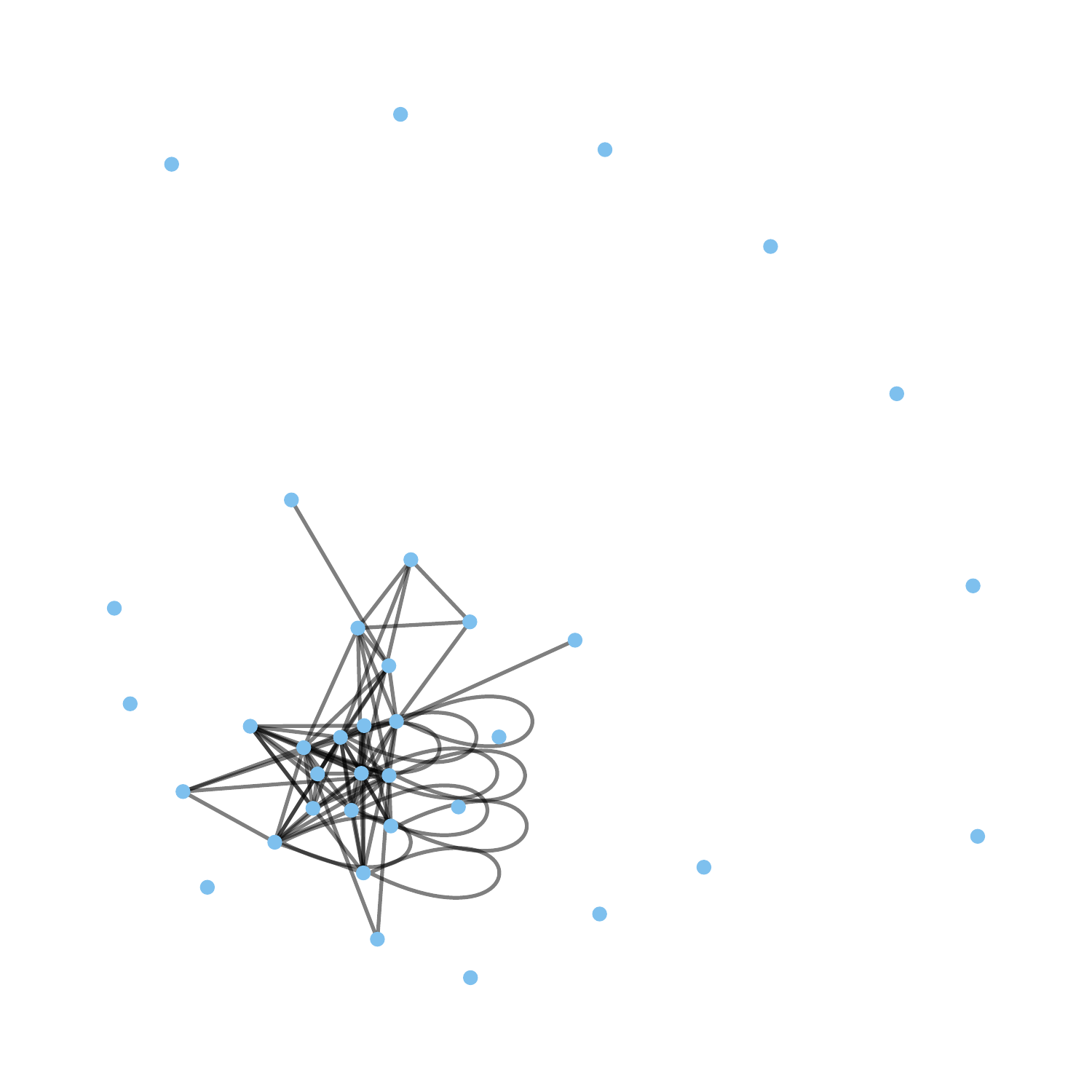}}
\subfloat[M values]{\includegraphics[width=0.25\textwidth]{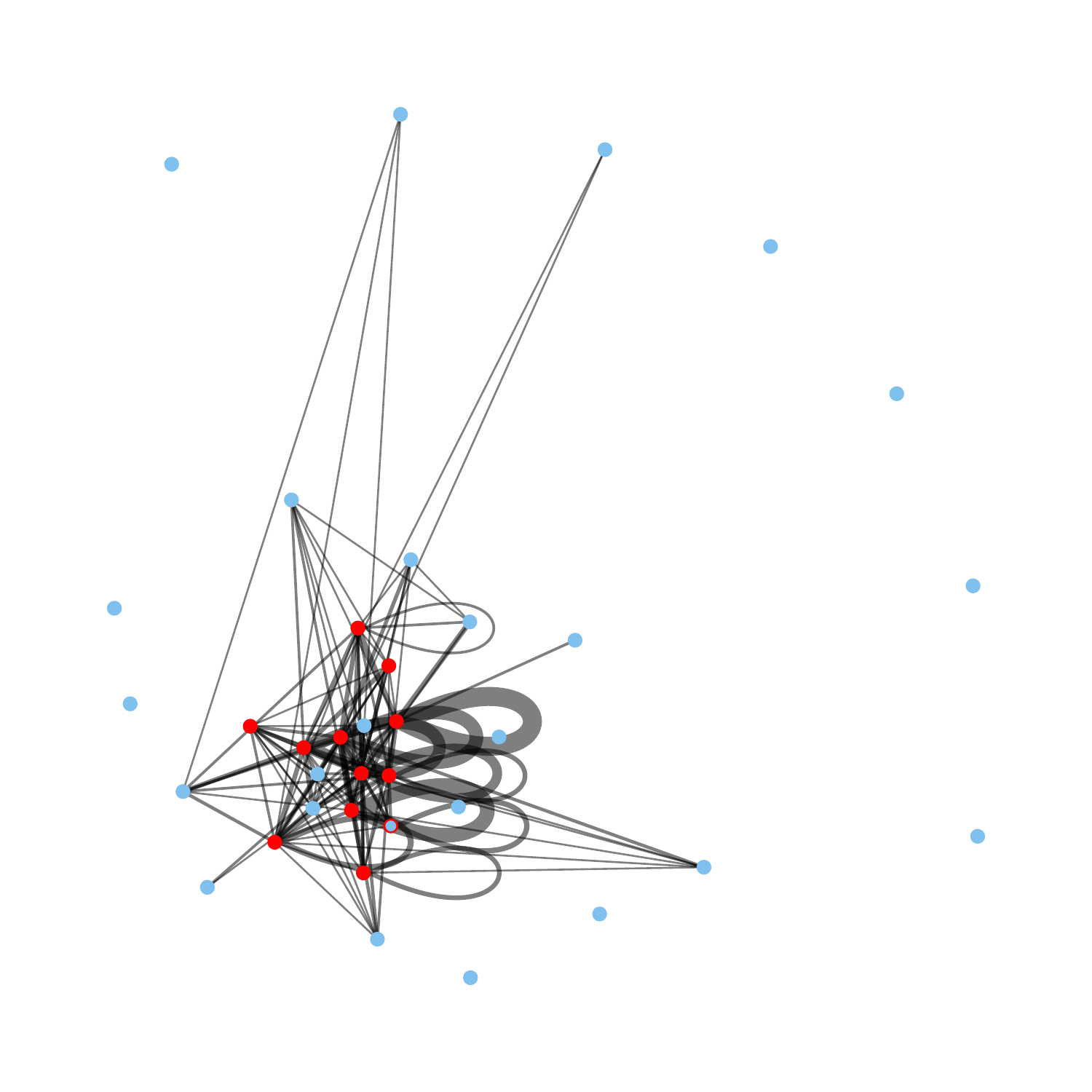}}
\caption{Graphs in Figure \ref{data_sites} (equivalently Figure \ref{data_sites_Beta_layout}) aggregated by the genes associated with the CpG sites, with Beta values (a, d) and M values (c, f), and their intersection graph (b, e). 
Red points indicate genes that are connected to at least 10 other genes in the case of Figure \ref{data_genes}
}\label{data_genes}
\end{figure}

\newpage

\section{Proofs}\label{Proofs}

Before proving Lemma \ref{lem_loss} we first prove the following lemma.
\begin{lemma}\label{lem_abs_con}
Suppose $h_1,\ldots,h_m$ are absolutely continuous in any bounded sub-interval of $\mathbb{R}_+$.  Then for any $j=1,\ldots,m$ and any $\boldsymbol{x}_{-j}\in \mathfrak{S}_{-j,\mathfrak{D}}$, $(h_j\circ\varphi_j)$ is absolutely continuous in $x_j$ in any bounded sub-interval of $\mathfrak{C}_{j,\mathfrak{D}}(\boldsymbol{x}_{-j})$.
\end{lemma}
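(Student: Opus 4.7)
The plan is to reduce the claim to an almost trivial piecewise-linear analysis of $\varphi_j$ as a function of $x_j$ alone, then paste finitely many absolutely continuous pieces together. First, any bounded sub-interval $[r,s]\subseteq\mathfrak{C}_{j,\mathfrak{D}}(\boldsymbol{x}_{-j})$ is connected, so it must be contained in a single connected component $I_k(\boldsymbol{x}_{-j})$ with endpoints $a_k\le b_k$ (allowing $\pm\infty$), since the components are pairwise disjoint. Fixing $\boldsymbol{x}_{-j}$ and viewing $\varphi_j$ as a function of $x_j$ alone, the four cases in the definition of $\varphi_{C_j,\mathfrak{D},j}$ all reduce to $\varphi_j(x_j)=\min\{C_j,\,x_j-a_k,\,b_k-x_j\}$, where the second (resp.\ third) argument is simply dropped when $a_k=-\infty$ (resp.\ $b_k=+\infty$).

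Next, I would observe that on $[r,s]$ this function is piecewise linear with at most three pieces of constant slope in $\{+1,0,-1\}$: namely there exist $r\le r_1\le r_2\le s$ (some possibly equal, making some pieces degenerate) such that $\varphi_j(x_j)=x_j-a_k$ on $[r,r_1]$, $\varphi_j(x_j)\equiv C_j$ on $[r_1,r_2]$, and $\varphi_j(x_j)=b_k-x_j$ on $[r_2,s]$. On each of these three sub-intervals, $(h_j\circ\varphi_j)$ is obtained from $h_j$ by either a constant, a translation $x_j\mapsto h_j(x_j-a_k)$, or a reflection-translation $x_j\mapsto h_j(b_k-x_j)$. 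In every case the argument stays in the bounded set $[0,C_j]\subset\mathbb{R}_+$, so absolute continuity of $h_j$ on this bounded sub-interval transfers directly (affine changes of variable preserve absolute continuity), giving absolute continuity of $(h_j\circ\varphi_j)$ on each of the three pieces.

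Finally, I would conclude by patching: $(h_j\circ\varphi_j)$ is continuous on $[r,s]$ (since $\varphi_j$ is $1$-Lipschitz and $h_j$ is continuous on $[0,C_j]$, as absolute continuity on bounded sub-intervals implies continuity), and is absolutely continuous on each sub-interval of the finite partition $\{[r,r_1],[r_1,r_2],[r_2,s]\}$. Standard concatenation then yields absolute continuity on all of $[r,s]$.

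I do not anticipate a genuine obstacle: the usual subtlety that compositions of absolutely continuous functions need not be absolutely continuous is entirely bypassed because on each piece the inner map is affine with slope $\pm 1$ or $0$, not merely absolutely continuous. The only thing to be slightly careful about is handling half-open or open components $I_k$ and the possibility of one or two of the three linear pieces being absent when an endpoint is $\pm\infty$, but the four-case definition of $\varphi_j$ makes this purely bookkeeping.
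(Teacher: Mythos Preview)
Your proof is correct and follows essentially the same approach as the paper: reduce to a single connected component $I_k$, decompose $\varphi_j$ into at most three affine pieces (slopes $+1$, $0$, $-1$), and verify absolute continuity on each. The paper's version differs only cosmetically, handling the bounded-endpoints case by explicitly writing out the a.e.\ derivative and checking the fundamental theorem of calculus by direct integration, whereas you invoke preservation of absolute continuity under affine reparametrization and finite concatenation; both arguments rest on the same piecewise-linear structure of $\varphi_j$.
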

\begin{proof}[Proof of Lemma \ref{lem_abs_con}]
In the proof we drop the dependency on $\boldsymbol{x}_{-j}$ in notation. By assumption, under Equation \ref{eq_C_union} any bounded sub-interval $[a,b]$ of $\mathfrak{C}_{j,\mathfrak{D}}(\boldsymbol{x}_{-j})$ must be a sub-interval of $[a_{k,j},b_{k,j}]$ for some $k$ (for simplicity we do not differentiate among $[a,b]$, $(a,b]$, $[a,b)$ and $(a,b)$ here).
\begin{enumerate}[(1)]
\item If $a_{k,j}>-\infty$ and $b_{k,j}<+\infty$, denote $C_0\equiv\min\{C_j,(b_{k,j}-a_{k,j})/2\}$ and rewrite
\begin{align*}
&\,(h_j\circ\varphi_j)(\boldsymbol{x})\\
=&\,h_j\left(\min(C_j,x_j-a_{k,j},b_{k,j}-x_j)\right)\\
=&\,h_j(x_j-a_{k,j})\mathds{1}_{x_j\in[a_{k,j},a_{k,j}+C_0]}+h_j(C_j)\mathds{1}_{x_j\in[a_{k,j}+C_0,b_{k,j}-C_0]}+h_j(b_{k,j}-x_j)\mathds{1}_{x_j\in[b_{k,j}-C_0,b_{k,j}]}.
\end{align*}
Then by absolute continuity of $h_j$ in $[a_{k,j},b_{k,j}]$ it is apparent that $(h_j\circ\varphi_j)$ is differentiable in $x_j$ a.e.~with partial derivative
\[h_j'(x_j-a_{k,j})\mathds{1}_{x_j\in[a_{k,j},a_{k,j}+C_0]}-h_j'(b_{k,j}-x_j)\mathds{1}_{x_j\in[b_{k,j}-C_0,b_{k,j}]}.\]
Then by the absolute continuity of $h_j$ again, for $x_j\in[a_{k,j},b_{k,j}]$,
\begin{align*}
&\,\int_{a_{k,j}}^{x_j}\partial_j (h_j\circ\varphi_j)(t_j;\boldsymbol{x}_{-j})\d t_j\\
=&\,h_j(x_j-a_{k,j})\mathds{1}_{x_j\in[a_{k,j},a_{k,j}+C_0]}+h_j(C_j)\mathds{1}_{x_j\in[a_{k,j}+C_0,b_{k,j}-C_0]}+h_j(b_{k,j}-x_j)\mathds{1}_{x_j\in[b_{k,j}-C_0,b_{k,j}]}\\
=&\,(h_j\circ\varphi_j)(\boldsymbol{x}),
\end{align*}
which proves that $(h_j\circ\varphi_j)(\boldsymbol{x})$ is absolutely continuous in $x_j$ in $[a_{k,j},b_{k,j}]$, and hence in $[a,b]\subset[a_{k,j},b_{k,j}]$.

\item If $a_{k,j}>-\infty$ and $b_{k,j}=+\infty$, on $[a,b]$ $(h_j\circ\varphi_j)(\boldsymbol{x})=h_j(\min(C_j,x_j-a_{k,j}))$ is an absolutely continuous function in a linear function of $x_j$ truncated above by $C_j$, and is thus trivially absolutely continuous in $[a,b]$. 

\item If $a_{k,j}=-\infty$ and $b_{k,j}<+\infty$, on $[a,b]$ $(h_j\circ\varphi_j)(\boldsymbol{x})=h_j(\min(C_j,b_{k,j}-x_j))$ is an absolutely continuous function in a linear function of $x_j$ truncated above by $C_j$, and is thus trivially absolutely continuous in $[a,b]$. 

\item If $a_{k,j}=-\infty$ and $b_{k,j}=+\infty$, $(h_j\circ\varphi_j)(\boldsymbol{x})=h_j(C_j)$ is constant and hence trivially absolutely continuous in $[a,b]$.
\end{enumerate}
\end{proof}

\begin{proof}[Proof of Lemma \ref{lem_loss}]
By simple manipulation
\begin{align}
J_{\boldsymbol{h},\boldsymbol{C},\mathfrak{D}}(p)&\equiv\frac{1}{2}\int_{\mathfrak{D}}p_0(\boldsymbol{x})\left\|\nabla\log p(\boldsymbol{x})\odot \left(\boldsymbol{h}\circ\boldsymbol{\varphi}\right)^{1/2}(\boldsymbol{x})-\nabla\log p_0(\boldsymbol{x})\odot \left(\boldsymbol{h}\circ\boldsymbol{\varphi}\right)^{1/2}(\boldsymbol{x})\right\|_2^2\d\boldsymbol{x}\\
&=\frac{1}{2}\sum_{j=1}^m\int_{\mathfrak{D}} p_0(\boldsymbol{x})(h_j\circ\varphi_j)(\boldsymbol{x})\left(\partial_j \log p_0(\boldsymbol{x})-\partial_j \log p(\boldsymbol{x})\right)^2\d\boldsymbol{x}\nonumber\\
&=\frac{1}{2}\sum_{j=1}^m\int_{\mathfrak{D}} p_0(\boldsymbol{x})(h_j\circ\varphi_j)(\boldsymbol{x})\left(\partial_j \log p(\boldsymbol{x})\right)^2\d\boldsymbol{x}\nonumber\\
&\pushright{-\sum_{j=1}^m\int_{\mathfrak{D}} p_0(\boldsymbol{x})(h_j\circ\varphi_j)(\boldsymbol{x})\partial_j \log p_0(\boldsymbol{x})\partial_j\log p(\boldsymbol{x})\d\boldsymbol{x}+\mathrm{const}.}\label{eq_pop_loss}
\end{align}

By (\ref{eq_pop_loss}) it suffices to prove for all $j=1,\dots,m$ that
\begin{multline}\label{eq_proof_adjoint}
\int_{\mathfrak{D}} p_0(\boldsymbol{x})(h_j\circ\varphi_j)(\boldsymbol{x})\partial_j \log p_0(\boldsymbol{x})\partial_j\log p(\boldsymbol{x})\d\boldsymbol{x}
=-\int_{\mathfrak{D}} p_0(\boldsymbol{x})\partial_j\left[(h_j\circ\varphi_j)(\boldsymbol{x})\partial_j\log p(\boldsymbol{x})\right]\d\boldsymbol{x}.
\end{multline}
Since $\int_{\mathfrak{D}} p_0(\boldsymbol{x})\left\|\nabla \log p(\boldsymbol{x})\odot(\boldsymbol{h}\circ\boldsymbol{\varphi})^{1/2}(\boldsymbol{x})\right\|_2^2\d\boldsymbol{x}$ and $\int_{\mathfrak{D}} p_0(\boldsymbol{x})\left\|\nabla \log p_0(\boldsymbol{x})\odot(\boldsymbol{h}\circ\boldsymbol{\varphi})^{1/2}(\boldsymbol{x})\right\|_2^2\d\boldsymbol{x}$ are both finite under assumption, by $|2ab|\leq a^2+b^2$ the integrand in the left-hand side of (\ref{eq_proof_adjoint}) is integrable. Then by Fubini-Tonelli
\begin{align}
&\,\int_{\mathfrak{D}} p_0(\boldsymbol{x})(h_j\circ\varphi_j)(\boldsymbol{x})\partial_j\log p_0(\boldsymbol{x})\partial_j\log p(\boldsymbol{x})\d \boldsymbol{x}\nonumber\\
=&\,\int_{\mathfrak{S}_{-j}}\int_{\mathfrak{C}_j(\boldsymbol{x}_{-j})}\underbrace{(h_j\circ\varphi_j)(\boldsymbol{x})\partial_j p_0(\boldsymbol{x})\partial_j\log p(\boldsymbol{x})}_{\equiv f(\boldsymbol{x})}\d x_j\d \boldsymbol{x}_{-j}\nonumber\\
=&\,\int_{\mathfrak{S}_{-j}}\int_{\mathbb{R}}\mathds{1}_{\mathfrak{C}_j(\boldsymbol{x}_{-j})}(x_j)f(\boldsymbol{x})\d x_j\d\boldsymbol{x}_{-j}\nonumber\\
=&\,\int_{\mathfrak{S}_{-j}}\int_{\mathbb{R}}\left[\sum_{k=1}^{K_j(\boldsymbol{x}_{-j})}\mathds{1}_{[a_{k,j}(\boldsymbol{x}_{-j}),\,b_{k,j}(\boldsymbol{x}_{-j})]}(x_j)\right]f(x_j;\boldsymbol{x}_{-j})\d x_j\d\boldsymbol{x}_{-j}\nonumber\\
=&\,\int_{\mathfrak{S}_{-j}}\left[\sum_{k=1}^{K_j(\boldsymbol{x}_{-j})}\int_{a_{k,j}(\boldsymbol{x}_{-j})}^{b_{k,j}(\boldsymbol{x}_{-j})}f(x_j;\boldsymbol{x}_{-j})\d x_j\right]\d\boldsymbol{x}_{-j}\label{eq_proof_adjoint_segments}
\end{align}
where the interchangeability of integration and (potentially infinite) summation is justified by Fubini-Tonelli again. Then using the decomposition of the domain in (\ref{eq_C_union}) while omitting the dependency of $a_{k,j}$ and $b_{k,j}$ on $\boldsymbol{x}_{-j}$ in notation, for a.e.~$\boldsymbol{x}_{-j}\in \mathfrak{S}_{-j}$ and any $k=1,\ldots, K_j(\boldsymbol{x}_{-j})$ we have
\begin{align*}
&\,\int_{a_{k,j}}^{b_{k,j}}f(\boldsymbol{x})\d x_j\\
=&\,\int_{a_{k,j}}^{b_{k,j}}(h_j\circ\varphi_j)(\boldsymbol{x})\partial_j p_0(\boldsymbol{x})\partial_j \log p(\boldsymbol{x})\d x_j\\
=&\,\lim_{x_j\nearrow b_{k_j}^-}(h_j\circ\varphi_j)(\boldsymbol{x}) p_0(\boldsymbol{x})\partial_j \log p(\boldsymbol{x})-\lim_{x_j\searrow a_{k_j}^+}(h_j\circ\varphi_j)(\boldsymbol{x}) p_0(\boldsymbol{x})\partial_j \log p(\boldsymbol{x})\\
&\pushright{-\int_{a_{k,j}}^{b_{k,j}} p_0(\boldsymbol{x})\partial_j\left[(h_j\circ\varphi_j)(\boldsymbol{x})\partial_j\log p(\boldsymbol{x})\right]\d x_j}\\
=&\,-\int_{a_{k,j}}^{b_{k,j}} p_0(\boldsymbol{x})\partial_j\left[(h_j\circ\varphi_j)(\boldsymbol{x})\partial_j \log p(\boldsymbol{x})\right]\d x_j,
\end{align*}
by integration by parts and by Assumption (A1) on the limits going to 0. The integration by parts is justified by the fundamental theorem of calculus for absolutely continuous functions (Lemma \ref{lem_abs_con}) as well as the product rule (cf.~proof of Lemma 19 in \citet{yus19}). Thus, by going backwards using Fubini-Tonelli twice again, (\ref{eq_proof_adjoint_segments}) becomes
\begin{align*}
&\,\int_{\mathfrak{S}_{-j}}\left\{-\sum_{k=1}^{K_j(\boldsymbol{x}_{-j})}\int_{a_{k,j}(\boldsymbol{x}_{-j})}^{b_{k,j}(\boldsymbol{x}_{-j})}p_0(\boldsymbol{x})\partial_j\left[(h_j\circ\varphi_j)(\boldsymbol{x})\partial_j \log p(\boldsymbol{x})\right]\d x_j\right\}\d\boldsymbol{x}_{-j}\\
=&\,-\int_{\mathfrak{S}_{-j}}\int_{\mathfrak{C}_j(\boldsymbol{x}_{-j})}p_0(\boldsymbol{x})\partial_j\left[(h_j\circ\varphi_j)(\boldsymbol{x})\partial_j\log p(\boldsymbol{x})\right]\d x_j\d\boldsymbol{x}_{-j}\\
=&\,-\int_{\mathfrak{D}}p_0(\boldsymbol{x})\partial_j\left[(h_j\circ\varphi_j)(\boldsymbol{x})\partial_j\log p(\boldsymbol{x})\right]\d\boldsymbol{x},
\end{align*}
proving (\ref{eq_proof_adjoint}).
\end{proof}

\begin{proof}[Proof of Theorem \ref{thm_norm_const}]
Note that the condition ${\boldsymbol{v}^a}^{\top}\mathbf{K}\boldsymbol{v}^a>0$ $\forall\boldsymbol{v}\in\mathfrak{D}\backslash\{\boldsymbol{0}\}$ implies that ${\boldsymbol{v}^a}^{\top}\mathbf{K}\boldsymbol{v}^a>0$ $\forall\boldsymbol{v}\in\mathfrak{D}_+\equiv \{\boldsymbol{v}/\|\boldsymbol{v}\|_2:\boldsymbol{v}\in\mathfrak{D}\backslash\{\boldsymbol{0}\}\}\subseteq\{\boldsymbol{v}\in\mathbb{R}^m:\|\boldsymbol{v}\|_2=1\}\equiv\mathbb{S}^{m-1}$ with $\mathbb{S}^{m-1}$ compact, so 
\begin{align*}
N_{\mathbf{K}}&\equiv\inf_{\boldsymbol{v}\in\mathfrak{D}\backslash\{\boldsymbol{0}\}}{\boldsymbol{v}^a}^{\top}\mathbf{K}\boldsymbol{v}^a/{\boldsymbol{v}^a}^{\top}\boldsymbol{v}^a=\inf_{\boldsymbol{v}\in\mathfrak{D}_+}{\boldsymbol{v}^a}^{\top}\mathbf{K}\boldsymbol{v}^a/{\boldsymbol{v}^a}^{\top}\boldsymbol{v}^a\\
&\geq\inf_{\boldsymbol{v}\in\mathbb{S}^{m-1}}{\boldsymbol{v}^a}^{\top}\mathbf{K}\boldsymbol{v}^a/{\boldsymbol{v}^a}^{\top}\boldsymbol{v}^a>0.
\end{align*}

(1) \emph{Case $a>0$ and $b>0$ (CC1, CC2):}
Since $p$ is bounded everywhere, it is integrable over a bounded $\mathfrak{D}$ (proving (CC1)). Otherwise, assume $\mathfrak{D}$ is unbounded. If either $a$ or $b$ is non-integer, then $\mathfrak{D}\subset\mathbb{R}_+^m$ and a sufficient condition is $\boldsymbol{v}^a\mathbf{K}\boldsymbol{v}^a>0$ $\forall\boldsymbol{v}\in\mathfrak{D}\backslash\{\boldsymbol{0}\}$, and either $\boldsymbol{\eta}^{\top}\boldsymbol{v}^b\leq 0$ $\forall\boldsymbol{v}\in\mathfrak{D}$ or $2a>b>0$, corresponding to (i) and (ii) in the Proof of Theorem 9 in Section A.3 of \citet{yus19}, respectively. If $a$ and $b$ are both integers, $\mathfrak{D}\subset\mathbb{R}^m$ and the same sufficient condition can be implied following the same proof in \citet{yus19}, with integration over $(-\infty,+\infty)$ instead of $(0,+\infty)$. This proves (CC2).

(2) \emph{Case $a>0$ and $b=0$ (CC3):} By definition $\mathfrak{D}\subseteq\mathbb{R}_+^m$. If $\mathfrak{D}$ is bounded, $-\frac{1}{2a}{\boldsymbol{x}^a}^{\top}\mathbf{K}\boldsymbol{x}^a$ as a continuous function is bounded, and so it suffices to bound $\int_{\mathfrak{D}}\exp\left(\boldsymbol{\eta}^{\top}\log(\boldsymbol{x})\right)\d\boldsymbol{x}=\int_{\mathfrak{D}}\prod_{j=1}^mx_j^{\eta_j}\d \boldsymbol{x}\leq\prod_{j=1}^m\int_{\rho_j(\mathfrak{D})}x_j^{\eta_j}\d x_j<+\infty$ if $\eta_j>-1$ for all $j$ such that $0\in\rho_j(\mathfrak{D})$, where for the $\leq$ step we used the fact that $x_j>0$. This proves (CC3) (i).

If $\mathfrak{D}$ is unbounded and ${\boldsymbol{v}^a}^{\top}\mathbf{K}\boldsymbol{v}^a>0$ for all $\boldsymbol{v}\in\mathfrak{D}\backslash\{\boldsymbol{0}\}$, using the fact that $\exp(\cdots)>0$,
\begin{align*}
\int_{\mathfrak{D}}p_{\boldsymbol{\eta},\mathbf{K}}(\boldsymbol{x})\d\boldsymbol{x}&=\int_{\mathfrak{D}}\exp\left(-{\boldsymbol{x}^a}^{\top}\mathbf{K}\boldsymbol{x}^a/(2a)+\boldsymbol{\eta}^{\top}\log(\boldsymbol{x})\right)\d\boldsymbol{x}\\
&\leq\prod_{j=1}^m\int_{\rho_j(\mathfrak{D})}\exp\left(-N_{\mathbf{K}}x_j^{2a}/(2a)+\eta_j\log(x_j)\right)\d x_j.
\end{align*}
Note that the indefinite integral of the last display is 
\[-\frac{1}{2a}x^{1 + \eta_j} \left(\frac{N_{\mathbf{K}}}{2a} x^{2 a}\right)^{-(1 + \eta_j)/(2 a)}\Gamma\left[\frac{1 + \eta_j}{2 a}, \frac{N_{\mathbf{K}} x^{2 a}}{2a}\right]\]
so the definite integral is finite if and only if $\eta_j>-1$ for all $j$ s.t.~$0\in\rho_j(\mathfrak{D})$. This proves (CC3) (ii).

If $\mathfrak{D}$ is unbounded and ${\boldsymbol{v}^a}^{\top}\mathbf{K}\boldsymbol{v}^a\geq 0$ for all $\boldsymbol{v}\in\mathfrak{D}$, then $\int_{\mathfrak{D}}p_{\boldsymbol{\eta},\mathbf{K}}(\boldsymbol{x})\d\boldsymbol{x}\leq\prod_{j=1}^m\int_{\rho_j(\mathfrak{D})}x_j^{\eta_j}\d x_j<\infty$ if $\eta_j>-1$ for all $j$ s.t.~$0\in\rho_j(\mathfrak{D})$ and $\eta_j<-1$ for all $j$ s.t.~$\rho_j(\mathfrak{D})$ is unbounded. This proves (CC3) (iii).

(3) \emph{Case $a=0$, $\mathfrak{D}$ is bounded and $0\not\in\rho_j(\mathfrak{D})$ for all $j$ (CC4):}
If $\mathfrak{D}$ is bounded and $0\not\in\rho_j(\mathfrak{D})$ for all $j$, then $\log(\mathfrak{D})$ is bounded, and since the integrand is continuous and bounded, the integral is finite without any further requirements.

(4) \emph{Case $a=0$ and $b=0$ (CC5):} Assume $\log(\boldsymbol{x})^{\top}\mathbf{K}\log(\boldsymbol{x})>0$ for all $\boldsymbol{x}\in\mathfrak{D}$, then
\begin{align*}
\int_{\mathfrak{D}}p_{\boldsymbol{\eta},\mathbf{K}}(\boldsymbol{x})\d\boldsymbol{x}=&\int_{\mathfrak{D}}\exp\left(-\frac{1}{2}{\log(\boldsymbol{x})}^{\top}\mathbf{K}\log(\boldsymbol{x})+\boldsymbol{\eta}^{\top}\log(\boldsymbol{x})\right)\d\boldsymbol{x}\\
=&\int_{\log(\mathfrak{D})}\exp\left(-\frac{1}{2}{\boldsymbol{x}}^{\top}\mathbf{K}\boldsymbol{x}+(\boldsymbol{\eta}+\mathbf{1}_m)^{\top}\boldsymbol{x}\right)\d\boldsymbol{x}\\
<&\prod_{j=1}^m\int_{\log(\rho_j(\mathfrak{D}))}\exp\left(-N_{\mathbf{K}}x_j^{2}/2+(\eta_j+1)x_j\right)\d x_j\\
<&\prod_{j=1}^m\int_{-\infty}^{\infty}\exp\left(-N_{\mathbf{K}}x_j^{2}/2+(\eta_j+1)x_j\right)\d x_j<+\infty
\end{align*}
since the integrand is proportional to a univariate Gaussian density.

(5) \emph{Case $a=0$ and $b>0$ (CC6, CC7):} Assume $\log(\boldsymbol{x})^{\top}\mathbf{K}\log(\boldsymbol{x})> 0$ for all $\boldsymbol{x}\in\mathfrak{D}$ and $\eta_j\leq 0$ for all $j$ s.t.~$\rho_j(\mathfrak{D})$ is unbounded (from above). Then
\begin{align*}
\int_{\mathfrak{D}}p_{\boldsymbol{\eta},\mathbf{K}}(\boldsymbol{x})\d\boldsymbol{x}=&\int_{\mathfrak{D}}\exp\left(-\frac{1}{2}{\log(\boldsymbol{x})}^{\top}\mathbf{K}\log(\boldsymbol{x})+\boldsymbol{\eta}^{\top}\boldsymbol{x}^b\right)\d\boldsymbol{x}\\
=&\int_{\log(\mathfrak{D})}\exp\left(-\frac{1}{2}\boldsymbol{x}^{\top}\mathbf{K}\boldsymbol{x}+\mathbf{1}_m^{\top}\boldsymbol{x}+\boldsymbol{\eta}^{\top}\exp(b\boldsymbol{x})\right)\d\boldsymbol{x}\\
<&\prod_{j=1}^m\int_{\log(\rho_j(\mathfrak{D}))}\exp\left(-N_{\mathbf{K}}x_j^2/2+x_j+\eta_j\exp(bx_j)\right)\d x_j\\
\leq&\prod_{j=1}^{m}\int_{-\infty}^{\infty}c_j\exp\left(-N_{\mathbf{K}}x_j^2/2+x_j\right)\d  x_j<+\infty,
\end{align*}
where $c_j\equiv 1$ if $\eta_j\leq 0$ or $c_j\equiv\exp\left(\eta_j\left(\sup\rho_j(\mathfrak{D})\right)^b\right)>+\infty$ otherwise. This proves (CC6).

Finally, if $\log(\mathfrak{D})$ is unbounded and $\log(\boldsymbol{x})^{\top}\mathbf{K}\log(\boldsymbol{x})\geq 0$ for all $\boldsymbol{x}\in\mathfrak{D}$, the integral is bounded by
\[\prod_{j=1}^m\int_{\log(\rho_j(\mathfrak{D}))}\exp\left(x_j+\eta_j\exp(bx_j)\right)\d x_j\]
which is finite if and only if $\eta_j<0$ for all $j$ s.t.~$\rho_j(\mathfrak{D})$ is unbounded (from above). This proves (CC7).

\end{proof}

\begin{proof}[Proof of Theorem \ref{thm_A}]
It suffices to consider the case $\mathfrak{D}=\mathbb{R}_+^m$ for general $a$ and $b$ as well as $\mathfrak{D}=\mathbb{R}^m$ for integer $a>0$ and $b>0$ (so that (\ref{eq_interaction_density2}) is well defined on $\mathbb{R}^m$): For (A.1), the irregularities only occur at the boundary points, but with the composition $(h_j\circ\varphi_j)(\boldsymbol{x})$ with $x_j$ approaching any  finite boundary point behaves like $h_j(x_j)$ with $x_j\searrow 0^+$ in $\mathfrak{D}=\mathbb{R}_+^m$, and $(h_j\circ\varphi_j)(\boldsymbol{x})$ with $x_j\to\infty$ behaves like $h_j(x_j)$ with $x_j\to\infty$ in $\mathfrak{D}=\mathbb{R}_+^m$ (or $\mathbb{R}^m$ if applicable). For (A.2), obviously integrability over $\mathfrak{D}$ follows from that over $\mathfrak{D}=\mathbb{R}_+^m$ or $\mathbb{R}^m$. (A.3) is trivially satisfied by a power function $h_j$.

As in the proof of Theorem \ref{thm_norm_const}, $N_{\mathbf{K}}\equiv\inf_{\boldsymbol{v}\in\mathfrak{D}}{\boldsymbol{v}^a}^{\top}\mathbf{K}\boldsymbol{v}^a/{\boldsymbol{v}^a}^{\top}\boldsymbol{v}^a> 0$.

(1) The case for $a>0$ and $b\geq 0$ and $\mathfrak{D}=\mathbb{R}_+^m$ is covered in \citet{yus19}. The proof for the case for $a>0$ and $b>0$ and $\mathfrak{D}=\mathbb{R}^m$ is analogous and omitted.

(2) \emph{Case $a=0$ and $b=0$:} 
\begin{align*}
&\,\left|p_0(\boldsymbol{x})\partial_j\log p(\boldsymbol{x})\right|\\
\propto&\,\exp\left(-\frac{1}{2}\log(\boldsymbol{x})^{\top}\mathbf{K}_0\log(\boldsymbol{x})+\boldsymbol{\eta}_0^{\top}\log(\boldsymbol{x})\right)\\
&\times\left|x_j^{-1}\left(\eta_j-\boldsymbol{\kappa}_{j,-j}^{\top}\log(\boldsymbol{x}_{-j})\right)-\kappa_{jj}x_j^{-1}\log x_j\right|\\
\leq&\,\Big|\left(\eta_j-\boldsymbol{\kappa}_{j,-j}^{\top}\log \boldsymbol{x}_{-j}\right) \exp\left[-N_{\mathbf{K}_0}(\log x_j)^2/2+(\eta_j-1)\log x_j\right]
\\
&\pushright{-\kappa_{jj}\exp\left[-N_{\mathbf{K}_0}(\log x_j)^2/2+(\eta_j-1)\log x_j\right]\log x_j\Big|}\\
&\times \prod_{k\neq m}\exp\left(-N_{\mathbf{K}_0}(\log x_k)^2/2+\eta_j\log x_k\right)\\
\propto&\,\mathcal{O}\left[\exp\left(-N_{\mathbf{K}_0}y_j^2/2+(\eta_j-1)y_j\right)\right]+\mathcal{O}\left[\exp\left(-N_{\mathbf{K}_0}y_j^2/2+(\eta_j-1) y_j\right) y_j\right]
\end{align*}
which apparently vanishes as $x_j\searrow 0^+$ and $x_j\nearrow +\infty$ with $y_j\equiv\log(x_j)$ since it is dominated by a constant times a Gaussian density in $y_j$. Thus, by Proposition \ref{prop_assumptions_power}, (A.1) is satisfied with any $\alpha_j\geq 0$. Likewise, for (A.2),
\begin{align*}
&\,\int_{\mathbb{R}_+^m}p_0(\boldsymbol{x})\left\|\nabla\log p(\boldsymbol{x})\odot (\boldsymbol{h}\circ\boldsymbol{\varphi})^{1/2}(\boldsymbol{x})\right\|_2^2\d\boldsymbol{x}\\
\leq&\,\mathrm{const}\cdot\sum_{j=1}^m\int_{\mathbb{R}_+^m}\prod_{k=1}^m\exp\left[-N_{\mathbf{K}_0}(\log x_k)^2/2+\eta_k\log(x_k)\right]\times \\
&\quad\quad\quad\quad  h_j(x_j)\left[x_j^{-1}\left(\eta_j-\boldsymbol{\kappa}_{j,-j}^{\top}\log(\boldsymbol{x}_{-j})\right)-\kappa_{jj}x_j^{-1}\log x_j\right]^2\d  \boldsymbol{x},
\end{align*}
which can be decomposed into a sum of products of univariate integrals of the form 
\[\mathrm{const}\cdot\exp\left(-N_{\mathbf{K}_0}(\log x_j)^2/2+A\log(x_j)\right)(\log x_j)^B(h_j(x_j))^C\]
with $B=0,1,2$, $C=0,1$, and constants $A$. With $h_j(x_j)=x_j^{\alpha_j}$ for any $\alpha_j\geq 0$ this is bounded by some Gaussian density in $\log x_j$, so $\int_{\mathbb{R}_+^m}p_0(\boldsymbol{x})\|\nabla\log p(\boldsymbol{x})\odot (\boldsymbol{h}\circ\boldsymbol{\varphi})^{1/2}(\boldsymbol{x})\|_2^2\d\boldsymbol{x}<+\infty$. Similarly, we have $\int_{\mathbb{R}_+^m}p_0(\boldsymbol{x})\|[\nabla\log p(\boldsymbol{x})\odot(\boldsymbol{h}\circ\boldsymbol{\varphi})(\boldsymbol{x})]'\|_1\d\boldsymbol{x}<+\infty$ and the proof is omitted.

(3) \emph{Case $a=0$ and $b>0$:} Recall $\rho_j(\mathfrak{D})\equiv\overline{\{x_j:\boldsymbol{x}\in\mathfrak{D}\}}$. Let $\rho_j^*(\mathfrak{D})\equiv\sup\rho_j(\mathfrak{D})$. Since we assume that $\eta_j\leq 0$ for any $j$ such that $\rho_j^*(\mathfrak{D})<+\infty$,
\begin{align*}
&\,p_0(\boldsymbol{x})\partial_j\log p(\boldsymbol{x})\\
\propto&\,\exp\left(-\frac{1}{2}\log(\boldsymbol{x})^{\top}\mathbf{K}_0\log(\boldsymbol{x})+\frac{1}{b}\boldsymbol{\eta}_0^{\top}\boldsymbol{x}^b\right)\\
&\times\left[\eta_j x_j^{b-1}-x_j^{-1}\boldsymbol{\kappa}_{j,-j}^{\top}\log(\boldsymbol{x}_{-j})-\kappa_{jj}x_j^{-1}\log x_j\right]\\
\leq&\,\exp\left(-\frac{1}{2}\log(\boldsymbol{x})^{\top}\mathbf{K}_0\log(\boldsymbol{x})+\frac{1}{b}\sum_{j:\rho_j^*(\mathfrak{D})<+\infty}\eta_{0j}(\rho_j^*(\mathfrak{D}))^b\right)\\
&\times\left[-x_j^{-1}\boldsymbol{\kappa}_{j,-j}^{\top}\log(\boldsymbol{x}_{-j})-\kappa_{jj}x_j^{-1}\log x_j\right]\\
\propto&\,\exp\left(-\frac{1}{2}\log(\boldsymbol{x})^{\top}\mathbf{K}_0\log(\boldsymbol{x})\right)\left[-x_j^{-1}\boldsymbol{\kappa}_{j,-j}^{\top}\log(\boldsymbol{x}_{-j})-\kappa_{jj}x_j^{-1}\log x_j\right]
\end{align*}
is bounded by the corresponding quantity in the $a=b=0$ case with $\boldsymbol{\eta}=\boldsymbol{0}_m$, and (A.1) is thus satisfied. Similarly, the two quantities for (A.2) are bounded by a constant times those in the $a=b=0$ case with $\boldsymbol{\eta}=\boldsymbol{0}_m$ and (A.2) is thus also satisfied.
\end{proof}

\begin{proof}[Proof of Theorem \ref{theorem_bounded_nonlog_ggm}]
It suffices to bound $\boldsymbol{\Gamma}$ and $\boldsymbol{g}$ using their forms in Section \ref{Estimation_general} and apply Theorem 1 in \citet{lin16}. Thus, we first find the bounds of $(h_j\circ\varphi_j)(\boldsymbol{x})x_j^{p_j}x_{k}^{p_k}x_{\ell}^{p_\ell}$ with $h_j(x)=x^{\alpha_j}$, $\alpha_j\geq 0$, $\alpha_j\geq -p_j$, $p_j\in\mathbb{R}$, $p_k\geq 0$, $p_{\ell}\geq 0$ and $x_i\in[u_i,v_i]$ for $i=1,\ldots,m$. Suppose without loss of generality that $j$, $k$, $\ell$ are all different, as 
\[\max_{x_j}f_{j,1}(x_j)\max_{x_j}f_{j,2}(x_j)\max_{x_j}f_{j,3}(x_{j})\geq \max_{x_j}\left(f_{j,1}(x_j)f_{j,2}(x_j)f_{j,3}(x_j)\right)\geq 0\] for any nonnegative functions $f_{j,1}$, $f_{j,2}$, $f_{j,3}$.

As $x_j$ approaches its boundary, $\varphi_j(\boldsymbol{x})\searrow 0^+$ and hence $(h_j\circ\varphi_j)(\boldsymbol{x})x_j^{p_j}\searrow 0^+$ if $\alpha_j>-p_j$. The lower bound 0 for $(h_j\circ\varphi_j)(\boldsymbol{x})x_j^{p_j}x_{k}^{p_k}x_{\ell}^{p_\ell}$ is thus tight enough.

As for the upper bound, the only way for the quantity to be unbounded from above is when $x_j\searrow 0^+$ and $p_j<0$, but as $x_j\searrow 0^+$, $(h_j\circ\varphi_j)(\boldsymbol{x})=x_j^{\alpha_j}$ so this cannot happen with the choice of $\alpha_j\geq -p_j$. Noting that $h_j$ is monotonically increasing, we consider the following cases:
\begin{enumerate}[(1)]
\item Suppose $x_j\geq (u_j+v_j)/2$. Then 
\begin{align*}
(h_j\circ\varphi_j)(\boldsymbol{x})&\leq h_j\left(\min\left\{C_j,v_j-x_j\right\}\right)\\
&\leq h_j(\min\{C_j,(v_j-u_j)/2\})\\
&\leq \min\{C_j^{\alpha_j},(v_j-u_j)^{\alpha_j}/2^{\alpha_j}\},
\end{align*}
and $x_j^{p_j}\leq (u_j+v_j)^{p_j}/2^{p_j}$ if $p_j<0$ or $x_j^{p_j}\leq v_j^{p_j}$ if $p_j\geq 0$. 
\item Suppose $x_j\leq (u_j+v_j)/2$. Then 
\begin{align*}
(h_j\circ\varphi_j)(\boldsymbol{x})x_j^{p_j}&\leq h_j\left(\min\left\{C_j,x_j-u_j\right\}\right)x_j^{p_j}\\
&=\min\{C_j^{\alpha_j}, (x_j-u_j)^{\alpha_j}\}x_j^{p_j}.
\end{align*}
Now let $f(x)=(\min\{C_j,x-u_j\})^{\alpha_j}x^{p_j}$. Then $(\log f(x))'=\alpha_j/(x-u_j)\mathds{1}_{x<u_j+C_j}+p_j/x$. For $x\geq u_j+C_j$ this has the same sign as $p_j$, otherwise it is equal to $((\alpha_j+p_j)x-u_jp_j)/(x(x-u_j))\geq 0$ on $(u_j,v_j)$ since $x>u_j$, $\alpha_j\geq -p_j$ and $\alpha_j\geq 0$.
This implies that if $p_j\geq 0$ or $v_j-u_j\leq 2C_j$, $f$ is increasing on $(u_j,(u_j+v_j)/2)$, and so $(h_j\circ\varphi_j)(\boldsymbol{x})x_j^{p_j}\leq\min\left\{C_j,(v_j-u_j)/2\right\}^{\alpha_j}(u_j+v_j)^{p_j}/2^{p_j}$; otherwise, $f$ is increasing on $(u_j,u_j+C_j)$ and decreasing on $(u_j+C_j,(u_j+v_j)/2)$, so $(h_j\circ\varphi_j)(\boldsymbol{x})x_j^{p_j}\leq C_j^{\alpha_j}(u_j+C_j)^{p_j}$.
\end{enumerate}
Thus, defining 
\begin{align*}
&\,\zeta_j(\alpha_j,p_j)\\
\equiv&\,\begin{cases}\min\left\{C_j,(v_j-u_j)/2\right\}^{\alpha_j}(u_j+v_j)^{p_j}/2^{p_j}, & p_j<0, v_j-u_j\leq 2C_j, \\ 
\min\left\{C_j,(v_j-u_j)/2\right\}^{\alpha_j}(u_j+C_j)^{p_j}, & p_j<0, v_j-u_j> 2C_j, \\
\min\left\{C_j,(v_j-u_j)/2\right\}^{\alpha_j}v_j^{p_j}, & p_j\geq 0,\end{cases}\
\end{align*}
we have $0\leq (h_j\circ\varphi_j)(\boldsymbol{x})x_j^{p_j}x_{k}^{p_k}x_{\ell}^{p_\ell}\leq\zeta_j(\alpha_j,p_j)v_k^{p_k}v_{\ell}^{p_{\ell}}$. Now assume additionally that $\alpha_j\geq \max\{1,1-p_j\}$, then by $h_j'(x)=\alpha_j x_j^{\alpha_j-1}$,  $0\leq\partial_j(h_j\circ\varphi_j)(\boldsymbol{x})x_j^{p_j}x_k^{p_k}\leq\alpha_j\zeta_j(\alpha_j-1,p_j)v_k^{p_k}$. 

First assume $a>0$. Then assuming $\alpha_1,\ldots,\alpha_m\geq\max\{1,2-2a,2-2b,1-a,2-a,2-b\}=\max\{1,2-a,2-b\}$, using the form of $\boldsymbol{\Gamma}$ and $\boldsymbol{g}$ in Section \ref{Estimation_general}, for all $j,k,\ell$ we have 
\begin{align*}
0&\leq\gamma_{j,k,\ell}(\mathbf{x})\leq \varsigma_{\boldsymbol{\Gamma}}\equiv\max\limits_{j,k =1,\ldots,m}\max\{\zeta_j(\alpha_j,2a-2)v_k^{2a},\,\zeta_j(\alpha_j,2b-2)\}
\end{align*}
and 
\begin{align*}
0&\leq g_{j,k}(\mathbf{x})\\
&\leq \varsigma_{\boldsymbol{g}}\equiv\max\limits_{j,k =1,\ldots,m}\max\{\alpha_j\zeta_j(\alpha_j-1,a-1)v_k^{a}+|a-1|\zeta_j(\alpha_j,a-2)v_k^{a}+a\zeta_j(\alpha_j,2a-2),\\
&\pushright{\alpha_j\zeta_j(\alpha_j-1,b-1)+|b-1|\zeta_j(\alpha_j,b-2)\}}.
\end{align*}

Then by Hoeffding's inequality,
\begin{align}
\mathbb{P}\left(\max_{j,k,\ell}\left|\gamma_{j,k,\ell}-\mathbb{E}_0\gamma_{j,k,\ell}\right|\geq\epsilon_1/2\right)&\leq 2\exp\left(-n\epsilon_1^2/(2\varsigma_{\boldsymbol{\Gamma}}^2)\right)\label{proof_bounded_gamma},\\
\mathbb{P}\left(\max_{j,k}\left|g_{j,k}-\mathbb{E}_0g_{j,k}\right|\geq\epsilon_2\right)&\leq 2\exp\left(-2n\epsilon_2^2/\varsigma_{\boldsymbol{g}}^2\right).\label{proof_bounded_g}
\end{align}
Let $\epsilon_1\equiv\varsigma_{\boldsymbol{\Gamma}}\sqrt{2(\log m^{\tau}+\log 4)/n}$ and $\epsilon_2\equiv \varsigma_{\boldsymbol{g}}\sqrt{(\log m^{\tau}+\log 4)/(2n)}$. With the choice of $\delta\leq 1+\sqrt{(\log m^{\tau}+\log 4)/(2n)}$ and using the fact that $0\leq\max_{j,k,\ell}\gamma_{j,k,\ell}\leq \varsigma_{\boldsymbol{\Gamma}}= \epsilon_1/(2\delta-2)$, (\ref{proof_bounded_gamma}) and (\ref{proof_bounded_g}) imply that
\begin{align}
&\,\mathbb{P}\left(\max_{j,k,\ell}\left|\delta\gamma_{j,k,\ell}-\mathbb{E}_0\gamma_{j,k,\ell}\right|\geq\epsilon_1\right)\\
\leq&\,\mathbb{P}\left(\max_{j,k,\ell}\left|\gamma_{j,k,\ell}-\mathbb{E}_0\gamma_{j,k,\ell}\right|+(\delta-1)\max_{j,k,\ell}\gamma_{j,k,\ell}\geq\epsilon_1\right)\nonumber\\
\leq&\,\mathbb{P}\left(\max_{j,k,\ell}\left|\gamma_{j,k,\ell}-\mathbb{E}_0\gamma_{j,k,\ell}\right|\geq\epsilon_1/2\right)\leq m^{-\tau}/2,
\end{align}\vspace{-0.2in}
\begin{align}
\mathbb{P}\left(\max_{j,k}\left|g_{j,k}-\mathbb{E}_0g_{j,k}\right|\geq\epsilon_2\right)&\leq m^{-\tau}/2.
\end{align}
The results then follow by applying Theorem 1 in \citet{lin16}.

In the case where $a=0$, and $u_k>0$ for all $k$,
\begin{align*}
|(h_j\circ\varphi_j)(\boldsymbol{x})x_j^{p_j}\log(x_{k})\log(x_{\ell})|
\leq\zeta_j(\alpha_j,p_j)\cdot\max\{|\log(u_k)\log(u_{\ell})|,|\log(v_k)\log(v_{\ell})|\}
\end{align*}
and everything else follows similarly as for $a>0$.
\end{proof}

\begin{proof}[Proof of Lemma \ref{theorem_subexp_positive_measure}]
We show that $X_j^{2a}$ for $a>0$ or $\log X_j$ for $a=0$ is sub-exponential by showing its moment-generating function is finite. Then the sub-exponentiality follows from Theorem 2.13 of \citet{wai19}.

First consider the case where $a=0$. In Corollary \ref{cor_norm_const}, we only require $\mathbf{K}$ to be positive definite without any restrictions on $\boldsymbol{\eta}$, and thus for any $t\in\mathbb{R}$, $\mathbb{E}_0\exp(t\log X_j)$ is the inverse normalizing constant for the model with parameters $\mathbf{K}_0$ and $\boldsymbol{\eta}_0+t\boldsymbol{e}_j$, where $\boldsymbol{e}_j$ is the vector with the $j$-th coordinate equal to $1$ and the rest equal to $0$, and is thus finite.

Next, consider $a>0$. Corollary \ref{cor_norm_const} requires $\mathbf{K}_0$ to be positive definite, and in addition $\boldsymbol{\eta}_0\succ-\mathbf{1}_m$ if $b=0$. Then, again writing $\boldsymbol{x}^0/0=\log\boldsymbol{x}$ for the $b$ part,
\begin{align*}
p_0(\boldsymbol{x})\exp\left(t x_j^{2a}\right)&\propto\exp\left(-\frac{1}{2a}{\boldsymbol{x}^a}^{\top}\mathbf{K}_0\boldsymbol{x}^a+\frac{1}{b}\boldsymbol{\eta}_0^{\top}\boldsymbol{x}^b+t x_j^{2a}\right)\\
&\leq\exp\left(\sum_{k=1}^m\left(\left(-\lambda_{\min}\left(\mathbf{K}_0\right)+2a t\mathds{1}_{k=j}\right)x_k^{2a}/(2a)+\eta_{0,k}x_k^b/b\right)\right),
\end{align*}
a constant times the density for parameters $\mathrm{diag}\left(\lambda_{\min}\left(\mathbf{K}_0\right)\mathbf{1}_m-2at\boldsymbol{e}_j\right)$ and $\boldsymbol{\eta}_0$. Thus, for $t\in\left(-\infty,\lambda_{\min}\left(\mathbf{K}_0\right)/(2a)\right)\ni 0$, $\mathbb{E}_0\exp\left(t X_j^{2a}\right)$ is finite.
\end{proof}

\begin{proof}[Proof of Corollary \ref{corollary_apos_unbounded}]
Let the sub-exponential norm of $X_j^{2a}$ be $\left\|X_j^{2a}\right\|_{\psi_1}\equiv\sup_{q\geq 1}(\mathbb{E}_0|X_j|^{2aq})^{1/q}/q$, then by Lemma 21.6) of \citet{yus19} or Corollary 5.17 of \citet{ver12},
\[\mathbb{P}\left(\left|X_j^{2a}-\mathbb{E}_0 X_j^{2a}\right|\geq\epsilon_{3,j}\right)\leq\exp\left(-\min\left(\frac{\epsilon_3^2}{8e^2\left\| X_j^{2a}\right\|_{\psi_1}^2},\frac{\epsilon_3}{4e\left\|X_j^{2a}\right\|_{\psi_1}}\right)\right).\]
Letting 
\begin{multline*}
\epsilon_{3,j}\equiv\max\Big\{2\sqrt{2}e\left\| X_j^{2a}\right\|_{\psi_1}\sqrt{\log 3+\log n+\tau\log m+\log \left(m-\left|\rho_{\mathfrak{D}}^*\right|\right)},\\
4e\left\| X_j^{2a}\right\|_{\psi_1}\left(\log 3+\log n+\tau\log m+\log \left(m-\left|\rho_{\mathfrak{D}}^*\right|\right)\right)\Big\},
\end{multline*}
then $\max\left\{\mathbb{E}_0 X_j^{2a}-\epsilon_{3,j},0\right\}^{1/(2a)}\leq{X_j^{(i)}}\leq\left(\mathbb{E}_0 X_j^{2a}+\epsilon_{3,j}\right)^{1/(2a)}$ for all $j\not\in\rho_{\mathfrak{D}}^*$ and $i=1,\ldots,n$ with probability at least $1-1/(3m^{\tau})$. The rest follows as in the proof of Theorem \ref{theorem_bounded_nonlog_ggm}.
\end{proof}

\end{document}